\DeclareMathAlphabet{\mathcal}{OMS}{lmsy}{m}{n}
\newcommand\@b@gtimes[1]{%
  \vcenter{\hbox{#1$\m@th\mkern-2mu\times\mkern-2mu$}}}
\newcommand\@bigtimes{%
  \mathchoice{\@b@gtimes\huge}         % display style
             {\@b@gtimes\LARGE}        % text style
             {\@b@gtimes{}}            % script style
             {\@b@gtimes\footnotesize} % script script style
}
\newcommand\bigtimes{\mathop{\@bigtimes}\displaylimits}
\let\epsilon\varepsilon
\let\theta\vartheta
\DeclareMathOperator{\bbbone}{\mathbbm{1}}
\newcommand{\p}[1]{^{\vphantom{bg}\smash{(#1)}}}
\newcommand{\ext}[1][]{^{\mathrm{x}#1}}
\newcommand{\fin}{_\mathrm{fin}}
\newcommand{\sym}{_{\mathrm{sym}}}
\newcommand{\cCo}{\cC}
\providecommand{\abs}[2][]{#1\lvert#2#1\rvert}
\providecommand{\norm}[2][]{#1\lVert#2#1\rVert}
\providecommand{\inv}{^{-1}}
\newcommand{\eps}{{\varepsilon}}        %%%%%%%%%%%%%%%%%%%%%%%%%%%%%%%%%%%
\newcommand{\vphi}{{\varphi}}           %%%%%%%%%%%%%%%%%%%%%%%%%%%%%%%%%%%
\newcommand{\la}{\langle} \newcommand{\ra}{\rangle}
\newcommand{\bok}{\mathbf{k}}         %%%%%%%%%%%%%%%%%%%%%%
\newcommand{\bov}{\mathbf{v}}
\newcommand{\cA}{\mathcal{A}}
\newcommand{\cB}{\mathcal{B}}
\newcommand{\cC}{\mathcal{C}}
\newcommand{\cD}{\mathcal{D}}
\newcommand{\cE}{\mathcal{E}}
\newcommand{\cF}{\mathcal{F}}
\newcommand{\cG}{\mathcal{G}}
\newcommand{\cH}{\mathcal{H}}
\newcommand{\cJ}{\mathcal{J}}
\newcommand{\cK}{\mathcal{K}}
\newcommand{\cT}{\mathcal{T}}
\newcommand{\cU}{\mathcal{U}}
\newcommand{\cV}{\mathcal{V}}
\newcommand{\RR}{\mathbb{R}}            %%%%%%%%%%%%%%%%%%%%%%%%%%%
\newcommand{\NN}{\mathbb{N}}            % Blackboard Bold Letters %
\newcommand{\CC}{\mathbb{C}}            %                         %
\newcommand{\TT}{\mathbb{T}}
\newcommand{\bTT}{\overline{\TT}}
\newcommand{\uk}{{\underline k}}
\newcommand{\tA}{\widetilde{A}}
\newcommand{\tB}{\widetilde{B}}
\newcommand{\tC}{\widetilde{C}}
\newcommand{\tH}{\widetilde{H}}
\newcommand{\tJ}{\widetilde{J}}
\newcommand{\tS}{\widetilde{S}}
\newcommand{\ta}{\tilde{a}}             %%%%%%%%%%%%%%%%%%%%%%%%%%%%%
\newcommand{\tv}{\tilde{v}}
\newcommand{\tw}{\widetilde{w}}
\newcommand{\tcC}{{\widetilde{\cC}}}
\newcommand{\tcD}{\widetilde{\cD}}
\newcommand{\tpsi}{\tilde{\psi}}
\newcommand{\trho}{\widetilde{\rho}}
\newcommand{\tvarphi}{\tilde{\varphi}}
\newcommand{\gothh}{\mathfrak{h}}
\newcommand{\huelle}{\mathop{\mathrm{span}}} %%%%%%%%%%%%%%%%%%%%%%%%%%%%%%
\newcommand{\Tr}{\mathop{\mathrm{Tr}}}
\newcommand{\supp}{\mathop{\mathrm{supp}}}
\newcommand{\ess}{\mathrm{ess}}
\theoremstyle{plain}
\newtheorem{theorem}{Theorem}[section]         %%%%%%%%%%%%%%%%%%%%%%%%%%%%%%
\newtheorem{lemma}[theorem]{Lemma}             % These Theoremlike environm.%
\newtheorem{proposition}[theorem]{Proposition} %%%%%%%%%%%%%%%%%%%%%%%%%%%%%%
\newtheorem{definition}[theorem]{Definition}   % the section where they     %
\newtheorem{condition}[theorem]{Condition}
\newtheorem{remark}[theorem]{Remark}         % appear.                    %
\theoremstyle{nonumberplain}
\newtheorem{proof}{Proof}
\newcommand{\sess}{\sigma_{\mathrm{ess}}}
\newcommand{\spp}{\sigma_{\mathrm{pp}}}
\newcommand{\ssc}{\sigma_{\mathrm{sc}}}
\newcommand{\loc}{\mathrm{loc}}
\newcommand{\im}{\mathrm{Im}}
\newcommand{\re}{\mathrm{Re}}
\newcommand{\wgt}[1]{\langle #1 \rangle}
\newcommand{\Ortho}{\mathrm{O}}             %\Ortho(n)
\newcommand{\set}[2]{\{#1\,|\,#2\}}
\newcommand{\bigset}[2]{\bigl\{#1\,\big|\,#2\bigr\}}
\newcommand{\Bigset}[2]{\Bigl\{#1\,\Big|\,#2\Bigr\}}
\setlist{itemsep=0pt}
\newcommand{\LLP}{\mathrm{LLP}}
\newcommand{\x}{\mathrm{x}}
\newcommand{\hph}{{\gothh_{\mathrm{bo}}}}
\newcommand{\symL}{L_{\mathrm{sym}}}
\newcommand{\rmsym}{\mathrm{sym}}
\newcommand{\cGam}{\check{\Gamma}}
\newcommand{\one}{\bbbone}
\newcommand{\ri}{\mathrm{i}}
\newcommand{\D}{\mathrm{d}}
\newcommand{\e}{\mathrm{e}}
\newcommand{\vacuum}{\vert 0\ra}
\newcommand{\cross}{\mathcal{X}}  % {\mathrm{Cross}}
\newcommand{\shells}{\mathcal{S}} %{\mathrm{Shells}}
\newcommand{\pp}{\mathrm{pp}}
\newcommand{\iso}{\mathrm{iso}}
\newcommand{\thrC}{\cT_\parallel^{(1)}}
\newcommand{\thrN}{\cT_{\not\parallel}^{(1)}}
\newcommand{\cKJ}[2]{\cK_{#2}}
\newcommand{\Annul}{\cA}
\newcommand{\Exc}{\mathrm{Exc}}
\newcommand{\coup}{g}
\newcommand{\epsone}{\eps^{(1)}}
\newcommand{\epstwo}{\eps^{(2)}}
\newcommand{\epsthree}{\eps^{(3)}}
\newcommand{\epsfour}{\eps^{(4)}}
\newcommand{\delone}{\delta'}
\newcommand{\Spectrum}{\Sigma}
\newcommand{\Shell}{S}
\newcommand{\thr}{\cT}
\newcommand{\exchange}{\cE}
\newcommand{\remarkQED}{$\hfill\diamond$}
\newcommand{\myquote}[1]{`#1'}
 \numberwithin{equation}{section}
  \providecommand{\olig}{\overset{o}{=}}
  \DeclareMathOperator{\adjungeret}{ad}
\begin{document}
%\frontmatter
\bibliographystyle{amsplain}

%%%%%%%%%%%%%%%%%%%%%%%%%%%%%%%%%%%%%%%%%%%%%%%%%%%%%%%%%%%%%%%%%%%%%%%%%%%%

\title{The translation Invariant Massive Nelson Model: II. \\ The Continuous Spectrum
Below the Two-boson Threshold}

\author{Jacob Schach M{\o}ller\\
Department of Mathematics\\ Aarhus University \\ Denmark\\ email: jacob@imf.au.dk
\\ \\
Morten Grud Rasmussen\footnote{Partially supported by the Lundbeck and Carlsberg Foundations}
\\
Department of Engineering\\
Aarhus University \\ Denmark \\ email: mgr@ase.au.dk }
\maketitle

\begin{abstract}
In this paper we continue the study of the energy-momentum spectrum of
a class of translation invariant, linearly coupled, and massive Hamiltonians
from non-relativistic quantum field theory. The class contains the Hamiltonians 
of E.~Nelson \cite{Ne} and H.~Fr{\"o}hlich \cite{FrH}. In
\cite{MAHP,MRMP} one of us previously investigated the structure of
the ground state mass shell and the bottom of the continuous
energy-momentum spectrum. Here we study the continuous energy-momentum
spectrum itself up to the two-boson threshold, the threshold for energetic
support of two-boson scattering states. We prove that non-threshold
embedded mass shells have finite multiplicity and can accumulate only
at thresholds. We furthermore establish the non-existence of singular continuous
energy-momentum spectrum. Our results hold true for all values of the particle-field
coupling strength but only below the two-boson threshold. The proof
revolves around the construction of a certain relative velocity vector
field used to construct a conjugate operator in the sense of Mourre.
\end{abstract}

\thispagestyle{empty}

\newpage

\tableofcontents
\setcounter{page}{0}
\thispagestyle{empty}\strut
\newpage
%\mainmatter

%%%%%%%%%%%%%%%%%%%%%%%%%%%%%%%%%%%%%%%%%%%%%%%%%%%%%%%%
%%%%%%%%%%%%%%%%%%%%%%%%%%%%%%%%%%%%%%%%%%%%%%%%%%%%%%%%
\section{Introduction and Results} \label{Chap-Intro}
%%%%%%%%%%%%%%%%%%%%%%%%%%%%%%%%%%%%%%%%%%%%%%%%%%%%%%%%
%%%%%%%%%%%%%%%%%%%%%%%%%%%%%%%%%%%%%%%%%%%%%%%%%%%%%%%%

The present paper is a sequel to \cite{MAHP}, where the ground state
mass shell and the bottom of the continuous energy-momentum spectrum
of the translation invariant massive Nelson model was studied. The massive Nelson model was
introduced in \cite{Ne} as a toy model for nucleon-meson interactions.
It is similar in structure to the large polaron model of H.~Fr{\"o}hlich \cite{FrH}, describing
electrons in polar crystals, interacting with longitudinal
optical  phonons. In \cite{MRMP}, the results
 of \cite{MAHP} were in fact extended to cover a larger class of models
encompassing both the massive Nelson model and the large polaron model.

With the pioneering works of H{\"u}bner-Spohn \cite{HuSp1,HuSp2} and Bach-Fr{\"o}hlich-Sigal \cite{BFS1,BFS2} in the mid 90's,
it became apparent that many of the techniques developed to deal with spectral and scattering problems
for many-body quantum mechanics were in fact also applicable to models of quantized matter
interacting with second quantized fields, sometimes called non-relativistic QFT.
These models range from finite level systems interacting with scalar fields, e.g. the spin-boson model,
to models of atoms and molecules minimally coupled to a second quantized Maxwell field.
They include models from solid state physics
describing electrons interacting with vibrational modes of crystals, i.e. phonons. We recall that acoustic phonons
are modeled by massless fields, and optical phonons by massive fields.

The purpose of this sequel to \cite{MAHP,MRMP} is to study the structure of the
continuous energy-momentum regime. More precisely the region
supporting at most one asymptotic boson, i.e. the region below the
threshold for energetic support of states with two (or more) asymptotic
bosons. This is what is meant with \myquote{two-boson threshold} in the title. 
In particular we prove that fiber Hamiltonians in this energy regime 
have isolated thresholds, non-threshold eigenvalues have finite
multiplicity and can only accumulate at thresholds, and there is no
singular continuous spectrum. Our results do not depend on the
strength of the particle-field coupling.
 The main tool is the construction of an
energy-momentum dependent relative velocity field, describing at fixed total momentum the
difference of velocities of a single asymptotically free boson and an
interacting effective particle (e.g. polaron). This velocity field
goes into the construction of a modified generator of dilation,
which induces a second quantized conjugate operator in the sense of
Mourre, admitting a positive commutator with the fiber Hamiltonian.
Our work can be seen as a fusion of the spectral theory part of
\cite{DGe1} by Derezi{\'n}ski-G{\'e}rard and the paper \cite{GeNi1} by
G{\'e}rard-Nier on analytically fibered operators. A simpler version
of the construction and results of this paper formed a part of the Ph.D.
thesis of the second author \cite{RaThesis}.

We distinguish between the bare particle entering into the model via
its dispersion relation $\Omega\colon \RR^\nu\to [0,\infty)$, and the
effective particles described by the interacting model. In the polaron
model, this is even hammered home by the word \myquote{polaron} used
to refer to the effective particle associated with the ground state,
and the word \myquote{electron} reserved for the (bare) band electron 
entering into the Fr{\"o}hlich Hamiltonian.
While only the bare quantities enter into the construction of the
Hamiltonian, for an observer the bare particle is a mythical entity
which never appears in a scattering experiment. Only effective
particles are manifest as identifiable quantities. 

In relativistically invariant field theories, the particle content of
a theory is determined by eigenvalues of the mass operator. That only
the mass characterizes the effective particles is due to Poincar{\'e} 
invariance, which ensures that the dispersion relations of the
effective particles are forced to be of the form $\sqrt{k^2 + M^2}$, 
where $M$ is an eigenvalue of the mass operator. In our setup, the
model is non-relativistic and only invariance under translations and 
spatial rotations remains. This means that the dispersion relations 
of the effective particles are not determined by a single number,
but a priori by a function on (a subinterval of) the half-line. Very little is known 
about the general structure of the effective dispersion
relations, or mass shells, even for the ground state. This is
a source of complications since we have to take into account the
following features: \textbf{(A)} There may be multiple species of effective
particles, i.e. mass shells. \textbf{(B)} The effective dispersions may not be convex, nor are
they a priori forced to be radially increasing. \textbf{(C)} Excited mass shells may
cross making it ambiguous how to assign velocity to a state
constructed by energy-momentum localizations.

Let us give a heuristic explanation for the role of the effective dispersions  in the analysis of 
the continuous energy-momentum spectrum. The continuous spectrum pertains to 
scattering states of the Hamiltonian, and scattering states should at large times look like
(superpositions of) an interacting bound effective particle plus a number of free
asymptotic bosons, with the sum of momenta and energies of the
constituents summing up to the total momentum and energy of the initial
state. The dynamics for such compound asymptotic systems at total
momentum $\xi$ is governed by a kinetic energy of the form
\begin{equation}\label{IntroEffDisp}
\Shell\p{n}(\xi;k_1,\dots,k_n) :=\Shell(\xi-k_1-\cdots -k_n) +
\omega(k_1)+\cdots +\omega(k_n),
\end{equation}
 where $S$ is the dispersion relation  for an effective particle, $\omega$
is that of the bosons, and $k_1,\cdots,k_n$ labels the momenta of $n$
asymptotic bosons. We observe from this expression that we get
thresholds at energies where $(k_1,\dots,k_n)\to S\p{n}(\xi;k_1,\dots,k_n)$ has
critical points. Computing the gradient we see that this happens when $\nabla
\Shell(\xi-k_1-\cdots -k_n) = \nabla\omega(k_j)$ for all $j$,
i.e. when all the asymptotic bosons have group velocity equal to the
group velocity
of the effective particle. As a first step we ensure that the
threshold set is small. Secondly starting from a given
total momentum  and energy $(\xi,E)$ below the two-boson threshold, we ensure that we can
unambiguously assign a non-vanishing relative velocity field to the scattering
states sitting in a small energy-momentum region near $(\xi,E)$.
To translate non-vanishing of the relative velocity field into a positive commutator estimate,  
we develop a method to extract from the Hamiltonian, expressed
in terms of the bare particle dispersion only, the effective dispersions
used to construct the relative velocity field.

To fix ideas, let us first discuss the form of the vector field under the simplifying assumption
that the ground state mass shell is the only (isolated) mass shell, and that it extends 
to infinity in total momentum. Denote by $\RR^\nu\ni \xi\to \Spectrum_0(\xi)\in\RR$, the 
ground state mass shell, and by $\Spectrum_0\p{1}(\xi;k)$ the associated effective 
dispersion relation, cf. \eqref{IntroEffDisp}, pertaining to one-boson scattering.
In this simple setup one can simply pick the relative velocity 
vector field to be $v_\xi(k) =\nabla_k \Spectrum_0\p{1}(\xi;k)$. 
In the presence of multiple mass shells one has to stitch the associated vector fields together, using an 
energy localization to ensure that for a given $k$ only one shell is available for decay due to energy-momentum conservation.
This is feasible in the absence of level crossings. If mass shells do cross, 
there will be multiple channels accessible and one  has 
to go through a somewhat more painful analysis to ensure that regardless of which channel is chosen, the system breaks up
with positive velocity.
 
We remark that in the weak coupling regime
there are results in the literature about the 
structure of the continuous energy-momentum spectrum 
both for massive and massless bosons. Angelescu-Minlos-Zagrebnov and
Minlos \cite{AMZ,Mi2} study polaron type models
and  prove absence of 
embedded mass shells below the two-boson threshold at small coupling using 
a Feshbach reduction to a (generalized) Friedrichs model also studied
in \cite{Ak}. For what appears to be technical reasons only, the papers
\cite{Ak,AMZ,Mi2} cover neither the Nelson nor the
polaron  model, due to field energy and form factor restrictions,
respectively. 
The Friedrichs model itself, corresponding to  cutting the
Fock space down to the vacuum and one-particle sectors, was studied for all couplings in \cite{GJY,GMR}.
For massless bosons (photons) De Roeck-Fr{\"o}hlich-Pizzo \cite{DRFP} 
show that in the weak coupling regime, (necessarily)
interacting \myquote{large and regular} embedded mass shells must lie close to the bottom of 
the continuous energy-momentum spectrum, and outside
a natural cylinder around zero total momentum. In a narrower cylinder Chen-Faupin-Fr{\"o}hlich-Sigal
establish in \cite{CFFS} the absence of singular continuous spectrum. For sufficiently small energies and momenta, 
these results were previously established in \cite{FGSch3} 
under the additional assumption that soft bosons are non-interacting. 

We stress that our results are valid also outside a weak coupling regime, which necessitates 
-- to put it in somewhat poetic terms -- a final goodbye to the electron and a full embrace of the polaron.

Together with Wojciech Dybalski, the authors are currently working on applying
the constructions of this paper to prove asymptotic completeness below
the two-boson threshold for models of the type considered in this paper. 

 In the remaining part of Section~\ref{Chap-Intro} we introduce the
 Hamiltonian and its energy-momentum spectrum, formulate our main
 results, and at the end we give a geometric picture
 describing our central construction in a nutshell. In Section~\ref{Chap-Regularity} we
 introduce the $C^k(A)$ classes of self-adjoint operators,
 cf.~\cite{ABG}, and prove
 that the fiber Hamiltonians of our model is of class $C^2(A)$,
 whenever $A$ is a second quantized (modified) generator of dilation.
 In Section~\ref{Chap-Mourre} we prove our main theorems. We begin
 with an analysis of the threshold set, followed by a geometric analysis of
 level crossings needed to patch together the relative velocity fields of 
 potentially several effective particle species. Finally we prove a
 Mourre estimate for fiber Hamiltonians, first for a comparison
 Hamiltonian describing an interacting system plus a free boson, and subsequently
 for the Hamiltonian itself.

\vspace{5mm}

\noindent\textbf{Acknowledgments:} 

J.~S.~Møller thanks Denis Ch{\'e}niot for
some clarifying discussions about stratifications we had while visiting Dokuz
Eyl{\"u}l University in Izmir, Turkey.
Both authors thank Wojciech Dybalski for locating a number
of bugs and misprints.

\subsection{The Hamiltonian and its Energy-Momentum Spectrum }

We consider a bare quantum particle,  moving in $\RR^\nu$ and linearly coupled to a scalar field of massive
bosons. The particle Hilbert space is
\begin{align*}
  \cK&:=L^2(\RR_y^\nu)
\end{align*}
where $y$ is the particle position. The bare particle Hamiltonian is
$\Omega(D_y)$, where $D_y:=-\ri\nabla_y$.

The one-boson Hilbert space is
\begin{align*}
  \hph &:=L^2(\RR_k^\nu)
\end{align*}
where $k$ denotes boson momentum, and the one-boson dispersion
relation is $\omega(k)$. See Condition~\ref{Cond:MC} below for the
conditions we impose on the dispersion relations $\Omega$ and $\omega$.

The Hilbert space for the field is the bosonic Fock space
\begin{align}
  \cF&=\Gamma(\hph):=\bigoplus_{n=0}^\infty\cF\p
  n,\quad\textup{where}\\
  \cF\p n&=\Gamma\p n(\hph):=\hph^{\otimes_s n}.
\end{align}
Here $\hph^{\otimes_s n}$ is the symmetric tensor product of $n$
copies of $\hph$. We write $\vacuum=(1,0,0,\dotsc)$ for the vacuum state. The
creation and annihilation operators $a^*(k)$ and $a(k)$ satisfy the
following distributional form identities, known as the canonical
commutation relations.
\begin{align}
  \begin{split}
    [a^*(k),a^*(k')]&=[a(k),a(k')]=0,\\
    [a(k),a^*(k')]&=\delta(k-k')\quad\textup{and}
  \end{split}\\
  a(k)\vacuum &=0.\nonumber
\end{align}
The free field energy is the second quantization of the one-boson
dispersion relation,
\begin{align}
  \D\Gamma(\omega)&=\int_{\RR^\nu}\omega(k)a^*(k)a(k)\,\D k.
\end{align}
The Hilbert space of the combined system is
\begin{align}
  \cH&:=\cK\otimes\cF.
\end{align}
The free and coupled Hamiltonians for the combined system are
\begin{align}
  H_0&:=\Omega(D_y)\otimes\bbbone_\cF+{\bbbone_\cK}\otimes{\D\Gamma(\omega)}\quad\textup{and}\\
  H&:=H_0+V
\end{align}
where the interaction $V$ is given by
\begin{align}
  V&:=\int_{\RR^\nu}\bigl(\e^{-\ri k\cdot y}\,\coup(k)\,{\bbbone_\cK}\otimes
  {a^*(k)}+\e^{\ri k\cdot y}\,\overline{\coup(k)}\,{\bbbone_\cK}\otimes {a(k)}\bigr)\,\D k.
\end{align}
Here $\coup\in\hph=L^2(\RR^\nu)$ is a coupling function.

The total momentum of the combined system is given by
\begin{equation}\label{TotalMomentum}
  P = D_y\otimes{\one_\cF}+{\one_\cK}\otimes{\D\Gamma(k)}.
\end{equation}
The operators $H_0$ and $H$ commute with $P$, i.e.\ $H_0$ and $H$ are
translation invariant. This implies that $H_0$ and $H$ are fibered
operators. Using the unitary transform $I_{\LLP}$ first
introduced by Lee-Low-Pines in \cite{LLP} and given by
\begin{align}\label{LLP}
  I_{\LLP}&:={({F}\otimes{\bbbone_\cF})}\circ\Gamma(\e^{-\ri k\cdot y})
\end{align}
we can identify the fibers of $H_0$ and $H$, respectively. Here $F$ is the
Fourier transform and $\Gamma$ the second quantization functor. We get
\begin{align*}
  I_{\LLP}\,H_0\,I_{\LLP}^*&=\int_{\RR^\nu}^\oplus
  H_0(\xi)\,\D\xi\quad\textup{and}\\
  I_{\LLP}\,H\,I_{\LLP}^*&=\int_{\RR^\nu}^\oplus H(\xi)\,\D\xi,
\end{align*}
where $H_0(\xi)$ and $H(\xi)$ are operators on $\cF$ and given by
\begin{align*}
  H_0(\xi)&=\D\Gamma(\omega)+\Omega(\xi-\D\Gamma(k))\quad\textup{and}\\
  H(\xi)&=H_0(\xi)+\phi(\coup).
\end{align*}
Here $\phi(\coup)$ is the field operator evaluated at $y=0$
\begin{align*}
  \phi(\coup)&=\int_{\RR^\nu}\bigl(\coup(k)\,a^*(k)+\overline{\coup(k)}\,a(k)\bigr)\,\D k.
\end{align*}
See also \cite{RS1,RS2} and \cite{DGe1} for general constructions related
to bosonic Fock space.

\begin{remark}\label{Rem-SecQuant1} Above we introduced the unitary operator
$\Gamma(\e^{-\ri k\cdot y})$ on $\cH$. This is in fact a slight
abuse of notation since the functor $\Gamma$ a priori only maps
contractions on $\hph$ to contractions on $\cF$. Here $q =
\e^{-\ri k\cdot y}$ is a contraction on $\cK\otimes\hph$.

Suppose now that $q$ is a contraction on $\cK\otimes\gothh$ (with
$\cK$ and $\gothh$ Hilbert spaces).
Introduce for each $n\geq 2$ and $j=1,\dots,n$ a unitary operator
$\cE\p{n}_j$ on $\cG\p{n} := \cK\otimes \gothh^{\otimes n}$ (full
$n$-fold tensor product) by the following prescription on simple tensors 
\[
\cE\p{n}_j \bigl(f\otimes u_1\otimes \cdots \otimes u_j \otimes \cdots
\otimes u_n\bigr)
= f\otimes u_j \otimes u_1\otimes \cdots \otimes u_{j-1}\otimes
u_{j+1} \cdots \otimes u_n.
\]
Note that $\cE\p{n}_1 = \one_{\cG\p{n}}$. We extend $q$ to contractions
on $\cG\p{n}$ setting $q_j = \cE_j^{(n)*} q\otimes
\one_{\gothh^{\otimes n-1}} \cE\p{n}_j$. Using this construction we
can define a contraction $G\p{n}(q) = q_1\cdots q_n$  on
$\cG\p{n}$. Let $G\p{0}(q)=\one_\cK$, $G\p{1}(q) = q$, and construct 
the direct sum $G(q) = \oplus_{n=0}^\infty  G\p{n}(q)$ to get a
contraction on $\cG = \cK\otimes \oplus_{n=0}^\infty \gothh^{\otimes n}$.
 If $q$ is unitary, the contraction $G(q)$ is in fact unitary on $\cG$.

 Letting $P_s$ denote the projection onto the symmetric Fock space
$\cF$ inside $\oplus_{n=0}^\infty \gothh^{\otimes n}$ we can now define
$\Gamma(q) = (\one_{\cK}\otimes P_s) G(q) (\one_\cK\otimes P_s)$ as a contraction on
  $\cH$. We warn the reader that for unitary $q$, if $q_i$ and
$q_j$ do not commute, the contraction $\Gamma(q)$ may not be unitary!
\remarkQED\end{remark}

The following minimal conditions will be imposed on the dispersion relations and
coupling function throughout the paper, and often without explicit reference.
We will in particular formulate and use results from the literature under these minimal conditions
although they may in fact hold true under weaker assumptions. The reader is asked to consult the literature for
optimal formulations of known results. 
The notation $\wgt{k}$ is an abbreviation of the function  $\sqrt{1+|k|^2}$. We will use the same notation for
numbers, vectors, and self-adjoint operators.

\begin{condition}[Minimal Conditions]\label{Cond:MC} There exist
  $s_\Omega\in [0,2]$ and $C>0$  
such that the dispersion relations and coupling satisfy:
\begin{enumerate}[label=\textbf{\textup{(MC\arabic*)}},ref=\textup{(MC\arabic*)},leftmargin=*]
\item $\omega\in C(\RR^\nu)$, $\Omega\in C^2(\RR^\nu)$ and $\coup\in L^2(\RR^\nu)$.
\item $m:= \inf_{k\in\RR^\nu} \omega(k) >0$.
\item\label{Item:GrowthOfOmegas} $\forall k\in\RR^\nu$ we have $\omega(k) \leq C \wgt{k}$, $\Omega(k) \geq C^{-1}\wgt{k}^{s_\Omega} - C$.
\item\label{Item:BasicDerOfOmega} $|\partial_\eta^\alpha \Omega(\eta)|\leq C\wgt{k}^{s_\Omega-|\alpha|}$, for all multi-indices $\alpha$ with $0\leq |\alpha|\leq 2$.
\item\label{Item:Subadditivity} $\forall k_1,k_2\in\RR^{\nu}$ we have $\omega(k_1+k_2) < \omega(k_1)+\omega(k_2)$.
\item\label{Item:NoHolesInSpec} Either $\lim_{|k|\to\infty} \omega(k)= \infty$ or: $\sup_{k\in\RR^\nu} \omega(k)<\infty$ and $\lim_{|k|\to\infty}\Omega(k)=\infty$.
\end{enumerate}
\end{condition}

Since $\Omega$ is bounded from below by \ref{Item:GrowthOfOmegas}, we
can assume without loss of generality that $\Omega\geq 0$.

\begin{remark} The translation invariant massive Nelson model as well as Fr{\"o}hlich's polaron model
satisfy the above conditions, both with non-relativistic and
relativistic electron dispersion relation.
 
We recall that the physical  interactions $\coup$,
up to a constant multiple, are
$1/\sqrt{\omega(k)}$, with $\omega(k) = \sqrt{k^2 + m^2}$, for the Nelson model, and $1/|k|$ for the polaron model
in three dimensions. The phonon dispersion relation in the polaron
model is taken to be a positive (material dependent) constant function.  

For both models we are required to impose a UV cutoff on the physical
interaction. 
However, there does not seem to be a fundamental obstacle
to consider also the UV-renormalized models (if $\Omega(\eta) =
\eta^2$) as in \cite{Am}, although an extension to the model without
a UV cutoff is likely to be a delicate task.
\remarkQED\end{remark}

In the remaining part of this section we list a number of known properties of $H$ and its energy-momentum spectrum
\begin{equation}\label{EMSpectrum}
\Spectrum = \bigset{(\xi,E)\in\RR^\nu\times\RR}{ E\in\sigma(H(\xi))}.
\end{equation}
These properties have a long history, see e.g. \cite{Ca,FrJ2,GeLo,Mi1,MAHP,MRMP,Ne,Sp2}, 
with the most complete results in \cite{MRMP}, 
where the reader can also find a comprehensive discussion of the literature on the subject.

Let
\begin{equation}\label{BasicCore}
\cCo:=\Gamma\fin(C_0^\infty(\RR^\nu))\subset\cF,
\end{equation}
where $\Gamma\fin(\cV)$ denotes the algebraic direct sum of the algebraic tensor products
$\cV^{\otimes_\mathrm{s} n}$, where $\cV\subset \hph$. In fact,
when tensor products appear between spaces not all of which are
complete (as Hilbert spaces) an algebraic tensor product is implicitly understood.
The operator $H_0$ is essentially self-adjoint on $C_0^\infty(\RR^\nu)\otimes \cC$, and $V$ is an infinitesimally small
perturbation in the sense of Kato-Rellich. Hence $H$ is also essentially self-adjoint on $C_0^\infty(\RR^\nu)\otimes \cC$,
and the domain of the closures, which we as usual denote by the same symbols, coincide.

Similarly $H_0(\xi)$ is essentially self-adjoint on $\cC$ and $\phi(\coup)$ is an infinitesimally small perturbation,
hence $H(\xi)$ is also essentially self-adjoint on $\cC$. Not only do their domains coincide, 
they are independent of total momentum $\xi$, and we denote the common domain of self-adjointness by
\begin{equation}\label{CommonDomain}
\cD := \cD(H_0(\xi)) = \cD(H(\xi)).
\end{equation}

One can easily verify that $\xi\to (H(\xi)-\ri)^{-1}$ is norm continuous, and hence we observe
by a norm resolvent convergence argument, cf \cite[Theorem~VIII.23]{RS1}, that $\Sigma$ is a closed set.

We pause to introduce some notation. We denote the bottom of the spectrum
of the fiber Hamiltonians by
\begin{equation}\label{Sigma0xi}
  \Spectrum_0(\xi):=\inf\sigma(H(\xi)),
\end{equation}
and the bottom of the spectrum of the full operator by
\begin{equation}\label{Sigma0}
  \Spectrum_0 :=\inf_{\xi\in\RR^\nu}\Sigma_0(\xi)>-\infty.
\end{equation}
Let $n\in\NN$ be some positive integer and
$\uk=(k_1,\dotsc,k_n)\in\RR^{n\nu}$. We introduce the least energy
of a composite system consisting of a copy of an interacting
system at momentum $\xi-\sum_{j=1}^n k_j$ and $n$ non-interacting
photons with momenta $k_j$
\begin{align}\label{n-boson-energy}
   \Spectrum_0\p{n}(\xi;\uk)&:=\Sigma_0(\xi-\textstyle\sum_{j=1}^n k_j)+\sum_{j=1}^n\omega(k_j).
\end{align}
The following functions are the so-called $n$-boson thresholds, i.e. the least energy needed
to support an interacting state and $n$ free bosons at a given total momentum
\begin{equation}\label{n-boson-threshold}
  \Spectrum_0\p{n}(\xi):=\inf_{\uk\in\RR^{n\nu}}\Sigma_0\p{n}(\xi;\uk).
\end{equation}
Abusing notation, we write $\Spectrum_0\p{n}$ both for the function and for its graph.
We should warn the reader that the terminology \myquote{threshold}
carries a dual meaning. 
The use of \myquote{\emph{the} $n$-boson threshold} to describe
$\Sigma\p{n}_0(\xi)$ refers to its literal meaning as the lowest energy
supporting an interacting system and $n$ free bosons.
It is in fact also \myquote{\emph{an} $n$-boson threshold} in the physical
sense of the word threshold as an energy at which the system 
can form an interacting bound state plus $n$ free bosons, with zero breakup velocity. 
We stress that these are in general not the only (physical) thresholds of the system.

With the above notation the HVZ Theorem takes the form
\begin{equation}\label{HVZ}
\sess(H(\xi)) = [\Spectrum_0\p{1}(\xi),\infty),
\end{equation}
and below $\Spectrum_0\p{1}(\xi)$ the spectrum of $H(\xi)$ consists of locally finitely many eigenvalues all of
finite multiplicity, that may only accumulate at $\Spectrum_0\p{1}(\xi)$. We will often write
$\Spectrum_\ess(\xi) = \Spectrum_0\p{1}(\xi)$ to emphasize the role of 
the one-boson threshold as the bottom of the essential energy-momentum spectrum.
We remark that the assumption \ref{Item:NoHolesInSpec} ensures that
the essential energy-momentum spectrum does not have holes.

Due to the subadditivity assumption \ref{Item:Subadditivity} on $\omega$, the $n$-boson thresholds are increasing in $n$, i.e.
\begin{equation}\label{MonotoneThresholds}
\forall n > m: \quad \Spectrum_0\p{n}(\xi) \geq \Spectrum_0\p{m}(\xi).
\end{equation}
If $\lim_{|k|\to\infty} \omega(k) = \infty$ the inequality is strict. If $M=\sup_{k\in\RR^\nu} \omega(k)<\infty$,
then the inequality remains  strict under the extra assumption $2\liminf_{|k|\to\infty}\omega(k) > M$, satisfied obviously by the constant
polaron dispersion, cf. \cite{MRMP}. This can be considered a remark
on non-triviality of our results, since we
work in the energy-momentum region between the graphs of
$\Spectrum_0\p{1}$ and $\Spectrum_0\p{2}$.

Finally we remark that isolated ground states of $H(\xi)$ are
non-degenerate, in particular the ground state mass shell does not cross any possibly
existing isolated excited mass shells. Very little is known about the
structure of the discrete spectrum when we are away from the weak coupling regime.
In fact, we rely only on some symmetry observations and Kato's general
analytic perturbation theory, which applies to the family $\{H(\xi)\}_{\xi\in\RR^\nu}$.
In the weak coupling regime one can compare with the uncoupled model and derive stronger results \cite{AMZ}.
It is still an open problem to produce a verifiable condition under which an
excited mass shell exists, cf. however \cite{Mi1}.

It is a curious fact that in dimensions $\nu=1,2$ the ground
state energy $\Spectrum_0(\xi)$ is an isolated eigenvalue for all
$\xi$ \cite{MAHP,MRMP,Sp2}.
In dimensions $\nu=3$ and higher the ground state mass shell
is expected to vanish
into the continuous energy-momentum spectrum at some critical
momentum, something only known as a fact in the weak coupling regime 
\cite{AMZ,Mi1,Mi2} (not including the physical combinations of dispersions
and couplings).

We identify distinct mass shells, as functions of total momentum
$\xi\to \Shell(\xi)$, with effective particles with dispersion
relation given by $S$. In the case of the polaron model, it is the
ground state which in the literature is referred to as the
Fr{\"o}hlich polaron.

\subsection{Extended Objects}\label{sec-ext}

In this subsection we introduce a new Hamiltonian which
plays the role of the generator of the dynamics for a system of one interacting 
particle and a number of free bosons. The interacting particle and the free bosons
are not coupled. Operators of this type were also used in \cite{Am,DGe1,DGe2,FGSch2,FGSch3,MAHP,MRMP}.
This is a natural object in the context of scattering theory, where one
expects scattering states to decay into interacting bound states
under emission of asymptotically free bosons.

We abbreviate
\[
\cF\ext = \cF\otimes \cF \quad \textup{and} \quad \cH\ext = \cH\otimes \cF= \cK\otimes \cF\ext .
\]
For a self-adjoint operator $a$ on $\hph$, we extend the second quantization operation to $\cF\ext$ by
the construction
\[
\D\Gamma\ext(a) = \D\Gamma(a) \otimes \one_\cF + \one_\cF\otimes\,\D\Gamma(a).
\]
Note that $\D\Gamma\ext(a)$ is essentially self-adjoint on
$\Gamma\ext\fin(D) = \Gamma\fin(D)\otimes\Gamma\fin(D)$, if
$D\subset \hph$ is a domain of essential self-adjointness for
$a$, cf. \cite{RS1}.
We can now define the Hamiltonian
describing an interacting system together with free (asymptotic) bosons.
It is given by
\begin{equation}\label{xHamiltonian}
H\ext =  H\otimes\one_\cF + \one_\cH\otimes\, \D\Gamma(\omega) = \Omega(D_y)\otimes \one_{\cF\ext} + \one_\cK\otimes \,\D\Gamma\ext(\omega) +  V\otimes\one_\cF
\end{equation}
as an operator on the Hilbert space $\cH\ext$. The free operator, with
$\coup=0$, is essentially self-adjoint on 
\begin{equation}\label{xBasicCore}
\cCo\ext = \cCo\otimes\Gamma\fin\bigl(C_0^\infty(\RR^\nu)\bigr),
\end{equation}
and so is $H\ext$ by a
Kato-Rellich argument.

We adopt the terminology from \cite{DGe1} and call $H\ext$ the \emph{extended Hamiltonian}
and $\cH\ext$ the \emph{extended Hilbert space}.
We remark that $H\ext$ commutes with the extended total momentum operator
\begin{equation}\label{xTotalMomentum}
P\ext = P\otimes\one_\cF+ \one_\cH\otimes\,\D\Gamma(k) = D_y \otimes \one_{\cF\ext} + \one_\cK\otimes\, \D\Gamma\ext(k).
\end{equation}

We extend the functor $\Gamma$ from Remark~\ref{Rem-SecQuant1} as
follows. Denote by $\exchange$ the exchange involution on $\cH\ext$ defined
on simple tensors as $\exchange(f\otimes\psi\otimes\varphi) = f\otimes
\varphi\otimes\psi$, where $f\in\cK$ and $\psi,\varphi\in\cF$.
For a contraction $q$ on $\cK\otimes\hph$ we define 
\[
\Gamma\ext(q) = \bigl(\Gamma(q)\otimes\one_\cF\bigr)\exchange\bigl(\Gamma(q)\otimes\one_\cF\bigr)\exchange.
\]
This is only a good definition if the $q_i$'s commute,
cf. Remark~\ref{Rem-SecQuant1}. Denote by $P\p{n}$ the projection of
$\cH\ext$ onto $\cH\otimes\cF\p{n}$ and observe that
$P\p{n}\Gamma\ext(q) = \Gamma\ext(q)P\p{n}$. Abbreviate
$\Gamma\p{n}(q) = P\p{n} \Gamma\ext(q)P\p{n}$ as a contraction on $\cH\otimes\cF\p{n}$.

 We now build the extended Hamiltonian $H\ext$, cf. \eqref{xHamiltonian}, from the inside out as an explicitly fibered operator.
Recall that $H\ext$ commutes with the extended total momentum $P\ext$, cf. \eqref{xTotalMomentum}.
First we introduce fiber Hamiltonians for an interacting system at total momentum $\xi$,
and $n$ asymptotically free bosons with momenta $\uk = (k_1,\dots,k_n)$. These are self-adjoint operators on $\cF$ given by
\[
H\p{n}(\xi;\uk) = H(\xi-\textstyle\sum_{j=1}^n k_j)+\bigl(\sum_{j=1}^n\omega(k_j)\bigr)\one_\cF.
\]
From these operators we construct self-adjoint fiber operators on
$L^2\sym(\RR^{n\nu};\cF)\simeq \cF\otimes\cF\p{n}$ by the direct integral construction
\[
H\p{n}(\xi) = \int^\oplus_{\RR^{n\nu}} H\p{n}(\xi;\uk)\,\D k.
\]
Here the subscript `sym' indicates that the functions are symmetric
under permutation of the $n$ variables, reflecting Bose statistics.
Finally, by another direct integral construction and an application of an extended version of $I_\LLP$, cf. \eqref{LLP},
we can define
\[
H\p{n} = I_\LLP^{(n)^*} \Bigl( \int^\oplus_{\RR^{\nu}} H\p{n}(\xi)\,\D\xi \Bigr) I_\LLP\p{n},
\]
as an operator on the Hilbert space $\cH\p{n} = \cH\otimes\cF\p{n}$.
Here $I_\LLP\p{n} = (F\otimes\one_{\cF\otimes \cF\p{n}}) \Gamma\p{n}(\e^{\ri k\cdot y})$.
The full extended Hamiltonian can now be expressed as a direct sum
\[
H\ext = H\oplus\Bigl(\bigoplus_{n=1}^\infty H\p{n}\Bigr)
\]
as an operator on the extended Hilbert space $\cH\ext = \cH\otimes \cF = \cH\oplus(\oplus_{n=1}^\infty \cH\p{n})$.
Similarly we can introduce fiber operators
\[
H\ext(\xi) = H(\xi)\oplus\Bigl(\bigoplus_{n=1}^\infty H\p{n}(\xi)\Bigr)
\]
as an operator on $\cF\ext$. From this construction we can directly identify $H\ext(\xi)$ as the fiber operators of $H\ext$
and we have the fibration
\[
H\ext = I_\LLP^{\mathrm{x}^*} \Bigl( \int^\oplus_{\RR^\nu} H\ext(\xi) \,\D\xi\Bigr) I_\LLP\ext,
\]
where $I_\LLP\ext = (F\otimes\one_{\cF\ext}) \Gamma\ext(\e^{\ri k\cdot y}) = I_\LLP\oplus (\oplus_{n=1}^\infty I_\LLP\p{n})$.

Note that these constructions tie in well with the notion of  thresholds for supporting states with free bosons, i.e. the functions
$\Spectrum_0\p{n}(\xi;\uk)$ and $\Spectrum_0\p{n}(\xi)$ introduced in \eqref{n-boson-energy} and \eqref{n-boson-threshold}.
More precisely we have
\begin{equation}\label{GroundStateOfHn}
\inf\sigma\bigl(H\p{n}(\xi;\uk)\bigr) = \Spectrum_0\p{n}(\xi;\uk) \quad \textup{and} \quad \inf\sigma\bigl(H\p{n}(\xi)\bigr) = \Spectrum_0\p{n}(\xi).
\end{equation}

\subsection{The Results}

To formulate our main results on the structure of the
energy-momentum spectrum below the two-boson threshold we need an extra
set of assumptions. The condition below depends on a natural number
$n_0$ encoding the amount of control required. The condition will be
used with $n_0=0$ for our result on the structure of the threshold
set, with $n_0=1$ for our result on the structure of embedded point spectrum,
and with $n_0=2$ for our result on absence of singular continuous spectrum.

\begin{condition}[Spectral Theory]\label{Cond:MT}  Let $n_0\in\NN$. We impose
\begin{enumerate}[label=\textbf{\textup{(ST\arabic*)}},ref=\textup{(ST\arabic*)},leftmargin=*]
\item\label{Item:RealAnal} $\omega$ and $\Omega$ are real analytic functions.
\item\label{Item:DerOfv} $\coup$ admits $n_0$ distributional derivatives with 
$\partial^\alpha_k \coup\in L^2_\loc(\RR^\nu\backslash\{0\})$, for all $1\leq |\alpha|\leq n_0$.
\item\label{Item:RotInv}  For all orthogonal matrices
  $O\in\Ortho(\nu)$ 
and all $k\in \RR^\nu$ we have  $\omega(Ok) = \omega(k)$, $\Omega(Ok)=\Omega(k)$
and $\coup(Ok) = \coup(k)$ almost everywhere.
\item $\sup_{k\in\RR^\nu}|\partial^\alpha\omega(k)|<\infty$ for
  $|\alpha|\geq 1$ and $\sup_{\eta\in\RR^\nu}|\partial^\beta
  \Omega(\eta)|<\infty$ for $|\beta|\geq 2$.
\end{enumerate}
\end{condition}

\begin{remark} The assumptions of real analyticity \ref{Item:RealAnal} 
and rotation invariance \ref{Item:RotInv} serve a combined purpose.
The rotation invariance ensures that the energy-momentum spectrum
$\Spectrum$ (and all its components, i.e. pure point, absolutely and
singular continuous spectrum), are
rotation invariant, i.e. $(\xi,E)\in\Sigma$ and $O\in\Ortho(\nu)$
implies $(O\xi,E)\in \Sigma$. 
In particular, the $n$-boson thresholds
$\Spectrum_0\p{n}$ are rotation invariant, cf. \eqref{n-boson-threshold}.
The functions $\Spectrum_0\p{n}(\xi;\uk)$, cf. \eqref{n-boson-energy}, however, only retain invariance under
simultaneous rotation of all $k_j$'s around the $\xi$ axis.

From the point of view of the models discussed so far, these are
reasonable assumptions. 
However, one should keep in mind that
dispersion relations in solid state physics are material dependent
functions and more 
realistic ones are not likely to carry any more symmetry
than discrete symmetries of an underlying lattice. 
We do not consider \ref{Item:RotInv} to be an essential assumption, 
cf. the discussion in Subsection~\ref{Sec-Strata}.
\remarkQED\end{remark}

The above remark, together with Kato's analytic perturbation theory
\cite{Ka}, 
enables a precise description of the isolated part of the energy momentum spectrum
\begin{equation}\label{IsolatedSpectrum}
\Spectrum_\iso = \bigset{(\xi,E)\in\Spectrum}{ E< \Spectrum_\ess(\xi)},
\end{equation}
as a collection of real analytic mass shells and level crossings. The
set $\Spectrum_\iso$ forms a subset of the full pure-point
energy-momentum spectrum
\begin{equation}\label{PointSpectrum}
\Spectrum_\pp = \bigset{(\xi,E)\in\Spectrum}{ E\in\sigma_\pp(H(\xi))}.
\end{equation}
While the general analytic structure of $\Spectrum_\iso$ is understood,
the only thing we can a priori say about $\Spectrum_\pp$ is that it is a Borel subset of $\RR^{\nu+1}$,
cf. Appendix~\ref{App-Fiber}.

We introduce the set of level crossings for isolated mass shells:
\begin{equation}\label{crossings}
\cross :=\bigset{(\xi,E)\in\Spectrum_\iso}{\forall n\in\NN: \
  \Spectrum_\iso\cap B((\xi,E);1/n) \ \textup{is not a graph}},
\end{equation}
where $B(a,r)$ denotes the open ball of radius $r$, centered at $a$.
The connected components of $\cross$ are $S^{\nu-1}$-spheres of the form
$\partial B(0;R)\times \{E\}$, or as a possibly degenerate case, of the form $\{0\}\times \{E\}$.
 The spheres forming the connected components of $\cross$ will also be called \emph{level crossings}. They are
isolated $S^{\nu-1}$-spheres, possibly accumulating either at
infinity or at the bottom
of the essential energy-momentum spectrum
$\Spectrum_\ess$.
In particular, elements $(\xi,E)\in\cross$  represent
eigenvalues $E$ of $H(\xi)$ with a given finite multiplicity.
The connected components of $\cross$ are connected in $\Spectrum_\iso$
by real analytic manifolds, each carrying a finite multiplicity,
in such a way that the sum of the multiplicities of
shells emanating from the same crossing, should equal the multiplicity of the crossing.
We denote the collection of such real analytic manifolds by $\shells$. To be more
precise, by a shell we understand a pair $(\Annul,\Shell)\in\shells$, where $\Annul$ is
an open annulus $\set{\xi\in\RR^\nu}{r< |\xi| <R}$, with $0\leq r < R$, or an open ball centered at $0$.
The function $\Shell\colon \Annul\to\RR$ is real analytic and rotation invariant, with $\Spectrum_0(\xi)\leq\Shell(\xi) <
\Spectrum_\ess(\xi)$
and such that the graphs of the shells together with the level crossings cover
the entire isolated spectrum in energy-momentum space.
For $(\Annul,\Shell)\in\shells$, denote by
\begin{equation}\label{Graph-Of-Shell}
\cG_\Shell = \bigset{(\xi,\Shell(\xi))}{\xi\in \Annul}
\end{equation}
its graph in energy-momentum space. We have
$\cG_\Shell\cap \cG_{\Shell'} = \emptyset$, for all distinct shells $(\Annul,\Shell)\neq
(\Annul',\Shell')$,
and $\cG_\Shell\cap \cross=\emptyset$ for all
$(\Annul,\Shell)\in\shells$. In addition, to ensure we have all shells
covered, we demand that
\[
\Spectrum_\iso = \cross\cup \Bigl(\bigcup_{(\Annul,\Shell)\in\shells} \cG_\Shell \Bigr).
\]
We remark that due to rotation invariance, the mass shells continue
analytically through level crossings.
The reader can consult \cite{Ka} for the analytic structure of
isolated eigenvalues of holomorphic families of self-adjoint operators.
We remark that for a fixed unit vector $u\in\RR^\nu$, the map
$\kappa\to H(\kappa u)$ defines a \myquote{Type A} family of operators. See \cite{FrJ2}.

For a  given element $(\Annul,\Shell)\in\shells$, the graph $\cG_\Shell$ may have $0$, $1$ or  $2$ finite
boundaries that are $S^{\nu-1}$-spheres (perhaps of radius $0$). The case of no boundary, indicates a mass shell that
without crossings extends to infinity in total momentum.  An example of
such a shell would be the ground state mass shell in dimensions one
and two, cf. \cite{MAHP,MRMP,Sp2}.
 A boundary $S^{\nu-1}$-sphere can be one of two things. Either it is
a connected component of $\cross$, i.e. a level crossing,
or it is a subset of  $\Spectrum_\ess = \Spectrum_0\p{1}$, the boundary of the continuous
energy-momentum spectrum. 

Unless a mass shell $(\Annul,\Shell)$ is constant,
its gradient $\nabla \Shell$
can at most vanish on isolated $S^{\nu-1}$-spheres that can only
accumulate at infinity. 
We remark that we do not know the manner in which
mass shells, ground state or excited, dip into the continuous
spectrum.
One could speculate that it does so at worst as a branch of a Puiseaux series, something which
may have useful consequences. See \cite{MMPQM}.

Having discussed the structure of the isolated spectrum,
we now turn to the subset of the continuous energy-momentum spectrum below the two-boson threshold
\begin{equation}\label{EM-Region}
\cE\p{1} = \bigset{(\xi,\lambda)\in\RR^{\nu}\times\RR}{\Sigma\p{1}_0(\xi)\leq \lambda  < \Spectrum\p{2}_0(\xi)}.
\end{equation}
Write $\cE\p{1}(\xi) = [\Sigma\p{1}_0(\xi),\Spectrum\p{2}_0(\xi))$ such that
$\cE\p{1} = \set{(\xi,\lambda)\in\RR^\nu\times\RR}{\lambda\in\cE\p{1}(\xi)}$. 

Our first result is concerned with the structure of possibly embedded
point spectrum inside $\cE\p{1}$. To formulate the theorem, we need to 
carefully formalize the notion of thresholds. We should identify
energy-momenta inside $\cE\p{1}$ where emitted bosons fail to break free from
the remaining interacting system with a non-zero relative velocity, thus preventing
them from becoming asymptotically free field particles. The threshold set
pertaining to one-boson emission processes $\thr\p{1}$ has three
components which we now discuss.

The first, and perhaps most obvious, is the set of
one-boson thresholds where the remaining interacting system after
boson emission ends up inside an isolated  mass shell
$(\Annul,\Shell)\in\shells$.
We define $\Shell\p{1}(\xi;\cdot)\colon \Annul+\xi\to \RR$ by
\begin{equation}\label{EffectiveShell}
\Shell\p{1}(\xi;k) = \Shell(\xi-k) + \omega(k).
\end{equation}
This extends the construction \eqref{n-boson-energy} to (possibly
existing) excited mass shells, and is the post-emission effective dispersion
relation governing the composite interacting system plus emitted boson. The mass
shell contribution to
one-boson thresholds is
\begin{align}\label{LC-Shells}
\thr\p{1}_\shells & = \bigset{(\xi,E)\in\RR^{\nu+1}}{ E\in\thr\p{1}_\shells(\xi)},\\
\nonumber \thr\p{1}_\shells(\xi) &= \bigset{ E\in\RR}{\exists
  (\Annul,\Shell)\in\shells, k\in\Annul+\xi:\ E = \Shell\p{1}(\xi;k),
   \nabla_k \Shell\p{1}(\xi;k)=0}.
\end{align}
We emphasize that $\nabla_k \Shell\p{1}(\xi;k)=0$ is the same as
$\nabla\Shell(\xi-k) = \nabla\omega(k)$, i.e. the asymptotic boson 
and the remaining interacting system have identical velocities.
 This defines one
contribution to the one-boson threshold set.
One can similarly define $n$-body thresholds, which however will sit above the 
(lowest) two-boson threshold $\Spectrum\p{2}_0$ and therefore we disregard them here, cf. \eqref{MonotoneThresholds}.

To understand the next two contributions to the threshold set 
we need to explain the
dynamics at level crossings. Suppose we are at an energy $E$ and total
momentum $\xi$, with one
free boson at momentum $k$ such that
$(\xi-k,E-\omega(k))\in\chi$. The only direction in momentum space we
can control is where $(\xi-k,E-\omega(k))$ moves inside level
crossings, which form  $S^{\nu-1}$-spheres. Inside such spheres the energy
of the bound system stays constant, due to being
constrained to a crossing, so the effective dispersion only varies through
the contribution from $k\to \omega(k)$. The effective dispersion
therefore has
critical momenta where the tangential derivative of $\omega$, with
respect to the $S^{\nu-1}$-sphere, vanishes.  Since $\omega$ is rotation
invariant this can happen in two ways. Either $k$ is parallel to $\xi$
in which case $\nabla\omega(k)$ is normal to the sphere, or it can
happen if $\nabla\omega(k)=0$.

 The next contribution comes from the need to avoid landing on a level
 crossing with $k$ parallel to $\xi$ after emission of one boson with
 momentum $k$.
Given $\xi\in\RR^\nu$, let $u\in\RR^\nu$ be a unit vector such that $\xi = s u$ for some $s\in\RR$.
We introduce the set
\begin{equation}\label{LC-Parallel}
\thr\p{1}_\parallel(\xi) := \bigset{E\in \RR}{\exists r\in\RR:
  \quad \bigl(\xi-r u,E-\omega(r u)\bigr)\in\cross}.
\end{equation}
If $\xi=0$, the unit vector $u$ can be chosen arbitrarily and we observe,
since $\omega$ and the set $\cross$ are rotation invariant, that
\begin{equation}\label{RelaxToCross0}
\thr\p{1}_\parallel(0) = \bigset{E \in \RR}{\exists k\in\RR^\nu:
  \quad \bigl(k,E-\omega(k)\bigr)\in\cross}.
\end{equation}

The final contribution to the threshold set consists of energies at
which it is possible to emit a boson of momentum $k$ with  $\nabla\omega(k)
= 0$
and the remaining interacting system at a level crossing. 
\begin{equation}\label{LC-Angular}
\thr\p{1}_{\nparallel}(\xi) :=  \bigset{E\in \RR}{ \exists k\in\RR^\nu:
  \quad \bigl(\xi-k,E-\omega(k)\bigr)\in\cross \ \textup{and} \ \nabla\omega(k)=0 }.
\end{equation}
The reader can safely on a first reading disregard this contribution since
in typical models $\thr\p{1}_{\nparallel}(\xi)$ will be a subset of $\thr\p{1}_\parallel(\xi)$. This happens of course in dimension $1$,
if $\nabla\omega(k)\neq 0$ for $k\neq 0$, and finally in the case of the polaron model.
Note that we always have $\thr\p{1}_{\nparallel}(0)\subset \thr\p{1}_\parallel(0)$, cf. \eqref{RelaxToCross0}.

The total threshold set at total momentum $\xi$ can now be defined to be
\[
\thr\p{1}(\xi) = \thr\p{1}_\shells(\xi)\cup \thr\p{1}_\parallel(\xi)\cup\thr\p{1}_\nparallel(\xi).
\]
Finally, we introduce the following notation for threshold sets
as subsets of energy-momentum space: $\thr\p{1} = \set{(\xi,E)}{E\in\thr\p{1}(\xi)}$,
 $\thr\p{1}_\parallel =
\set{(\xi,E)}{E\in\thr\p{1}_\parallel(\xi)}$ and $\thr\p{1}_\nparallel =
\set{(\xi,E)}{E\in\thr\p{1}_\nparallel(\xi)}$. 

The first theorem we present establishes the structure of the
threshold set below the two-boson threshold.

\begin{theorem}\label{Thm-thr} Assume Conditions~\ref{Cond:MC}
  and~\ref{Cond:MT}, with $n_0=0$. Let $\xi\in\RR^\nu$. The following holds
\begin{enumerate}[label=\textup{(\roman*)},ref=(\roman*)]
\item\label{Item-pp-1} $\thr\p{1}\cap \cE\p{1}$ is a relatively closed subset of $\cE\p{1}$.
\item\label{Item-pp-2} 
The set $\thr\p{1}(\xi)\cap \cE\p{1}(\xi)$ is a discrete subset of
$\cE\p{1}(\xi)$, i.e. it is at most countable and can accumulate only
at $\Spectrum\p{2}_0(\xi)$.
\end{enumerate}
\end{theorem}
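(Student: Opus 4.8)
The plan is to treat the three constituents $\thr\p{1}_\shells(\xi)$, $\thr\p{1}_\parallel(\xi)$, $\thr\p{1}_\nparallel(\xi)$ of $\thr\p{1}(\xi)$ separately, in each case exploiting the real analyticity \ref{Item:RealAnal} of $\omega$ and $\Omega$ together with the rotation invariance \ref{Item:RotInv}. The role of rotation invariance is that the gradient of a radial function is radial, so that after discarding the angular directions the vanishing-gradient conditions defining these sets reduce either to one-dimensional analytic equations or to conditions pinning a radial variable to a discrete analytic set. Fixing $\lambda_0 < \Spectrum_0\p{2}(\xi)$, I would show that $\thr\p{1}(\xi) \cap [\Spectrum_0\p{1}(\xi),\lambda_0]$ is \emph{finite}; since $\cE\p{1}(\xi) = [\Spectrum_0\p{1}(\xi),\Spectrum_0\p{2}(\xi))$ is exhausted by such intervals as $\lambda_0 \nearrow \Spectrum_0\p{2}(\xi)$, this gives \ref{Item-pp-2}. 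A prerequisite used throughout, and flagged at the end as the main difficulty, is the \emph{momentum confinement}: for $\lambda < \Spectrum_0\p{2}(\xi)$ the set $\set{k}{\Spectrum_0(\xi-k)+\omega(k) \le \lambda}$ of emitted-boson momenta relevant at energies $\le \lambda$ is bounded.

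For the shell contribution, fix $(\Annul,\Shell) \in \shells$. As $\Shell$ and $\omega$ are radial, the critical equation $\nabla\Shell(\xi-k) = \nabla\omega(k)$ behind \eqref{LC-Shells} forces a critical momentum $k$ of $\Shell\p{1}(\xi;\cdot)$ to be of one of two types: (a) $k$ collinear with $\xi$, whence the component of $\nabla_k\Shell\p{1}(\xi;\cdot)$ perpendicular to $\xi$ vanishes automatically and the problem collapses to the critical points of the real-analytic function $t \mapsto \Shell(\xi-tu) + \omega(tu)$ of the single variable $t$ along a line $\RR u$ (with $u$ arbitrary if $\xi = 0$); or (b) $\nabla\omega(k) = 0$ \emph{and} $\nabla\Shell(\xi-k) = 0$ simultaneously. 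In case (a), a real-analytic function of one variable is either constant or has isolated critical points, so has finitely many critical values on any bounded $t$-interval; in case (b), the stated structure of the zero sets of $\nabla\Shell$ (isolated $S^{\nu-1}$-spheres accumulating at most at infinity) and of $\nabla\omega$ pins $\abs{\xi-k}$ and $\abs{k}$ to discrete sets on which $\Shell\p{1}(\xi;\cdot)$ is constant, again leaving finitely many values in a bounded energy range. (Alternatively, the critical values of the real-analytic $\Shell\p{1}(\xi;\cdot)$ on a compact subset of its domain form a finite set by a subanalytic Sard theorem.) By momentum confinement, only finitely many shells from the locally finite family $\shells$ contribute values in $[\Spectrum_0\p{1}(\xi),\lambda_0]$, so $\thr\p{1}_\shells(\xi) \cap [\Spectrum_0\p{1}(\xi),\lambda_0]$ is finite.

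For $\thr\p{1}_\parallel(\xi)$ and $\thr\p{1}_\nparallel(\xi)$, recall that the connected components of $\cross$ are isolated $S^{\nu-1}$-spheres $\partial B(0;R_j) \times \{E_j\}$, accumulating at most at infinity or at $\Sess = \Spectrum_0\p{1}$. Writing $\xi = su$, the condition in \eqref{LC-Parallel} reads $\abs{s-r} = R_j$ and $E = E_j + \omega((s \pm R_j)u)$, so $\thr\p{1}_\parallel(\xi)$ is the countable set $\set{E_j + \omega((s \pm R_j)u)}{j}$; the condition in \eqref{LC-Angular} is handled identically, using in addition that $\set{k}{\nabla\omega(k) = 0}$ is a discrete union of spheres (or all of $\RR^\nu$, which only helps). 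Suppose now $E_n \in \thr\p{1}_\parallel(\xi) \cap \cE\p{1}(\xi)$ are distinct with $E_n \to E_\ast$. From $(\xi - r_n u, E_n - \omega(r_n u)) \in \cross \subset \Spectrum_\iso$ one gets $\Spectrum_0 \le E_n - \omega(r_n u) < \Sess(\xi - r_n u)$, so $E_{j_n} := E_n - \omega(r_n u)$ is bounded. If the radii $R_{j_n}$ stay bounded, then passing to a subsequence and using that the crossing spheres are isolated forces $(R_{j_n}, E_{j_n}) \to (R_\ast, \Spectrum_0\p{1}(R_\ast u))$, a point of $\Sess$; combined with the rotation invariance of $\Spectrum_0\p{1}$ and the elementary bound $\Spectrum_0\p{1}(\xi-k) + \omega(k) \ge \Spectrum_0\p{2}(\xi)$ (immediate from \eqref{n-boson-energy} and \eqref{n-boson-threshold}), this forces $E_\ast \ge \Spectrum_0\p{2}(\xi)$. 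If instead $R_{j_n} \to \infty$, the same conclusion follows from $E_n \ge \Spectrum_0(\xi - r_n u) + \omega(r_n u)$ and the large-momentum input discussed below. Hence neither $\thr\p{1}_\parallel(\xi) \cap \cE\p{1}(\xi)$ nor $\thr\p{1}_\nparallel(\xi) \cap \cE\p{1}(\xi)$ can accumulate strictly below $\Spectrum_0\p{2}(\xi)$, which with the previous paragraph completes \ref{Item-pp-2}. Assertion \ref{Item-pp-1} is obtained by the same compactness bookkeeping, now also letting the total momentum vary and using local finiteness of $\shells$ and isolatedness of $\cross$ to pass the defining relations of $\thr\p{1}$ to a convergent subsequence.

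The step I expect to be the genuine obstacle is the momentum confinement asserted above. When $\omega$, or $\Spectrum_0$, is coercive --- the first alternative in \ref{Item:NoHolesInSpec}, which covers the Nelson model --- it is immediate from \ref{Item:GrowthOfOmegas}. In the bounded-$\omega$ regime (the polaron), however, $\Spectrum_0$ need not grow and a mass shell may persist out to $\abs{\xi} = \infty$, so one must invoke the large-total-momentum behaviour of $\Spectrum_0$ and of the mass shells established in \cite{MRMP} --- in essence that the spectral gap $\Sess(\eta) - \Spectrum_0(\eta)$ shrinks to $0$ as $\abs{\eta} \to \infty$, so that $\liminf_{\abs{k}\to\infty}\bigl(\Spectrum_0(\xi-k) + \omega(k)\bigr) \ge \Spectrum_0\p{2}(\xi)$ and hence $\set{k}{\Spectrum_0(\xi-k)+\omega(k) \le \lambda}$ is bounded for every $\lambda < \Spectrum_0\p{2}(\xi)$. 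This asymptotic input is exactly what keeps one-boson threshold energies from accumulating anywhere except at the two-boson threshold.
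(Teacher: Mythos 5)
Your proposal is essentially correct and follows the same route as the paper. All the key ingredients are there and correctly deployed: the compactness/momentum-confinement input from \cite{MRMP} via the closing-gap estimate \eqref{ClosingGap} (which is indeed the genuine obstacle you flag, and is codified in the paper as Lemma~\ref{Lemma-CompactKandSigma}); the splitting into the three constituent sets $\thrS$, $\thrC$, $\thrN$ (the paper handles the latter two first, in Lemma~\ref{Lemma-ParaAndNonPara}, before treating $\thrS$ in the proof of Theorem~\ref{Thm-thr} proper); the dichotomy in the shell analysis between $k$ collinear with $\xi$ and $\nabla\omega(k)=0$, which mirrors the paper's Cases A and B; and the use of rotation invariance to collapse the critical-point equation to a one-variable real-analytic problem along a line. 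Your phrasing in terms of finitely many critical values in compact subintervals is a cleaner restatement of the paper's eventual-constancy-of-sequences argument, and your treatment of accumulation of crossing spheres at $\Sess$ matches Steps II--III of Lemma~\ref{Lemma-CompactKandSigma}. The one place you are noticeably terser than the paper is item~\ref{Item-pp-1}: the paper's proof of $\overline{\cU}\cap\cE\p{1}\subset\thr\p{1}$ works out carefully what happens when the limiting point $p=(\xi-k,E-\omega(k))$ lands on a level crossing or on $\partial\Annul$ --- extracting a distinguished shell from the finitely many emanating from each crossing, using analytic continuation of $S\p{1}(\xi;\cdot)$ through crossings, and in the degenerate case where $S\p{1}$ is constant, replacing the $k$-sequence by an interior constant sequence. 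Your ``same compactness bookkeeping'' summary is the right idea but elides this case analysis; still, the argument you'd need is precisely the A/B dichotomy you already wrote down for the fibered set, so this is a matter of spelling things out rather than a missing idea.
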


In fact Theorem~\ref{Thm-thr} holds for each of the
three types of thresholds sets individually. This is obvious for
\ref{Item-pp-2}, and follows for \ref{Item-pp-1} from its proof.

The final energy-momenta  we need to avoid come from our desire to handle the infrared singular interaction in the polaron model.
It consists simply of the set $(0,\omega(0)) + \Sigma_\iso$. 
When localizing away from $(0,\omega(0)) + \Sigma_\iso$, we cannot emit a boson with zero momentum, hence we will never meet the infrared singularity.
This contribution can be omitted if the coupling function $\coup$
behaves no worse than $|k|^{\beta}$ at zero, with $\beta>
2-\nu/2$. In order not to introduce a superfluous exceptional set we define
\begin{equation}\label{ExcSet}
\Exc = \left\{ \begin{aligned} 
& (0,\omega(0)) + \Sigma_\iso, & & \partial_{k_j} \coup\not\in
L^2_\loc(\RR^\nu), \ \textup{for some} \ j\in \{1,\dots,\nu\}\\
& \emptyset, & & \partial_{k_j} \coup\in
L^2_\loc(\RR^\nu), \ \textup{for all} \  j\in\{1,\dots,\nu\}
\end{aligned}\right.. 
\end{equation}
We write $\Exc(\xi)$ as usual for the fixed total momentum
fibers of the set $\Exc$. Observe that $\Exc(\xi)$, $\xi\in\RR^\nu$, are discrete sets and
that  $\Exc\cap\cE\p{1}$ is a
relatively closed subset of $\cE\p{1}$. The latter is a consequence of the
HVZ theorem.

Our second theorem is concerned with the structure
of the embedded pure point spectrum below the two-boson
threshold. That is, the set $\Spectrum_\pp\cap \cE\p{1}$, 
cf.~\eqref{PointSpectrum} and~\eqref{EM-Region}.

\begin{theorem}\label{Thm-pp} Assume Conditions~\ref{Cond:MC}
  and~\ref{Cond:MT}, with $n_0=1$. Let $\xi\in\RR^\nu$. The following holds
\begin{enumerate}[label=\textup{(\roman*)},ref=(\roman*)]
\item \label{Item-pp-3} All eigenvalues in $\sigma_\pp(H(\xi))\cap
  \cE\p{1}(\xi)\backslash (\thr\p{1}(\xi)\cup \Exc(\xi))$ have finite multiplicity.
\item\label{Item-pp-4}  The set $\sigma_\pp(H(\xi))\cap \cE\p{1}(\xi)$ is at most countable, with
accumulation points at most in $\thr\p{1}(\xi)\cup \Exc(\xi)\cup
\{\Spectrum_0\p{2}(\xi)\}$.
\item\label{Item-pp-5} The set $(\Spectrum_\pp\cup\thr\p{1}\cup\Exc)\cap\cE\p{1}$ is a
  relatively closed subset of $\cE\p{1}$.
\end{enumerate}
\end{theorem}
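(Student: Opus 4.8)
The plan is to deduce all three assertions from two facts established earlier in the paper: the $C^2$-regularity of the fiber Hamiltonians with respect to the conjugate operator (Section~\ref{Chap-Regularity}) and the Mourre estimate for the fiber Hamiltonians below the two-boson threshold (Section~\ref{Chap-Mourre}). Concretely, I will use that for every $(\xi_0,E_0)$ with $E_0\in\cE\p{1}(\xi_0)\setminus(\thr\p{1}(\xi_0)\cup\Exc(\xi_0))$ there exist a self-adjoint operator $A$ on $\cF$ (a second quantized modified dilation generator, built from the relative velocity field at $(\xi_0,E_0)$), an open interval $\Delta\ni E_0$, a constant $c>0$ and a compact operator $K$ on $\cF$ such that $H(\xi_0)\in C^2(A)$ and
\[
\bbbone_\Delta(H(\xi_0))\,[H(\xi_0),\ri A]\,\bbbone_\Delta(H(\xi_0))\ \geq\ c\,\bbbone_\Delta(H(\xi_0))+K .
\]
Since $C^2(A)\subset C^1(A)$, the Virial theorem applies: if $H(\xi_0)\psi=E\psi$ with $E\in\Delta$, then $\la\psi,[H(\xi_0),\ri A]\psi\ra=0$, and inserting the displayed inequality gives $0\ge c\norm{\psi}^2+\la\psi,K\psi\ra$.

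For~\ref{Item-pp-3}, let $E_0$ be an eigenvalue in $\cE\p{1}(\xi)\setminus(\thr\p{1}(\xi)\cup\Exc(\xi))$ and take the data $A,\Delta,c,K$ above. If the eigenspace were infinite dimensional, choose an orthonormal sequence $(\psi_j)$ in it; then $\psi_j\rightharpoonup0$, each $\psi_j\in\Ran\bbbone_\Delta(H(\xi))$, so $0\ge c+\la\psi_j,K\psi_j\ra$, and since $K\psi_j\to0$ in norm we obtain $0\ge c$, a contradiction. For~\ref{Item-pp-4}, note that $G:=\cE\p{1}(\xi)\setminus(\thr\p{1}(\xi)\cup\Exc(\xi))$ is relatively open in $\cE\p{1}(\xi)$, because $\thr\p{1}(\xi)\cap\cE\p{1}(\xi)$ is discrete by Theorem~\ref{Thm-thr}\ref{Item-pp-2} and $\Exc(\xi)$ is discrete as noted before the present theorem. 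Running the same orthonormal-sequence argument with eigenvectors whose eigenvalues lie in a fixed compact subinterval $\Delta'$ of a Mourre interval shows that $\sigma_\pp(H(\xi))\cap\Delta'$ has finite total multiplicity; covering $G$ by countably many Mourre intervals then shows that $\sigma_\pp(H(\xi))\cap G$ is at most countable with no accumulation point inside $G$. Hence accumulation points of $\sigma_\pp(H(\xi))\cap\cE\p{1}(\xi)$ can only lie in $\thr\p{1}(\xi)\cup\Exc(\xi)\cup\{\Spectrum_0\p{1}(\xi),\Spectrum_0\p{2}(\xi)\}$; for the lower endpoint, if $\Spectrum_0\p{1}(\xi)\in\thr\p{1}(\xi)$ it already lies in the allowed set, and otherwise $\Spectrum_0\p{1}(\xi)\in G$ carries a Mourre interval which rules out accumulation there. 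Together with the at most countably many eigenvalues in $\thr\p{1}(\xi)\cup\Exc(\xi)$, this proves~\ref{Item-pp-4}.

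For~\ref{Item-pp-5}, recall that $(\thr\p{1}\cup\Exc)\cap\cE\p{1}$ is already relatively closed in $\cE\p{1}$ by Theorem~\ref{Thm-thr}\ref{Item-pp-1} and the remark on $\Exc$ preceding this theorem, so it suffices to take a sequence $(\xi_n,E_n)\to(\xi_\infty,E_\infty)\in\cE\p{1}$ with $E_n\in\sigma_\pp(H(\xi_n))$ and $(\xi_\infty,E_\infty)\notin\thr\p{1}\cup\Exc$, and to show $E_\infty\in\sigma_\pp(H(\xi_\infty))$. Fix the Mourre data $A,\Delta,c,K$ at $(\xi_\infty,E_\infty)$. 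Since the construction of the velocity field and of $A$ in Section~\ref{Chap-Mourre} depends continuously on the total momentum, after shrinking $\Delta$ the Mourre estimate holds, with constant $c/2$ and a compact error $K_n$ satisfying $\norm{K_n-K}\to0$, for $H(\xi_n)$ with $n$ large, with $H(\xi_n)\in C^2(A)$ and $E_n\in\Delta$. Choosing unit eigenvectors $H(\xi_n)\psi_n=E_n\psi_n$, the Virial theorem and the Mourre estimate give $\la\psi_n,K_n\psi_n\ra\le-c/2$, hence $\la\psi_n,K\psi_n\ra\le-c/4$ for $n$ large; passing to a weakly convergent subsequence $\psi_n\rightharpoonup\psi$ and using that $K$ is compact yields $\la\psi,K\psi\ra\le-c/4$, so $\psi\ne0$. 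Finally, from $(H(\xi_n)-\ri)\inv\psi_n=(E_n-\ri)\inv\psi_n$, the norm continuity of $\xi\mapsto(H(\xi)-\ri)\inv$ noted earlier, and $\psi_n\rightharpoonup\psi$, $E_n\to E_\infty$, we conclude $(H(\xi_\infty)-\ri)\inv\psi=(E_\infty-\ri)\inv\psi$, i.e.\ $H(\xi_\infty)\psi=E_\infty\psi$; thus $(\xi_\infty,E_\infty)\in\Spectrum_\pp$.

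The main obstacle is the uniformity in the total momentum required for~\ref{Item-pp-5}: one must verify that a single conjugate operator produced at $(\xi_\infty,E_\infty)$ yields a Mourre estimate for all neighbouring fibers $H(\xi_n)$ with a constant bounded away from zero and norm-convergent compact errors, and that the property $H(\xi_n)\in C^2(A)$ holds locally uniformly. This reduces to tracking the $\xi$-dependence through the construction of the relative velocity field and the commutator estimates of Section~\ref{Chap-Mourre}, using that the effective dispersions $\Shell$ and the level crossings $\cross$ vary continuously with $\xi$ away from $\thr\p{1}\cup\Exc$. Once this is in place, everything else is the standard positive-commutator machinery of~\cite{ABG}.
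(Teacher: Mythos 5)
Your proofs of items~\ref{Item-pp-3} and~\ref{Item-pp-4} are the standard Mourre--virial arguments, essentially what the paper invokes (the paper simply cites \cite{DGBook,HS,GG} for them). For item~\ref{Item-pp-5}, however, you take a genuinely different route from the paper. You argue closedness directly: take eigenvectors $\psi_n$ at $(\xi_n,E_n)\to(\xi_\infty,E_\infty)\notin\thr\p{1}\cup\Exc$, use the Mourre estimate plus the virial theorem to pin the quadratic form of the compact part below $-c/2$, extract a nonzero weak limit $\psi$ by compactness, and verify $\psi$ is an eigenvector via norm-resolvent continuity. The paper instead proves \emph{openness of the complement}: at a point $(\xi_0,E_0)\notin\Sigma_\pp\cup\thr\p{1}\cup\Exc$ one first \emph{squeezes away} the compact error (legitimate because one is away from $\Sigma_\pp$), obtaining a strict estimate $f_{E,\kappa}(H(\xi))\ri[H(\xi),A_{\xi_0}]^\circ f_{E,\kappa}(H(\xi))\geq c f_{E,\kappa}(H(\xi))^2$, which is then transferred to a neighborhood by joint norm-continuity in $(\xi,E)$ with the conjugate operator $A_{\xi_0}$ held fixed; the virial theorem then rules out point spectrum nearby. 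The paper's version is cleaner because, once the compact error is squeezed away, one never has to track it through the continuity step. Your version must carry $K$ along, but this is not a real gap: you can keep $A$ and the \emph{same} compact $K$ from the limit point, and norm-continuity of the sandwiched commutator in $(\xi,E)$ gives the estimate $\geq (c/2)f^2 + K$ for $(\xi_n,E_n)$ close, which is all you use -- there is no need for a sequence $K_n\to K$. The self-identified ``main obstacle'' (transferring the estimate to nearby fibers with one fixed conjugate operator) is the very same continuity statement the paper needs, so it does not distinguish the two approaches. One genuine technical flaw in your sketch: you use sharp spectral projections $\one_\Delta(H(\xi))$, but $\xi\mapsto\one_\Delta(H(\xi))$ is \emph{not} norm continuous in general; the continuity argument must be run with smooth cutoffs $f_{\lambda,\kappa}$ as in the paper's proof.
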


The above theorem follows from standard arguments once we have
established a so-called Mourre estimate, cf.~Theorem~\ref{thm:Mourre},
away from $\thr\p{1}$ and $\Exc$. An additional
consequence of a Mourre estimate is a limiting absorption principle and
hence in particular:

\begin{theorem}\label{Thm-sc} Assume Conditions~\ref{Cond:MC} and~\ref{Cond:MT}, with
  $n_0=2$.
Then the fiber Hamiltonians $H(\xi)$
have no singular continuous spectrum below the two-boson threshold, i.e.
\[
\forall \xi\in\RR^\nu: \quad \ssc\bigl(H(\xi)\bigr) \cap \bigl(-\infty,\Sigma\p{2}_0(\xi)\bigr) = \emptyset.
\]
\end{theorem}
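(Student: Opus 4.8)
\medskip
\noindent\emph{Proof proposal.} The plan is to obtain Theorem~\ref{Thm-sc} from the Mourre estimate of Theorem~\ref{thm:Mourre} together with the regularity statement of Section~\ref{Chap-Regularity}, by feeding both into the abstract conjugate operator theory of \cite{ABG}; the remainder is the standard passage Mourre estimate $\Rightarrow$ limiting absorption principle $\Rightarrow$ absence of singular continuous spectrum, plus some bookkeeping on the exceptional sets. Fix $\xi\in\RR^\nu$ and let $A=A(\xi)$ be the second quantized modified generator of dilation built from the relative velocity field in Section~\ref{Chap-Mourre}. Since Condition~\ref{Cond:MT} is imposed with $n_0=2$, the coupling $\coup$ admits two locally square-integrable distributional derivatives away from the origin, which is precisely the hypothesis under which Section~\ref{Chap-Regularity} shows $H(\xi)\in C^2(A)$; in particular $H(\xi)\in C^{1,1}(A)$, the regularity class in which the abstract limiting absorption principle of \cite{ABG} is available.

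\emph{Local step.} Let $\lambda\in\cE\p{1}(\xi)\setminus\bigl(\thr\p{1}(\xi)\cup\Exc(\xi)\bigr)$ with $\lambda\notin\sigma_\pp(H(\xi))$. By Theorem~\ref{thm:Mourre} there are an open interval $\Delta\ni\lambda$, a constant $c>0$ and a compact operator $K$ with $\bbbone_\Delta(H(\xi))\,[H(\xi),\ri A]\,\bbbone_\Delta(H(\xi))\ge c\,\bbbone_\Delta(H(\xi))+K$. Since $\lambda$ is not an eigenvalue, $\bbbone_{\Delta'}(H(\xi))\to 0$ strongly as $\Delta'$ shrinks to $\{\lambda\}$, hence $\norm{\bbbone_{\Delta'}(H(\xi))\,K\,\bbbone_{\Delta'}(H(\xi))}\to 0$ by compactness of $K$; shrinking $\Delta$ around $\lambda$ therefore upgrades the estimate to a strict Mourre inequality $\bbbone_\Delta(H(\xi))\,[H(\xi),\ri A]\,\bbbone_\Delta(H(\xi))\ge \tfrac{c}{2}\,\bbbone_\Delta(H(\xi))$. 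Combined with $H(\xi)\in C^{1,1}(A)$, the limiting absorption principle of \cite{ABG} gives $\sup\bigl\{\norm{\wgt{A}^{-s}(H(\xi)-z)^{-1}\wgt{A}^{-s}}:\ \rRe z\in\Delta',\ \rIm z\ne 0\bigr\}<\infty$ for every $s>1/2$ and every compact $\Delta'\subset\Delta$; consequently $\ssc(H(\xi))\cap\Delta=\emptyset$, i.e.\ the spectral projection $\bbbone_\Delta(H(\xi))$ annihilates the singular continuous subspace of $H(\xi)$.

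\emph{Globalization.} Below $\Spectrum_\ess(\xi)=\Spectrum_0\p{1}(\xi)$ the HVZ theorem \eqref{HVZ} gives that $\sigma(H(\xi))$ is purely discrete, so in particular $\ssc(H(\xi))\cap(-\infty,\Spectrum_0\p{1}(\xi))=\emptyset$. On the interval $\cE\p{1}(\xi)=[\Spectrum_0\p{1}(\xi),\Spectrum_0\p{2}(\xi))$ put $B(\xi):=\bigl(\Spectrum_\pp\cup\thr\p{1}\cup\Exc\bigr)(\xi)\cap\cE\p{1}(\xi)$. By Theorem~\ref{Thm-thr}\ref{Item-pp-2}, Theorem~\ref{Thm-pp}\ref{Item-pp-4} and discreteness of $\Exc(\xi)$ the set $B(\xi)$ is at most countable, and by Theorem~\ref{Thm-pp}\ref{Item-pp-5} it is relatively closed in $\cE\p{1}(\xi)$, so $\cE\p{1}(\xi)\setminus B(\xi)$ is relatively open in $\cE\p{1}(\xi)$. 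For each $\lambda\in\cE\p{1}(\xi)\setminus B(\xi)$ the local step produces an open interval $\Delta_\lambda\ni\lambda$ with $\ssc(H(\xi))\cap\Delta_\lambda=\emptyset$. Pick a countable subcover of $U:=\bigcup_\lambda\Delta_\lambda\supseteq\cE\p{1}(\xi)\setminus B(\xi)$, and test against an arbitrary $\psi$ in the singular continuous subspace, whose spectral measure $\mu_\psi$ is continuous: then $\mu_\psi(U)=0$ and $\mu_\psi(B(\xi))=0$ (the latter since $B(\xi)$ is countable and $\mu_\psi$ has no atoms), while $\mu_\psi\bigl((-\infty,\Spectrum_0\p{1}(\xi))\bigr)=0$ because there $\sigma(H(\xi))$ is a countable point set; since $\cE\p{1}(\xi)\setminus U\subseteq B(\xi)$ this forces $\mu_\psi\bigl((-\infty,\Spectrum_0\p{2}(\xi))\bigr)=0$, and as $\psi$ was arbitrary, $\ssc(H(\xi))\cap(-\infty,\Spectrum_0\p{2}(\xi))=\emptyset$.

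\emph{Main obstacle.} The genuinely hard part is not this deduction but its two inputs: constructing the relative velocity field and proving the positive commutator estimate of Theorem~\ref{thm:Mourre} below the two-boson threshold — including the delicate patching of the several effective velocity fields across level crossings — and establishing the $C^2(A)$ property in Section~\ref{Chap-Regularity}. Within the present argument the only point requiring care is the combinatorial bookkeeping guaranteeing that $B(\xi)$ is countable and relatively closed, which is exactly the content of Theorems~\ref{Thm-thr} and~\ref{Thm-pp} and the reason the exceptional sets $\thr\p{1}$ and $\Exc$ were isolated in the first place.
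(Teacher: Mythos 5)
Your proposal is correct and follows essentially the same route as the paper: deduce the limiting absorption principle from the $C^2(A)\subset C^{1,1}(A)$ regularity of Proposition~\ref{Prop:C2} together with the Mourre estimate of Theorem~\ref{thm:Mourre}, and then conclude absence of singular continuous spectrum by the standard argument, using Theorems~\ref{Thm-thr} and~\ref{Thm-pp} (and the HVZ theorem) to control the exceptional set. The only difference is presentational: the paper compresses the globalization step into a citation of \cite[Theorem~XIII.20]{RS4}, whereas you spell out the spectral-measure bookkeeping explicitly.
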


\subsection{A Stratification Point of View}\label{Sec-Strata}

The paper is build around the construction of a vector field $v_\xi\in
C_0^\infty(\RR^\nu)$, from which we construct a self-adjoint 
one-body operator $a_\xi=\ri (v_\xi\cdot\nabla_k+ \nabla_k\cdot v_\xi)/2$ and a
second quantized observable $A_\xi = \D\Gamma(a_\xi)$ on $\cF$. The physical
interpretation of $v_\xi$ is that of a relative velocity field, assigning
to a momentum $k$ the difference of the velocity of a bound state at total
momentum $\xi-k$ and the velocity of an asymptotic boson at momentum $k$. In
Section~\ref{Chap-Regularity} we argue that under our assumptions 
the fiber Hamiltonians $H(\xi)$ are of class $C^k(A_\xi)$, for $k=1,2$,
provided \ref{Item:DerOfv} holds with $n_0=k$.

In Section~\ref{Chap-Mourre} we construct the vector field $v_\xi$ locally in energy
in $\cE\p{1}(\xi)$ 
and away from thresholds $\cT\p{1}(\xi)$ and exceptional energies $\Exc(\xi)$, in such a way that we can
deduce at the end of the chapter a Mourre estimate for the pair
$H(\xi)$ and $A_\xi$. From our Mourre estimate, Theorems~\ref{Thm-pp} and~\ref{Thm-sc}
will follow. Theorem~\ref{Thm-thr} will be proved in Subsection~\ref{Sec-Thr-Structure},
and ensures that the construction of $v_\xi$ can be done in a sufficiently
large energy region inside $\cE\p{1}(\xi)$.

The rest of this subsection is devoted to an explanation of the
construction of the threshold set and the vector field $v_\xi$, from the point of view of
stratifications of proper maps. We will not make any attempt to
properly introduce the notions we refer to here, which are entirely standard.
For literature on the subject we refer the reader to \cite[Section~3]{Hi} as
well as \cite{GeNi1,GPLW}. All stratifications discussed here
will satisfy Whitney's regularity condition and the so-called frontier condition: Two strata either have
disjoint closures, or one is contained in the closure of the other.

Consider a (real analytic) ambient space $X\subset \RR^{2\nu+1}$ given as the
following open set
\[
X = \bigset{(k,\xi,E)\in\RR^\nu\times\RR^\nu\times\RR}{E<\Spectrum\p{2}_0(\xi)}.
\]
Along with this we consider fibered ambient spaces
\[
X_\xi = \bigset{(k,E)\in\RR^\nu\times\RR}{(k,\xi,E)\in X}.
\]
We define (real analytic) projections $\Pi\colon X\to \RR^{\nu+1}$ and $\Pi_\xi\colon
X_\xi \to\RR$ by
\[
\Pi(k,\xi,E) = (\xi,E) \quad \textup{and} \quad \Pi_\xi(k,E) = E. 
\]
The projections in fact take values inside the (real analytic) target
spaces $Y = \set{(\xi,E)}{E<\Spectrum\p{2}_0(\xi)}$ and $Y_\xi =
(-\infty,\Spectrum\p{2}_0(\xi))$ respectively.

We now introduce what turns out to be semi-analytic subsets of the
ambient spaces just defined. Let
\[
\cA = \bigset{(k,\xi,E)\in X}{(\xi-k,E-\omega(k))\in\Spectrum_\iso)}
\]
and
\[
\cA_\xi = \bigset{(k,E)\in X_\xi}{(k,\xi,E)\in\cA}.
\]
To see that these sets are semi-analytic we first remark that
$\Sigma_\iso$ is semi-analytic as a subset of the ambient space $\set{(\xi,E)}{
  E<\Spectrum_\ess(\xi)}$. This follows from the analysis of
G{\'e}rard and Nier \cite{GeNi1}. It now follows that $\cA$ and
$\cA_\xi$ are
semi-analytic as subsets of $X$ and $X_\xi$ respectively.
Here one makes use of $E-\omega(k) < \Spectrum_\ess(\xi-k)$ provided
$(k,\xi,E)\in X$.

That the projections $\Pi$ and $\Pi_\xi$, when restricted to $\cA$ and
$\cA_\xi$ respectively, are proper (preimages of compact sets are
compact) is a consequence of \cite[Theorem~2.4]{MRMP}. See also \eqref{ClosingGap} below.

The splitting of $\Spectrum_\iso$ into graphs of mass shells
$\cG_\Shell$ and level crossings, as $S^{\nu-1}$-spheres, is a 
stratification of $\Spectrum_\iso$ with strata being graphs of real analytic
functions of total momentum. This induces a stratification of $X$ and $X_\xi$
into strata which are again graphs of real analytic functions of
$(k,\xi)$ and $k$, respectively.

The threshold sets $\thr\p{1}(\xi)\cap\cE\p{1}(\xi)$ and
$\thr\p{1}\cap \cE\p{1}$ 
can be interpreted as coming from a
Hironaka-stratification of the maps $\Pi_{|\cA}$ and $\Pi_{\xi|\cA_\xi}$ as follows.
The threshold set
$\thr\p{1}(\xi)\cap\cE\p{1}(\xi)$
are zero-strata in a stratification
of the target space $Y_\xi$, and  $\thr\p{1}\cap \cE\p{1}$ 
is the union of zero-strata and those $d$-strata, with $1\leq d\leq \nu$,  transverse
to each $\{\xi\}\times Y_\xi$ inside $Y$.

The strata of the compatible stratification of $\cA$ and $\cA_\xi$
will again be graphs of real analytic functions and the strata not
projecting into the threshold sets are exactly those for which the
function, e.g.  $S\p{1}$ from \eqref{EffectiveShell}, defining the strata has nowhere
vanishing gradient with respect to $k$. 

The vector field $v_\xi$, used at total momentum $\xi$,  will be constructed by gluing together
$k$-gradients of the functions generating  non-threshold strata in
$\cA_\xi$, which plays the physical role of a  vector field of relative
breakup velocity of a compound system consisting of an asymptotic
boson at momentum $k$, and an interacting system at momentum $\xi-k$.

In fact we expect/conjecture that a Hironaka-stratification of the
projections $\Pi$ and $\Pi_\xi$ can be used also without the
assumption \ref{Item:RotInv} on rotation invariance to construct the threshold sets,
and a subsequent analysis of the resulting Whitney-stratification of
$\cA$ and $\cA_\xi$ should make it possible, along the same lines as
employed in Chapter~\ref{Chap-Mourre} of this paper, to build a
vector field $v_\xi$ that works in a Mourre estimate. However, at this
stage where there are still many questions remaining about 
scattering theory as well as the structure  of high energy sectors of
the energy-momentum spectrum, we prefer the home-cooked and completely
explicit  stratification from Chapter~\ref{Chap-Mourre}, 
where we have full control over all the
nuts and bolts. We remark that our insistence on constructing $v_\xi$ as a
vector field necessitates some  geometrical/technical considerations
not met in \cite{GeLaBook} and \cite{GeNi1}, where $v_\xi$ was allowed to be a
more complicated object. Again, with a view towards the future, we
prefer to keep $v_\xi$ as concrete as possible.

We remark that it is a consequence of the analysis in Subsection~\ref{Sec-Thr-Structure} that
$\thr\p{1}\cap\cE\p{1}$ is a semi-analytic subset of the ambient space
$\cE\p{1}$. However, we cannot conclude that
$\Spectrum_\iso\cup(\thr\p{1}\cap\cE\p{1})$ is a semi-analytic subset of
the ambient space $\set{(\xi,E)}{E<\Spectrum_0\p{2}(\xi)}$. The reason
being that we have no control over the manner in which isolated mass
shells may hit the continuous energy-momentum spectrum. Such a
statement
together with control of possibly embedded non-threshold mass shells,
would be a natural input for investigating higher energy sectors.

Another, perhaps more serious, obstacle to analyzing the spectrum
above the two-boson threshold, is the possible existence of embedded
mass shells. Embedded mass shells below the two-boson threshold would 
give rise to one-boson scattering states
between the two- and three-boson thresholds. Controlling the induced
thresholds, in a manner similar to what is done here, necessitates that
embedded mass shells are real analytic. Proving this is well beyond
current technology \cite{FMS,HS,MW}. One solution would be to pass to a weak coupling
regime where the work of \cite{AMZ,Mi2} can be used to rule out embedded
mass shells below the two-boson threshold altogether.

%%%%%%%%%%%%%%%%%%%%%%%%%%%%%%%%%%%%%%%%%%%%%%%%%%%%%%%%
%%%%%%%%%%%%%%%%%%%%%%%%%%%%%%%%%%%%%%%%%%%%%%%%%%%%%%%%
\section{Regularity with Respect to a Conjugate Operator} \label{Chap-Regularity}
%%%%%%%%%%%%%%%%%%%%%%%%%%%%%%%%%%%%%%%%%%%%%%%%%%%%%%%%
%%%%%%%%%%%%%%%%%%%%%%%%%%%%%%%%%%%%%%%%%%%%%%%%%%%%%%%%

In this section we recall the property of a Hamiltonian being of class $C^k(A)$,
with respect to a self-adjoint conjugate operator $A$. In addition, we verify that
our fiber Hamiltonians are of class $C^2(A)$, for conjugate operators of the
general form constructed here. We remark that the particular model studied in this
paper is in fact quite singular in terms of the $C^k(A)$ classes, in that
the free operator $H_0(\xi) = \D\Gamma(\sqrt{k^2+m^2}) +
(\xi-\D\Gamma(k))^2$ is of class $C^2(A)$ but fails to be of class $C^3(A)$,
if one chooses $A$ to be e.g. a second quantized generator of dilation.
While this does not become a serious issue in the present paper,
it will be a more serious obstacle when possible embedded mass shells
are analyzed, since the most advanced results
to date only hold under a $C^2(A)$ assumption \cite{FMS,MW}. 
There are in particular no results allowing one to follow degenerate embedded
eigenvalues under perturbations without stronger regularity assumptions.

Additionally, while $H_0(\xi)$ is of class $C^1(A)$, it does not satisfy a Mourre type
regularity condition on the first commutator, which manifests itself in the fact that the group $W_t$ generated
by the generator of dilation does not preserve the domain of any positive power of $H_0(\xi)$.

The class of conjugate operators we consider in this paper are build
from one-body operators of the form
\begin{equation}\label{a-GeneralForm}
a = \frac12\bigl\{ v\cdot \ri \nabla_k + \ri\nabla_k \cdot v\bigr\}, \quad \textup{where} \ v\in C_0^\infty(\RR^\nu).
\end{equation}
If the $\partial_{k_j}\coup$'s are not all square integrable near $0$, we further require that $0\not\in\supp{v}$.
It is well-known that such $a$ are essentially self-adjoint on $C_0^\infty(\RR^\nu)$.
Furthermore, the second quantization
\begin{equation}\label{A-GeneralForm}
A = \D\Gamma(a)
\end{equation}
is essentially self-adjoint on $\cC$, cf. \eqref{BasicCore}. Being self-adjoint, the operator $a$ generates
a unitary group $w_t = \e^{\ri t a}$ which can be expressed in terms of the flow
$\psi_t$ of the autonomous ODE $\dot{\psi}_t = v(\psi_t)$, with $\psi_0(k) = k$. We have the formula
\begin{equation}\label{OneBodyGroup}
(w_t f)(k) = \sqrt{J_t(k)} f(\psi_t(k)),
\end{equation}
where $J_t$ is the determinant of the Jacobi matrix $D_k\psi_t$. By Liouville's formula
we have the equation
\begin{equation}\label{Jacobiant}
J_t(k) = \e^{\int_0^t \Tr[Dv(\psi_s(k))] \D s},
\end{equation}
which is uniformly bounded in $k$.
By the functorial properties of second quantization we find that the group $\e^{\ri t A}$ generated by $A$ is
$\Gamma(w_t)$.

Note that $\psi_t(k) = k$ for $k\not\in\supp(v)$ and by boundedness of $v$ we have
\begin{equation}\label{FinitePropSpeed}
\sup_{k\in\RR^\nu}\|\psi_t(k)-k\|\leq \sup_{k\in\RR^\nu}\int_0^t \|v(\psi_s(k))\|\,\D s \leq t\|v\|_\infty.
\end{equation}

Unfortunately we use $k$ here both as a momentum variable and
as an integer power for the class $C^k(A)$. Both are standard notation
that we prefer to adhere to and trust the reader to distinguish from the context
when $k$ denotes momentum and when it denotes an integer power.

\subsection{The $C^k(A)$ Classes of Operators}

Let $A$ be a self-adjoint operator on a complex Hilbert space $\cH$.
We recall the notion of $C^k(A)$, $k=1,2,\dotsc$, regularity from \cite{ABG}.

\begin{definition}[The $C^k(A)$ class of
  operators]\label{def:CkofA} Let $A$ be a self-adjoint operator on $\cH$, with domain $\cD(A)$.
\begin{enumerate}[label=\textup{(\roman*)},ref=(\roman*)]
\item\label{Item:BoundedCk}   Let $B\in\cB(\cH)$ be a bounded operator and $k\in\NN$. We say that
  $B\in C^k(A)$ if, for all $\phi\in\cH$, the map $\RR\ni s\mapsto
  \e^{-\ri sA}B\e^{\ri sA}\phi\in\cH$ is $k$ times continuously
  differentiable.
\item\label{Item-SelfAdjointCk} Let $H$ be a self-adjoint operator on
  $\cH$. We say that $H$ is \emph{of class $C^k(A)$} if
there exists $z\in \CC\backslash \sigma(H)$ such that $(H-z)^{-1}\in C^k(A)$.
\end{enumerate}
\end{definition}

Note that $C^k(A)$ is a subalgebra of $\cB(\cH)$, cf.~\cite{ABG,GGM1}. 
Let us make some remarks. The requirement that $A$ and $H$ be self-adjoint can be relaxed
considerably \cite{GGM1}, something we will however not need. The requirement in  \ref{Item-SelfAdjointCk}
that $(H-z)^{-1}\in C^k(A)$ for \emph{some} $z$ in the resolvent set of $H$,
is equivalent to $(H-z)^{-1}\in C^k(A)$ for \emph{all} such $z$.
Finally, we note that if the bounded operator $B$ is itself
self-adjoint then $B\in C^k(A)$ if and only if $B$ is of class
$C^k(A)$. 

The results in this section are recalled from the literature without
proofs, for which we refer the reader to \cite{ABG,FGSi,GG,GGM1}.

We remind the reader that there are several equivalent formulations
for a bounded operator $B$ to be of class $C^1(A)$.  We collect
some as a lemma.

\begin{lemma}\label{lemma:C1equiv}
  Let $B\in\cB(\cH)$. The following are equivalent.
  \begin{enumerate}[label=\textup{(\roman*)},ref=(\roman*)]
  \item\label{item:C1chardef} $B\in C^1(A)$.
  \item \label{item:C1charlim}It holds that
    $\displaystyle\liminf_{s\to0^+}\tfrac{1}{s}\norm{\e^{-\ri sA}B\e^{\ri sA}-B}<\infty$.
  \item\label{item:C1charbound} There is a constant $C$ such that for all
    $\psi,\phi\in\cD(A)$,
    \begin{align}\label{eq:C1def}
      \abs{\la A\psi,B\phi\ra -\la B\psi, A\phi\ra}&\le
      C\norm{\phi}\norm{\psi}.
    \end{align}
  \item \label{item:C1charextend} $B$ maps $\cD(A)$ into itself and $AB-BA\colon\cD(A)\to\cH$
    extends to a bounded operator on $\cH$.
  \end{enumerate}
\end{lemma}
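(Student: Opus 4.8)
The plan is to prove the four conditions equivalent by closing the cycle \ref{item:C1chardef}$\Rightarrow$\ref{item:C1charlim}$\Rightarrow$\ref{item:C1charbound}$\Rightarrow$\ref{item:C1charextend}$\Rightarrow$\ref{item:C1chardef}. Throughout I write $B_s:=\e^{-\ri sA}B\e^{\ri sA}$ and use the two elementary facts about the unitary group generated by the self-adjoint $A$: it leaves $\cD(A)$ invariant, and for $\psi\in\cD(A)$ the curve $s\mapsto\e^{\ri sA}\psi$ is strongly continuously differentiable with derivative $\ri A\e^{\ri sA}\psi$. The implication \ref{item:C1chardef}$\Rightarrow$\ref{item:C1charlim} is then immediate from the Banach--Steinhaus theorem: if $s\mapsto B_s\phi$ is $C^1$ for every $\phi\in\cH$, the difference quotients $s^{-1}(B_s-B)\phi$ converge in $\cH$ as $s\to0^+$, so the operators $s^{-1}(B_s-B)$, $0<s<1$, are pointwise and hence uniformly norm bounded, which is even slightly stronger than \ref{item:C1charlim}.

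For \ref{item:C1charlim}$\Rightarrow$\ref{item:C1charbound} I would pick a sequence $s_n\downarrow0$ realising the $\liminf$, so that $M:=\sup_n\norm{s_n^{-1}(B_{s_n}-B)}<\infty$, and for $\psi,\phi\in\cD(A)$ split the scalar difference quotient as
\[
\tfrac{1}{s_n}\bigl(\la\psi,B_{s_n}\phi\ra-\la\psi,B\phi\ra\bigr)=\bigl\la\tfrac{1}{s_n}(\e^{\ri s_nA}\psi-\psi),\,B\e^{\ri s_nA}\phi\bigr\ra+\bigl\la\psi,\,B\,\tfrac{1}{s_n}(\e^{\ri s_nA}\phi-\phi)\bigr\ra .
\]
The left-hand side has modulus at most $M\norm{\psi}\norm{\phi}$ for every $n$, while letting $n\to\infty$ on the right-hand side, using strong differentiability of the group on $\cD(A)$ together with boundedness and strong continuity of $B\e^{\ri s_nA}$, yields a fixed sesquilinear form on $\cD(A)\times\cD(A)$ which one identifies with a unimodular multiple of the commutator form in \eqref{eq:C1def}. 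The uniform bound passes to the limit, giving \ref{item:C1charbound} with $C=M$.

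For \ref{item:C1charbound}$\Rightarrow$\ref{item:C1charextend}, fix $\psi\in\cD(A)$; combining \eqref{eq:C1def} with the trivial estimate $\abs{\la A\psi,B\phi\ra}\le\norm{A\psi}\,\norm{B}\,\norm{\phi}$ shows that $\phi\mapsto\la B\psi,A\phi\ra$ is bounded on $\cD(A)$, hence --- as $A=A^*$ --- that $B\psi\in\cD(A)$, so $B$ preserves $\cD(A)$; on $\cD(A)$ the operator $AB-BA$ then realises the bounded form of \eqref{eq:C1def} and therefore extends to a bounded operator $C^\circ$ on $\cH$. For \ref{item:C1charextend}$\Rightarrow$\ref{item:C1chardef}, note that for $\phi\in\cD(A)$ one has $\e^{\ri sA}\phi\in\cD(A)$ and $B\e^{\ri sA}\phi\in\cD(A)$, so differentiating $B_s\phi$ term by term gives $\tfrac{\D}{\D s}B_s\phi=-\ri\,\e^{-\ri sA}C^\circ\e^{\ri sA}\phi$; integrating, $B_s\phi-B\phi=-\ri\int_0^s\e^{-\ri tA}C^\circ\e^{\ri tA}\phi\,\D t$, and since both sides are bounded operators in $\phi$ this identity extends to all $\phi\in\cH$ by density. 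The integrand $t\mapsto\e^{-\ri tA}C^\circ\e^{\ri tA}\phi$ being strongly continuous, the right-hand side is continuously differentiable in $s$, which is exactly \ref{item:C1chardef}.

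The only step that carries genuine content is \ref{item:C1charlim}$\Rightarrow$\ref{item:C1charbound}: there one controls the difference quotients only along a single sequence realising the $\liminf$, so the commutator form has to be recovered as an honest limit, and the crucial input is the strong $C^1$-differentiability of $s\mapsto\e^{\ri sA}\psi$ on $\cD(A)$, which is what converts an a priori operator-norm bound into a bound on the sesquilinear form. Everything else is bookkeeping once the integral representation of $B_s-B$ from the last step is in hand; in particular \ref{item:C1charextend}$\Rightarrow$\ref{item:C1chardef} is the mechanism by which ``the commutator extends to a bounded operator'' upgrades to genuine $C^1(A)$ regularity of the conjugated family. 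Full details are classical and can be found in \cite{ABG}.
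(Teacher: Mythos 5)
Your cyclic argument \ref{item:C1chardef}$\Rightarrow$\ref{item:C1charlim}$\Rightarrow$\ref{item:C1charbound}$\Rightarrow$\ref{item:C1charextend}$\Rightarrow$\ref{item:C1chardef} is the standard one, cf.~\cite{ABG}; the paper explicitly defers to the literature here and offers no proof of its own, and your individual steps --- uniform boundedness for \ref{item:C1chardef}$\Rightarrow$\ref{item:C1charlim}, the three-term splitting of the scalar difference quotient, the domain-preservation argument via $A=A^*$, and the Duhamel integral for \ref{item:C1charextend}$\Rightarrow$\ref{item:C1chardef} --- are sound.

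One discrepancy should nevertheless be flagged. The sesquilinear limit you actually obtain in the step \ref{item:C1charlim}$\Rightarrow$\ref{item:C1charbound} is $-\ri\bigl(\la A\psi,B\phi\ra-\la B^*\psi,A\phi\ra\bigr)$: the second summand comes from $\la\psi,B\,\ri A\phi\ra=\ri\la B^*\psi,A\phi\ra$, so it carries $B^*$, not $B$. Estimate \eqref{eq:C1def} as printed has $\la B\psi,A\phi\ra$, and for non-self-adjoint $B$ the two conditions are genuinely different: already for the trivially $C^1(A)$ operator $B=\ri\one$, the quantity $\la A\psi,\ri\phi\ra-\la\ri\psi,A\phi\ra=2\ri\la\psi,A\phi\ra$ is unbounded when $A$ is, so \ref{item:C1charbound} as written fails to be equivalent to \ref{item:C1chardef}. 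The correct form replaces $\la B\psi,A\phi\ra$ by $\la B^*\psi,A\phi\ra$, equivalently $\la\psi,BA\phi\ra$, and this is indeed the sesquilinear form of the operator $AB-BA$ that appears in \ref{item:C1charextend}. Your computations establish exactly this corrected version, but the phrases asserting that your limit is ``a unimodular multiple of the commutator form in \eqref{eq:C1def}'' and, in \ref{item:C1charbound}$\Rightarrow$\ref{item:C1charextend}, that $AB-BA$ ``realises the bounded form of \eqref{eq:C1def}'' silently substitute $B^*$ for $B$. You should record the correction; it is harmless for the paper, since every $B$ fed through this lemma there is self-adjoint, but a careful statement of \eqref{eq:C1def} for general $B\in\cB(\cH)$ must carry the adjoint.
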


If $B\in C^1(A)$, the commutator $[A,B]$, which is a priori only defined as a form on
$\cD(A)$, can by Lemma~\ref{lemma:C1equiv} be extended to $\cH$. We write $[A,B]^\circ$
for the unique bounded operator on $\cH$ extending the quadratic form $[A,B]$.

If $B = (H-z)^{-1}$, with $H$ being self-adjoint and of class $C^1(A)$,
then we can compute the form $[B,A]$ on $\cD(A)$ and find
that $[B,A] = -(H-z)^{-1}[H,A](H-z)^{-1}$, which is meaningful due to
 Lemma~\ref{lemma:C1equiv}~\ref{item:C1charextend}. Here $[H,A]$ is read as a form on $\cD(H)\cap \cD(A)$. Since the left-hand side
 extends by continuity to the bounded operator $[A,B]^\circ$, we observe that
 $[H,A]$ extends from $\cD(A)\cap \cD(H)$ to a bounded form on $\cD(H)$, which we can and will identify with an operator
 $[H,A]^\circ \in \cB(\cH_1;\cH_{-1})$. Here we used the standard scale of spaces associated with $H$.
 That is $\cH_s$, $|s|\leq 1$, is the completion of $\cD(H)$ with respect to the norm $\|\psi\|_s = \|(|H| + 1)^s\psi\|$.
 We remark that if $H$ is of class $C^1(A)$, then
 \begin{equation}
 \cD(H)\cap \cD(A) \ \textup{is dense in} \ \cD(H)
 \end{equation}
 and hence, the extension $[H,A]^\circ$ of the form $[H,A]$ is unique.

 We will need the following well-known lemma

 \begin{lemma}\label{Lemma:CommutatorFromGroup}
  If $H$ is a self-adjoint operator of class $C^1(A)$ and
  $W_t=\e^{\ri tA}$ is the unitary group associated to the self-adjoint
  operator $A$, then we have
  \begin{align*}
  \forall \psi,\phi\in\cD(H):\quad   \la \psi,\ri [H,A]^\circ\phi\ra &=\lim_{s\to 0}\frac{1}{s}\bigl(
    \la H\psi, W_s\phi\ra-\la \psi,W_sH\phi\ra\bigr).
  \end{align*}
\end{lemma}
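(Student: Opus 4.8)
\textbf{Proof plan for Lemma~\ref{Lemma:CommutatorFromGroup}.}
The plan is to start from the known formula for the commutator of the resolvent and transfer it to $H$ itself by resolvent identities. Since $H$ is of class $C^1(A)$, we know $(H-z)^{-1}\in C^1(A)$ for $z$ in the resolvent set, and hence by Lemma~\ref{lemma:C1equiv}~\ref{item:C1charlim} together with the group characterization,
\[
\la \chi, \ri[(H-z)^{-1},A]^\circ \chi'\ra = \lim_{s\to 0}\frac1s\bigl(\la (H-\bar z)^{-1}\chi, W_s\chi'\ra - \la \chi, W_s (H-z)^{-1}\chi'\ra\bigr)
\]
for all $\chi,\chi'\in\cH$ (this is essentially the definition of differentiability of $s\mapsto W_{-s}(H-z)^{-1}W_s$ at $s=0$, paired against vectors). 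Combined with the identity $[(H-z)^{-1},A] = -(H-z)^{-1}[H,A]^\circ(H-z)^{-1}$ recalled in the text, this computes $\la \psi,\ri[H,A]^\circ\phi\ra$ for $\psi = (H-\bar z)^{-1}\chi$, $\phi = (H-z)^{-1}\chi'$, i.e. for $\psi,\phi$ in $\cD(H)$, once we spell out $\chi = (H-\bar z)\psi$, $\chi'=(H-z)\phi$.

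First I would write, for $\psi,\phi\in\cD(H)$ and with $B=(H-z)^{-1}$,
\[
\la \psi, \ri[H,A]^\circ\phi\ra = -\la (H-\bar z)\psi, \ri[B,A]^\circ (H-z)\phi\ra,
\]
then insert the group formula for $\ri[B,A]^\circ$ applied to the vectors $(H-\bar z)\psi$ and $(H-z)\phi$, obtaining
\[
\la \psi,\ri[H,A]^\circ\phi\ra = -\lim_{s\to 0}\frac1s\bigl(\la B(H-\bar z)\psi, W_s (H-z)\phi\ra - \la (H-\bar z)\psi, W_s B(H-z)\phi\ra\bigr).
\]
Now $B(H-\bar z)\psi = \psi + (z-\bar z)B\psi$ and $BW_s(H-z)\phi = W_s\phi + W_s B(H-z)\phi - (B W_s - W_s B)(H-z)\phi$; the cleaner route is to expand $B(H-\bar z)=\one + (z-\bar z)B$ on the left and, using self-adjointness of $W_s$, rewrite the second term as $\la B^*(H-\bar z)\psi, W_s(H-z)\phi\ra$ with $B^*=(H-\bar z)^{-1}$, so that $B^*(H-\bar z)\psi=\psi$. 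After this algebra the leading terms reproduce $\lim_{s\to 0}\frac1s(\la H\psi,W_s\phi\ra - \la\psi,W_sH\phi\ra)$ (using $(H-z)\phi$, $(H-\bar z)\psi$ and cancelling the $z$-dependent scalar pieces, which contribute $\la\psi,W_s\phi\ra$-type terms whose $s$-derivative cancels in the difference), while the genuinely $z$-dependent remainder terms are of the form $(z-\bar z)\la B\psi, \cdots\ra$ and must be shown to cancel or to have vanishing $s$-derivative at $0$; this is a short computation using that $s\mapsto W_s$ is strongly differentiable on $\cD(H)$ only after composing with $B$, so one keeps everything under the resolvent until the end.

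The main obstacle is bookkeeping rather than conceptual: one must be careful that the expressions $\la H\psi, W_s\phi\ra$ and $\la\psi, W_sH\phi\ra$ on the right-hand side make sense for all $\psi,\phi\in\cD(H)$ without assuming $\phi\in\cD(A)$ (they do, since $H$ is self-adjoint and $W_s$ bounded), and that passing the $z$-dependent terms through the limit is legitimate. Concretely, the difference of the two inner products in the claimed formula equals $s$ times a bounded-in-$s$ quantity plus the $z$-remainder, and one checks the remainder is $o(s)$ by writing it as $(z-\bar z)\bigl[\la B\psi, W_s(H-z)\phi\ra - \la B\psi, W_sH\phi\ra\bigr] + (z-\bar z)\bigl[\cdots\bigr]$ and collapsing using $\la B\psi, W_s z\phi\ra$; each bracket is continuous in $s$ and the combination vanishes to first order because the relevant commutator $[B,A]^\circ$ already absorbs all the non-trivial $s$-dependence. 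Alternatively, and perhaps more transparently, one can avoid the remainder entirely by choosing $z$ purely imaginary is not enough; instead I would simply verify the identity first as forms on $\cD(H)\cap\cD(A)$ — where $[H,A]$ is the literal commutator and the right-hand side is an elementary derivative computation using $\frac{d}{ds}W_s\phi = \ri A W_s\phi$ — and then invoke density of $\cD(H)\cap\cD(A)$ in $\cD(H)$ (stated in the excerpt) together with the fact that both sides are continuous in $\psi,\phi$ in the $\cD(H)$-graph norm: the left-hand side because $[H,A]^\circ\in\cB(\cH_1;\cH_{-1})$, the right-hand side because $|\la H\psi,W_s\phi\ra - \la\psi,W_sH\phi\ra|/s$ is bounded uniformly in small $s$ by $C\|\psi\|_{\cD(H)}\|\phi\|_{\cD(H)}$, which is exactly the content of the $C^1(A)$ property transported to $H$. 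This density argument is the cleanest path and is the step I would write out in full.
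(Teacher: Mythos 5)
The paper does not prove this lemma; it is stated as a known fact with a pointer to \cite{ABG,FGSi,GG,GGM1}, so the comparison is to the standard argument in those references, which is precisely your first (resolvent) route. Your verdict that this route produces ``genuinely $z$-dependent remainder terms'' and that ``choosing $z$ purely imaginary is not enough'' is where the proposal goes astray: the $z$-dependent pieces cancel identically, for every $z$ in the resolvent set, before any limit is taken. Set $B=(H-z)^{-1}$, $\chi=(H-\bar z)\psi$, $\chi'=(H-z)\phi$, so that $B^*\chi=\psi$ and $B\chi'=\phi$. Then, using $\overline{\bar z}=z$,
\begin{align*}
\la H\psi, W_s\phi\ra-\la\psi,W_sH\phi\ra
&=\la (H-\bar z)\psi, W_s\phi\ra-\la\psi, W_s(H-z)\phi\ra\\
&=\la\chi,W_sB\chi'\ra-\la B^*\chi, W_s\chi'\ra
=\la\chi,(W_sB-BW_s)\chi'\ra
=\bigl\la\chi, W_s\bigl(B-W_{-s}BW_s\bigr)\chi'\bigr\ra.
\end{align*}
Dividing by $s$ and sending $s\to0$, $W_s\to\one$ strongly while $\tfrac1s(B-W_{-s}BW_s)\chi'\to\ri[A,B]^\circ\chi'=\ri B[H,A]^\circ B\chi'$ by the definition of $C^1(A)$ together with the formula $[B,A]^\circ=-B[H,A]^\circ B$ recalled in the text, and collapsing the resolvents gives $\la\psi,\ri[H,A]^\circ\phi\ra$. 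That is the whole proof. The slip that derailed you is that the group formula for the resolvent carries $B^*$ in the first slot, so the relevant collapse is $B^*(H-\bar z)\psi=\psi$; your display has $B(H-\bar z)\psi$, which manufactures a spurious $(z-\bar z)B\psi$ correction, and although you partially notice this a line later, the phantom remainder is what pushed you to abandon the shortest route.

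Your second (density) route is logically sound given the ingredients you cite, but the one substantive step — the uniform bound $\sup_{0<|s|\le1}\tfrac1{|s|}\,|\la H\psi,W_s\phi\ra-\la\psi,W_sH\phi\ra|\le C\,\|\psi\|_1\|\phi\|_1$ in the graph norms — is asserted rather than proved. Its natural proof is precisely the identity displayed above together with $\sup_{0<|s|\le1}\tfrac1{|s|}\|W_{-s}BW_s-B\|<\infty$, a standard consequence of $B\in C^1(A)$. But once you have written $\la H\psi,W_s\phi\ra-\la\psi,W_sH\phi\ra=\la\chi,(W_sB-BW_s)\chi'\ra$ you may as well take the limit directly, and the detour through $\cD(H)\cap\cD(A)$ with an $\epsilon/3$ argument buys you nothing. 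Carrying the resolvent algebra through carefully gives the shorter, self-contained proof; the density version would only be preferable if $[H,A]^\circ$ were not already known to be bounded $\cH_1\to\cH_{-1}$, which the text establishes just before the lemma.
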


\subsection{$H(\xi)$ is of Class $C^2(A)$}

In this subsection we state and prove a $C^2(A)$ regularity result for the fiber Hamiltonians
with respect to conjugate operators of the type \eqref{A-GeneralForm}. Since this is of independent interest, we formulate
precise conditions under which our results hold, conditions that are implied by a combination of
Conditions~\ref{Cond:MC} and~\ref{Cond:MT}.

\begin{condition}\label{Ck-Condition} We say that $(\omega,\Omega,\coup,v)$ satisfies a $C^k$-condition, $k=1,2$,
 if there exists $s_\Omega\in [0,2]$ such that
 \begin{enumerate}[label={\textbf{\textup{(Ck\arabic*)}}},ref={\textup{(Ck\arabic*)}},leftmargin=*]
\item\label{Item:CkRegularity} $\omega,\Omega\in C^k(\RR^\nu)$ and $v\in C^\infty_0(\RR^\nu)$.
\item\label{Item:CkMassive} $\inf_{\eta\in\RR^\nu} \omega(\eta) >0$.
\item\label{Item:CkBasicDerOfOmega} $\exists C>0$ such that
  $\Omega(\eta)\geq C^{-1} \wgt{\eta}^{s_\Omega} - C$ 
 and $|\partial^\alpha \Omega(\eta)|\leq C \wgt{\eta}^{s_\Omega-|\alpha|}$, $|\alpha|\leq k$.
\item\label{Item:CkCoup} $\coup\in L^2(\RR^\nu)$ admits $k$ distributional derivatives with $\partial^\alpha \coup \in L^2_\loc(\RR^\nu\backslash\{0\})$, $|\alpha|\le k$.
\item\label{Item:ExcCase} If $\partial_{k_j}\coup\not\in L^2_\loc(\RR^\nu)$, for some $j\in\{1,\dotsc,\nu\}$, then
$\supp{v}\subset \RR^\nu\backslash\{0\}$.
 \end{enumerate}
\end{condition}

As for Condition~\ref{Cond:MC}, we can assume without loss of generality that
$\Omega\geq 0$.
Note that due to the $\xi$- and $\coup$-independence of the domain of $H(\xi)$,
and the equivalence of the associated $\|\cdot\|_s$ norms, the scale of spaces $\cH_s$ are $\xi$- and $\coup$-independent.
To avoid ambiguity we use $H_0(0)$ to define the $s$-norms.

\begin{proposition}\label{Prop:C2} Suppose $(\omega,\Omega,\coup,v)$ satisfies a $C^k$-condition with
\begin{enumerate}[label={$k=\arabic*$:},ref= {\myquote{$k=\arabic*$}},leftmargin=*]
\item\label{Item:HisC1} Then for all $\xi\in\RR^\nu$ the fiber Hamiltonian $H(\xi)$ is of class $C^1(A)$ and
we have the explicit form of the commutator
\[
\ri [H(\xi),A]^\circ = \D\Gamma(v\cdot\nabla\omega) -\D\Gamma(v)\cdot\nabla\Omega(\xi-\D\Gamma(k)) -\phi(\ri a \coup).
\]
Furthermore
\begin{equation}\label{RACommBound}
\forall t\in \bigl[0,\tfrac12\bigr]:\quad [H(\xi),A]^\circ\in\cB\bigl(\cH_{1-t},\cH_{-\frac12-t}\bigr).
\end{equation}
\item\label{Item:HisC2} Then for all $\xi\in H(\xi)$ the fiber Hamiltonian $H(\xi)$ is of class
 $C^2(A)$.
\end{enumerate}
\end{proposition}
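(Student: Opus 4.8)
The plan is to exploit that $A=\D\Gamma(a)$ is a second quantization, so the unitary group it generates is explicit, $\e^{\ri tA}=\Gamma(w_t)$ with $w_t$ as in~\eqref{OneBodyGroup}, and conjugation by $\Gamma(w_t)$ intertwines with the building blocks of $H(\xi)$ via the functorial identities $\Gamma(u)\D\Gamma(b)\Gamma(u)^*=\D\Gamma(ubu^*)$, $\Gamma(u)\phi(h)\Gamma(u)^*=\phi(uh)$, and $\Gamma(u)f(\xi-\D\Gamma(k))\Gamma(u)^*=f(\xi-\D\Gamma(uku^*))$ for unitary $u$ on $\hph$. Since $w_t$ acts by multiplication--composition along the flow $\psi_t$ of the compactly supported field $v$, the conjugated operator $\e^{-\ri tA}H(\xi)\e^{\ri tA}$ is $H(\xi)$ modified by $\D\Gamma$ of a bounded compactly supported function (from the $\omega$- and $\Omega$-terms) plus a field operator $\phi$ of an $L^2$-function (from $\phi(\coup)$), depending $C^k$ on $t$ in the appropriate topology under a $C^k$-condition. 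Two standing facts. By~\ref{Item:CkMassive}, $\D\Gamma(\omega)\ge mN$, and since $\D\Gamma(\omega)$ and $\Omega(\xi-\D\Gamma(k))$ commute and are non-negative, $\cD=\cD(H(\xi))=\cD(\D\Gamma(\omega))\cap\cD(\Omega(\xi-\D\Gamma(k)))\subseteq\cD(N)$; combined with~\ref{Item:CkBasicDerOfOmega} and the comparison $\Omega(\xi-\D\Gamma(k))\le C(\Omega(-\D\Gamma(k))+1)\le C(H_0(0)+1)$, this makes $N$, $\D\Gamma(b)$ ($b$ bounded) and $\partial^\alpha\Omega(\xi-\D\Gamma(k))$ all commute with $H_0(0)$ and be bounded by $C(H_0(0)+1)^\sigma$ for suitable $\sigma$ ($\sigma=1$, $1$, and $\max\{0,(s_\Omega-|\alpha|)/s_\Omega\}$ respectively, the last being $\le\tfrac12$ for $|\alpha|\ge1$ since $s_\Omega\le2$); such operators lie in $\cB(\cH_c,\cH_{c-\sigma})$ for every $c$ in the scale $\{\cH_s\}$ associated with $H_0(0)$. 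Finally, $\cC$ is a common core left invariant by $A$ and by every $\Gamma(w_t)$, as $w_tC_0^\infty(\RR^\nu)=C_0^\infty(\RR^\nu)$.

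\emph{The case $k=1$.} On $\cC$ I compute $\ri[H(\xi),A]$ as a form: $[\D\Gamma(b),\D\Gamma(c)]=\D\Gamma([b,c])$ gives $\ri[\D\Gamma(\omega),A]=\D\Gamma(v\cdot\nabla\omega)$; since $[\xi_j-\D\Gamma(k_j),A]=\ri\D\Gamma(v_j)$ commutes with every $\xi_l-\D\Gamma(k_l)$, the exact identity $[f(\xi-\D\Gamma(k)),A]=\sum_j(\partial_jf)(\xi-\D\Gamma(k))\,[\xi_j-\D\Gamma(k_j),A]$ gives $\ri[\Omega(\xi-\D\Gamma(k)),A]=-\D\Gamma(v)\cdot\nabla\Omega(\xi-\D\Gamma(k))$; and the commutator rules for $a^\#$ give $\ri[\phi(\coup),A]=-\phi(\ri a\coup)$, which is well-defined because $a\coup\in L^2(\RR^\nu)$ by~\ref{Item:CkCoup}--\ref{Item:ExcCase} with $k=1$ (the distributional derivatives of $\coup$ are $L^2_\loc$ off $0$, and if $\coup$ fails this at $0$ then $0\notin\supp v$). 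Summing gives the stated formula. For~\eqref{RACommBound}: $\D\Gamma(v\cdot\nabla\omega)$ and $\D\Gamma(v)\cdot\nabla\Omega(\xi-\D\Gamma(k))$ are diagonal of order $1$ and $\le\tfrac32$ respectively, hence in $\cB(\cH_{1-t},\cH_{-t})$ and $\cB(\cH_{1-t},\cH_{-\frac12-t})$; and $\phi(\ri a\coup)$ is $N^{1/2}$-bounded, hence in $\cB(\cH_{1/2},\cH_0)\subseteq\cB(\cH_{1-t},\cH_{-\frac12-t})$ for $t\in[0,\tfrac12]$. Conjugation by $\Gamma(w_t)$ perturbs $\D\Gamma(\omega)$ by a $\D\Gamma(\omega)$-bounded term and $\Omega(\xi-\D\Gamma(k))$ by a term controlled by $\wgt{\xi-\D\Gamma(k)}^{s_\Omega-1}$ (hence $\Omega(\xi-\D\Gamma(k))$-bounded, fractional power $<1$), so $\Gamma(w_t)\cD=\cD$ uniformly for $t$ in compacts. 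Domain invariance together with the commutator bound (in particular membership in $\cB(\cH_1,\cH_{-1})$) and $\cC\subseteq\cD\cap\cD(A)$ being a core for $H(\xi)$ yield $H(\xi)\in C^1(A)$ by the standard characterization~\cite{ABG,GGM1}; the commutator formula extends from $\cC$ by density.

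\emph{The case $k=2$.} The $C^2$-condition implies the $C^1$-condition with the same $s_\Omega$, so $H(\xi)\in C^1(A)$ and~\eqref{RACommBound} hold. By the $C^2(A)$ characterization of~\cite{ABG} it remains --- given this and the domain invariance already established --- to check that the iterated commutator $[[H(\xi),A],A]$, computed on $\cC$ as above, extends to an operator in $\cB(\cH_1,\cH_{-1})$. Differentiating the three terms of $\ri[H(\xi),A]^\circ$ yields: $\D\Gamma((v\cdot\nabla)^2\omega)$, well-defined since $\omega\in C^2$ by~\ref{Item:CkRegularity}, diagonal of order $1$; a Hessian term $\sum_{j,l}(\partial_j\partial_l\Omega)(\xi-\D\Gamma(k))\,\D\Gamma(v_j)\D\Gamma(v_l)$ and $\nabla\Omega(\xi-\D\Gamma(k))\cdot\D\Gamma((v\cdot\nabla)v)$, both diagonal, with $\partial^2\Omega(\xi-\D\Gamma(k))$ bounded because $|\partial^2\Omega(\eta)|\le C\wgt{\eta}^{s_\Omega-2}$ and $s_\Omega\le2$, of orders $2$ and $\le\tfrac32$; and a field term $\pm\phi(a^2\coup)$, meaningful precisely because $a^2\coup\in L^2(\RR^\nu)$ --- and this is exactly where~\ref{Item:CkCoup} with $k=2$ (two distributional derivatives, $L^2_\loc$ off $0$) together with~\ref{Item:ExcCase} (if these are not $L^2_\loc$ near $0$, then $0\notin\supp v$) enter; being $N^{1/2}$-bounded it lies in $\cB(\cH_{1/2},\cH_0)$. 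All terms thus lie in $\cB(\cH_1,\cH_{-1})$, so $H(\xi)\in C^2(A)$ for every $\xi\in\RR^\nu$. Alternatively one may run the whole argument through the conjugated resolvent $t\mapsto(\e^{-\ri tA}H(\xi)\e^{\ri tA}-z)^{-1}$ and verify twice norm-differentiability by Duhamel, reading off the same commutators.

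\emph{The main obstacle.} The only genuinely delicate ingredient is the $\Omega$-bookkeeping: one must pass repeatedly between the polynomial growth bounds~\ref{Item:CkBasicDerOfOmega} on $\partial^\alpha\Omega$, the operator comparison $\wgt{\xi-\D\Gamma(k)}^{s_\Omega}\le C(H_0(0)+1)$, and the scale $\{\cH_s\}$ (which by construction is $\xi$- and $\coup$-independent), and the hypothesis $s_\Omega\le2$ is precisely what keeps all the orders at most $\tfrac32$ --- so that dividing by one resolvent lands everything in $\cB(\cH_1,\cH_{-1})$. The only other point needing attention is the infrared behavior of $\coup$: conditions~\ref{Item:CkCoup}--\ref{Item:ExcCase} are imposed exactly to guarantee $a^j\coup\in L^2(\RR^\nu)$ for $j\le k$, so the field contributions $\phi(\ri^j a^j\coup)$ to the commutators are honest $N^{1/2}$-bounded operators. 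Everything else --- the explicit action of $\Gamma(w_t)$, the commutator identities for $\D\Gamma$ and $a^\#$, the $N$-boundedness of $\D\Gamma$ of bounded functions, domain invariance under $\Gamma(w_t)$ --- is routine.
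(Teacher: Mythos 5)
There is a genuine gap: your argument for \emph{$k=1$} rests on the claim that $\Gamma(w_t)$ preserves $\cD = \cD(H(\xi))$, and this is false for $s_\Omega>1$ (in particular for the physically relevant $s_\Omega=2$). The paper itself flags this explicitly in the opening paragraph of Section~\ref{Chap-Regularity}: \emph{``the group $W_t$ generated by the generator of dilation does not preserve the domain of any positive power of $H_0(\xi)$.''} Your estimate misses a factor of $N$. Conjugating the $\Omega$-term gives
\[
\Gamma(w_t)^*\,\Omega\bigl(\xi-\D\Gamma(k)\bigr)\,\Gamma(w_t)-\Omega\bigl(\xi-\D\Gamma(k)\bigr)
=\int_0^1 \nabla\Omega\bigl(\xi-\D\Gamma(k)-s\,\D\Gamma(\psi_{-t}(k)-k)\bigr)\cdot\D\Gamma\bigl(\psi_{-t}(k)-k\bigr)\,\D s,
\]
and while $\psi_{-t}(k)-k$ is bounded with compact support, $\D\Gamma(\psi_{-t}(k)-k)$ is an \emph{unbounded} operator of order $1$ (comparable to $N$), not a bounded one. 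Combined with $\nabla\Omega$, which is of order $(s_\Omega-1)/s_\Omega$ in the $H_0$-scale, the difference is of order $(s_\Omega-1)/s_\Omega+1>1$ as soon as $s_\Omega>1$ --- hence not $H_0$-bounded, and $\Gamma(w_t)\cD\neq\cD$. The ``standard characterization'' from~\cite{ABG,GGM1} that you invoke requires precisely this group-invariance hypothesis, so the argument does not close.

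Your computed commutator formula and the membership $[H(\xi),A]^\circ\in\cB(\cH_{1-t},\cH_{-\frac12-t})$ are both correct, and your ``alternative'' remark about working at the level of the conjugated resolvent points in the right direction. That alternative is, in essence, what the paper's proof does: it establishes $C^1(A)$ directly at resolvent level by verifying Lemma~\ref{lemma:C1equiv}~\ref{item:C1charbound} for $R_0(\lambda)=(H_0(\xi)+\lambda)^{-1}$ (using that $(H_0+1)^{-1}\cC\subset\tcC$ so that the form $[H_0,A]$ makes sense on the core, then transfers the bound through $R_0$ on both sides), and then exploits that $C^1(A)$ is an algebra closed under inversion together with the resolvent identity $R(\lambda)=R_0(\lambda)(\one+\phi(\coup)R_0(\lambda))^{-1}$, reducing matters to showing $\phi(\coup)R_0(\lambda)\in C^1(A)$. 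The $C^2(A)$ argument is then carried out at the same level, on $[R(\lambda),A]^\circ$ and its pieces, rather than by exhibiting $[[H,A],A]^\circ\in\cB(\cH_1,\cH_{-1})$ plus domain preservation. You would need to replace the domain-invariance step by this resolvent-level verification (or an honest Duhamel argument at resolvent level) for the proof to be valid.
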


\begin{proof} Fix a $\xi\in\RR^\nu$.
For the purpose of this proof we abbreviate $H_0 = H_0(\xi)$ and $H=H(\xi)$.
Recall the notation $\cD$ \eqref{CommonDomain} for the common domain of $H_0(\xi)$ and $H(\xi)$, $\xi\in\RR^\nu$, and $\cC$ \eqref{BasicCore}
for the common core of $H(\xi)$ and $A$. We introduce a slightly larger
common core as the algebraic direct sum
\begin{equation}\label{BiggerBasicCore}
\tcC = \bigoplus_{n=0}^\infty C^\infty_{0,\mathrm{sym}}(\RR^{n\nu}),
\end{equation}
where the subscript \myquote{sym} indicates that the functions are
symmetric under permutations of the $n$ variables.

We begin with the case \ref{Item:HisC1}
and observe that
\begin{equation}\label{Eq-R0PreserveC}
\forall z\in\CC\backslash \sigma(H_0):\quad (H_0-z)^{-1}\cC \subset \tcC,
\end{equation}
which ensures that the following computation, for $\psi,\varphi\in\cC$,
\begin{equation*}
    \bigl\la\psi, \bigl[(H_0+1)\inv,A\bigr]\varphi\bigr\ra =-\bigl\la(H_0+1)\inv\psi,[H_0,A](H_0+1)\inv\varphi\bigr\ra
\end{equation*}
is meaningful. As a form on $\tcC$ one can easily compute that
\[
F_\tcC := \ri [H_0,A]_{|\tcC} =  \D\Gamma(v\cdot\nabla\omega) -\D\Gamma(v)\cdot\nabla\Omega(\xi-\D\Gamma(k)).
\]
Since $v\cdot\nabla\omega$ is uniformly bounded, cf. \ref{Item:CkRegularity}, we can bound the first term by a number operator, and hence due to \ref{Item:CkMassive} by $H_0$.
Likewise, we can bound $\D\Gamma(v)$ by $H_0$, and the operator $\nabla\Omega(\xi-\D\Gamma(k))$ can due to \ref{Item:CkBasicDerOfOmega}
be controlled by $H_0^{1/2}$. Recall that $s_\Omega\leq 2$. 
This yields the following bound for all $\tpsi,\tvarphi\in\tcC$
\[
\bigl|\bigl\la\tpsi,\ri [H_0,A]\tvarphi\bigr\ra\bigr| 
\leq C \Bigl(\bigl\|(H_0+1)\tpsi\bigr\|^2 + \bigl\|(H_0+1)^\frac12 \tvarphi\bigr\|^2\Bigr).
\]
Hence we find that
\[
\forall \psi,\varphi\in\cC:\quad
\bigl|\bigl\la\psi,\ri\bigl[(H_0+1)\inv,A\bigr]\varphi\bigr\ra\bigr|
\leq C \Bigl(\bigl\|\psi\bigr\|^2 + \bigl\|(H_0+1)^{-\frac12}\varphi\bigr\|^2 \Bigr).
\]
Since $\cC$ is a core for $A$, this proves that $H_0$ is of class  $C^1(A)$ and hence
$\ri [H_0,A]$ has a unique extension by continuity from $\cD\cap\cD(A)$ to a
bounded form $\ri [H_0,A]^\circ$ on $\cD$.  We now observe, appealing to the bound, that
the form $F_\tcC$ extends continuously to a bounded form $F_{\cD}$ on $\cD$, defined by the same expression.
Since $F_{\cD}$ and $\ri [H_0,A]^\circ$ coincide on $\tcC$, they must also be identical
as forms on $\cD$. Finally we observe by symmetry and interpolation that
\begin{equation}\label{R0ACommBound}
\forall t\in\bigl[0,\tfrac12\bigr]: \quad \ri [H_0,A]^\circ \in \cB\bigl(\cH_{1-t},\cH_{-\frac{1}{2}-t}\bigr).
\end{equation}

We now turn to the full fiber Hamiltonian $H$.  Since $\phi(\coup)$ is $H_0^{1/2}$ bounded, we can
choose $\lambda>0$ large enough such that $\norm{\phi(\coup)R_0(\lambda)}<1$, where $R_0(\lambda) = (H_0+\lambda)^{-1}$.
We can now write
  \begin{equation}\label{Eq-ResEquation}
    R(\lambda) := (H+\lambda)\inv= R_0(\lambda)(\one+\phi(\coup)R_0(\lambda))\inv.
  \end{equation}
  Recall that $C^1(A)$ is a subalgebra of $\cB(\cH)$, and $S\in C^1(A)$
  invertible implies $S^{-1}\in C^1(A)$ (see
  \cite[Corollary~2.10]{GGM1}). Hence it
  suffices to show that $\phi(\coup)R_0(\lambda)\in C^1(A)$ in order to prove that
  $H$ is of class $C^1(A)$.

  Using Lemma~\ref{lemma:C1equiv}~\ref{item:C1charextend} we conclude that for $\vphi\in\cC\subset\cD(A)$
  we have $R_0(\lambda)A\vphi = A R_0(\lambda)\vphi - [R_0(\lambda),A]^\circ\vphi$.
Calculate for $\psi,\varphi\in\cC$
 \begin{equation}\label{eq:C1H_0plusV}
  \begin{aligned}
    &\bigl\la\phi(\coup)\psi, R_0(\lambda)\ri A\vphi\ra-\la A\psi,\ri \phi(\coup) R_0(\lambda)\vphi\bigr\ra\\
    & \quad =\bigl\la\phi(\coup)\psi,\ri A R_0(\lambda)\vphi\ra-\la A\psi,\ri \phi(\coup)R_0(\lambda)\vphi\bigr\ra\\
    & \qquad -\bigl\la\phi(\coup)\psi,R_0(\lambda)\ri [H_0,A] R_0(\lambda)\vphi\bigr\ra\\
    & \quad = -\bigl\la\psi,\phi(\ri a \coup) R_0(\lambda)\vphi\bigr\ra-
    \bigl\la\psi,\phi(\coup)R_0(\lambda)\ri [H_0,A]^\circ R_0(\lambda)\vphi\bigr\ra.
  \end{aligned}
 \end{equation}
 Note that $a\coup\in L^2(\RR^\nu)$ due to \ref{Item:CkRegularity},  \ref{Item:CkCoup} and~\ref{Item:ExcCase}.
By using \eqref{R0ACommBound} and $H_0^{1/2}$-boundedness of $\phi(\coup)$, it follows that for all $\psi,\vphi\in\cC$
\[
\bigl|\bigl\la\psi,\phi(\coup)R_0(\lambda)\ri A\vphi\bigr\ra-\bigl\la A\psi,\ri
\phi(\coup) R_0(\lambda)\vphi\bigr\ra\bigr|
\leq C\bigl(\norm{\psi}^2+\norm{\vphi}^2\bigr),
\]
for some $C>0$. Since $\cC$ is a core for $A$ this bound extends to $\cD(A)$ and hence by Lemma~\ref{lemma:C1equiv}
we conclude that $\phi(\coup)R_0(\lambda)\in C^1(A)$. 
To verify the formula for $\ri [H,A]^\circ$ it now suffices to verify the formula
as a form on $\cC$ as we did for $[H_0,A]^\circ$. 
The perturbation contributes an $H$-bounded term, so it is \eqref{R0ACommBound}
that is the most singular contribution and hence \eqref{RACommBound} holds true.
This completes the proof for the case \ref{Item:HisC1}.

We turn to the case \ref{Item:HisC2}. Having established that
$R(\lambda)\in C^1(A)$, 
one can repeat the argument around \eqref{eq:C1H_0plusV} above
to conclude that $\phi(\coup)R(\lambda)\in C^1(A)$.
Since $(\phi(\coup)R(\lambda))^* =  \overline{R(\lambda) \phi(\coup)}$, the closure of
 $R(z)\phi(\coup)$ defined a priori on $\cD$, we get
 \begin{equation}\label{phiRC1}
 \phi(\coup)R(\lambda)\ \textup{and} \ \overline{R(\lambda) \phi(\coup)} \ \textup{are of class} \  C^1(A).
 \end{equation}
 Compute as an identity between bounded operators
\begin{align*}
\ri [R(\lambda),A]^\circ & = - R(\lambda)\bigl\{ \ri [H_0,A]^\circ - \phi(\ri a \coup)\bigr\}R(\lambda)\\
& = -R(\lambda) (H_0+\lambda)R_0(\lambda) \ri [H_0,A]^\circ  R(\lambda) + R(\lambda)\phi(\ri a \coup) R(\lambda)\\
& = -\overline{R(\lambda)\phi(\coup)}\ri[R_0(\lambda),A]^\circ (\one+\phi(\coup) R_0(\lambda))\inv +R(\lambda)\phi(\ri a\coup)R(\lambda)\\
    & \quad +\ri[R_0(\lambda),A]^\circ(\one+\phi(\coup)R_0(\lambda))\inv,
\end{align*}
where the last equality made use of \eqref{Eq-ResEquation}. We conclude that to show that $H$ is of class $C^2(A)$,
it suffices to show that $[R_0(\lambda),A]^\circ$ and $R(\lambda)\phi(\ri a\coup)R(\lambda)$ are both of class $C^1(A)$.

 We begin with $[R_0(\lambda),A]^\circ$. Compute for $\psi,\vphi\in \cC$
 \begin{equation}\label{C2FirstTerm}
  \begin{aligned}
  &\bigl\la\psi,[R_0(\lambda),A]^\circ  A\vphi\bigr\ra-\bigl\la A\psi,[R_0(\lambda),A]^\circ\vphi\bigr\ra\\
  & \quad  = -\bigl\la\psi,R_0(\lambda)[[H_0,A]^\circ,A]R_0(\lambda)\vphi\bigr\ra\\
  & \qquad +2\bigl\la\psi,R_0(\lambda)[H_0,A]^\circ R_0(\lambda)[H_0,A]^\circ R_0(\lambda)\vphi\bigr\ra,
  \end{aligned}
  \end{equation}
where we used again \eqref{Eq-R0PreserveC} and $A\cC\subset\cC$ to perform the computations.
The form $[[H_0,A]^\circ,A]$ should be understood as a form on $\tcC$,
cf.~\eqref{BiggerBasicCore}, where it can be computed to be
\begin{align*}
F'_{\tcC} & := [[H_0,A]^\circ,A]\\
 & = -\D\Gamma(\la v,(\nabla^2\omega)v\ra)-\D\Gamma(\la(\nabla v)v,\nabla\omega\ra)\\
      & \quad -\la\D\Gamma(v),\nabla^2\Omega(\xi-\D\Gamma(k))\D\Gamma(v)\ra
    +\D\Gamma((\nabla v)v)\cdot\nabla\Omega(\xi-\D\Gamma(k)).
  \end{align*}
The two first terms in $F'_{\tcC}$ are controlled by the number operator, cf. \ref{Item:CkRegularity}, and hence by $H_0$.
The third term is the most singular and require a square of the number operator to bound, cf. \ref{Item:CkBasicDerOfOmega}, and hence
is just bounded as a form on $\cD$. The fourth and final term can be controlled by $H_0^{3/2}$.
In conclusion we find the existence of a $C>0$ such that
\[
\forall \tpsi,\tvarphi\in \tcC:\quad
\bigl|\bigl\la\tpsi,F'_{\tcC}\,\tvarphi\bigr\ra\bigr| 
\leq C\Bigl(\bigl\|(H_0+\lambda)\tpsi\bigr\|^2 + \bigl\|(H_0+\lambda)\tvarphi\bigr\|^2\Bigr)
\]
We can now estimate the left-hand side in \eqref{C2FirstTerm}, cf. also \eqref{R0ACommBound}, and find that
\[
\forall\psi,\vphi\in\cC:\quad
\bigl|\bigl\la\psi,[[R_0,A]^\circ,A]\vphi\bigr\ra\bigr| 
\leq C\bigl(\|\psi\|^2 + \|\vphi\|^2\bigr),
\]
for some $C>0$.  Since $\cC$ is a core for $A$, we have thus established that $[R_0(\lambda),A]^\circ\in C^1(A)$.
Note that to control the last term in \eqref{C2FirstTerm} using \eqref{R0ACommBound}, one has to make full use of all the
free resolvents.

It remains to consider $R(\lambda)\phi(\ri a\coup)R(\lambda)$. Writing
\[
\begin{aligned}
R(\lambda)\phi(\ri a\coup)R(\lambda) &= R_0(\lambda)\phi(\ri a\coup)R_0(\lambda) + 2\re\big\{R(\lambda)\phi(\coup) R_0(\lambda)\phi(\ri a\coup)R_0(\lambda)\} \\
&\quad + R(\lambda)\phi(\coup) R_0(\lambda)\phi(\ri a\coup)R_0(\lambda)\phi(\coup) R(\lambda),
\end{aligned}
\]
we appeal to \eqref{phiRC1} and conclude that it suffices to show that $R_0(\lambda)\phi(\ri a\coup)R_0(\lambda)$ is of class $C^1(A)$.
Here we can compute as a form on $\cC$ for one last time
\[
\begin{aligned}
[R_0(\lambda)\phi(\ri a\coup)R_0(\lambda),A] & = -\ri R_0(\lambda) \phi(a^2 \coup) R_0(\lambda)\\
& \quad + [R_0(\lambda),A]^\circ \phi(\ri a\coup)R_0(\lambda) + R_0(\lambda)\phi(\ri a\coup)[R_0(\lambda),A]^\circ.
\end{aligned}
\]
By \ref{Item:CkRegularity}, \ref{Item:CkCoup} and~\ref{Item:ExcCase}, $a\coup,a^2\coup\in L^2(\RR^\nu)$ and the right-hand side clearly extends to a bounded operator and we are done.
\end{proof}

\subsection{Extended Operators}\label{Sec-Extended}

Below we will make use of the following two simple observations, the proofs of which are left to the reader.

\begin{lemma}\label{Lemma:SumsOfAs} Let $H,A_1,A_2,A_3$ be self-adjoint operators such that $H \in C^1(A_j)$, $j=1,2$.
Suppose furthermore that there exists a dense set $\cD_0$, with the following properties:
\begin{enumerate}[label=\textup{(\roman*)},ref=(\roman*)]
\item  $\cD_0\subset \cD(A_j)$, $j=1,2,3$.
\item $\cD_0$ is a core for $A_3$.
\item  For all $\psi\in\cD_0$ we have  $A_3 \psi = A_1\psi + A_2\psi$.
\end{enumerate}
Then $H$ is of class $C^1(A_3)$ and  $[H,A_3]^\circ = [H,A_1]^\circ + [H,A_2]^\circ$ as an identity between
elements of $\cB(\cH_1;\cH_{-1})$.
\end{lemma}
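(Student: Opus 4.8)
The plan is to reduce everything to the quadratic-form characterization of the $C^1$-property, Lemma~\ref{lemma:C1equiv}~\ref{item:C1charbound}, applied to the bounded operator $B=(H-z)^{-1}$ for a fixed $z\in\CC\backslash\sigma(H)$, exploiting that the commutator form depends linearly on the conjugate operator. Since $H\in C^1(A_j)$ for $j=1,2$, there exist constants $C_j$ with
\[
\bigl|\la A_j\psi,B\phi\ra-\la B\psi,A_j\phi\ra\bigr|\le C_j\|\psi\|\,\|\phi\|\qquad\text{for all }\psi,\phi\in\cD(A_j),
\]
and in particular for all $\psi,\phi\in\cD_0$ by hypothesis~(i).

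First I would fix $\psi,\phi\in\cD_0$ and use hypothesis~(iii), $A_3\psi=A_1\psi+A_2\psi$ and $A_3\phi=A_1\phi+A_2\phi$, to get
\[
\la A_3\psi,B\phi\ra-\la B\psi,A_3\phi\ra=\sum_{j=1}^{2}\bigl(\la A_j\psi,B\phi\ra-\la B\psi,A_j\phi\ra\bigr),
\]
so the left-hand side is bounded by $(C_1+C_2)\|\psi\|\,\|\phi\|$ for all $\psi,\phi\in\cD_0$. Next I would upgrade this bound to all of $\cD(A_3)$: given $\psi,\phi\in\cD(A_3)$, hypothesis~(ii) furnishes sequences $\psi_n,\phi_n\in\cD_0$ with $\psi_n\to\psi$, $A_3\psi_n\to A_3\psi$, and likewise for $\phi$; since $B$ is bounded the bilinear expression above passes to the limit, and the bound persists. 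By the equivalence of \ref{item:C1charbound} and \ref{item:C1chardef} in Lemma~\ref{lemma:C1equiv} this shows $(H-z)^{-1}\in C^1(A_3)$, i.e. $H$ is of class $C^1(A_3)$. Note that no core or density property for $A_1$ or $A_2$ individually is used.

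For the commutator identity I would argue first on the resolvent level. The two forms $[B,A_3]$ and $[B,A_1]+[B,A_2]$, each a priori defined on the domain of the relevant conjugate operator, agree on the dense set $\cD_0$ by the same splitting; since each possesses a bounded extension to $\cH$ — namely the operators $[B,A_j]^\circ$ and $[B,A_3]^\circ$ whose existence we have just established — these extensions coincide, whence $[B,A_3]^\circ=[B,A_1]^\circ+[B,A_2]^\circ$ in $\cB(\cH)$. Then I would insert the standard identity $[B,A_j]^\circ=-(H-z)^{-1}[H,A_j]^\circ(H-z)^{-1}$ recalled after Lemma~\ref{lemma:C1equiv}, and compose on the left and right with $(H-z)$, which is an isomorphism $\cH_1\to\cH_{-1}$; this yields $[H,A_3]^\circ=[H,A_1]^\circ+[H,A_2]^\circ$ as elements of $\cB(\cH_1;\cH_{-1})$.

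I do not expect a genuine obstacle here — the argument is purely form-theoretic bookkeeping. The one step requiring a little care is the approximation that transfers the commutator bound from $\cD_0$ to $\cD(A_3)$, which is precisely the role played by the core hypothesis~(ii); the other point to watch is keeping straight on which domain, respectively on which space in the scale $\{\cH_s\}$, each commutator form is being read.
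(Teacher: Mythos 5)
Your proof is correct and complete. The paper itself only states this lemma (together with Lemma~\ref{Lemma:SumsOfHs}) and explicitly leaves the proofs to the reader, so there is no argument in the text to compare against; the route you take — via the quadratic-form characterization of $C^1(A)$ (Lemma~\ref{lemma:C1equiv}~\ref{item:C1charbound}) applied to the resolvent, splitting the form linearly in the conjugate operator, and transporting the bound from the core $\cD_0$ to all of $\cD(A_3)$ by graph-norm approximation — is precisely the intended one, and your observation that no core or density property for $A_1,A_2$ individually is needed is a useful remark.

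One small slip in the last paragraph: $(H-z)$ is not an isomorphism $\cH_1\to\cH_{-1}$. It is a bounded isomorphism $\cH_1\to\cH_0$, and (by duality/extension) $\cH_0\to\cH_{-1}$. Sandwiching $[B,A_3]^\circ\in\cB(\cH_0)$ between one copy of $(H-z)$ on each side, i.e. $-(H-z)[B,A_3]^\circ(H-z)$, therefore produces an element of $\cB(\cH_1;\cH_{-1})$, which is the desired reading of $[H,A_3]^\circ$. The substance of your argument is unaffected; only that single phrase should be reworded.
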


\begin{lemma}\label{Lemma:SumsOfHs} Let $\{\cH_n\}_{n\in\NN}$ be a
  family of Hilbert spaces, and suppose that for each $n\in\NN$ we are given
two self-adjoint operators $H_n$ and $A_n$ on $\cH_n$, with $H_n$ of class $C^1(A_n)$.
Then $H = \oplus_{n=1}^\infty H_n$ is of class $C^1(A)$, with
$A=\oplus_{n=1}^\infty A_n$, as self-adjoint operators on 
$\cH=\oplus_{n=1}^\infty \cH_n$.
Furthermore $[H,A]^\circ = \oplus_{n=1}^\infty [H_n,A_n]^0$ under the identification $\cB(\cH_1;\cH_{-1})= \oplus_{n=1}^\infty \cB(\cH_{n;1};\cH_{n;-1})$.
\end{lemma}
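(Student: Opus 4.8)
The plan is to reduce the whole statement to a single bounded operator and to exploit that the unitary group of $A=\oplus_n A_n$ splits as $W_t=\e^{\ri tA}=\oplus_n\e^{\ri tA_n}$. Fix $z=\ri$ (any non-real $z$ works since $\sigma(H)\subset\RR$) and put $B=(H-\ri)^{-1}=\oplus_n B_n$ with $B_n=(H_n-\ri)^{-1}$. Since $H_n$ is of class $C^1(A_n)$, each $B_n$ lies in $C^1(A_n)$, so $[A_n,B_n]^\circ\in\cB(\cH_n)$ is well defined, cf.\ Lemma~\ref{lemma:C1equiv}, and differentiating the group conjugation and integrating yields $\e^{-\ri sA_n}B_n\e^{\ri sA_n}-B_n=-\ri\int_0^s\e^{-\ri uA_n}[A_n,B_n]^\circ\e^{\ri uA_n}\,\D u$; hence $\|\e^{-\ri sA_n}B_n\e^{\ri sA_n}-B_n\|\le|s|\,\|[A_n,B_n]^\circ\|$ for every $n$ and every $s$.

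The quantitative heart of the matter --- and the step I expect to require genuine care --- is the uniform bound $M:=\sup_n\|[A_n,B_n]^\circ\|_{\cB(\cH_n)}<\infty$. This is equivalent to $\oplus_n[H_n,A_n]^\circ$ defining a bounded operator from $\cH_1=\oplus_n\cH_{n;1}$ to $\cH_{-1}=\oplus_n\cH_{n;-1}$ (the $\ell^2$-direct sums), which is exactly what the identification $\cB(\cH_1;\cH_{-1})=\oplus_n\cB(\cH_{n;1};\cH_{n;-1})$ in the statement encodes, the right-hand side being read as the $\ell^\infty$-direct sum of diagonal operators, and which I therefore take to be implicit in the statement; one passes between the two formulations using $[A_n,B_n]^\circ=-B_n[H_n,A_n]^\circ B_n$ and the functional-calculus bounds $\|B_n\|_{\cB(\cH_n,\cH_{n;1})},\|B_n\|_{\cB(\cH_{n;-1},\cH_n)}\le\sqrt2$ and $\|(H_n-\ri)\|_{\cB(\cH_{n;1},\cH_n)},\|(H_n-\ri)\|_{\cB(\cH_n,\cH_{n;-1})}\le1$, all uniform in $n$. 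Granting $M<\infty$, the estimate above gives $\|W_{-s}BW_s-B\|=\sup_n\|\e^{-\ri sA_n}B_n\e^{\ri sA_n}-B_n\|\le|s|\,M$, so $\liminf_{s\to0^+}\tfrac1s\|W_{-s}BW_s-B\|\le M<\infty$, and Lemma~\ref{lemma:C1equiv}\ref{item:C1charlim} yields $B\in C^1(A)$, i.e.\ $H$ is of class $C^1(A)$.

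It remains to identify the commutator. With $H$ now known to be of class $C^1(A)$, I would invoke Lemma~\ref{Lemma:CommutatorFromGroup}: for $\psi,\phi\in\cD(H)$, $\la\psi,\ri[H,A]^\circ\phi\ra=\lim_{s\to0}\tfrac1s\bigl(\la H\psi,W_s\phi\ra-\la\psi,W_sH\phi\ra\bigr)$. Both sides are continuous in $(\psi,\phi)$ for the graph norm of $H$ --- the left side because $[H,A]^\circ\in\cB(\cH_1;\cH_{-1})$, the right side by Cauchy--Schwarz together with $\sup_n\|[H_n,A_n]^\circ\|_{\cB(\cH_{n;1},\cH_{n;-1})}\le M$ --- and the vectors with only finitely many nonzero components are dense in $\cD(H)$ in the graph norm, so it suffices to check the identity for such $\psi,\phi$. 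For those the right-hand side decouples into a finite sum over $n$, and Lemma~\ref{Lemma:CommutatorFromGroup} applied to each pair $(H_n,A_n)$ computes the limit of the $n$-th summand to be $\la\psi_n,\ri[H_n,A_n]^\circ\phi_n\ra$; summing, $\la\psi,\ri[H,A]^\circ\phi\ra=\sum_n\la\psi_n,\ri[H_n,A_n]^\circ\phi_n\ra$, and since such $\psi,\phi$ are dense in $\cH_1$ this is exactly $[H,A]^\circ=\oplus_n[H_n,A_n]^\circ$ in $\cB(\cH_1;\cH_{-1})$. Apart from the uniform bound $M<\infty$, every step is routine bookkeeping with direct sums.
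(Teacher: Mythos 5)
The paper offers no proof of this lemma: it is introduced with ``the proofs of which are left to the reader,'' so there is no proof of record to compare against. Your argument is correct, and---more importantly---you have put your finger on the key point that the lemma as literally worded cannot be true without an additional hypothesis. The uniform bound $M:=\sup_n\|[A_n,B_n]^\circ\|_{\cB(\cH_n)}$, equivalently $\sup_n\|[H_n,A_n]^\circ\|_{\cB(\cH_{n;1},\cH_{n;-1})}$, must be finite; a toy counterexample is $\cH_n=\CC^2$, $H_n=\left(\begin{smallmatrix}0&1\\1&0\end{smallmatrix}\right)$, $A_n=\left(\begin{smallmatrix}n&0\\0&-n\end{smallmatrix}\right)$, where each $H_n$ is trivially of class $C^1(A_n)$ but $|\la A\psi,H\phi\ra-\la H\psi,A\phi\ra|$ is not bounded by $C\|\psi\|\|\phi\|$ on $\cD(A)$. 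Your reading of the ``identification'' $\cB(\cH_1;\cH_{-1})=\oplus_n\cB(\cH_{n;1};\cH_{n;-1})$ as a membership requirement (with the right side read as the $\ell^\infty$-sum of block-diagonal operators, so that it is in fact only an \emph{inclusion}) is the only one under which the conclusion holds, and it is indeed the situation in the one application, Proposition~\ref{Prop:ExtendedC1}, where the relevant constants are seen to be $\ell$-independent.

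Granting $M<\infty$, all your steps go through. A small remark: you can dispense with the unitary group and Duhamel formula entirely by verifying Lemma~\ref{lemma:C1equiv}\ref{item:C1charbound} directly. For $\psi,\phi\in\cD(A)$ one has $A\psi=(A_n\psi_n)_n$ and $B\phi=(B_n\phi_n)_n$ in $\cH$, so
\begin{equation*}
\la A\psi,B\phi\ra-\la B\psi,A\phi\ra=\sum_{n}\bigl(\la A_n\psi_n,B_n\phi_n\ra-\la B_n\psi_n,A_n\phi_n\ra\bigr),
\end{equation*}
absolutely convergent, with each summand bounded by $M\|\psi_n\|\|\phi_n\|$; Cauchy--Schwarz then gives the constant $M$ in \eqref{eq:C1def}. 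This is slightly shorter than the route via $W_s$, but it is of course equivalent, and your identification of $[H,A]^\circ$ via Lemma~\ref{Lemma:CommutatorFromGroup} and density of finitely-supported vectors in $\cD(H)$ is exactly the right way to finish.
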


In the proposition below $\tv\in C_0^\infty(\RR^\nu\backslash)$,
$\ta = (\tv\cdot \ri \nabla_k + \ri\nabla_k \cdot \tv)/2$ and $\tA\p{\ell} = A\otimes \one_{\cF\p{\ell}}
+\one_\cF\otimes\,\D\Gamma\p{\ell}(\ta)$. The tilde-free versions are as usual constructed using $v$.
The extended operators being discussed in this subsection were
introduced in Subsection~\ref{sec-ext}.

\begin{proposition}\label{Prop:ExtendedC1} Suppose
  $(\omega,\Omega,\coup,v)$ satisfies a $C^1$-condition. 
Then for all $\ell\in\NN$ and $\xi\in\RR^\nu$ the following holds:
$H\p{\ell}(\xi)$ is of class $C^1(A\otimes \one_{\cF\p{\ell}})$, 
$C^1(\one_\cF\otimes \,\D\Gamma\p{\ell}(\ta))$ and $C^1(\tA\p{\ell})$,
with
\[
\begin{aligned}
& \ri \bigl[H\p{\ell}(\xi),A\otimes \one_{\cF\p{\ell}}\bigr]^\circ 
=  \int^\oplus_{\RR^{\ell\nu}} \ri \bigl[H\p{\ell}(\xi;\uk),A\bigr]^\circ \D k,\\
& \ri \bigl[H\p{\ell}(\xi),\one_\cF\otimes\, \D\Gamma\p{\ell}(\ta)\bigr]^\circ =
\int^\oplus_{\RR^{\ell \nu}}\sum_{j=1}^\ell \tv(k_j)\cdot\Bigl( 
\nabla\omega(k_j)- \nabla\Omega\bigl(\xi-\textstyle\sum_{j=1}^\ell k_j -\D\Gamma(k)\bigr)\Bigr)\D k,\\
&\ri \bigl[H\p{\ell}(\xi),\tA\p{\ell}\bigr]^\circ 
= \ri \bigl[H\p{\ell}(\xi),A\otimes \one_{\cF\p{\ell}}\bigr]^\circ
+\ri \bigl[H\p{\ell}(\xi),\one_\cF\otimes\, \D\Gamma\p{\ell}(\ta)\bigr]^\circ.
\end{aligned}
\]
Furthermore, $H^\x(\xi)$ is of class $C^1(A^\x)$ and
\begin{align*}
\ri \bigl[H^\x(\xi),A^\x\bigr]^\circ & = [H(\xi),A]^\circ 
\oplus\Bigl\{\bigoplus_{\ell=1}^\infty \ri \bigl[H\p{\ell}(\xi),A\p{\ell}\bigr]^\circ\Bigr\}\\
& = \D\Gamma^\x(v\cdot\nabla\omega) - \D\Gamma^\x(v)\cdot\nabla\Omega\big(\xi-\D\Gamma^\x(k)\big) - \phi(\ri a \coup)\otimes\one_\cF.
\end{align*}
\end{proposition}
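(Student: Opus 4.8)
The plan is to observe that, fibrewise in the extra-boson momenta and after the identification $\cF\otimes\cF\p{\ell}\simeq L^2\sym(\RR^{\ell\nu};\cF)$, every operator occurring here has exactly the second-quantized form treated in Proposition~\ref{Prop:C2}, and then to glue the pieces together with Lemmas~\ref{Lemma:SumsOfAs} and~\ref{Lemma:SumsOfHs}. Fix $\xi$. In the $L^2\sym(\RR^{\ell\nu};\cF)$-picture,
\[
H\p{\ell}(\xi)=\D\Gamma(\omega)\otimes\one_{\cF\p{\ell}}+\one_\cF\otimes M_\omega+\Omega\bigl(\xi-\D\Gamma(k)\otimes\one_{\cF\p{\ell}}-\one_\cF\otimes M_k\bigr)+\phi(\coup)\otimes\one_{\cF\p{\ell}},
\]
where $M_\omega,M_k$ denote multiplication by $\sum_{j=1}^{\ell}\omega(k_j)$, $\sum_{j=1}^{\ell}k_j$ on $\cF\p{\ell}=L^2\sym(\RR^{\ell\nu})$, and $\D\Gamma\p{\ell}(\ta)$ is the $\ell$-particle restriction of $\D\Gamma(\ta)$. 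I would work on the algebraic core $\cD_0$ given by the algebraic tensor product of $\tcC$ (cf.~\eqref{BiggerBasicCore}) with $C^\infty_{0,\mathrm{sym}}(\RR^{\ell\nu})$; as in the proof of Proposition~\ref{Prop:C2} (cf.\ also \eqref{xBasicCore}) this is a common core for $H\p{\ell}(\xi)$ and for each of $A\otimes\one_{\cF\p{\ell}}$, $\one_\cF\otimes\D\Gamma\p{\ell}(\ta)$ and $\tA\p{\ell}=A\otimes\one_{\cF\p{\ell}}+\one_\cF\otimes\D\Gamma\p{\ell}(\ta)$, the last being self-adjoint since its summands act on different tensor factors and hence commute.

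Next I would compute the three commutators as forms on $\cD_0$. For $A\otimes\one_{\cF\p{\ell}}$, which acts only in the $\cF$-slot, each term of $H\p{\ell}(\xi)$ is differentiated exactly as in Proposition~\ref{Prop:C2} --- via $\ri[\omega,a]=v\cdot\nabla\omega$, $[a,k_i]=\ri v_i$ with the chain rule, and $\ri[\phi(\coup),\D\Gamma(a)]=-\phi(\ri a\coup)$, with $\one_\cF\otimes M_\omega$ contributing nothing --- which reproduces $\int^\oplus_{\RR^{\ell\nu}}\ri[H\p{\ell}(\xi;\uk),A]^\circ\,\D k$. For $\one_\cF\otimes\D\Gamma\p{\ell}(\ta)$, which acts only in the $\cF\p{\ell}$-slot, a first-quantized computation on $\RR^{\ell\nu}$ using $[\omega(k_j),\ta_{k_j}]=-\ri\tv(k_j)\cdot\nabla\omega(k_j)$ and $[k_{j,i},\ta_{k_j}]=-\ri\tv_i(k_j)$ produces $\sum_{j}\tv(k_j)\cdot\nabla\omega(k_j)$ from $\one_\cF\otimes M_\omega$ and $-\sum_{j}\tv(k_j)\cdot\nabla\Omega\bigl(\xi-\sum_m k_m-\D\Gamma(k)\bigr)$ from the $\Omega$-term, while the $\D\Gamma(\omega)\otimes\one$- and $\phi(\coup)\otimes\one$-terms, being $\uk$-independent, drop out; this is the second displayed formula. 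In each case the resulting form is bounded by $C\bigl(\|(H\p{\ell}(\xi)+\lambda)\psi\|^2+\|(H\p{\ell}(\xi)+\lambda)^{1/2}\varphi\|^2\bigr)$ by the estimates of Proposition~\ref{Prop:C2}: the lower bound $\Omega(\eta)\ge C^{-1}\wgt{\eta}^{s_\Omega}-C$ with $s_\Omega\le2$ controls $\nabla\Omega(\xi-\D\Gamma(k)\otimes\one-\one\otimes M_k)$ by $(H\p{\ell}(\xi)+\lambda)^{1/2}$, the extra positive operators $\D\Gamma(\omega)\otimes\one$ and $\one_\cF\otimes M_\omega$ only improving matters. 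Hence the resolvent-expansion argument of Proposition~\ref{Prop:C2} applies verbatim and gives $H\p{\ell}(\xi)\in C^1(A\otimes\one_{\cF\p{\ell}})\cap C^1(\one_\cF\otimes\D\Gamma\p{\ell}(\ta))$ with these commutators; the third assertion then follows from Lemma~\ref{Lemma:SumsOfAs} with $A_1=A\otimes\one_{\cF\p{\ell}}$, $A_2=\one_\cF\otimes\D\Gamma\p{\ell}(\ta)$, $A_3=\tA\p{\ell}$ and $\cD_0$ as above.

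Then I would treat the extended operator by specializing to $\tv=v$, so that $\tA\p{\ell}=A\p{\ell}$ and, since $\D\Gamma\ext(a)=\D\Gamma(a)\otimes\one_\cF+\one_\cF\otimes\D\Gamma(a)$ restricts on $\cF\otimes\cF\p{\ell}$ to $A\otimes\one_{\cF\p{\ell}}+\one_\cF\otimes\D\Gamma\p{\ell}(a)=A\p{\ell}$, one has $A\ext=\D\Gamma\ext(a)=A\oplus\bigoplus_{\ell\ge1}A\p{\ell}$ relative to $\cF\ext=\cF\oplus\bigoplus_{\ell\ge1}(\cF\otimes\cF\p{\ell})$. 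Feeding $H(\xi)\in C^1(A)$ (Proposition~\ref{Prop:C2}) and the just-proved $H\p{\ell}(\xi)\in C^1(A\p{\ell})$ into Lemma~\ref{Lemma:SumsOfHs} yields $H\ext(\xi)\in C^1(A\ext)$ together with $\ri[H\ext(\xi),A\ext]^\circ=\ri[H(\xi),A]^\circ\oplus\bigoplus_{\ell\ge1}\ri[H\p{\ell}(\xi),A\p{\ell}]^\circ$. Summing the $\ell$-block formulas and using that $\D\Gamma(b)\otimes\one_{\cF\p{\ell}}$ plus multiplication by $\sum_j b(k_j)$ in the $\cF\p{\ell}$-slot equals $\D\Gamma\ext(b)$ restricted there, that $\D\Gamma\ext(k)$ restricts to $\D\Gamma(k)\otimes\one+\one\otimes M_k$, and that $\phi(\ri a\coup)\otimes\one_{\cF\p{\ell}}$ assembles to $\phi(\ri a\coup)\otimes\one_\cF$, identifies this direct sum with $\D\Gamma\ext(v\cdot\nabla\omega)-\D\Gamma\ext(v)\cdot\nabla\Omega(\xi-\D\Gamma\ext(k))-\phi(\ri a\coup)\otimes\one_\cF$, as claimed. (As a check: $H\ext(\xi)=\D\Gamma\ext(\omega)+\Omega(\xi-\D\Gamma\ext(k))+\phi(\coup)\otimes\one_\cF$ and $A\ext=\D\Gamma\ext(a)$ have literally the structure of the pair $(H(\xi),A)$ with $\D\Gamma\ext$ in place of $\D\Gamma$, so Proposition~\ref{Prop:C2} applies to them directly.)

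I expect no deep obstacle; the computation is a bookkeeping exercise on top of Proposition~\ref{Prop:C2}. The most delicate point is the $\one_\cF\otimes\D\Gamma\p{\ell}(\ta)$ commutator, where the combination $\nabla\omega-\nabla\Omega$ arises because differentiating $\sum_j\omega(k_j)$ and the $\sum_j k_j$-dependence of $\Omega(\xi-\sum_j k_j-\D\Gamma(k))$ enter with opposite signs; secondary care is needed to confirm that the form bounds of Proposition~\ref{Prop:C2} survive the presence of the additional positive operators $\D\Gamma(\omega)\otimes\one$ and $\one_\cF\otimes M_\omega$ in $H\p{\ell}(\xi)$, and to keep straight which tensor slot each operator acts on.
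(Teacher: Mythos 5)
Your overall architecture (establish $C^1$ separately for $A\otimes\one_{\cF\p\ell}$ and $\one_\cF\otimes\D\Gamma\p\ell(\ta)$, combine with Lemma~\ref{Lemma:SumsOfAs}, then assemble the direct sum with Lemma~\ref{Lemma:SumsOfHs}) coincides with the paper's, and your algebraic identities for the three commutators are correct. The formal commutator computations are right, including the sign structure $\nabla\omega - \nabla\Omega$.

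The paper, however, handles the two pieces by different techniques, and the difference matters. For $A\otimes\one_{\cF\p\ell}$ the paper does \emph{not} redo the commutator computation of Proposition~\ref{Prop:C2}; it observes that $[H\p\ell(\xi;\uk),A]^\circ = [H(\xi-k\p\ell),A]^\circ$ is already provided by Proposition~\ref{Prop:C2} fibrewise, verifies that $(H\p\ell(\xi)-\lambda)^{-1}$ maps $\cC\p\ell$ (continuous compactly supported $\cF$-valued symmetric functions with values in $\cC$) into $\tcC\p\ell$, and estimates $[(H\p\ell(\xi)-\lambda)^{-1},A\otimes\one]$ using the two uniform constants in \eqref{FiniteMs}. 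Your plan redoes the Proposition~\ref{Prop:C2} bound from scratch; this works, but leans on essentially the same uniformity, so you should make that explicit (the analogue of $M_1,M_2$).

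The genuine gap is in the $\one_\cF\otimes\D\Gamma\p\ell(\ta)$ part. You propose a form computation on the algebraic core $\cD_0 = \tcC\otimes C^\infty_{0,\mathrm{sym}}(\RR^{\ell\nu})$ followed by ``the resolvent-expansion argument of Proposition~\ref{Prop:C2} applies verbatim.'' That argument relied on the domain-preservation property \eqref{Eq-R0PreserveC} to make the identity $[(H-\lambda)^{-1},A]=-(H-\lambda)^{-1}[H,A](H-\lambda)^{-1}$ meaningful on the chosen core, and the natural analogue here would be that $(H\p\ell(\xi)-\lambda)^{-1}$ maps $\cD_0$ into $\cD(\one_\cF\otimes\D\Gamma\p\ell(\ta))$. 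This is \emph{not} automatic: the paper's fibre core $\cC\p\ell$ consists only of \emph{continuous} functions of $\uk$, and continuity is not enough to be in the domain of $\D\Gamma\p\ell(\ta)$, which differentiates in $\uk$. The paper sidesteps the issue entirely by invoking Lemma~\ref{lemma:C1equiv}~\ref{item:C1charlim}: it works with the unitary group $\tw^{(\ell)}_t$ generated by $\one_\cF\otimes\D\Gamma\p\ell(\ta)$, computes $\tw^{(\ell)}_{-t}[H\p\ell(\xi),\tw^{(\ell)}_t]$ explicitly via the flow $\tpsi_t$ and finite propagation speed \eqref{FinitePropSpeed}, and bounds $\tfrac1t[(H\p\ell(\xi)-\lambda)^{-1},\tw^{(\ell)}_t]$ uniformly for $0<|t|\le1$. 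Since the group is bounded, no domain question arises. Your route can be repaired --- one can check that $\uk\mapsto(H\p\ell(\xi;\uk)-\lambda)^{-1}$ is norm-$C^1$ (indeed analytic) in $\uk$, so the resolvent maps $\cD_0$ into compactly supported $C^1$ functions valued in $\cD$, which lie in $\cD(\one_\cF\otimes\D\Gamma\p\ell(\ta))$ --- but that regularity-in-$\uk$ verification is exactly the missing step, and the ``verbatim'' claim obscures that it is needed. Either supply it, or switch to the group-based criterion as the paper does.
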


\begin{remark}
For the purpose of the proof below we abbreviate $k\p{\ell}
=k_1+\cdots+ k_\ell$, for vectors $\uk =
(k_1,\dots,k_\ell)\in\RR^{\ell\nu}$.
Note that $[H^{(\ell)}(\xi;\uk),A]^\circ = [H(\xi-k\p{\ell}),A]^\circ$ can be computed using
Proposition~\ref{Prop:C2}~\ref{Item:HisC1}.
\remarkQED\end{remark}

\begin{proof} We only prove that $H^{(\ell)}(\xi)$ is of class
  $C^1(A\otimes\one_{\cF\p{\ell}})$ and of class $C^1(\one_\cF\otimes\,\D\Gamma\p{\ell}(\ta))$. 
The $C^1(\tA\p{\ell})$ property then follows from Lemma~\ref{Lemma:SumsOfAs} 
and that $H\ext(\xi)$ is of class $C^1(A\ext)$ follows from 
Lemma~\ref{Lemma:SumsOfHs} after choosing $\tv = v$. The expressions can subsequently be easily
confirmed by computations on a suitable core for $H^{(\ell)}(\xi)$. 

Let $\ell\in\{1,2,3,\dotsc\}$ and $\xi\in \RR^\nu$.
We begin by showing that $H\p{\ell}(\xi)$ is of class $C^1(A\otimes\one_{\cF\p{\ell}})$,
where we identify $\cF \otimes  \cF\p{\ell}$ with $\symL^2(\RR^{\ell\nu};\cF)$.

Let
\[
\cC\p{\ell} = \bigset{f\in C_{0,\rmsym}(\RR^{\ell\nu};\cF) }{ \forall \uk\in \RR^{\ell\nu}: \ f(\uk)\in \cC }.
\]
Here $C_{0,\rmsym}(\RR^{\ell\nu};\cF)$ denotes the continuous and
compactly supported  $\cF$-valued functions,
symmetric under permutation of the $\ell$ variables. 
Clearly $\cC\p{\ell}$ is a core for $A\otimes\one_{\cF\p{\ell}}$. Pick a $\lambda< \Sigma_0$, cf. \eqref{Sigma0}.
Since
\[
(H\p{\ell}(\xi)-\lambda)^{-1} = \int^\oplus_{\RR^{\ell\nu}} (H\p{\ell}(\xi;\uk) - \lambda)^{-1} \D k,
\]
we observe that for $f\in \cC\p{\ell}$ we have
\[
\bigl((H\p{\ell}(\xi)-\lambda)^{-1}f \bigr)(\uk) = 
\bigl(H(\xi-k\p{\ell})+\sum_{j=1}^\ell \omega(k_j)-\lambda\bigr)^{-1} f(\uk).
\]
Hence by Lemma~\ref{lemma:C1equiv}~\ref{item:C1charextend} we conclude that
\begin{equation}\label{core-l-tilde}
\begin{aligned}
 (H\p{\ell}(\xi)-\lambda)^{-1}\cC\p{\ell} & \subset \bigset{f\in
   C_{0,\rmsym}(\RR^{\ell\nu};\cF)}{ \forall \uk\in \RR^{\ell\nu}: \
   f(\uk)\in \cD(A)\cap \cD }\\ & =: \tcC\p{\ell}.
\end{aligned}
\end{equation}
For $f,g\in \tcC\p{\ell}$ we compute using Proposition~\ref{Prop:C2}~\ref{Item:HisC1}
\begin{align*}
\big\la f,\bigl[H^{(\ell)}(\xi),A\otimes\one_{\cF^{(\ell)}}\bigr]g\big\ra 
&= \int_{\RR^{\ell\nu}} \big\la f(\uk), \bigl[H(\xi-k\p{\ell})+\sum_{j=1}^\ell \omega(k_j),A\bigr]g(\uk)\big\ra\, \D k\\
& = \int_{\RR^{\ell\nu}} \big\la f(\uk), \bigl[H(\xi-k\p{\ell}),A\bigr]^\circ g(\uk)\big\ra\,\D k.
\end{align*}
Since
\begin{equation}\label{FiniteMs}
\begin{aligned}
& M_1 := \sup_{\eta\in\RR^\nu}\bigl\|(H(\eta)-\lambda)^{-1}[H(\eta),A]^\circ(H(\eta)-\lambda)^{-1}\bigr\| <\infty,\\
& M_2 := \sup_{\xi\in\RR^\nu,\uk\in\RR^{\ell\nu}} 
\bigl\|(H(\xi-k\p{\ell})-\lambda)(H\p{\ell}(\xi;\uk)-\lambda)^{-1}\bigr\| < \infty,
\end{aligned}
\end{equation}
we can finally estimate for $f,g\in\cC\p{\ell}$
\begin{align*}
& \bigl|\big\la f,\bigl[(H\p{\ell}(\xi)-\lambda)^{-1},A \otimes \one_{\cF\p{\ell}}\bigr]g\big\ra\bigr| \\
& \quad \leq \int_{\RR^{\ell\nu}}\bigl|\big\la
(H\p{\ell}(\xi;\uk)-\lambda)^{-1}f(\uk),
[H(\xi-k\p{\ell}),A]^\circ(H\p{\ell}(\xi;\uk)-\lambda)^{-1}g(\uk)\big\ra\bigr| \D k\\
& \quad \leq M_1 M_2^2 \|f\|\|g\|.
\end{align*}
That $\cC\p{\ell}$ is a core for $A\otimes \one_{\cF\p{\ell}}$ now
implies that $H\p{\ell}(\xi)$ is of class $C^1(A \otimes \one_{\cF\p{\ell}})$.

By Lemmata~\ref{Lemma:SumsOfAs} and~\ref{Lemma:SumsOfHs} it now suffices to 
show that $H\p{\ell}(\xi)$ is of class $C^1(\one_\cF\otimes \,\D\Gamma\p{\ell}(\ta))$.
Denote by $\tw_t$ the group $\e^{\ri t\ta}$ generated by $\ta$. 
Then $\tw_t\p{\ell} = \one_\cF\otimes \Gamma\p{\ell}(\tw_t)$ is the group
generated by $\one_\cF\otimes\,\D\Gamma\p{\ell}(\ta)$. 
If we denote by $\tpsi_t$ the globally defined flow generated by the ODE $\dot{\psi}_t = \tv(\psi_t)$
we can write $(\tw_t f)(y) = \sqrt{\tJ_t} f(\tpsi_t(y))$, where $\tJ_t$ is the Jacobi determinant. See \eqref{OneBodyGroup} and \eqref{Jacobiant}.

We introduce a bit of notation. Given $\uk\in\RR^{\ell\nu}$ we write
$\tpsi_t\p{\ell}(\uk) = \tpsi_t(k_1) + \cdots + \tpsi_t(k_\ell)$.
We compute as a form on $\tcC\p{\ell}$, cf. \eqref{core-l-tilde},
\begin{align}\label{CommWithEvol}
\nonumber &\tw\p{\ell}_{-t}\bigl[H\p{\ell}(\xi),\tw\p{\ell}_t\bigr] \\
\nonumber&=
\tw\p{\ell}_{-t} \Bigl(\Omega\bigl(\xi-k\p{\ell}-\D\Gamma(k)\bigr) +\sum_{j=1}^\ell \omega(k_j)\Bigr)\tw\p{\ell}_t 
 - \Omega\bigl(\xi-k\p{\ell}-\D\Gamma(k)\bigr) - \sum_{j=1}^\ell \omega(k_j) \\
\nonumber & = \Omega\bigl(\xi-
\tpsi_t\p{\ell}(\uk)-\D\Gamma(k)\bigr)-\Omega\bigl(\xi-k\p{\ell}-\D\Gamma(k)\bigr) 
+\sum_{j=1}^\ell\bigl(\omega\bigl(\tpsi_t(k_j)\bigr)-\omega(k_j)\bigr) \\
\nonumber & = -\int_0^t
\nabla\Omega\bigl(\xi-\psi_t\p{\ell}(\uk)-\D\Gamma(k)\bigr)\cdot
\sum_{j=1}^\ell \tv\bigl(\tpsi_s(k)\bigr)\,\D s\\
& \qquad  + \sum_{j=1}^\ell\int_0^t \nabla\omega\bigl(\tpsi_s(k_j)\bigr)\cdot\tv\bigl(\tpsi_s(k_j)\bigr)\,\D s.
\end{align}
Estimate, as a fiber operator pointwise in $\uk$,
\begin{align*}
& \Bigl\|\nabla\Omega\bigl(\xi-\psi_s\p{\ell}(\uk)-\D\Gamma(k)\bigr)
\bigl(\Omega\bigl(\xi-k\p{\ell}-\D\Gamma(k)\bigr)+1\bigr)^{-\frac12}\Bigr\|\\
& \qquad  \leq C \sup_{\eta\in\RR^\nu}\frac{\big\la \xi-\psi_s\p{\ell}(\uk)-\eta \big\ra^{s_\Omega-1}}{\big\la \xi-k\p{\ell}-\eta \big\ra^{s_\Omega/2}}\leq (1+|s|)\tC
\end{align*}
uniformly in $s$ and $k$. Here we used \eqref{FinitePropSpeed} in the last step.
Appealing to
\eqref{FinitePropSpeed} again and the $C^1$-condition, cf. Condition~\ref{Ck-Condition}, we observe that
the right-hand side of \eqref{CommWithEvol} is $(H(\xi-k\p{\ell})-\lambda)^{1/2}$-bounded
uniformly in $\uk$.
From this observation it is now clear that as a form on $\cC^{(\ell)}$
\[
\frac1{t}\bigl[(H\p{\ell}(\xi)-\lambda)^{-1},\tw\p{\ell}_t\bigr]
=
-(H\p{\ell}(\xi)-\lambda)^{-1}\Bigl\{\frac1{t}\bigl[H\p{\ell}(\xi),\tw\p{\ell}_t\bigr] 
(H\p{\ell}(\xi)-\lambda)^{-1}\Bigr\},
\]
and the term in brackets extends to a bounded operator uniformly bounded in $0<|t|\leq 1$. Cf. \eqref{FiniteMs}.
It thus follows from Lemma~\ref{lemma:C1equiv}~\ref{item:C1charlim}
that $H\p{\ell}(\xi)$ is of class $C^1(\one_\cF\otimes \,\D\Gamma\p{\ell}(\ta))$.
\end{proof}

We end this subsection by formulating and proving a virial theorem which will be used to
extract the effective free dynamics induced by mass shells. Similar virial theorems were used in \cite{DRFP,GeNi1}. In the following
$\cU\subset \RR^m$ is open and $\cH$ a complex separable Hilbert space, with dense subspace $\cD_0$.
Suppose  $\{H(x)\}_{x\in \cU}$ is a family
of N-measurable operators, essentially self-adjoint on $\cD_0$,
cf. Appendix~\ref{App-Fiber}.
Then $H := \int_\cU^\oplus H(x) \D x$, a priori defined on
 $\set{f\in L^2(\cU;\cH)}{ x\in \cU: f(x)\in \cD_0 \ \textup{a.e.}}$, is essentially self-adjoint.
Let $A$ be a self-adjoint operator on $\cH$, and  $a = \one\otimes
\frac12\{v\cdot \ri\nabla_x + \ri\nabla_x\cdot v\}$, 
with $v\in C^\infty_0(\cU;\RR^m)$.
Then $A\p{1} = A\otimes \one_{L^2(\cU)} + \one_\cH\otimes \, a$ 
is self-adjoint as an operator on $\cH\otimes L^2(\cU)$, which we identify as usual with
$L^2(\cU;\cH)$.

\begin{theorem}\label{VirialTheorem} Let $E\in C^1(\cU)$, with $E(x) \in \spp(H(x))$ for all $x\in \cU$.
Suppose $H$ is of class
  $C^1(A\p{1})$ and that the commutator fibers, i.e. 
$\ri[H,A\p{1}]^\circ = \int_\cU^\oplus \ri [H,A\p{1}]^\circ(x)\D x$. 
Then for almost every $x\in \cU$
\begin{equation}\label{VirialFormula}
\one_{\{E(x)\}}(H(x)) \ri \bigl[H,A\p{1}\bigr]^\circ(x)\one_{\{E(x)\}}(H(x)) = v(x)\cdot \nabla E(x) \one_{\{E(x)\}}(H(x)).
\end{equation}
\end{theorem}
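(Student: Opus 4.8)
The plan is to deduce the fibered identity \eqref{VirialFormula} from the standard virial theorem for a single operator of class $C^1$ with respect to a conjugate operator, applied at the fixed eigenvalue $0$. To this end I would work on $L^2(\cU;\cH)\simeq\cH\otimes L^2(\cU)$ and introduce the self-adjoint operator
\[
\tH := \int_\cU^\oplus\bigl(H(x)-E(x)\bigr)\,\D x = H - M_E ,
\]
where $M_E$ denotes multiplication by the scalar function $E$, acting as $\one_\cH$ on the $\cH$-factor. Since $E(x)\in\spp(H(x))$ for every $x$, we have $0\in\spp\bigl(H(x)-E(x)\bigr)$ for every $x$, and the spectral projection of $\tH$ onto $\{0\}$ decomposes as $\one_{\{0\}}(\tH)=\int_\cU^\oplus\one_{\{E(x)\}}(H(x))\,\D x$.

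The first step would be to show that $\tH$ is again of class $C^1(A\p{1})$ and to identify
\[
\ri[\tH,A\p{1}]^\circ = \ri[H,A\p{1}]^\circ - M_{v\cdot\nabla E}.
\]
Since $M_E$ commutes with the $A\otimes\one$ part of $A\p{1}$, only the $x$-part $a$ contributes, and a direct form computation gives $\ri[M_E,a]^\circ = M_{v\cdot\nabla E}$, which is a \emph{bounded} operator because $v\in C_0^\infty(\cU)$ and $E\in C^1(\cU)$. To promote this to genuine $C^1(A\p{1})$ regularity I would use the unitary group $\e^{\ri sA\p{1}}=\e^{\ri sA}\otimes\e^{\ri sa}$: conjugation of a multiplication operator by the one-body group \eqref{OneBodyGroup} again yields a multiplication operator, and since $v$ is compactly supported the associated flow is stationary off $\supp v$, so $\e^{-\ri sA\p{1}}M_E\e^{\ri sA\p{1}}-M_E$ is a bounded multiplication operator, supported in $\supp v$, of norm $O(|s|)$ by \eqref{FinitePropSpeed}, with first-order term $M_{v\cdot\nabla E}$. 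Hence $M_E\in C^1(A\p{1})$, and combining this with the hypothesis $H\in C^1(A\p{1})$ through the standard sum/perturbation calculus for $C^1$ classes (\cite{ABG,GGM1}) gives $\tH\in C^1(A\p{1})$ with the commutator above. Moreover $M_{v\cdot\nabla E}$ is a scalar multiplication operator, hence fibers trivially, so by the hypothesis that $\ri[H,A\p{1}]^\circ$ fibers we also get $\ri[\tH,A\p{1}]^\circ = \int_\cU^\oplus\bigl(\ri[H,A\p{1}]^\circ(x)-v(x)\cdot\nabla E(x)\bigr)\,\D x$.

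The second step would be to invoke the virial theorem for $C^1$ conjugate operators (cf.\ \cite{ABG}, and the fibered variants used in \cite{DRFP,GeNi1}): since $\tH$ is self-adjoint and of class $C^1(A\p{1})$, the spectral projection $P_0:=\one_{\{0\}}(\tH)$ satisfies $P_0\,\ri[\tH,A\p{1}]^\circ\,P_0=0$. Substituting the two direct-integral decompositions established above turns this into
\[
\int_\cU^\oplus\one_{\{E(x)\}}(H(x))\bigl(\ri[H,A\p{1}]^\circ(x)-v(x)\cdot\nabla E(x)\bigr)\one_{\{E(x)\}}(H(x))\,\D x = 0 ,
\]
and a measurable field of bounded operators whose direct integral vanishes is zero for almost every $x$. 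Absorbing the scalar term with the projection $\one_{\{E(x)\}}(H(x))$ then yields exactly \eqref{VirialFormula}.

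The hard part is the first step: because $E$ is merely $C^1$ on the open set $\cU$ and is not assumed bounded, $M_E$ is an \emph{unbounded} perturbation of $H$, so one cannot simply invoke the bounded $C^1(A\p{1})$ calculus; the feature that saves the argument is the compact support of $v$ inside $\cU$, which makes the flow trivial off $\supp v$ and therefore renders $\e^{-\ri sA\p{1}}M_E\e^{\ri sA\p{1}}-M_E$ bounded with the expected first-order behaviour. A secondary technical point is the direct-integral bookkeeping — that $\one_{\{0\}}(\tH)$ and $\ri[\tH,A\p{1}]^\circ$ decompose compatibly along the fibration and that a vanishing direct integral has vanishing fibers — which rests on the standard measurability facts recalled in Appendix~\ref{App-Fiber}.
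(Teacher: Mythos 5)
Your plan—subtract the eigenvalue function $E$ from $H$ and then apply the standard $C^1$ virial theorem at the eigenvalue $0$—is a natural idea and captures the right mechanism, but the crucial first step ($\tH := H - M_E$ is of class $C^1(A\p{1})$) has a genuine gap that the paper carefully avoids.

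The difficulty is exactly the one you flag but do not resolve. You correctly show that $M_E$ itself is of class $C^1(A\p{1})$, using the compact support of $v$ and the resulting boundedness of $\e^{-\ri sA\p{1}}M_E\e^{\ri sA\p{1}}-M_E$. But this alone does \emph{not} yield $\tH = H - M_E \in C^1(A\p{1})$. The ``standard sum/perturbation calculus'' for $C^1(A)$ classes in \cite{ABG,GGM1} covers a self-adjoint $H \in C^1(A)$ perturbed by a \emph{bounded} operator in $C^1(A)$ (one writes $(H+B-z)^{-1} = (H-z)^{-1}(1+B(H-z)^{-1})^{-1}$ and uses the subalgebra property plus invertibility); it does not cover the sum of two unbounded self-adjoint operators each separately of class $C^1(A)$, which in general need not be of class $C^1(A)$. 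Moreover, when $E$ is unbounded on $\cU$, the domains $\cD(\tH)$ and $\cD(H)$ are generally different (an $L^2$ field of eigenvectors $f(x)\in\ker(H(x)-E(x))$ lies in $\cD(\tH)$ but may fail $\int\|H(x)f(x)\|^2\D\mu < \infty$), so the commutator identity $\ri[\tH,A\p{1}]^\circ = \ri[H,A\p{1}]^\circ - M_{v\cdot\nabla E}$ involves objects living on incompatible scales $\cB(\cD(\tH),\cD(\tH)^*)$ versus $\cB(\cD(H),\cD(H)^*)$ and is not well-posed as stated. These two issues are not resolved by the compact support of $v$.

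The paper's proof sidesteps the unboundedness entirely. Instead of the global projection $\one_{\{0\}}(\tH)$, it introduces $\psi,\tpsi\in C_0^\infty(\cU)$ with $\psi\tpsi=\psi$ and works with the operator $P_\psi := \int^\oplus_\cU \psi(x)\one_{\{E(x)\}}(H(x))\D x$, which by compact support of $\psi$ has range inside $\cD(H)$. The identity $H P_\psi = M_E P_\psi = M_{\tpsi E} P_\psi$ then lets one replace $H$ by the \emph{bounded, compactly supported} $C^1$ function $\tpsi E$ inside $P_\psi[H,W_t]P_\psi$, so that only the trivially $C^1(A\p{1})$ object $\tpsi E$ enters the limit $t\to 0$. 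No regularity of $H-M_E$ is ever needed. If you cut your $E$ down to $\tpsi E$ before subtracting, you would essentially recover this argument; globalizing without the cutoff is what creates the gap.
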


\begin{remark} By the assumption $\ri[H,A\p{1}]^\circ$ being fibered is meant the existence
of a family of operators $x\to \ri [H,A\p{1}]^\circ(x)\in \cB(\cH_{x,1};\cH_{x,-1})$, with
\[
\cU\ni x\to B(x) = (H(x)-\ri)^{-1} \ri \bigl[H,A\p{1}\bigr]^\circ(x) (H(x)-\ri)^{-1}
\]
 weakly measurable,
and $(H-\ri)^{-1} \ri[H,A\p{1}]^\circ (H-\ri)^{-1} = \int^\oplus_\cU
B(x)\D x$. Note that it follows from the discussion in
Appendix~\ref{App-Fiber} that both sides of \eqref{VirialFormula} are
weakly measurable.
\remarkQED\end{remark}

\begin{proof} Let $\psi,\tpsi\in C_0^\infty(\cU)$ with $\psi\tpsi = \psi$
and observe, cf. Lemma~\ref{Lemma:CommutatorFromGroup}, that in the sense of forms on $\cD(H)$
we have
\[
\ri \bigl[H,A\p{1}\bigr]^\circ = \lim_{t\to 0} \frac1{t} [H,W_t],
\]
with $W_t = \e^{\ri t A\p{1}} = \e^{\ri tA}\otimes \e^{\ri ta}$. Abbreviate
\[
P_\psi := \int_\cU^\oplus \psi(x)\one_{\{E(x)\}}(H(x))\,\D x.
\]
Note that $x\to \one_{\{E(x)\}}(H(x))$ is weakly measurable, and hence
strongly measurable, cf. Appendix~\ref{App-Fiber}.
Since $P_\psi$ preserves -- in fact has range inside -- $\cD(H)$ we can compute as a form on $\cD(H)$
\[
P_\psi  [H, W_t]  P_\psi = P_\psi [E,W_t] P_\psi = P_\psi [\tpsi E,W_t]P_\psi,
\]
where $E$ and $\tpsi E$ should be read as multiplication operators in
the base space, 
or equivalently as $\one_\cH\otimes E$ and $\one_\cH\otimes \tpsi E$.
Since $\tpsi E\in C^1_0(\cU)$ we clearly have  $\tpsi E\in
C^1(A\p{1})$  
with $\ri[\tpsi E,A\p{1}]^\circ = \ri[\tpsi E,a]^\circ = v\cdot\nabla (\tpsi E)$. Hence
\begin{align*}
P_\psi \ri \bigl[H,A\p{1}\bigr]^\circ P_\psi & = 
\lim_{t\to 0} \frac1{t}  P_\psi [H,W_t] P_\psi =  
\lim_{t\to 0} \frac1{t}  P_\psi [\tpsi E,W_t] P_\psi \\
& = v\cdot\nabla (\tpsi E) P_\psi^2 = v\cdot\nabla E P_\psi^2.
\end{align*}

We conclude the theorem since $\psi$ was arbitrary and when fibered the above identity reads
\[
\begin{aligned}
& \int^\oplus_{\cU} \psi(x)^2 \one_{\{E(x)\}}(H(x)) \ri \bigl[H,A\p{1}\bigr]^\circ(x)\one_{\{E(x)\}}(H(x))\, \D x \\
&\quad = \int^\oplus_{\cU} \psi(x)^2 v(x)\cdot\nabla E(x) \one_{\{E(x)\}}(H(x))\, \D x.
\end{aligned}
\]
\end{proof}

%%%%%%%%%%%%%%%%%%%%%%%%%%%%%%%%%%%%%%%%%%%%%%%%%%%%%%%%
%%%%%%%%%%%%%%%%%%%%%%%%%%%%%%%%%%%%%%%%%%%%%%%%%%%%%%%%
\section{The Commutator Estimate} \label{Chap-Mourre}
%%%%%%%%%%%%%%%%%%%%%%%%%%%%%%%%%%%%%%%%%%%%%%%%%%%%%%%%
%%%%%%%%%%%%%%%%%%%%%%%%%%%%%%%%%%%%%%%%%%%%%%%%%%%%%%%%

In this section we analyze the geometry of the threshold set $\thr\p{1}$,
construct vector fields $v\in C_0^\infty(\RR^\nu)$ going into the one-body conjugate operator $a$, cf. \eqref{a-GeneralForm},
and finally prove a Mourre estimate for the fiber Hamiltonians below
the two-boson threshold $\Spectrum_0\p{2}$ and away from threshold
energies (and the set $\Exc$).

We remark that in the literature, this type of analysis
\cite{Am,DGe1,DGe2,FGSch2,MAHP} 
has made essential use of the property $\omega(k)\to\infty$, $|k|\to\infty$,
something we do not want to assume here in view of the polaron model. In \cite{MRMP} this assumption was avoided,
by instead using that for bounded $\omega$ the gap between the ground state energy $\Spectrum_0(\xi)$
and the bottom of the essential spectrum $\Spectrum_\ess(\xi)$ closes at large total momentum. More precisely, under Condition~\ref{Cond:MC},
and the additional assumption $\sup_{k\in\RR^\nu} \omega(k)<\infty$, the second case in \ref{Item:NoHolesInSpec}, we have
\begin{equation}\label{ClosingGap}
\lim_{|\xi|\to\infty} \Spectrum_\ess(\xi) - \Spectrum_0(\xi) = 0.
\end{equation}
We refer the reader to \cite{MRMP} for a proof.
This result is crucial for treating the polaron model, and its importance is encoded in Lemma~\ref{Lemma-CompactKandSigma} below.

\subsection{Structure of the Threshold Set}\label{Sec-Thr-Structure}

Recall from \eqref{EM-Region} the notation $\cE\p{1}$ for the
energy-momentum region between the $1$- and $2$-boson thresholds.

\begin{lemma}\label{Lemma-CompactKandSigma} Assume Condition~\ref{Cond:MC}.
Let $C\subset \cE\p{1}$ be a compact set and $\cK\subset \RR^\nu$. The following holds
\begin{enumerate}[label=\textup{(\roman*)},ref=(\roman*)]
\item\label{Item-CompactKC} $\cK_C := \set{k\in\RR^\nu}{\exists (p,e)\in\Spectrum_\iso,
 \ \textup{s.t.} \  (p+k,e+\omega(k))\in C}$ is compact.
 \item\label{Item-CompactSC} If $\cK\cap\cK_C$ is closed, then the set
$\Spectrum_C(\cK)  = \bigset{(p,e)\in\Spectrum_\iso}{\exists k\in \cK \ \textup{s.t.} \
(p+k,e+\omega(k))\in C}$ is compact.
\end{enumerate}
\end{lemma}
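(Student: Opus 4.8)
The lemma is a compactness statement whose engine is the closing-gap estimate \eqref{ClosingGap} in the bounded-$\omega$ case, and the coercivity $\Omega(k)\geq C^{-1}\wgt{k}^{s_\Omega}-C$ from \ref{Item:GrowthOfOmegas} together with $\omega\geq m>0$ in the unbounded-$\omega$ case. The plan is to first prove \ref{Item-CompactKC} by showing $\cK_C$ is bounded and closed, then deduce \ref{Item-CompactSC} from \ref{Item-CompactKC} with a short argument using that $\Spectrum_\iso$ below $\Spectrum_\ess$ has locally finitely many eigenvalues of finite multiplicity and that $\Spectrum$ is closed.

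\textbf{Step 1 (boundedness of $\cK_C$).} Let $C\subset\cE\p{1}$ be compact, so there are constants with $|\xi|\leq R_0$ and $|E|\leq E_0$ for all $(\xi,E)\in C$. Suppose $k\in\cK_C$, witnessed by $(p,e)\in\Spectrum_\iso$ with $(p+k,e+\omega(k))\in C$. Then $e = \Shell(p)$ for some mass shell, so $\Spectrum_0(p)\leq e < \Spectrum_\ess(p)$, and $e+\omega(k)\leq E_0$. Since $\omega(k)\geq m$ we get $e\leq E_0-m$, and since $\Spectrum_0\geq \Spectrum_0(\RR^\nu) > -\infty$ (cf. \eqref{Sigma0}) we get $e$ bounded below; also $|p+k|\leq R_0$, so $p = (p+k)-k$ has $|p|\leq R_0+|k|$. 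Now split into the two cases of \ref{Item:NoHolesInSpec}. If $\lim_{|k|\to\infty}\omega(k)=\infty$: since $e+\omega(k)\leq E_0$ and $e\geq \Spectrum_0 > -\infty$, we have $\omega(k)\leq E_0-\Spectrum_0$, which forces $|k|\leq R_1$ for some $R_1$. If instead $\sup\omega =: M<\infty$ and $\lim_{|k|\to\infty}\Omega(k)=\infty$: here I use \eqref{ClosingGap}. We have $e \geq \Spectrum_0(p)$ and $e < \Spectrum_\ess(p) = \Spectrum_0(p) + (\Spectrum_\ess(p)-\Spectrum_0(p))$; since $C\subset\cE\p{1}$, we also have the upper bound $e+\omega(k) < \Spectrum_0\p{2}(p+k)$. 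The key point is that $\Spectrum_0(p)\to\infty$ as $|p|\to\infty$: indeed $\Spectrum_0(p) = \inf\sigma(H(p))\geq \inf\sigma(H_0(p)) - \|\phi(\coup)\|_{\text{rel}}$-correction, and $H_0(p) = \D\Gamma(\omega) + \Omega(p-\D\Gamma(k))$ has the property that on the vacuum it gives $\Omega(p)\to\infty$, while on $n\geq 1$ boson states the field energy $\D\Gamma(\omega)\geq m$ compensates; a standard argument (as in \cite{MRMP}) gives $\Spectrum_0(p)\to\infty$. Combined with $e = \Shell(p) \geq \Spectrum_0(p)$ and $e \leq E_0 - m$ bounded above, this forces $|p|\leq R_2$; then $|k| = |(p+k)-p| \leq R_0 + R_2$. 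In both cases $\cK_C$ is bounded.

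\textbf{Step 2 (closedness of $\cK_C$) and conclusion of \ref{Item-CompactKC}.} Let $k_n\in\cK_C$, $k_n\to k$, with witnesses $(p_n,e_n)\in\Spectrum_\iso$ and $(p_n+k_n,e_n+\omega(k_n))\in C$. By Step 1 the $p_n$ lie in a bounded set, so along a subsequence $p_n\to p$ and $e_n = \Shell_n(p_n)$; since $e_n$ is bounded (Step 1) we may assume $e_n\to e$. Then $(p_n+k_n,e_n+\omega(k_n))\to(p+k,e+\omega(k))$, which lies in $C$ by compactness; in particular $e+\omega(k) < \Spectrum_0\p{2}(p+k)$, so $e+\omega(k)$ is strictly below the two-boson threshold at $p+k$, hence certainly $e < \Spectrum_\ess(p)$ once one checks $(p,e)\in\Spectrum$: the latter holds because $\Spectrum$ is closed (stated in the excerpt) and $(p_n,e_n)\in\Spectrum$. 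To upgrade $(p,e)\in\Spectrum$ with $e<\Spectrum_\ess(p)$ to $(p,e)\in\Spectrum_\iso$ one uses \eqref{IsolatedSpectrum}: $\Spectrum_\iso$ is relatively closed in $\{(\xi,E): E<\Spectrum_\ess(\xi)\}$ because below the essential spectrum the eigenvalues are locally finite with finite multiplicity and, by Kato analytic perturbation theory, organized into analytic shells; hence the limit $(p,e)$ is again in $\Spectrum_\iso$. Therefore $k\in\cK_C$ and $\cK_C$ is closed, hence compact.

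\textbf{Step 3 (\ref{Item-CompactSC}).} Assume $\cK\cap\cK_C$ is closed. Note first that $\Spectrum_C(\cK) \subset \Spectrum_C(\cK_C) =: \Spectrum_C$, and that any $(p,e)\in\Spectrum_C(\cK)$ has its witness $k$ lying in $\cK\cap\cK_C$ (since such $k\in\cK$ by definition and $k\in\cK_C$ because $(p,e)\in\Spectrum_\iso$ witnesses it). Boundedness: for $(p,e)\in\Spectrum_C(\cK)$ with witness $k$, we have $k\in\cK_C$ bounded and $(p+k,e+\omega(k))\in C$ bounded, so $p$ and $e$ are bounded exactly as in Step 1. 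Closedness: take $(p_n,e_n)\in\Spectrum_C(\cK)$ with witnesses $k_n\in\cK\cap\cK_C$; by compactness of $\cK\cap\cK_C$, pass to a subsequence $k_n\to k\in\cK\cap\cK_C$, and by boundedness pass to $(p_n,e_n)\to(p,e)$. Then $(p+k,e+\omega(k)) = \lim(p_n+k_n,e_n+\omega(k_n))\in C$ by compactness of $C$ and continuity of $\omega$ (\ref{Item:GrowthOfOmegas} gives $\omega\in C(\RR^\nu)$). Finally $(p,e)\in\Spectrum_\iso$ by the same relative-closedness argument as in Step 2 (using $e+\omega(k)<\Spectrum_0\p{2}(p+k)$ hence $e<\Spectrum_\ess(p)$, plus closedness of $\Spectrum$ and relative closedness of $\Spectrum_\iso$). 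Hence $(p,e)\in\Spectrum_C(\cK)$, proving $\Spectrum_C(\cK)$ is closed and bounded, i.e. compact.

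\textbf{Main obstacle.} The delicate point is Step 1 in the bounded-$\omega$ case: turning the heuristic ``the gap closes at large $|\xi|$'' into a clean bound forcing $|p|$ bounded. One must carefully combine \eqref{ClosingGap}, the constraint $e<\Spectrum_\ess(p)$, the constraint $e+\omega(k)<\Spectrum_0\p{2}(p+k)$, and the coercivity $\lim_{|p|\to\infty}\Spectrum_0(p)=\infty$ (itself a consequence of \ref{Item:GrowthOfOmegas}, \ref{Item:NoHolesInSpec}, and a Kato-Rellich estimate for $\phi(\coup)$). A second, more bookkeeping-type obstacle is making precise that $\Spectrum_\iso$ is relatively closed in the region below $\Spectrum_\ess$; this is implicit in the discussion of \eqref{IsolatedSpectrum} and the HVZ statement \eqref{HVZ} (locally finitely many eigenvalues of finite multiplicity, accumulating only at $\Spectrum_\ess$), but one should cite it carefully. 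Everything else is elementary sequential compactness.
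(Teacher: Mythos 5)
Your Step~1 argument is correct in the case $\lim_{|k|\to\infty}\omega(k)=\infty$, and the sequential-compactness bookkeeping in Steps~2 and~3 is sound, but there is a genuine error in Step~1 for the bounded-$\omega$ case, which is precisely the polaron case that \eqref{ClosingGap} exists to handle.

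The claim that $\Spectrum_0(p)\to\infty$ as $|p|\to\infty$ when $\sup\omega=:M<\infty$ is false. Your heuristic focuses on the vacuum sector, where $H_0(p)$ restricts to the number $\Omega(p)\to\infty$, and you assert that on $n\geq 1$ boson sectors the field energy $\D\Gamma(\omega)\geq m$ ``compensates.'' But $\inf\sigma(H_0(p))$ is the infimum over all sectors, and on the one-boson sector the field can absorb the entire momentum: a one-boson state supported near $k=p$ has energy $\approx\omega(p)+\Omega(0)\leq M+\Omega(0)$, which is bounded. Hence $\inf\sigma(H_0(p))$ stays bounded, as does $\Spectrum_0(p)$; indeed $\Spectrum_0(p)\leq\Spectrum_\ess(p)=\Spectrum_0\p{1}(p)\leq\inf_\xi\Spectrum_0(\xi)+M$ uniformly in $p$. (The cited reference \cite{MRMP} proves the opposite of your claim, namely that the gap $\Spectrum_\ess(p)-\Spectrum_0(p)$ closes while both quantities remain bounded.) Since your bound on $|p|$ rests on this false coercivity, Step~1 and therefore the lemma are not established in the bounded-$\omega$ case.

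The correct way to bound $|p|$ there is by contradiction, using \eqref{ClosingGap} together with the uniform gap $\epsilon:=d(C,\Spectrum_0\p{2})>0$. If $(p_n,e_n)\in\Spectrum_\iso$ witness $k_n\in\cK_C$ with $|p_n|\to\infty$, then for $n$ large \eqref{ClosingGap} yields $\Spectrum_0(p_n)>\Spectrum_0\p{1}(p_n)-\epsilon$, and
\[
e_n+\omega(k_n)\geq\Spectrum_0(p_n)+\omega(k_n)>\Spectrum_0\p{1}(p_n)+\omega(k_n)-\epsilon\geq\Spectrum_0\p{2}(p_n+k_n)-\epsilon,
\]
contradicting $(p_n+k_n,e_n+\omega(k_n))\in C$ and the choice of $\epsilon$. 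This is the content of the paper's Step~III. As a structural remark, the paper reduces both parts to compactness of the single set $\Psi^{-1}(C)\subset\RR^\nu\times\Spectrum_\iso$ with $\Psi(k,p,e)=(p+k,e+\omega(k))$, from which $\cK_C$ and $\Spectrum_C(\cK)$ are obtained as projections; this avoids running the boundedness/closedness argument twice. Also, once $(p,e)\in\Spectrum$ (by closedness of $\Spectrum$) and $e<\Spectrum_\ess(p)$ are established, $(p,e)\in\Spectrum_\iso$ is immediate from the definition \eqref{IsolatedSpectrum}; the appeal to local finiteness of eigenvalues and analytic perturbation theory in your Step~2 is superfluous.
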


\begin{remark} Observe that $\Spectrum_C(\cK)=\Spectrum_C(\cK_C\cap \cK)$.
In particular we abbreviate $\Spectrum_C := \Spectrum_C(\RR^\nu) = \Spectrum_C(\cK_C)$.

The set $\cK_C$ consists of asymptotic momenta available to states
localized in $C$ for one-boson emission, due to energy and
momentum conservation. The set $\Spectrum_C$ ($\Spectrum_C(\cK)$) contains the interacting bound states
reachable from states localized in $C$ after emission of one boson 
(with asymptotic momentum in $\cK$).
\remarkQED\end{remark}

\begin{proof} We divide the proof into three steps.

\noindent\emph{Step I:} Reducing the problem to compactness of a single set.
Let $X = \RR^\nu\times\Spectrum_\iso$. 
Define a map
$\Psi\colon X\to \RR^{\nu+1}$ by 
\[
\Psi(k,p,e) = (p+k,e+\omega(k)).
\]
Denote by $\Pi_1\colon X\to\RR^\nu$ the projection onto the $k$
coordinate and
by $\Pi_2\colon X\to \RR^{\nu+1}$ the projection onto the $(p,e)$
coordinate. With this notation we can write $\cK_C =
\Pi_1(\Psi^{-1}(C))$ and $\Sigma_C(\cK) = \Pi_2(\Psi^{-1}(C)\cap
(\cK\times\RR^{\nu+1}))$.
Hence it suffices to prove that $C'= \Psi^{-1}(C)$ is a compact subset
of $\RR^{2\nu+1}$. 

\noindent\emph{Step II:} There exists $\epsilon>0$ such that 
$C'\subset \RR^\nu\times\set{(p,e)\in\Sigma_\iso}{e\leq \Spectrum_\ess(p)-\epsilon}$.
Indeed, let $\epsilon= d(C,\Spectrum\p{2}_0)>0$, the distance from $C$ to
the two-boson threshold. Suppose $(k,p,e)\in C'$ satisfies that $e\in
(\Spectrum_\ess(p)-\epsilon,\Spectrum_\ess(p))$.
Then 
\[
e+\omega(k) > \Spectrum_\ess(p)+\omega(k) -\epsilon =
\Spectrum\p{1}_0((p+k)-k)+\omega(k)-\epsilon\geq \Spectrum\p{2}_0(p+k)-\epsilon.
\]
This contradicts the choice of $\epsilon$, since $(p+k,e+\omega(k))\in
C$.

\noindent\emph{Step III:} $C'$ is compact. Since $\Psi$ is continuous, the preimage $C'$ is closed
as a subset of $X$. By Step II, it is in fact closed as a subset of
$\RR^{2\nu+1}$ as well. It remains to argue that $C'$ is bounded.

Assume $C'$ is unbounded. Then there must exist a sequence
$(k_n,p_n,e_n)\in C'$ with $|k_n|+|p_n| \to \infty$.  Since $(p_n+k_n,e_n+\omega(k_n))$
is in the compact set $C$, $p_n+k_n$ is a bounded sequence. Hence
$|k_n|$ and $|p_n|$ both diverge to $\infty$. Let $\epsilon =
d(C,\Spectrum\p{2})$ as in Step II.
By \eqref{ClosingGap} there exists $N$ such that for $n\geq N$ we have
$\Spectrum_0(p_n) > \Spectrum\p{1}_0(p_n) - \epsilon$.
For $n\geq N$ we can now estimate
\begin{align*}
e_n +\omega(k_n) & \geq \Spectrum_0(p_n)+\omega(k_n) >
\Spectrum\p{1}_0(p_n) +\omega(k_n)-\epsilon\\
& = \Spectrum\p{1}_0((p_n+k_n)-k_n)+\omega(k_n)-\epsilon \geq \Spectrum\p{2}_0(p_n+k_n)-\epsilon.
\end{align*}
This contradicts the choice of $\epsilon$ and we are done.
\end{proof}

The remainder of this section is devoted to the geometry of the
threshold sets $\thr\p{1}_\shells$, $\thr\p{1}_\parallel$ and
$\thr\p{1}_{\nparallel}$, cf. \eqref{LC-Shells}, \eqref{LC-Parallel}
and \eqref{LC-Angular}.

\begin{lemma}\label{Lemma-ParaAndNonPara} Assume Conditions~\ref{Cond:MC}
  and~\ref{Cond:MT}, with $n_0=0$. We have the following two properties
\begin{enumerate}[label=\textup{(\roman*)},ref=(\roman*)]
\item\label{Item-CrossingsSmall} The sets $\thr\p{1}_\parallel(\xi)\cap\cE\p{1}(\xi)$ and $\thr\p{1}_\nparallel(\xi)\cap\cE\p{1}(\xi)$
are locally finite, with possible accumulation points only at the upper
boundary $\Spectrum_0\p{2}(\xi)$,
the $2$-boson threshold.
\item\label{Item-ZeroesSmall}
The sets $\thr\p{1}_\parallel\cap\cE\p{1}$ and $\thr\p{1}_\nparallel\cap\cE\p{1}$
are (relatively) closed subsets of $\cE^{(1)}$.
\end{enumerate}
\end{lemma}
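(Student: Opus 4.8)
The plan is to derive both statements from Lemma~\ref{Lemma-CompactKandSigma} combined with the structural description of $\cross$ (cf.\ \eqref{crossings}) as a family of isolated $S^{\nu-1}$-spheres that can accumulate only at infinity or at $\Spectrum_\ess$. For \ref{Item-CrossingsSmall} I would fix $\xi$ and a compact interval $[a,b]\subset\cE\p{1}(\xi)=[\Spectrum\p{1}_0(\xi),\Spectrum\p{2}_0(\xi))$; since $\cE\p{1}(\xi)$ is exhausted by such intervals, it suffices to bound the number of elements of $\bigl(\thr\p{1}_\parallel(\xi)\cup\thr\p{1}_\nparallel(\xi)\bigr)\cap[a,b]$. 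Put $C=\{\xi\}\times[a,b]$, a compact subset of $\cE\p{1}$. Lemma~\ref{Lemma-CompactKandSigma} gives that $\cK_C$ and $\Spectrum_C=\Spectrum_C(\RR^\nu)$ are compact, and Step~II of its proof gives the uniform gap $\Spectrum_C\subset\{(p,e):e\le\Spectrum_\ess(p)-\epsilon\}$ with $\epsilon=d(C,\Spectrum\p{2}_0)>0$. Now any $E$ in one of the two threshold sets is witnessed by a momentum $\ell\in\RR^\nu$ with $(\xi-\ell,E-\omega(\ell))\in\cross$ and, respectively, $\ell\parallel\xi$ or $\nabla\omega(\ell)=0$; since $(\xi,E)=\bigl((\xi-\ell)+\ell,(E-\omega(\ell))+\omega(\ell)\bigr)\in C$, the crossing point $(\xi-\ell,E-\omega(\ell))$ lies in $\Spectrum_C\cap\cross$. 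As $\Spectrum_C$ is compact and bounded away from $\Spectrum_\ess$, while crossing spheres accumulate only at infinity or at $\Spectrum_\ess$, the set $\Spectrum_C\cap\cross$ meets only finitely many spheres $\partial B(0;R_1)\times\{E_1\},\dots,\partial B(0;R_N)\times\{E_N\}$.

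It then remains, for \ref{Item-CrossingsSmall}, to see that each of these spheres contributes only finitely many threshold energies. On the $i$-th sphere the constraints read $|\xi-\ell|=R_i$ and $E=E_i+\omega(\ell)$. For $\thr\p{1}_\parallel$ with $\xi\ne0$, writing $\ell=tu$ with $u=\xi/|\xi|$ forces $t=|\xi|\pm R_i$, so at most two momenta, hence two energies, occur; for $\xi=0$ I would use \eqref{RelaxToCross0} together with rotation invariance of $\omega$, so that $\omega$ is constant on $\{|\ell|=R_i\}$ and the sphere gives a single energy. For $\thr\p{1}_\nparallel$, rotation invariance and real analyticity reduce the critical set $\{\nabla\omega=0\}$ to $\{0\}$ together with a discrete union of spheres $\{|\ell|=\rho_j\}$ (or all of $\RR^\nu$ if $\omega$ is constant); the constraint $|\xi-\ell|=R_i$ then confines $|\ell|$ to a bounded interval, hence to finitely many of the $\rho_j$, on each of which $\omega(\ell)$ is constant, so again only finitely many energies arise. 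Summing over the $N$ spheres proves \ref{Item-CrossingsSmall}, and accumulation is then possible only at the open endpoint $\Spectrum\p{2}_0(\xi)$.

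For \ref{Item-ZeroesSmall} I would argue with sequences. Suppose $(\xi_n,E_n)\in\thr\p{1}_\parallel\cap\cE\p{1}$ (resp.\ $\thr\p{1}_\nparallel\cap\cE\p{1}$) with $(\xi_n,E_n)\to(\xi,E)\in\cE\p{1}$, and pick witnessing momenta $\ell_n$, so $(\xi_n-\ell_n,E_n-\omega(\ell_n))\in\cross$ and $\ell_n\parallel\xi_n$ (resp.\ $\nabla\omega(\ell_n)=0$). Taking $C=\{(\xi_n,E_n):n\in\NN\}\cup\{(\xi,E)\}$, which is compact and lies in $\cE\p{1}$, the computation above shows $\ell_n\in\cK_C$; passing to a subsequence $\ell_{n_j}\to\ell_0$, and then $(\xi_{n_j}-\ell_{n_j},E_{n_j}-\omega(\ell_{n_j}))\to(\xi-\ell_0,E-\omega(\ell_0))$ is a limit of points of $\cross\cap\Spectrum_C$. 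By Step~II of Lemma~\ref{Lemma-CompactKandSigma} these points stay a fixed distance below $\Spectrum_\ess$, so by semicontinuity of $\Spectrum_\ess$ the finite limit satisfies $E-\omega(\ell_0)\le\Spectrum_\ess(\xi-\ell_0)-\epsilon$; since a finite accumulation point of $\cross$ that avoids $\Spectrum_\ess$ must itself lie in $\cross$, we conclude $(\xi-\ell_0,E-\omega(\ell_0))\in\cross$. Finally $\ell_0\parallel\xi$ (resp.\ $\nabla\omega(\ell_0)=0$, by continuity of $\nabla\omega$), and when $\xi=0$ one passes through \eqref{RelaxToCross0} and the rotation invariance of $\cross$ and $\omega$; hence $E\in\thr\p{1}_\parallel(\xi)$ (resp.\ $\thr\p{1}_\nparallel(\xi)$), which with $(\xi,E)\in\cE\p{1}$ yields relative closedness.

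The hard part is the step in \ref{Item-ZeroesSmall} (and, implicitly, already in the finiteness count of \ref{Item-CrossingsSmall}) where one must guarantee that limits or accumulations of reachable crossing points remain genuine crossing points rather than escaping into $\Spectrum_\ess$; this is precisely what the uniform gap $e\le\Spectrum_\ess(p)-\epsilon$ from Step~II of Lemma~\ref{Lemma-CompactKandSigma} — ultimately a consequence of the gap-closing estimate \eqref{ClosingGap} — is there to provide. The other slightly delicate ingredient, the per-sphere finiteness of threshold energies, is where rotation invariance \ref{Item:RotInv} enters essentially, through constancy of $\omega$ on momentum spheres and the radial form of its critical set; in its absence one would instead read off these thresholds from a Hironaka-type stratification of the relevant projection, as sketched in Subsection~\ref{Sec-Strata}. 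Everything else is routine compactness and continuity.
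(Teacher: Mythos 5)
Your proof is correct and follows essentially the same strategy as the paper's: compactness of $\Spectrum_C$ and $\cK_C$ from Lemma~\ref{Lemma-CompactKandSigma}, isolatedness of the crossing spheres in $\cross$, and rotation invariance together with real analyticity of $\omega$ for per-sphere finiteness (with $\Spectrum_C\cap\cross$ closed giving part~\ref{Item-ZeroesSmall}). You organize part~\ref{Item-CrossingsSmall} as a direct count via the explicit form of the critical set $\{\nabla\omega=0\}$ and the intersection formula $t=|\xi|\pm R_i$, whereas the paper argues by sequences and restricts $\nabla\omega$ to a circle in a $2$-plane; this is a cosmetic reorganization, not a different method, and the one misattribution (the uniform gap in Step~II of Lemma~\ref{Lemma-CompactKandSigma} comes from $d(C,\Spectrum\p{2}_0)>0$, not from~\eqref{ClosingGap}, which is only needed in Step~III) does not affect the argument.
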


\begin{remark}
The set $\thrC$ is precisely the union of radial graphs of $\omega$ centered above each crossing point,
i.e. union of the graphs $\set{(p+ru,e+\omega(p+ru))}{r\in\RR}$ for each
$(p,e)\in\cross$. Here $u$ is a unit vector collinear with $p$.
\remarkQED\end{remark}

\begin{proof}
We begin with \ref{Item-CrossingsSmall} and take first the set $\thrC(\xi)$.
 Fix $\xi\in\RR^\nu$ and a matching collinear unit vector $u$. Let
 $r\in \RR$ be such that $\xi = r u$. 

Suppose
$\{E_n\}\subset\thrC(\xi)$, $E_n<\Spectrum\p{2}_0(\xi)$, with $E_n\to
E<\Spectrum\p{2}_0(\xi)$.
We need to argue that the sequence $\{E_n\}$ is eventually constant.
Let $E' = (E+\Spectrum\p{2}_0(\xi))/2$, such that
\begin{equation}\label{ChoiceOfC}
C=\{\xi\}\times
\bigl[\Spectrum\p{1}_0(\xi),E'\bigr]
\end{equation}
is a compact subset of $\cE\p{1}$.
For $n$ large enough we have $(\xi,E_n)\in C$.

There exists $r_n\in\RR$, for each $n\in\NN$,  such that
$(\xi-r_n u,E_n-\omega(r_n u))\in\cross$ for
all $n$. Observe that $(\xi-r_n u,E_n-\omega(r_n u))\in\cross \cap
\Spectrum_C$  for large $n$.
Since $\Spectrum_C$ is compact, cf. Lemma~\ref{Lemma-CompactKandSigma},
and the set $\cross$ consists of isolated $S^{\nu-1}$-spheres
centered at $\xi=0$, we conclude that $|\xi-r_nu| = |r-r_n||u|$, and
hence also
$r_n$, only take finitely many values. But then  $E_n-\omega(r_n u)$ must also take only
finitely many values and hence $E_n$ is eventually constantly equal to
$E$.

As for the set $\thrN(\xi)$ we assume again that
$\{E_n\}\subset \thrN(\xi)$, $E_n<\Spectrum\p{2}_0(\xi)$, and $E_n\to E<
\Spectrum_0\p{2}(\xi)$.
There exists $k_n$, for each $n$,
such that $(\xi-k_n,E_n-\omega(k_n))\in\cross$ and
$\nabla\omega(k_n)=0$.

Let $\epsilon>0$ and the compact subset $C$ of $\cE\p{1}$  be as
before, cf. \eqref{ChoiceOfC}.
The sequence $\{(\xi-k_n,E_n-\omega(k_n))\}$ must, for $n$ large, again
run inside the compact set $\Spectrum_C$ and thus since $\cross$ consists
of isolated spheres, we must have a subsequence $\{k_{n_j}\}$ such that
$|\xi-k_{n_j}| = R_c$ and $E_{n_j}-\omega(k_{n_j})=E_c$ are constant, signifying
that we are on the same level crossing $\partial B(0;R_c)\times\{E_c\}\subset\cross$.
If $\xi=0$ or $\omega$ is constant, we
are done since in either case
$\omega(k_{n_j})$  is a constant sequence, and
hence $E_{n_j} = E$.

If we are in dimension $\nu=1$  we are also done, since this will
force $k_{n_j}$ to only attain the two values $\xi+R_c$ and $\xi-R_c$.
Hence $E_{n_j} = E$, for $j$ sufficiently large.

We can thus assume that $\nu\geq 2$ and $\omega$ is not a
constant function. Fix another unit
vector $v$, with $v\cdot u =0$. By symmetry in the hyperspace
orthogonal to $u$, we can assume that $k_{n_j}\in\huelle\{u,v\}$.
Using that $|\xi-k_{n_j}|=R_c$, we can write the momenta as
\[
k_{n_j} =  k(\theta_j)=\xi - R_c\bigl(\cos(\theta_j) u +\sin(\theta_j)v\bigr),
\]
with $\theta_j\in \RR$ a bounded sequence. But since $\theta \to
|\nabla \omega(k(\theta))|^2$ is a non-zero real analytic function
the sequence $\theta_j$ can only attain finitely many values.
Again we conclude that $E_{n_j}$ can only attain finitely many values,
and hence must be constantly equal to $E$ for $j$ large.

As for \ref{Item-ZeroesSmall}, let  $\{(\xi_n,E_n)\}_{n\in\NN}\subset
\thrC\cap\cE\p{1}$ be a convergent sequence with
$(\xi_n,E_n)\to(\xi,E)\in\cE\p{1}$. Let $0<R< d((\xi,E),\Spectrum\p{2}_0)$, i.e. $R$ is chosen
smaller than the distance from $(\xi,E)$ to the upper boundary of
$\cE\p{1}$. With this choice
\begin{equation}\label{CompactC}
C = \bigl(\overline{B(\xi;R)}\times [E-R,E+R]\bigr)\cap\cE\p{1}
\end{equation}
is a compact subset of $\cE\p{1}$.
For $n$ large enough we have $(\xi_n,E_n)\in C$.
 By rotational symmetry we can assume that $\xi$ and all the $\xi_n$'s
are collinear with a unit vector $u$. Write $\xi = ru$ and $\xi_n= r_n u$.
 There exist a sequence of momenta $s_n u$, with $s_n u\in\cK_C$ for $n$
 large enough, such that
$((r_n-s_n)u,E_n-\omega(s_n u))\in \Spectrum_C\cap \cross$, for $n$ large
enough. By compactness of $\cK_C$  we can
extract a convergent subsequence $s_{n_j}$ converging to $s\in\RR$.
Then
\[
\bigl(\xi -su, E-\omega(su)\bigr) = \lim_{j\to\infty} 
\bigl((r_{n_j}-s_{n_j})u,E_{n_j}-\omega(s_{n_j} u)\bigr)\in\Spectrum_C\cap\cross,
\]
since the set on the right-hand side is closed. Hence $(\xi,E)\in
\thrC$,
which implies that $\thrC$ is closed as a subset of $\cE\p{1}$.

We now turn to $\thrN$. We again take a sequence
$\{(\xi_n,E_n)\}_{n\in\NN}\subset\thrN$ converging to
$(\xi,E)\in\cE\p{1}$. As above we can assume
that there exists a unit vector $u$ such that $\xi = ru$ and $\xi_n =
r_n u$, with $r_n\to r$.

Since $(r_n u,E_n)\in\thrN$ there must exist
$k_n\in\RR^\nu$ such that $(r_nu-k_n,E_n-\omega(k_n))\in\cross$ and
$\nabla\omega(k_n)=0$. By the, by now, standard argument, there exists a
convergent subsequence $\{k_{n_j}\}$. Denote by $k$ its limit.

We can now argue as for $\thrC$ that
\[
\bigl(\xi -k, E-\omega(k)\bigr) = \lim_{j\to\infty} \bigl(r_{n_j}u-k_{n_j},E_{n_j}-\omega(k_{n_j})\bigr)\in\cross
\]
and $\nabla\omega(k)=\lim_{j\to\infty}\nabla\omega(k_{n_j})=0$. Hence
$(\xi,E)\in\thrN$, which establishes the remaining part of \ref{Item-ZeroesSmall}.
\end{proof}

\begin{proof}[Proof of Theorem~\ref{Thm-thr}]
Abbreviate for the purpose of this proof  
\begin{equation}\label{ThresholdShellOnly}
\begin{aligned}
&\cU  = \bigset{(\xi,E)\in\thr\p{1}_\shells}{(\xi,E)\not\in \thr\p{1}_\parallel\cup\thr\p{1}_\nparallel}\\
&\cU(\xi) = \bigset{E\in\RR}{ (\xi,E)\in\cU}.
\end{aligned}
\end{equation}
The sets $\cU$ and $\cU(\xi)$ are subsets of $\thr\p{1}_\shells$ and
$\thr\p{1}_\shells(\xi)$ respectively.

Given Lemma~\ref{Lemma-ParaAndNonPara} it remains to prove the following two statements:
\begin{align}
\label{Thr-Remain}
& \overline{\cU}\cap \cE\p{1} \subset \thr\p{1}\\
\label{Thr-Remain-Fiber}
& \cU(\xi) \cap \cE\p{1}(\xi)\quad \textup{is locally finite.}
\end{align}

To prove \eqref{Thr-Remain}, let $(\xi_n,E_n)\in\cU\cap \cE\p{1}$ and
assume
$(\xi_n,E_n)\to (\xi,E)$ with $E<\Spectrum\p{2}_0(\xi)$. 
We need to argue that $(\xi,E)\in \thr\p{1}$.

Construct a compact set $C\subset\cE\p{1}$ containing $(\xi,E)$ as in \eqref{CompactC}.
For $n$ large enough we have $(\xi_n,E_n)\in C$. For each (large) $n$
we can find a $k_n\in\cK_C$, a mass shell
$(\Annul_n,\Shell_n)\in\shells$,
such that $\xi_n-k_n\in\Annul_n$, $E_n = S_n(\xi_n-k_n)+\omega(k_n)$
and
$\nabla S_n(\xi_n-k_n) =\nabla \omega(k_n)$. Here we used that $(\xi_n,E_n)\in\thr\p{1}_\shells$.

Since $\cK_C$ is compact we can pass to a convergent subsequence
$\{k_{n_\ell}\}$
with $k:=\lim_{\ell\to\infty} k_{n_\ell} \in\cK_C$. 
Abbreviate 
\begin{equation}\label{Thr-p-ell}
p_\ell:= \bigl(\xi_{n_\ell}-k_{n_\ell},E_{n_\ell}-\omega(k_{n_\ell})\bigr)\in\Spectrum_C.
\end{equation}
Since $\Sigma_C$ is closed we have
\begin{equation}\label{Thr-p}
\lim_{\ell\to\infty} p_\ell = p := \big(
\xi-k,E-\omega(k)\big) \in\Spectrum_C\subset \Spectrum_\iso.
\end{equation}
Recall that level crossings, as $S^{\nu-1}$-spheres inside $\cross$, are
isolated and only finitely many mass shells emanate from each
crossing. Hence we can assume that there exists a distinguished  mass shell
$(\Annul,\Shell)\in\shells$ such that $p_\ell\in\cG_\Shell$,
cf.~\eqref{Graph-Of-Shell}, for all $\ell$.
We can furthermore assume that
we are in one of the two following cases 
\begin{equation}\label{Thr-Cases}
\begin{aligned}
&\textup{\textbf{Case A}}\qquad \forall \ell: \quad  \nabla\omega(k_{n_\ell}) \neq 0\\
&\textup{\textbf{Case B}} \qquad \forall \ell:\quad \nabla\omega(k_{n_\ell}) =0.
\end{aligned}
\end{equation}
In Case A we must have for each $\ell$ an $s_\ell\in\RR$ such that
$k_{n_\ell} = s_\ell u$ and  $\lim_{\ell\to\infty} s_\ell$ exists.
If $k\to\Shell\p{1}(\xi;k) = \Shell(\xi-k)+\omega(k)$ is not a constant function, the sequence $s_\ell$
must be eventually constant and hence $k\in \Annul+\xi$ and
$p\in\cG_\Shell$.
Here we used that $s\to \Shell\p{1}(\xi;su)$ continues
analytically through level crossings.
If on the other hand $k\to\Shell\p{1}(\xi;k)$ is a constant function, we can replace the $k_{n_\ell}$'s
by a constant $k\in\Annul+\xi$. Hence the new limit will satisfy $p\in\cG_\Shell$.
 
In Case B we have  $\nabla S(\xi_{n_\ell}-k_{n_\ell}) =
0$, so we must have either $\Shell$
constant, or $|\xi_{n_\ell}-k_{n_\ell}|$ eventually constant and equal
to $|\xi-k|$.
In the latter case $k\in\Annul+\xi$ and $p\in\cG_\Shell$.
We now assume that $\Shell$ is a constant function.

If $\omega$ is also constant we can redefine the $k_{n_\ell}$'s as
above and again arrive at $p\in\cG_\Shell$. If $\omega$ is  not a
constant,
$|k_{n_\ell}|$ is eventually constant and equal to $r\geq 0$.
First of all we observe that $r$ is strictly smaller than the outer
radius of $\Annul$. This is due to the choice of
$(\xi_{n_\ell},E_{n_\ell})$ away from $\thr\p{1}_{\nparallel}$, cf. \eqref{LC-Angular}.
We can thus replace the $k_{n_\ell}$'s with possibly different
$k_{n_\ell}$'s in $rS^{\nu-1}$ such that the limit
$k\in\Annul+\xi$.

Summing up, we have argued that either $p\in\cG_\Shell$, or we can
make a different choice of sequence $k_{n_\ell}$ such that $p$
ends up inside $\cG_\Shell$.
Then, by continuity, 
we must have $E=S(\xi-k)+\omega(k)$ and  $\nabla S(\xi-k)=\nabla\omega(k)$.
Hence $(\xi,E)\in\thr\p{1}_\shells$. This proves \eqref{Thr-Remain}.

To verify \eqref{Thr-Remain-Fiber}, let $(\xi,E_n)\in \cU(\xi)$, with $E_n<\Spectrum\p{2}_0(\xi)$, such that
$E_n\to E\in\cU(\xi)$, with  $E< \Spectrum\p{2}_0(\xi)$.
We have to prove that the sequence $E_n$ is eventually constant.
Assume towards a contradiction that it is not eventually
constant. Hence we can assume, possibly passing to a subsequence, that it is
strictly monotone.

Let $C\subset\cE\p{1}$ compact, be as in \eqref{ChoiceOfC}.
For $n$ sufficiently large we
have $(\xi,E_n)\in C$. By the choice of $E_n$ we can to each $n$
identify a $k_n\in \cK_C$  and a mass shell
$(\Annul_n,\Shell_n)\in\shells$
such that $\xi-k_n\in\Annul_n$,
$E_n = S_n(\xi-k_n)+\omega(k_n)$ and $\nabla S_n(\xi-k_n) =
\nabla\omega(k_n)$.

As in the verification of \eqref{Thr-Remain} we can extract a
subsequence $\{E_{n_\ell}\}$ together with a convergent sequence of momenta
$\{k_{n_\ell}\}$, and a distinguished mass shell
$(\Annul,\Shell)\in\shells$
such that $\xi-k,\xi-k_{n_\ell}\in\Annul$ and $p_\ell,p\in\cG_\Shell$,
cf. \eqref{Thr-p-ell} and \eqref{Thr-p}. Here
$k=\lim_{\ell\to\infty} k_{n_\ell}$. We can furthermore assume that we
are in either Case A or Case B, cf.~\eqref{Thr-Cases}.
Here we used that $E\in\cU(\xi)$, cf.~\eqref{ThresholdShellOnly}, to
rule out the possibility that $p\in\cross$.

In Case A we reach a contradiction with $E_{n_\ell}$ being strictly
monotone as follows. Write $\xi = ru$ for some unit vector $u$ and
$r\in\RR$. If $\xi\neq 0$ the demand that $\nabla S(\xi-k_{n_\ell})
=\nabla\omega(k_{n_\ell})$, together with rotation invariance, forces
all the $k_{n_\ell}$'s to be collinear with $u$. If $\xi=0$, we can
again use rotation invariance and
simply replace all the $k_{n_\ell}$'s by $|k_{n_\ell}|u$ and thus arrive at the
same situation.
Hence the map $t\to S\p{1}(\xi;tu) =
S((r-t)u)+\omega(tu)$ is analytic and vanishes along a sequence
with accumulation point inside its domain of analyticity. Hence it is
constant,
i.e. $E_{n_\ell} = S\p{1}(\xi;k_{n_\ell})$ is constant.

In Case B we reach a contradiction as follows. 
Since $\nabla\omega(k_{n_\ell}) = 0 = \nabla S(\xi-k_{n_\ell})$ we
can conclude that: Either $\omega$ is constant or $|k_{n_\ell}|$ is
eventually constant. Furthermore, either $S$ is constant or $|\xi-k_{n_\ell}|$
is eventually constant. Regardless of which of the $4$ possible
combinations we find ourselves in, we conclude again that
$E_{n_\ell} = S(\xi-k_{n_\ell})+\omega(k_{n_\ell})$ is eventually
constant.
\end{proof}

\subsection{Some Geometric Considerations}

The goal of this subsection and the next, is to analyze the set of
momenta $k$ available to boson emission, which are not collinear with
$\xi$, and for which the remaining
interacting system ends up at or near a level crossing.

In one dimension or at total momentum $\xi=0$, we can avoid this
situation completely by staying away from the threshold set $\cT\p{1}$.
For this reason the reader should, for the purpose of this subsection
and the next, think of  $\nu\geq 2$ and $\xi\neq 0$. 
Finally, the reader trying to get a feel for the basic ideas of the
construction can safely skip these two subsections on a first reading.
We remark that for the polaron model as well, for the same reason, these considerations also
do not play a role.

Given a point in energy-momentum space
\begin{equation}\label{ChoiceOfStart}
(\xi,E)\in\cE\p{1}\backslash\bigl(\thr\p{1}\cup\Exc\bigr),
\end{equation}
we  wish to be able to choose a compact interval $\cJ = [E-\delta,E+\delta]$ with
\begin{equation}\label{ChoiceOfO}
(\xi,E)\in\{\xi\}\times\cJ \subset \cE\p{1}\backslash\bigl(\thr\p{1}\cup\Exc\bigr),
\end{equation}
such that states localized in $\cJ$ (at sharp total momentum $\xi$)
can only break up into
channels with non-zero breakup velocity. The exceptional set $\Exc$ was defined in \eqref{ExcSet}.

Given $\cJ\subset \cE\p{1}(\xi)$ and $\cK\subset \RR^\nu$, we
associate the sets
\[
\cK_\cJ := \cK_{\{\xi\}\times\cJ},\quad \Spectrum_\cJ :=
\Spectrum_{\{\xi\}\times\cJ}  
\quad \textup{and} \quad  \Spectrum_\cJ(\cK) := \Spectrum_{\{\xi\}\times\cJ}(\cK).
\]
Recall from Lemma~\ref{Lemma-CompactKandSigma} the notation for the sets $\cK_C$ and
$\Spectrum_C$, for $C\subset\RR^{\nu+1}$.
The set $\cK_\cJ$ contains the momenta available for boson emission starting from
a state localized with respect to energy in $\cJ$, whereas
$\Spectrum_\cJ$ labels the available interacting bound states the system can
relax to. Observe that if $\cJ\cap(\sigma_\pp(H(\xi))+\omega(0))=\emptyset$, then $0\not\in\cK$. 

With the choice \eqref{ChoiceOfO} of $\cJ$, some of
the $k$'s in $\cK_\cJ$ may correspond to elements $(\xi-k,E-\omega(k))$  on level crossings, but only if
$\xi$ is non-zero, and then $k$ is linearly independent of $\xi$. Here
$E\in\cJ$.
We introduce the notation
\begin{equation}\label{Set-KO-cross}
\cK_\cJ^\cross := \bigset{k\in\cK_\cJ}{\exists \lambda\in\cJ: \quad \bigl(\xi-k,\lambda-\omega(k)\bigr)\in\cross}
\end{equation}
for the subset of $\cKJ{\xi}{\cJ}$ corresponding to level
crossings. In addition we write, for $\cK\subset \RR^\nu$,
\[
\Spectrum_\cJ^\cross = \Spectrum_\cJ\cap \cross \quad \textup{and} \quad \Spectrum_\cJ^\cross(\cK) = \Spectrum_\cJ(\cK)\cap \cross
\]
for the reachable interacting bound states at level crossings.

For $\xi\neq 0$, we use the notation
$\Ortho_\nu(\xi)$ for the subgroup of the orthogonal group consisting
of orthogonal matrices $O$ satisfying $O\xi=\xi$.
 It is convenient, given
$\xi\neq 0$, to introduce a change of coordinates. Let
\[
\Pi_\xi \colon \RR^{\nu-1} \to \bigset{\eta\in\RR^\nu}{\xi\cdot\eta=0}
\]
be a (linear) isometric isomorphism of $\RR^{\nu-1}$ onto the orthogonal
complement of $\xi$. We define a change of coordinates
$\bok\colon (0,\infty)\times [0,\pi]\times S^{\nu-2}\to \RR^\nu\backslash
\{\xi\}$
by
\begin{equation}\label{ChangeOfCoordinates}
\bok(s,\theta,w) := \xi - s \cos(\theta) \frac{\xi}{|\xi|}
+s\sin(\theta) \Pi_\xi(w),
\end{equation}
with the standard convention that $S^0 = \{-1,+1\}\subset \RR$.
Observe that $\Pi_\xi$ induces a group isomorphism $\Ortho_{\nu-1} \to
\Ortho_\nu(\xi)$
by mapping $O\in \Ortho_{\nu-1}$ to $O_\xi\in \Ortho_{\nu}(\xi)$,
determined by the two relations $O_\xi \xi = \xi$ and $ O_\xi \Pi_\xi =
\Pi_\xi O$.
For  $O\in \Ortho_{\nu-1}$ we have
\begin{equation}\label{OIntertwiner}
O_\xi \bok(r,\theta,w) = \bok(r,\theta,Ow).
\end{equation}
We will use the function $\bok$ defining the change of coordinates also
beyond angles confined to $[0,\pi]$. Finally, note that
\begin{equation}\label{ThetaReflected}
\bok(r,2\pi-\theta,w)=\bok(r,\theta,-w)
\end{equation}
and the points $\bok(r,0,w)$ and $\bok(r,\pi,w)$, the poles of a sphere with
$\RR \xi $ as the axis of rotation, do not depend on $w$.

\begin{lemma}\label{FinitelyManyCrossings} Assume Conditions~\ref{Cond:MC}
  and~\ref{Cond:MT}, with $n_0=0$. Suppose  $\nu\geq 2$. Let
  $(\xi,E)\in \cE\p{1}\backslash\thr\p{1}_\parallel$. 
There exists a finite number of radii
  $\{R_i\}_{i=1}^M$, with $R_i>0$,  and for each $i=1,\dots,M$, a
  finite set of angles
$\{\theta_{i,j}\}_{j=1}^{M_i}$, with $\theta_{i,j}\in(0,\pi)$,
such that
\[
\cK_{\{E\}}^\cross  = \bigcup_{i=1}^M\bigcup_{j=1}^{M_i} \bok(R_i,\theta_{i,j},S^{\nu-2}).
\]
If $\xi=0$ or $\omega$ is constant the set $\cK_{\{E\}}^\cross$ is empty.
\end{lemma}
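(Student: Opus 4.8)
The plan is to fix $(\xi,E)\in\cE\p{1}\setminus\thr\p{1}_\parallel$ with $\nu\ge 2$ and analyze the set $\cK^\cross_{\{E\}}$, the collection of emission momenta $k$ for which $(\xi-k,E-\omega(k))$ lands on a level crossing. First I would dispose of the degenerate cases: if $\xi=0$, then $\thr\p{1}_\parallel(0)$ already captures \emph{every} momentum landing on a crossing, by the rotation-invariance identity \eqref{RelaxToCross0}, so $(\xi,E)\notin\thr\p{1}_\parallel$ forces $\cK^\cross_{\{E\}}=\emptyset$; if $\omega$ is constant, then $(\xi-k,E-\omega(k))=(\xi-k,E-\text{const})$, and since $\cross$ consists of spheres $\partial B(0;R)\times\{E'\}$, landing on a crossing means $|\xi-k|=R$ for one of the (isolated) crossing radii with $E-\omega(0)=E'$; but then taking $k$ collinear with $\xi$ (possible since $\nu\ge 1$, and the crossing sphere at radius $R$ meets the line $\RR\xi$) puts $(\xi-ru,E-\omega(ru))\in\cross$ for suitable $r$, contradicting $(\xi,E)\notin\thr\p{1}_\parallel$. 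So in both cases $\cK^\cross_{\{E\}}$ is empty, and we may henceforth assume $\xi\ne 0$ and $\omega$ non-constant.

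Next, the finiteness. The key input is compactness: take the compact set $C=\{\xi\}\times\{E\}$ (or a small compact interval around $E$ inside $\cE\p{1}$) and apply Lemma~\ref{Lemma-CompactKandSigma}, which gives that $\cK_C$ is compact and $\Spectrum_C$ is compact. Now $\cK^\cross_{\{E\}}\subset\cK_C$, and the map $k\mapsto(\xi-k,E-\omega(k))$ sends $\cK^\cross_{\{E\}}$ into $\Spectrum_C\cap\cross$. Since $\cross$ consists of \emph{isolated} $S^{\nu-1}$-spheres $\partial B(0;R)\times\{E'\}$ (with the degenerate $\{0\}\times\{E'\}$ excluded here because that would again feed into $\thr\p{1}_\parallel$), compactness of $\Spectrum_C$ forces only finitely many such spheres to be met: call their radii $R_1,\dots,R_M>0$ and energies $E'_1,\dots,E'_M$. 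For $k\in\cK^\cross_{\{E\}}$ landing on the $i$-th sphere we have the two scalar constraints $|\xi-k|=R_i$ and $\omega(k)=E-E'_i$. I would then pass to the coordinates $\bok(s,\theta,w)$ of \eqref{ChangeOfCoordinates}: the first constraint reads $s=R_i$, so $k=\bok(R_i,\theta,w)$; the second reads $\omega(\bok(R_i,\theta,w))=E-E'_i$, and by rotation invariance \ref{Item:RotInv} of $\omega$ together with \eqref{OIntertwiner} this depends only on $(R_i,\theta)$, i.e.\ $\theta\mapsto \omega(\bok(R_i,\theta,w))$ is a well-defined function, real analytic in $\theta$ by \ref{Item:RealAnal} (the map $\theta\mapsto\bok(R_i,\theta,w)$ being real analytic and $\omega$ real analytic). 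Since $(\xi,E)\notin\thr\p{1}_\parallel$, we know $\theta=0$ and $\theta=\pi$ (the poles, collinear with $\xi$) are not solutions, so all solutions lie in $(0,\pi)$; and since a non-constant real analytic function of one variable has isolated zeros on the compact interval $[0,\pi]$, there are finitely many solutions $\theta_{i,1},\dots,\theta_{i,M_i}\in(0,\pi)$ — provided $\theta\mapsto\omega(\bok(R_i,\theta,w))-(E-E'_i)$ is not identically zero.

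That proviso is the one point needing a small argument, and I expect it to be the main (mild) obstacle: I must rule out $\omega(\bok(R_i,\theta,w))\equiv E-E'_i$ on all of $[0,\pi]$. If it held, then $\omega$ would be constant and equal to $E-E'_i$ on the entire sphere $\xi+R_i S^{\nu-1}$; combined with rotation invariance this is still compatible with $\omega$ non-constant globally, so I cannot immediately contradict. However, if $\omega$ is constant on that sphere, then in particular it takes the value $E-E'_i$ at the pole $\bok(R_i,0,w)$, which lies on the line $\RR\xi$; that point is $\xi-r u$ for $r=R_i$ (with $u=\xi/|\xi|$), and then $(\xi-R_iu,E-\omega(R_iu))=(\xi-R_iu,E'_i)\in\cross$, contradicting $(\xi,E)\notin\thr\p{1}_\parallel$ by \eqref{LC-Parallel}. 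So the proviso holds. Assembling: $\cK^\cross_{\{E\}}$ is contained in $\bigcup_{i=1}^M\bigcup_{j=1}^{M_i}\bok(R_i,\theta_{i,j},S^{\nu-2})$, and the reverse inclusion is immediate since each such point satisfies both defining constraints for the $i$-th crossing sphere (using \eqref{OIntertwiner} and rotation invariance of $\cross$ to move $w$ around $S^{\nu-2}$ while staying on the crossing). This yields the claimed description, and the case analysis at the start yields emptiness when $\xi=0$ or $\omega$ is constant.
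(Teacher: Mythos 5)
Your proof is correct and follows essentially the same strategy as the paper: compactness via Lemma~\ref{Lemma-CompactKandSigma} limits the relevant crossing spheres to finitely many, rotation invariance reduces the analysis to a single angular variable $\theta$, real analyticity of $\theta\mapsto\omega(\bok(R_i,\theta,w))$ gives finiteness of the solution set, and $\thr\p{1}_\parallel$-avoidance handles the degenerate cases. The only variation is in excluding the identically-constant alternative: you observe directly that constancy would make the pole $\theta=0$ a solution, violating $\thr\p{1}_\parallel$-avoidance (a mild streamlining, and note $\bok(R_i,0,w)=(|\xi|-R_i)u$, so the argument should refer to $\omega((|\xi|-R_i)u)$, not $\omega(R_iu)$), whereas the paper argues via analyticity plus non-constancy of $\omega$ that $\theta\mapsto|\bok(R_i,\theta,e_1)|$ would have to be constant and then uses $\xi\ne 0$, $R_i>0$ to conclude.
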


\begin{remark} 1) The set described above is a finite union of
non-empty  $S^{\nu-2}$-spheres, all centered along a line through the origin in the
direction of $\xi$. They sit inside $S^{\nu-1}$-spheres of crossings
centered at $\xi$ with radius $R_i$.

2) In dimension $\nu=1$ the set $\cK_{\{E\}}^\cross$ is empty by the choice of
$(\xi,E)$. In dimension $2$ the set consists of finitely many points
placed symmetrically around the line through the origin and $\xi$,
with no points on the line through $0$ and $\xi$.
\remarkQED\end{remark}

\begin{proof} First we observe that if $\xi=0$
we have $\cK_{\{E\}}^\cross=\emptyset$.
This is due to the assumption that $\omega$ is rotation invariant.
For a similar reason, the set is also empty if $\omega$ is a
constant function regardless of $\xi$. From now on we assume that
$\xi\neq 0$ and that $\omega$ is not constant.

From Lemma~\ref{Lemma-CompactKandSigma} we know a priori that the sets
$\cK_{\{E\}}^\cross$ and
$\Spectrum_{\{E\}}^\cross$ are compact. In particular, there exist
finitely many radii $R_1,\dots,R_M$, and energies $\lambda_1,\dots,\lambda_M$ such that
\begin{equation}\label{SharpInsideSpheres}
\Spectrum_{\{E\}}^\cross \subset \bigcup_{i=1}^M R_i S^{\nu-1}
\times\{\lambda_i\}\subset \cross.
\end{equation}
The choice of $(\xi,E)$ ensures that $R_i>0$ for all $i=1,\dots, M$.

Clearly the set $\cK_{\{E\}}^\cross$ is invariant under rotations from
the group $\Ortho_\nu(\xi)$. Fix a unit vector $v$ orthogonal
to $\xi$. Take for example
$v = \Pi_\xi(e_1)$. Put $u=\xi/|\xi|$.

What we need to show is that $\cK = \cKJ{\xi}{\{E\}}^\cross \cap \huelle\{u,v\}$
is a finite set. The choice of $(\xi,E)$ ensures that the intersection
above does not contain any elements in $\RR u = \RR \xi$.
The orbit under $\Ortho_\nu(\xi)$ of $k\in \cKJ{\xi}{\{E\}}^\cross\cap
\huelle\{u,v\}$ are exactly the $S^{\nu-2}$-spheres in the lemma,
cf. \eqref{OIntertwiner}.

Aiming for a contradiction we assume that there exists an
infinite sequence $\{k_n\}_{n\in\NN}\subset \cK$ consisting of distinct momenta.
Observe that $(\xi-k_{n},E-\omega(k_{n}))\in
\Spectrum_{\{E\}}^\cross$, and hence by \eqref{SharpInsideSpheres} 
there must exist $1\leq i\leq M$ and a subsequence $\{k_{n_j}\}$,
with $|\xi-k_{n_j}| = R_i > 0$ and $E-\omega(k_{n_j}) = \lambda_i$ for
all $j$.
 We can now write
$k_{n_j} = \bok(R_i,\theta_j,e_1)$, for a sequence of distinct angles
$\theta_j\in[0,2\pi)$.

Observe that $E-\omega(k_{n_j})$, and consequently
$\omega(k_{n_j})$, is a constant sequence.
Since the map $\RR\ni \theta \to f(\theta) = \omega(\bok(R_i,\theta,e_1))$ is a real
analytic function, constant along a sequence $\theta_j$ that has a
cluster point, we conclude that $f$ must be a constant function.
Since $\omega$ is not a constant function, this can only happen if
$\RR\ni \theta \to |\bok(R_i,\theta,e_1)|$ is constant.
But this is impossible because we assumed that $\xi\neq 0$, cf. \eqref{ChangeOfCoordinates}.
Hence $\cK$ does not contain
a countable sequence of distinct momenta and we conclude the lemma.
Observe that \eqref{ThetaReflected} ensures that we can restrict the angles to $(0,\pi)$.
\end{proof}

Let $(\xi,E)$ be chosen as in \eqref{ChoiceOfStart}. We
construct torus neighborhoods $\TT_{i,j}$, in the $(r,\theta,w)$ coordinate
system, around the finitely many $S^{\nu-2}$-spheres in
$\cKJ{\xi}{\{E\}}^\cross$ identified in Lemma~\ref{FinitelyManyCrossings}. We can label these sets by
radii and angles $(R_i,\theta_{i,j})$, $i=1,\dots,M$ and $j=1,\dots,M_i$, with $R_i>0$ and $\theta_{i,j}\in
(0,\pi)$. We define
\begin{equation}\label{Torus-Def}
\TT_{i,j}(\epsilon_\theta,\epsilon_r):= \bigset{\bok(r,\theta,w)}{|R_i-r|<\epsilon_r,
  |\theta-\theta_{i,j}|<\epsilon_\theta, w\in S^{\nu-2}},
\end{equation}
where $\epsilon_\theta$ measures the angular thickness of the torus,
and $\epsilon_r$ the radial thickness.

In order to pick an appropriate angular and radial thickness for the tori
we proceed in steps to ensure that a number of properties are satisfied.
We first pick $0<\epsfour_r, \epstwo_\theta \leq 1$ such that
\begin{equation}\label{TorusHasHole}
\begin{aligned}
\epsfour_r & < \min_{i=1,\dots,M} R_i\\
\epstwo_\theta & <\frac12 \min_{1\leq i\leq M}\min_{1\leq j\leq M_i}\{\theta_{i,j},\pi-\theta_{i,j}\}.
\end{aligned}
\end{equation}
With this choice we have ensured that the tori will have their holes,
with a little angular room to spare. 
%In the singular case where $\Exc\neq\emptyset$, we observe that
%since $0\not\in\cK_{\{E\}}$, we must have $R_i\neq |\xi|$ for all $i=1,\dotsc,M$. 
%In this case we furthermore demand that
%\begin{equation}\label{epsr-ExcNonEmpty}
%\epsfour_r < \frac12\min_{i=1,\dots,M} \bigl|R_i - |\xi|\bigr|.
%\end{equation}

By the choice $E\not\in \thrN(\xi)$ we know that
$\nabla\omega(\bok(R_i,\theta_{i,j},w))\neq 0$, for every $i=1,\dots, M$, $j=1,\dots,M_i$, and $w\in S^{\nu-2}$.
In addition, by rotation invariance of $\omega$, for any $i$ and $j$ the norm $|\nabla\omega(\bok(R_i,\theta_{i,j},w))|$
does not depend on $w$.
By continuity of $\nabla\omega$, we can choose $0<\epsone_\theta\leq \epstwo_\theta$
and $0<\epsthree_r\leq \epsfour_r$ such that
\begin{equation}\label{nablaomegaPos}
\forall i,j: \quad \inf_{k\in \TT_{i,j}(\epsone_\theta,\epsthree_r)} |\nabla\omega(k)|>0.
\end{equation}
The choice of $\epsone_\theta$ and $\epsthree_r$ implies that
$\TT_{i,j}(\epsone_\theta,\epsthree_r)$ are topological tori
and they contain no $k$'s parallel with $\xi$, nor are there $k$'s
with $\nabla\omega(k)=0$.

Since $\nabla\omega$ does not vanish on the tori $\TT_{i,j}(\epsone_\theta,\epsthree_r)$,
 and $\omega$ is rotation invariant, we find that
 $k\cdot\nabla\omega(k)$ does not vanish on the tori either.
 Recall that $k=0$, being \myquote{collinear} with $\xi$, is not in any of the tori. Hence $k\cdot\nabla\omega(k)$
 has a sign, which we denote by $\sigma_{i,j}\in\{-1,+1\}$, for each $i=1,\dots,M$ and $j=1,\dots,M_i$.
We note the identity
\begin{equation}\label{SignOf-kdotnablaomega}
\forall k\in \TT_{i,j}\bigl(\epsone_\theta,\epsthree_r\bigr):\quad |\nabla\omega(k)| = \sigma_{i,j}\frac{k}{|k|}\cdot\nabla\omega(k).
\end{equation}

Unfortunately the above choice of $\epsone_\theta$ and
$\epsthree_r$  does not quite
suffice. At the center of the torus, i.e. for $k$'s in the set
$\bok(R_i,\theta_{i,j},S^{\nu-2})$, we know that
$(\xi-k,E-\omega(k))\in\cross$. In fact for such $k$  we always end at the same
level crossing
\[
\cross_{i,j} := \bigl(R_i S^{\nu-1}\bigr)\times\bigl\{E-\omega(\bok(R_i,\theta_{i,j},e_1))\bigr\},
\]
due to rotation invariance. (A $j'\neq j$ may a priori give rise to
a different
crossing $\cross_{i,j'}\neq \cross_{i,j}$ if there at different energies sit level
crossings with the same radius $R_i$.)
For other $k$'s in the torus  we need to be sure that
$(\xi-k,E-\omega(k))$ does not land on a \emph{different}
crossing. That is, we have to identify $\epsilon_\theta\leq\epsone_\theta$
and $\epstwo_r\leq\epsthree_r$ such that
\begin{equation}\label{OnlyOneCrossingFromTori}
\Spectrum_{\{E\}}^\cross \Bigl(\bTT_{i,j}\bigl(\epsilon_\theta,\epstwo_r\bigr)\Bigr) \subset \cross_{i,j},
\end{equation}
where $\bTT_{i,j}$ denotes the closure of the torus.
Here we can use that level crossings are isolated and that we only
consider finitely many tori, to ensure that
\begin{equation}\label{ChoiceOfd}
d=d(\xi,E):=\min_{i,j} d(\cross_{i,j},\cross\backslash \cross_{i,j})>0.
\end{equation}

For $k\in \bTT_{i,j}(\epsilon_\theta,\epstwo_r)$ we write
first $k = \bok(r,\theta,w)$ with $|r-R_i|\leq\epstwo_r$,
$|\theta-\theta_{i,j}|\leq\epsilon_\theta$ and $w\in S^{\nu-2}$.
Then we compute
\begin{equation}\label{Eq-DistToCross}
\begin{pmatrix} \xi-k \\ E-\omega(k)\end{pmatrix} =
\begin{pmatrix}\xi-\bok(R_i,\theta_{i,j},w) \\
  E-\omega(\bok(R_i,\theta_{i,j},w))\end{pmatrix}
+ \begin{pmatrix}\bok(R_i,\theta_{i,j},w)-\bok(r,\theta,w) \\ \omega(\bok(R_i,\theta_{i,j},w))-\omega(\bok(r,\theta,w))\end{pmatrix}
\end{equation}
and estimate
\[
\left|\begin{pmatrix} \bok(R_i,\theta_{i,j},w)-\bok(r,\theta,w) \\
    \omega(\bok(R_i,\theta_{i,j},w))-\omega(\bok(r,\theta,w))\end{pmatrix} \right|
\leq C_{i,j}\max\bigl\{\epsilon_\theta,\epstwo_r\bigr\},
\]
using that $\nabla \omega$ is bounded to argue for the existence of the
constant $C_{i,j}$. Put $C=\max_{i,j} C_{i,j}$. Since the first term
on the right-hand side of \eqref{Eq-DistToCross} is an element of $\cross_{i,j}$ we observe that if
we choose $\epsilon_\theta,\epstwo_r \leq d/(2C)$ we can conclude that
\[
\forall i,j \ \textup{and} \ k\in\bTT_{i,j}\bigl(\epsilon_\theta,\epstwo_r\bigr):
\quad d((\xi-k,E-\omega(k)),\cross\backslash\cross_{i,j})\geq \frac{d}2.
\]
The constant $d$ was defined in \eqref{ChoiceOfd}.
We now make the choice
\[
\epsilon_\theta = \min\bigl\{\epsone_\theta,d/(2C)\bigr\},\quad
\epstwo_r = \min\bigl\{\epsthree_r,d/(2C)\bigr\},
\]
and emphasize that with this choice
the desired inclusion  \eqref{OnlyOneCrossingFromTori}
holds true.

Our next task is to pick $\delone$ small enough such that
$\cKJ{\xi}{[E-\delone,E+\delone]}^\cross$, cf.~\eqref{Set-KO-cross},
is contained inside the union over $i$ and $j$ of the tori $\TT_{i,j}(\epsilon_\theta,\epstwo_r)$,
and such that the inclusion \eqref{OnlyOneCrossingFromTori} 
remains valid when $\{E\}$ is replaced by the interval $[E-\delone,E+\delone]$.

\begin{lemma}\label{Lem-delta-Choice} Assume Conditions~\ref{Cond:MC}
  and~\ref{Cond:MT}, with $n_0=0$, and  let $(\xi,E)\in\cE\p{1}\backslash\bigl(\thr\p{1}\cup\Exc\bigr)$. There exists
$\delta'>0$ such that $\cJ'=[E-\delta',E+\delta'] \subset \cE\p{1}(\xi)\backslash\bigl(\thr\p{1}(\xi)\cup\Exc(\xi)\bigr)$ and
\begin{align}
\label{deltaprime1}
& \cKJ{\xi}{\cJ'}^\cross  \subset \bigcup_{i=1}^M\bigcup_{j=1}^{M_i}
\TT_{i,j}\bigl(\epsilon_\theta,\epstwo_r\bigr)\\
\label{deltaprime2}
& \Spectrum_{\cJ'}^\cross \Bigl(\bTT_{i,j}\bigl(\epsilon_\theta,\epstwo_r\bigr)\Bigr)  \subset \cross_{i,j}.
\end{align}
\end{lemma}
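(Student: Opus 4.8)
I will produce $\delta'$ by imposing three independent smallness conditions and then taking $\delta'$ below all of them. Throughout, $d=d(\xi,E)>0$ denotes the quantity from \eqref{ChoiceOfd}, and $C$ the compact subset of $\cE\p{1}$ built in \eqref{CompactC} around $(\xi,E)$, for some fixed $0<R<d\bigl((\xi,E),\Spectrum_0\p{2}\bigr)$. For the first requirement, $\cJ'\subset\cE\p{1}(\xi)\backslash(\thr\p{1}(\xi)\cup\Exc(\xi))$: by Theorem~\ref{Thm-thr} the set $\thr\p{1}(\xi)\cap\cE\p{1}(\xi)$ is discrete in $\cE\p{1}(\xi)$, and $\Exc(\xi)$ is a discrete subset of $\RR$ by \eqref{ExcSet}, so $\cE\p{1}(\xi)\backslash(\thr\p{1}(\xi)\cup\Exc(\xi))$ is relatively open in $\cE\p{1}(\xi)=[\Spectrum_0\p{1}(\xi),\Spectrum_0\p{2}(\xi))$; since it contains $E$ and $E<\Spectrum_0\p{2}(\xi)$ there is $\delta_0>0$ with $[E-\delta_0,E+\delta_0]$ contained in it (in the boundary case $E=\Spectrum_0\p{1}(\xi)$ one uses the one-sided interval $[E,E+\delta_0]$ instead).

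For \eqref{deltaprime2} I use the distance bound established just before the lemma, namely $d\bigl((\xi-k,E-\omega(k)),\cross\backslash\cross_{i,j}\bigr)\ge d/2$ for all $i,j$ and all $k\in\bTT_{i,j}(\epsilon_\theta,\epstwo_r)$. If $\lambda\in[E-\delta',E+\delta']$ and $k\in\bTT_{i,j}(\epsilon_\theta,\epstwo_r)$ satisfy $(\xi-k,\lambda-\omega(k))\in\cross$, then $(\xi-k,\lambda-\omega(k))$ differs from $(\xi-k,E-\omega(k))$ by the vector $(0,\lambda-E)$ of norm $\le\delta'$, so its distance to $\cross\backslash\cross_{i,j}$ is at least $d/2-\delta'>0$ whenever $\delta'<d/2$; hence $(\xi-k,\lambda-\omega(k))\in\cross_{i,j}$, which is exactly \eqref{deltaprime2} once one unwinds the definition of $\Spectrum_{\cJ'}^\cross\bigl(\bTT_{i,j}(\epsilon_\theta,\epstwo_r)\bigr)$.

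The one step needing a genuine compactness argument is \eqref{deltaprime1}, and it is the same kind of argument used repeatedly in the proof of Lemma~\ref{Lemma-ParaAndNonPara}. Since $\cKJ{\xi}{\cJ'}^\cross$ is monotone in $\cJ'$, it suffices to exclude that \eqref{deltaprime1} fails along a sequence $\delta_n\downarrow 0$. Suppose it does, and pick $k_n\in\cKJ{\xi}{[E-\delta_n,E+\delta_n]}^\cross$ with $k_n\notin\bigcup_{i,j}\TT_{i,j}(\epsilon_\theta,\epstwo_r)$, together with $\lambda_n\in[E-\delta_n,E+\delta_n]$ such that $(\xi-k_n,\lambda_n-\omega(k_n))\in\cross$. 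Since $(\xi-k_n,\lambda_n-\omega(k_n))\in\Spectrum_\iso$ we have $\lambda_n\ge\Spectrum_0\p{1}(\xi)$, and $\lambda_n\to E$, so $(\xi,\lambda_n)\in C$ for $n$ large; then $k_n\in\cK_C$ and $(\xi-k_n,\lambda_n-\omega(k_n))\in\Spectrum_C\cap\cross$. By Lemma~\ref{Lemma-CompactKandSigma}, $\cK_C$ is compact and $\Spectrum_C\cap\cross$ is closed (as in the proof of Lemma~\ref{Lemma-ParaAndNonPara}), so along a subsequence $k_{n_\ell}\to k\in\cK_C$ and $(\xi-k,E-\omega(k))=\lim_\ell(\xi-k_{n_\ell},\lambda_{n_\ell}-\omega(k_{n_\ell}))\in\cross$, i.e.\ $k\in\cK_{\{E\}}^\cross$. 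By Lemma~\ref{FinitelyManyCrossings}, $k\in\bigcup_{i,j}\bok(R_i,\theta_{i,j},S^{\nu-2})\subset\bigcup_{i,j}\TT_{i,j}(\epsilon_\theta,\epstwo_r)$, and the latter is open, so $k_{n_\ell}\in\bigcup_{i,j}\TT_{i,j}(\epsilon_\theta,\epstwo_r)$ for $\ell$ large, a contradiction. (When $\xi=0$ or $\omega$ is constant, $\cK_{\{E\}}^\cross=\emptyset$ and already the existence of such a $k$ is the contradiction.) This yields $\delta_1>0$ such that \eqref{deltaprime1} holds for all $\delta'\le\delta_1$.

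It then suffices to take any $\delta'\in\bigl(0,\min\{\delta_0,\delta_1,d/2\}\bigr)$. The main — and essentially only — obstacle is the compactness argument for \eqref{deltaprime1}, and that reduces immediately to Lemma~\ref{Lemma-CompactKandSigma} together with the explicit description of $\cK_{\{E\}}^\cross$ furnished by Lemma~\ref{FinitelyManyCrossings}; everything else is bookkeeping against the nested choices of $\epsilon_\theta,\epstwo_r$ and of $d$ already fixed before the lemma.
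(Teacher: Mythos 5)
Your proof is correct and follows the paper's argument for the interval condition and for \eqref{deltaprime1}. For the former, both you and the paper invoke Theorem~\ref{Thm-thr} (discreteness of $\thr\p{1}(\xi)\cap\cE\p{1}(\xi)$ and of $\Exc(\xi)$). For \eqref{deltaprime1}, both run the same compactness-by-contradiction argument: take $\lambda_n\to E$ with offending $k_n$, extract a convergent subsequence via Lemma~\ref{Lemma-CompactKandSigma}, identify the limit $k\in\cK_{\{E\}}^\cross$, and contradict Lemma~\ref{FinitelyManyCrossings} together with openness of the tori; your added remark that $\lambda_n\geq\Spectrum_0\p{1}(\xi)$ (needed to place $(\xi,\lambda_n)$ in the compact set $C$) closes a small gap the paper leaves implicit.

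For \eqref{deltaprime2} you take a genuinely different and more elementary route. The paper runs a second compactness-by-contradiction argument: sequences $\lambda_n\to E$ and $k_n\in\bTT_{i,j}$ with $(\xi-k_n,\lambda_n-\omega(k_n))\in\cross\backslash\cross_{i,j}$, a convergent subsequence via Lemma~\ref{Lemma-CompactKandSigma}, and a contradiction with \eqref{OnlyOneCrossingFromTori} in the limit. You instead perturb the already-established inequality $d\bigl((\xi-k,E-\omega(k)),\cross\backslash\cross_{i,j}\bigr)\geq d/2$ (valid for all $k\in\bTT_{i,j}(\epsilon_\theta,\epstwo_r)$) by the triangle inequality, obtaining $d\bigl((\xi-k,\lambda-\omega(k)),\cross\backslash\cross_{i,j}\bigr)\geq d/2-\delta'>0$ whenever $\delta'<d/2$, hence the point lands in $\cross_{i,j}$. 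This is cleaner, avoids a second compactness argument, and gives the explicit quantitative bound $\delta'<d/2$ for that requirement, whereas the paper's argument is purely existential. One small imprecision worth flagging: the parenthetical suggestion to fall back to a one-sided interval $[E,E+\delta_0]$ when $E=\Spectrum_0\p{1}(\xi)$ does not actually satisfy the lemma's stated conclusion, which requires the symmetric interval $\cJ'=[E-\delta',E+\delta']$; the lemma, like the paper's proof, implicitly presupposes $E>\Spectrum_0\p{1}(\xi)$, so the aside should simply be dropped.
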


\begin{remark}\label{Rem-delta-NoTori} Included in the conclusion of the lemma is that if $\cK_{\{E\}}^\cross = \emptyset$,
then $\delta'$ can be chosen such that $\cK_{\cJ'}^\cross = \emptyset$. 
\remarkQED\end{remark}

\begin{proof} That $\delta'>0$ can be chosen such that $\cJ'\subset \cE\p{1}(\xi)\backslash\bigl(\thr\p{1}(\xi)\cup\Exc(\xi)\bigr)$ follows from Theorem~\ref{Thm-thr}, cf.~also \eqref{ExcSet}.
Assume the inclusion \eqref{deltaprime1} is false. Then there exists
a sequence $\lambda_n\in [E-1/n,E+1/n] =: \cJ_n$
 and $k_n\in \cK_{\cJ_n}^\cross$ with $(\xi-k_n,\lambda_n-\omega(k_n))\in\cross$
and $k_n \not\in  \cup_{i=1}^M\cup_{j=1}^{M_i}
\TT_{i,j}(\epsilon_\theta,\epstwo_r)$.

By Lemma~\ref{Lemma-CompactKandSigma}, we can extract a convergent subsequence $\{k_{n_j}\}$ converging to
a  momentum  $k$. Since $\lambda_{n_j}\to E$ we must have $(\xi-k,E-\omega(k))\in\cross$,
and hence $k\in\cK_{\{E\}}^\cross$.
Since the tori are open we conclude furthermore
that $k \not\in  \cup_{i=1}^M\cup_{j=1}^{M_i}
\TT_{i,j}(\epsilon_\theta,\epstwo_r)$. But this
 contradicts Lemma~\ref{FinitelyManyCrossings} and we have thus established \eqref{deltaprime1}.

 As for \eqref{deltaprime2} we proceed in a similar fashion assuming that for any $n$
 there exists $\lambda_n\in [E-1/n,E+1/n]$ and $k_n\in\bTT_{i,j}(\epsilon_\theta,\epstwo_r)$
 such that $(\xi-k_n,\lambda_n-\omega(k_n))\in\cross\backslash \cross_{i,j}$.

 Again, by Lemma~\ref{Lemma-CompactKandSigma}, we must have a subsequence $k_{n_j}$
 converging to a momentum $k\in
 \bTT_{i,j}(\epsilon_\theta,\epstwo_r)$. 
 For this $k$ we must have $(\xi-k,E-\omega(k))\in \cross\backslash\cross_{i,j}$,
 and hence $(\xi-k,E-\omega(k))\in\Sigma_{\{E\}}^\cross$. But this contradicts \eqref{OnlyOneCrossingFromTori}.
\end{proof}

We identify the mass shells available for scattering channels, starting at momentum $\xi$ and energy in $\cJ'$ to be
\begin{equation}\label{Sprime}
\shells' := \bigset{(\Annul,\Shell)\in\shells}{\cG_\Shell\cap \Spectrum_{\cJ'}\neq\emptyset}.
\end{equation}
Here $\cG_\Shell$ denotes the graph of $\Shell$, cf. \eqref{Graph-Of-Shell}.
By compactness of $\Spectrum_{\cJ'}$ this set is finite. We list the radii of the spheres forming $\partial \Annul$, where $(\Annul,\Shell)\in\shells'$,
as $R_1',\dots, R'_{M'}$. We \emph{exclude} from the list of radii those already included in $R_1,R_2,\dots,R_M$.
With this choice we find that
\[
\forall 1\leq \ell\leq M':\quad r'_\ell := d\bigl( \cK_{\cJ'}, \partial B(\xi,R_\ell')\bigr) >0.
\]
The next thing we need to do is to ensure that the set $\cK_{\cJ'}$ approaches level crossings,
or more precisely the spheres $\partial B(\xi,R_i)$,  through the
radial face of the tori. Define for $\epsilon>0$ the compact set
\begin{equation}\label{Kepsilon}
\cK(\epsilon) := \cK_{\cJ'} \backslash \Bigl(\bigcup_{i,j}\TT_{i,j}(\epsilon_\theta,\epsilon)\Bigr)
\end{equation}
and for $i=1,\dots,M$ subsets
\[
\cK_{i} := \bigset{\bok(r,\theta,w)\in \cK(\epstwo_r) }{
|\theta-\theta_{i,j}|\geq\epsilon_\theta, j=1,\dots,M_i},
\]
which again are compact sets. Let
\[
r_{i} :=  d\bigl(\cK_i, \partial B(\xi,R_i)\bigr)>0,
\]
where strict positivity follows from \eqref{deltaprime1}.
Finally, we define $r_\Exc = +\infty$ if $\Exc = \emptyset$, and $r_\Exc = d(0,\cK_{J'})>0$ if $\Exc \neq\emptyset$. We now pick
an upper bound for the radial thickness $\epsilon_r$ to be
\begin{equation}\label{epsone-Choice}
\epsone_r := \min\bigl\{\epstwo_r,\min_{1\leq i\leq M} r_{i},\min_{1\leq \ell\leq M'} r_{\ell}',r_\Exc\bigr\}.
\end{equation}
This choice ensures that for $\epsilon\leq\epsone_r$ the
set $\cK(\epsilon)$ approaches the spheres $\partial B(\xi,R_i)$, $i=1,\dots,M$, through the radial
faces of the tori $\TT_{i,j}(\epsilon_\theta,\epsilon)$, not through their angular faces.
In addition, $k$'s in $\cK(\epsilon)$ stay at least a distance $\epsilon$ away from boundaries
of annuli in which the relevant mass shells in $\shells'$ are defined.
To summarize: For all $0<\epsilon\leq \epsone_r$, $i\in\{1,\dots,M\}$ and $\ell\in\{1,\dots,M'\}$ we have
\begin{equation}\label{DistToBoundary}
d\bigl(\cK(\epsilon), \partial B(\xi,R_i)\bigr) \geq \epsilon 
\quad \textup{and} \quad d\bigl(\cK(\epsilon), \partial B(\xi,R'_\ell)\bigr) \geq \epsilon.
\end{equation}

\subsection{An Analytic Consideration}

The next part of the construction is somewhat less obvious, in that it
anticipates the proof of the Mourre estimate to follow. We need to
construct a conjugate operator, i.e. a vector field, in a set like $\cKJ{\xi}{\cJ'}$, but we proceed
differently depending on whether we are inside or outside one of the tori
introduced in the previous subsection, cf.~\eqref{Torus-Def}. If we
are inside a torus, which is the situation we deal with in this subsection, we want the conjugate
operator to be a $\theta$-derivative. We now proceed to compute what
turns out to be the relevant commutator inside a torus and get
something positive on the crossing $\chi_{i,j}$ sitting at radius
$R_i$. Then we pick $\epsilon_r\leq \epsone_r$ small
enough for the expression to remain positive inside the torus. Note
that this subsection, as with the previous one, only comes into play when
$\nu\geq 2$ and $\xi\neq 0$.

We anticipate a conjugate operator of the form \eqref{A-GeneralForm},
cf.~also \eqref{a-GeneralForm}.
We require that the vector field $v\in C_0^\infty(\RR^\nu)$ entering into the construction of the one-body conjugate operator $a$, which remains to constructed,  satisfies
\begin{equation}\label{v-Constraint}
\|v\|_\infty\leq 2 + \max_{1\leq i\leq M} R_i.
\end{equation}
Furthermore, if $\Exc\neq\emptyset$ we demand that $0\not\in\supp{v}$.

We define, for $r>0$,  auxiliary Hamiltonians $G_\xi(r)$ on the Hilbert space
$L^2([0,\pi]\times S^{\nu-2};\cF)$ by the following direct integral
construction
\begin{equation}\label{Gxi-r}
G_\xi(r) = \int^\oplus_{\tS^{\nu-1}} G_\xi(r,\theta,w)\,\D
\theta \D w,
\end{equation}
where we abbreviated $\tS^{\nu-1} := [0,\pi]\times S^{\nu-2}$ and 
\begin{equation}\label{Gxi-rtw}
G_\xi(r,\theta,\omega) = H(\xi-\bok(r,\theta,w)) + \omega(\bok(r,\theta,w)) \one_\cF.
\end{equation}
For a $\rho_{i,j}\in C_0^\infty(\RR_\theta)$, we define a self-adjoint cutoff angular derivative
\[
\ta =\sigma_{i,j}
\frac{\ri}{2}\bigl\{ \partial_\theta \rho_{i,j} + \rho_{i,j} \partial_\theta\bigr\}.
\]
We  fix our choice of $\rho_{i,j}$ to be compactly supported in 
$(\theta_{i,j}-2\epsilon_\theta,\theta_{i,j}+2\epsilon_\theta)\subset(0,\pi)$,
equal to $1$ on
\begin{equation}\label{Thetaij}
\Theta_{i,j} := (\theta_{i,j}-\epsilon_\theta,\theta_{i,j}+\epsilon_\theta)
\end{equation}
and satisfying that $0\leq\rho_{i,j}\leq 1$.
Observe that $\ta$ only acts on the base space, not on the fiber $\cF$.
We now stitch $A$ and $\ta$ together to get a conjugate operator on
$\cF\otimes L^2(\tS^{\nu-1})$
\[
\tA\p{1} = A\otimes \one_{L^2(\tS^{\nu-1})} + \one_\cF\otimes \,\ta,
\]
where we appeal to the identification $\cF\otimes L^2(\tS^{\nu-1})\simeq L^2(\tS^{\nu-1};\cF)$.
One can verify that $G_\xi(r)$ is of class $C^1(\tA\p{1})$ and
\[
\ri\bigl[G_\xi(r),\tA\p{1}\bigr]^\circ =  \int^\oplus_{\tS^{\nu-1}}
\ri\bigl[G_\xi(r),\tA\p{1}\bigr]^\circ(\theta,w)
\, \D\theta \D w,
\]
where, as an identity on $\cF$,
\begin{equation}\label{CommWithGxi}
\begin{aligned}
& \ri\bigl[G_\xi(r),\tA\p{1}\bigr]^\circ(\theta,w)   = \ri \bigl[H(\xi-\bok(r,\theta,w)),A\bigr]^\circ\\
& \quad +\sigma_{i,j}\rho_{i,j}(\theta)\bov_\xi(r,\theta,w)
\cdot\Bigl(\nabla\omega(\bok(r,\theta,w))\one_\cF -\nabla\Omega\bigl(\xi-\bok(r,\theta,w)-\D\Gamma(k)\bigr)\Bigr),
\end{aligned}
\end{equation}
and
\begin{equation}\label{vectorfieldvxi}
\bov_\xi(r,\theta,w) := \frac{\partial
  \bok}{\partial\theta}(r,\theta,w)=r\bigl(\cos(\theta)\Pi_\xi(w) + \sin(\theta)\frac{\xi}{|\xi|} \bigr).
\end{equation}
See also Proposition~\ref{Prop:ExtendedC1}, with $\ell=1$, for a similar commutator formula.
Let $k=\bok(r,\theta,w)\neq 0$.
By rotation invariance of $\nabla\omega$ we find that
\[
\bov_\xi(r,\theta,w)\cdot\nabla\omega(k) = (\bov_\xi(r,\theta,w)\cdot k)(k\cdot\nabla\omega(k))/|k|^2,
\]
which taken together with \eqref{SignOf-kdotnablaomega} and \eqref{ChangeOfCoordinates} enables us to establish that:
\begin{equation*}
\forall k=\bok(r,\theta,w)\in\TT_{i,j}\bigl(\epsilon_\theta,\epsone_r\bigr): \
\sigma_{i,j}\bov_\xi(r,\theta,w)\cdot\nabla\omega(k) = r\sin(\theta)|\xi|\frac{|\nabla\omega(k)|}{|k|}.
\end{equation*}
This identity in conjunction with \eqref{nablaomegaPos} implies, for all $i$ and $j$, the crucial property
\begin{equation}\label{PositiveRadialDerivative}
c_{i,j} := \inf_{\theta\in\Theta_{i,j}, w\in S^{\nu-2}}\sigma_{i,j}\bov_\xi(R_i,\theta,w)\cdot\nabla\omega(\bok(R_i,\theta,w)) >0.
\end{equation}
Note that $\sin(\theta)>0$ for $\theta\in \overline{\Theta}_{i,j}$.
The set $\Theta_{i,j}$ was defined in \eqref{Thetaij}.

Now we pick and fix a $\chi''\in C_0^\infty((E-\delta',E+\delta'))$.
We choose $\chi''$ such that
$\chi''=1$ on $[E-3\delta'/4,E+3\delta'/4]$. Introduce bounded operators
\begin{equation}\label{Bdprime}
B''(r) := \chi''(G_\xi(r))\ri \bigl[G_\xi(r),\tA^{(1)}\bigr]^\circ\chi''(G_\xi(r)).
\end{equation}
We have

\begin{lemma}\label{Lemma-ContOfComm} Assume Conditions~\ref{Cond:MC}
  and~\ref{Cond:MT}, with $n_0=1$. 
The maps $(0,\infty)\ni r\to\chi''(G_\xi(r))$ and  $(0,\infty)\ni r\to B''(r)$ are locally Lipschitz, and
furthermore: For any $0< \bar{r} <\infty$ there exists $L>0$ such that
the following holds
\[
\forall r,r'\in (0,\bar{r}]:  \quad \left\{\begin{aligned}
& \|\chi''(G_\xi(r))-\chi''(G_\xi(r'))\|\leq L |r-r'|\\
& \|B''(r)-B''(r')\|\leq L|r-r'|
\end{aligned}\right.
\]
where $L$ does not depend on $v$'s satisfying the constraint \eqref{v-Constraint}.
\end{lemma}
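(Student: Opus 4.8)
The plan is to reduce both claims to a single norm–Lipschitz estimate for the resolvents of $G_\xi(r)$ and then push it through the functional calculus. Fix $\bar r<\infty$ and $C'$ large enough that $G_\xi(r)+C'\ge 1$ for all $r>0$ (possible since, fiberwise, $G_\xi(r,\theta,w)=H(\xi-\bok(r,\theta,w))+\omega(\bok(r,\theta,w))\one_\cF\ge\Spectrum_0+m$, cf.\ \eqref{Sigma0}); crucially, $G_\xi(r)$ does not involve $v$. Writing $\eta=\xi-\bok(r,\theta,w)$, $\eta'=\xi-\bok(r',\theta,w)$, the coordinates \eqref{ChangeOfCoordinates} give $|\eta|=r$ and $|\eta-\eta'|=|\bok(r,\theta,w)-\bok(r',\theta,w)|=|r-r'|$ uniformly in $(\theta,w)$. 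Fiberwise $G_\xi(r,\theta,w)-G_\xi(r',\theta,w)=\Omega(\eta-\D\Gamma(k))-\Omega(\eta'-\D\Gamma(k))+(\omega(\bok(r,\theta,w))-\omega(\bok(r',\theta,w)))\one_\cF$; the $\omega$–term is bounded by $\|\nabla\omega\|_\infty|r-r'|$ (finite by Condition~\ref{Cond:MT}), and for the $\Omega$–difference I use the fundamental theorem of calculus together with \ref{Item:BasicDerOfOmega}: for $s_\Omega\le1$ the gradient $\nabla\Omega$ is bounded, while for $s_\Omega>1$ the coercivity $\Omega\ge C^{-1}\wgt{\cdot}^{s_\Omega}-C$ of \ref{Item:GrowthOfOmegas} forces $\nabla\Omega(\eta''-\D\Gamma(k))$ (any $\eta''$ on the segment) to be bounded relative to $H_0(\eta)+C\le C(G_\xi(r,\theta,w)+C')$. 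In either case one obtains, uniformly in $(\theta,w)$ and in $r,r'\in(0,\bar r]$,
\[
\bigl\|(G_\xi(r)-G_\xi(r'))(G_\xi(r)+C')^{-1}\bigr\|\le C(\bar r)\,|r-r'|,
\]
and together with the resolvent identity this gives $\|(G_\xi(r)+C')^{-1}-(G_\xi(r')+C')^{-1}\|\le C(\bar r)|r-r'|$. Since $\chi''\in C_0^\infty$, the Helffer–Sjöstrand formula (applied directly to $\chi''$ and $G_\xi(r)$, with an almost–analytic extension vanishing to high order at the real axis) turns this into $\|\chi''(G_\xi(r))-\chi''(G_\xi(r'))\|\le C(\bar r)|r-r'|$ on $(0,\bar r]$ — a bound manifestly independent of $v$.

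For $B''(r)$ I set $\mathcal C(r):=\ri[G_\xi(r),\tA\p{1}]^\circ$ and split $B''(r)-B''(r')$ into the three terms obtained by moving one $\chi''$-difference to the front, one to the back, and leaving $\chi''(G_\xi(r'))[\mathcal C(r)-\mathcal C(r')]\chi''(G_\xi(r'))$ in the middle. By \eqref{CommWithGxi} (cf.\ also Proposition~\ref{Prop:ExtendedC1} with $\ell=1$), $\mathcal C(r)$ is, fiberwise, $\D\Gamma(v\cdot\nabla\omega)-\D\Gamma(v)\cdot\nabla\Omega(\eta-\D\Gamma(k))-\phi(\ri a\coup)$ plus $\sigma_{i,j}\rho_{i,j}(\theta)\bov_\xi(r,\theta,w)\cdot(\nabla\omega(\bok(r,\theta,w))\one_\cF-\nabla\Omega(\eta-\D\Gamma(k)))$. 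Using $N\le C(G_\xi(r)+C')$ for the number operator and compact support of $\chi''$, each summand of $\mathcal C(r)$ multiplied by $\chi''(G_\xi(r))$ is bounded uniformly in $r\in(0,\bar r]$; the $v$–dependent factors $\D\Gamma(v\cdot\nabla\omega)$ and $\D\Gamma(v)$ are $N$–bounded with constants $\|v\cdot\nabla\omega\|_\infty$ and $\|v\|_\infty$, uniform under \eqref{v-Constraint}, and $\phi(\ri a\coup)$ is $(G_\xi(r)+C')^{1/2}$–bounded with a constant fixed once $\coup$ and the cutoffs $\rho_{i,j}$ are. Combined with the Lipschitz bound of the previous paragraph, the first two terms are $\le C|r-r'|$. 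In the middle term, the $r$–dependence of $\mathcal C(r)$ sits only in $\bov_\xi(r,\theta,w)$ (linear in $r$, of norm $\le\bar r$), $\nabla\omega(\bok(r,\theta,w))$ and $\nabla\Omega(\eta-\D\Gamma(k))$; since $|\bok(r,\theta,w)-\bok(r',\theta,w)|=|\eta-\eta'|=|r-r'|$, and $\nabla^2\omega$ (Condition~\ref{Cond:MT}) and $\nabla^2\Omega$ (\ref{Item:BasicDerOfOmega}, as $s_\Omega\le2$) are bounded, one gets $\|\nabla\omega(\bok(r,\cdot))-\nabla\omega(\bok(r',\cdot))\|_\infty\le C|r-r'|$ and $\|\nabla\Omega(\eta-\D\Gamma(k))-\nabla\Omega(\eta'-\D\Gamma(k))\|\le\|\nabla^2\Omega\|_\infty|r-r'|$ as bounded operators. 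Hence $\mathcal C(r)-\mathcal C(r')$ is a bounded operator of norm $\le C|r-r'|$ plus a term of the form $\D\Gamma(v)\cdot(\text{bounded of norm}\le C|r-r'|)$; sandwiching by $\chi''(G_\xi(r'))$, which absorbs $\D\Gamma(v)$ via $N\le C(G_\xi(r')+C')$, bounds the middle term by $C|r-r'|$ as well, with all constants uniform under \eqref{v-Constraint}. Local Lipschitz continuity on $(0,\infty)$ then follows by taking $\bar r$ slightly above a given point.

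The operator–calculus and direct–integral bookkeeping are routine. The two places that require genuine care are: the uniform relative bound $\|(G_\xi(r)-G_\xi(r'))(G_\xi(r)+C')^{-1}\|\le C(\bar r)|r-r'|$, which hinges on balancing the coercivity lower bound against the derivative upper bound on $\Omega$ throughout the whole range $s_\Omega\in[0,2]$ (so that $\nabla\Omega(\eta''-\D\Gamma(k))$ is $G_\xi$–form-bounded and the bounded/unbounded cases are treated separately); and the bookkeeping isolating the $v$–dependence of $\mathcal C(r)$ into the factors $\D\Gamma(v)$, $\D\Gamma(v\cdot\nabla\omega)$, $\phi(\ri a\coup)$, so that — since the genuinely $r$–dependent pieces involve only $\D\Gamma(v)$ (uniformly $N$–bounded under \eqref{v-Constraint}) together with the $v$–free Hessians, while $\D\Gamma(v\cdot\nabla\omega)$ and $\phi(\ri a\coup)$ enter only multiplied by the $v$–free Lipschitz factor $\chi''(G_\xi(r))-\chi''(G_\xi(r'))$ — the resulting constant $L$ depends on $v$ only through the bound \eqref{v-Constraint}.
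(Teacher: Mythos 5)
Your proof is correct and follows essentially the same route as the paper: reduce both claims to a relative Lipschitz bound of the form $\|(G_\xi(r)-G_\xi(r'))(\cdot)^{-1/2\ \text{or}\ -1}\|\le C(\bar r)|r-r'|$, push it through an almost-analytic (Helffer--Sj\"ostrand) extension of $\chi''$, and then telescope $B''(r)-B''(r')$ while tracking the $v$-dependence of each piece of the explicit commutator from \eqref{CommWithGxi}. The only differences are cosmetic — you take $(G_\xi(r)+C')^{-1}$ rather than $(H_0(\xi)+1)^{-1/2}$ as the reference weight and you observe the sharper identity $|\bok(r,\theta,w)-\bok(r',\theta,w)|=|r-r'|$ (the paper settles for $\le 2|r-r'|$) — and one small imprecision: your resolvent-difference bound is stated only at $z=-C'$ before invoking Helffer--Sj\"ostrand, whereas the formula needs the bound at all non-real $z$; this is immediate from the same relative bound together with $\|(G_\xi(r)+C')(G_\xi(r')-z)^{-1}\|\lesssim\wgt{z}/|\im z|$ and the second resolvent identity, so it is a matter of spelling it out rather than a gap.
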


\begin{proof} For the purpose of this proof we abbreviate
\[
\bok = \bok(r,\theta,w) \quad \textup{and} \quad \bok' = \bok(r',\theta,w).
\]

We estimate first for $r,r'\geq 0$, using that for multi-indices $\alpha$
with $1\leq |\alpha|\leq 2$, the functions $\partial^\alpha \omega$ are
bounded:
\begin{equation}\label{GlobalLip-omega}
|\omega(\bok) - \omega(\bok')|\leq C_1|\bok-\bok'| \quad \textup{and}
\quad |\nabla\omega(\bok)-\nabla\omega(\bok')|\leq C_2|\bok-\bok'|.
\end{equation}
Here $C_1,C_2>0$ are some constants independent of
$\bok,\bok'\in \RR^\nu$.  But
\begin{equation}\label{kminuskprime}
|\bok-\bok'| =
|r-r'|\bigl|\cos(\theta)\frac{\xi}{|\xi|} + \sin(\theta)\Pi_\xi(w)\bigr|
\leq 2|r-r'|,
\end{equation}
so that when inserting into \eqref{GlobalLip-omega} we get, uniformly in $\theta$ and $w$,
\begin{equation}\label{wminuswprime}
|\omega(\bok) - \omega(\bok')|\leq 2C_1|r-r'| \quad \textup{and} \quad
|\nabla\omega(\bok)-\nabla\omega(\bok')|\leq 2C_2|r-r'|.
\end{equation}

Next we compute as an identity between operators on $\cC$
\begin{equation*}
 \Omega\bigl(\xi-\bok-\D\Gamma(k)\bigr) - \Omega\bigl(\xi-\bok'-\D\Gamma(k)\bigr)
 = \int_0^1 \nabla\Omega\bigl(\xi - s\bok - (1-s)\bok' -
 \D\Gamma(k)\bigr)\D s\, (\bok - \bok').
\end{equation*}
Appealing to \eqref{kminuskprime}, \ref{Item:GrowthOfOmegas} and~\ref{Item:BasicDerOfOmega}, we arrive at
\begin{equation}\label{OmminusOmprime}
\bigl\|\bigl(\Omega(\xi-\bok-\D\Gamma(k)) -
\Omega(\xi-\bok'-\D\Gamma(k))\bigr)\bigl(1+\Omega(\xi-\D\Gamma(k))\bigr)^{-\frac12}
  \bigr\| \leq C_3|r-r'|,
\end{equation}
which holds uniformly in $0<r,r'\leq \bar{r}$,  $\theta$ and $w$.

Using
\eqref{wminuswprime} and \eqref{OmminusOmprime}, we estimate for  $z\in \CC$ with $\im(z)\neq 0$:
\[
\bigl\|\bigl((G_\xi(r)-z)^{-1} - (G_\xi(r')-z)^{-1}\bigr)\psi\bigr\| 
\leq C |r-r'| \wgt{z}|\im(z)|^{-2}\bigl\|(H_0(\xi)+1)^{-\frac12}\psi\bigr\|,
\]
where $C>0$ does not depend on $0<r,r'\leq \bar{r}$, nor on $z$.
Representing $\chi''(G_\xi(r))$ using an almost analytic extension of
$\chi''$
now yields the estimate
\begin{equation}\label{ContOfCommStep0}
\bigl\|\bigl(\chi''(G_\xi(r))-\chi''(G_\xi(r'))\bigr)\psi \bigr\| \leq
C|r-r'|\bigl\|(H_0(\xi)+1)^{-\frac12}\psi\bigr\|.
\end{equation}
Here one should read $(H_0(\xi)+1)^{-\frac12}$ as a
$(\theta,w)$-independent operator
acting on each fiber $\cF$ by the same operator. This in particular proves
that the map $r\to \chi''(G_\xi(r))$ is locally Lipschitz and that the claimed bound holds.

We proceed to estimate the difference between the commutators, cf.~\eqref{CommWithGxi},
\begin{equation}\label{ContOfCommStep1}
\begin{aligned}
\ri\bigl[G_\xi(r),\tA^{(1)}\bigr]^\circ  - &
\ri\bigl[G_\xi(r'),\tA^{(1)}\bigr]^\circ 
 = \int^\oplus_{\tS^{\nu-1}}\Bigl\{ \ri [H(\xi-\bok),A]^\circ - \ri [H(\xi-\bok'),A]^\circ \\
 & \quad +
\sigma_{i,j}\rho_{i,j}(\theta)\bov_\xi(r,\theta,w)\cdot\bigl( \nabla\omega(\bok)
- \nabla\Omega(\xi-\bok -
\D\Gamma(k)) \bigr)\\
&\quad - \sigma_{i,j}\rho_{i,j}(\theta)\bov_\xi(r',\theta,w)\cdot\bigl(\nabla\omega(\bok') - \nabla\Omega(\xi-\bok'-\D\Gamma(k))\bigr) \Bigr\}\,
\D\theta \D w.
\end{aligned}
\end{equation}
The above equation should be read as an identity between forms on $L^2(\tS^{\nu-1};\cD)$.
Appealing to \eqref{vectorfieldvxi} and \eqref{GlobalLip-omega} we find that
\begin{equation}\label{ContOfCommStep2}
\bigl|\rho_{i,j}(\theta)\bigl(\bov_\xi(r,\theta,w)\cdot \nabla\omega(\bok)
-\bov_\xi(r',\theta,\rho)\cdot\nabla\omega(\bok')\bigr)\bigr|\leq C|r-r'|,
\end{equation}
for some $C=C(\bar{r})>0$, uniformly in  $0<r,r'\leq \bar{r}$.
Using an argument similar to the one that gave \eqref{OmminusOmprime}
we conclude the bound
\[
\bigl\|\nabla\Omega(\xi-\bok-\D\Gamma(k))-\nabla\Omega(\xi-\bok'-\D\Gamma(k))\bigr\|\leq C_4|r-r'|,
\]
for some $C_4= C_4(\bar{r})>0$ and uniformly in $r,r'\geq 0$. In
conjunction with \eqref{vectorfieldvxi}, \ref{Item:GrowthOfOmegas} and~\ref{Item:BasicDerOfOmega},
 we arrive at the estimate 
\begin{equation}\label{ContOfCommstep2.5}
\bigl\|\bigl\{\bov_\xi(r,\theta,w)\cdot\nabla\Omega(\xi-\bok-\D\Gamma(k))-\bov_\xi(r',\theta,w)\cdot\nabla\Omega(\xi-\bok'-\D\Gamma(k))
\bigr\}(\Omega(\xi- \D\Gamma(k))^{-\frac12} \bigr\|\leq C|r-r'|,
\end{equation}
valid for some $C= C(\bar{r})>0$, uniformly in $r,r'\in [0,\bar{r}]$.

It remains to deal with the first $(\theta,w)$-independent term on the
right-hand side of \eqref{ContOfCommStep1}. Compute using
Proposition~\ref{Prop:C2}~\ref{Item:HisC1} in the sense of forms on $\cD$
\begin{equation*}\label{ContOfCommStep3}
\ri [H(\xi-\bok),A]^\circ - \ri [H(\xi-\bok'),A]^\circ =
\D\Gamma(v)\cdot\bigl\{\nabla\Omega(\xi-\bok-\D\Gamma(k))
-\nabla\Omega(\xi-\bok'-\D\Gamma(k))\bigr\}.
\end{equation*}
Arguing as for \eqref{OmminusOmprime} we get
\begin{equation}\label{ContOfCommStep4}
\bigl\|(H_0(\xi)+1)^{-\frac12}\bigl(\ri [H(\xi-\bok),A]^\circ -
\ri [H(\xi-\bok'),A]^\circ\bigr)(H_0(\xi)+1)^{-1}\bigr\|\leq C |r-r'|\|v\|_\infty,
\end{equation}
where $C>0$ does not depend on $0<r,r'\leq \bar{r}$ or on $v$.

Putting together \eqref{ContOfCommStep1}, \eqref{ContOfCommStep2},
\eqref{ContOfCommstep2.5} and \eqref{ContOfCommStep4} we
arrive at
\[
\bigl\|(H_0(\xi)+1)^{-\frac12}\bigl(\ri \bigl[G_\xi(r),\tA^{(1)}\bigr]^\circ -
\ri \bigl[G_\xi(r'),\tA^{(1)}\bigr]^\circ\bigr)(H_0(\xi)+1)^{-1}\bigr\|\leq C
|r-r'|(1+\|v\|_\infty),
\]
The lemma now follows from the bound above together with
\eqref{ContOfCommStep0}.
\end{proof}

\begin{lemma}\label{Lemma-AngularSep} Assume Conditions~\ref{Cond:MC}
  and~\ref{Cond:MT}, with $n_0=1$. For any integers  $i\in\{1,\dots,M\}$ and $j\in\{1,\dots,M_i\}$
we have
\[
\one_{\Theta_{i,j}} B''(R_i) = \one_{\Theta_{i,j}}  \Bigl(\int^{\oplus}_{\tS^{\nu-1}} \rho_{i,j}(\theta)
\bov_\xi(R_i,\theta,w)\cdot\nabla\omega(\bok(R_i,\theta,w))\,\D\theta \D w \Bigr) \chi''(G_\xi(R_i))^2.
\]
Furthermore, we have
\[
 \one_{\Theta_{i,j}} B''(R_i) \geq c''  \one_{\Theta_{i,j}} \chi''(G_\xi(R_i))^2,
\]
for some $c''>0$, which does not depend on $v$'s satisfying the constraint \eqref{v-Constraint}.

\end{lemma}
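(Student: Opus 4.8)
The plan is to exploit that every operator in sight is fibered over $\tS^{\nu-1}=[0,\pi]\times S^{\nu-2}$ — this holds for $G_\xi(R_i)$, hence for $\chi''(G_\xi(R_i))$, for $\ri[G_\xi(R_i),\tA\p{1}]^\circ$ by \eqref{CommWithGxi}, and thus for $B''(R_i)$ — while $\one_{\Theta_{i,j}}$ and $\rho_{i,j}$ are multiplication operators in the base variable $\theta$ and therefore commute with all such fibered operators. Both assertions then reduce to pointwise identities over the fibers with $\theta\in\Theta_{i,j}$, which is exactly where $\rho_{i,j}(\theta)=1$. On those fibers I would split the commutator \eqref{CommWithGxi} as
\[
\ri\bigl[G_\xi(R_i),\tA\p{1}\bigr]^\circ(\theta,w)=\sigma_{i,j}\,\bov_\xi(R_i,\theta,w)\cdot\nabla\omega\bigl(\bok(R_i,\theta,w)\bigr)\,\one_\cF+N(\theta,w),
\]
with the first summand a base-space multiplication operator, and reduce the statement to showing that sandwiching the remainder $N(\theta,w)=\ri[H(\xi-\bok(R_i,\theta,w)),A]^\circ-\sigma_{i,j}\bov_\xi(R_i,\theta,w)\cdot\nabla\Omega(\xi-\bok(R_i,\theta,w)-\D\Gamma(k))$ between two copies of $\chi''(G_\xi(R_i))$ gives zero.

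The crucial structural input is rotational invariance. Since $|\xi-\bok(R_i,\theta,w)|=R_i$ for all $(\theta,w)$, and since the crossing $\cross_{i,j}=R_iS^{\nu-1}\times\{e_{i,j}\}$ is a full $S^{\nu-1}$-sphere of eigenvalues of $H$, the crossing energy $e_{i,j}$ lies in $\sigma_\pp(H(\xi-\bok(R_i,\theta,w)))$ for every $(\theta,w)$; and by \ref{Item:RotInv} the whole spectrum of $H(\xi-\bok(R_i,\theta,w))$ is independent of $(\theta,w)$. Moreover, if $\lambda+\omega(\bok(R_i,\theta,w))\in\supp\chi''\subset\cE\p{1}(\xi)$ then $\lambda+\omega(\bok(R_i,\theta,w))<\Spectrum_0\p{2}(\xi)\le\Spectrum_0\p{1}(\xi-\bok(R_i,\theta,w))+\omega(\bok(R_i,\theta,w))$, so such a $\lambda$ lies strictly below $\Spectrum_\ess(\xi-\bok(R_i,\theta,w))$ and hence $(\xi-\bok(R_i,\theta,w),\lambda)\in\Spectrum_\iso$. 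Using \eqref{deltaprime2}, the splitting of $\Spectrum_\iso$ at modulus $R_i$ into the crossing $\cross_{i,j}$ and shells (whose annuli meet the sphere of radius $R_i$ only at isolated energies bounded away from $e_{i,j}$), and the choice of $\delta'$ (and of $\epsilon_\theta$), I expect the only eigenvalue of $G_\xi(R_i,\theta,w)$ inside $\supp\chi''$ to be $e_{i,j}+\omega(\bok(R_i,\theta,w))$. Consequently $\chi''(G_\xi(R_i))=c(\theta,w)\,Q$ fiberwise, where $Q(\theta,w):=\one_{\{e_{i,j}\}}(H(\xi-\bok(R_i,\theta,w)))$ is a finite-rank, continuous, fibered eigenprojection and $c(\theta,w):=\chi''(e_{i,j}+\omega(\bok(R_i,\theta,w)))$.

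With this reduction I would apply the virial theorem, Theorem~\ref{VirialTheorem}, to the fibered operator $G_\xi(R_i)$ over (local charts of) the open base $(0,\pi)\times S^{\nu-2}$, with conjugate operator $\tA\p{1}$ — whose base-space part is generated by the vector field $v=\sigma_{i,j}\rho_{i,j}\,\partial_\theta\in C_0^\infty$ — and with the real-analytic branch $(\theta,w)\mapsto e_{i,j}+\omega(\bok(R_i,\theta,w))$, which by the previous paragraph lies in $\sigma_\pp(G_\xi(R_i,\theta,w))$ for all $(\theta,w)$ and has eigenprojection $Q(\theta,w)$; the remaining hypotheses ($G_\xi(R_i)$ of class $C^1(\tA\p{1})$ with fibered commutator) were recorded just above \eqref{CommWithGxi}. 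Restricting the conclusion to $\theta\in\Theta_{i,j}$ gives
\[
Q\,\ri\bigl[G_\xi(R_i),\tA\p{1}\bigr]^\circ Q=\Bigl(v\cdot\nabla\bigl(e_{i,j}+\omega(\bok(R_i,\cdot))\bigr)\Bigr)Q=\sigma_{i,j}\,\rho_{i,j}(\theta)\,\bov_\xi(R_i,\theta,w)\cdot\nabla\omega\bigl(\bok(R_i,\theta,w)\bigr)\,Q ,
\]
and in particular $Q\,N\,Q=0$ there. Since $c$ is a base-space multiplication operator it commutes with $Q$ and with $\ri[G_\xi(R_i),\tA\p{1}]^\circ$ and $\chi''(G_\xi(R_i))=cQ$, so on these fibers $B''(R_i)=c^2\,Q\,\ri[G_\xi(R_i),\tA\p{1}]^\circ Q=\sigma_{i,j}\rho_{i,j}(\theta)\bov_\xi(R_i,\theta,w)\cdot\nabla\omega(\bok(R_i,\theta,w))\,c^2Q$ with $c^2Q=\chi''(G_\xi(R_i))^2$; reassembling the fibration and multiplying by $\one_{\Theta_{i,j}}$ yields the first displayed identity of the lemma (the factor $\sigma_{i,j}$ being the one carried by \eqref{CommWithGxi}). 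For the lower bound, on $\Theta_{i,j}$ one has $\rho_{i,j}=1$ and, by \eqref{PositiveRadialDerivative}, $\sigma_{i,j}\bov_\xi(R_i,\theta,w)\cdot\nabla\omega(\bok(R_i,\theta,w))\ge c_{i,j}>0$; since $\chi''(G_\xi(R_i))^2\ge0$ the identity gives $\one_{\Theta_{i,j}}B''(R_i)\ge c_{i,j}\,\one_{\Theta_{i,j}}\chi''(G_\xi(R_i))^2$, so $c'':=\min_{i,j}c_{i,j}>0$ works, and as each $c_{i,j}$ is an infimum of a function of $\omega$, $\xi$ and the geometric data only, $c''$ is independent of $v$ satisfying \eqref{v-Constraint}.

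The step I expect to be the real obstacle is the spectral localization of the second paragraph: one must combine rotational invariance of $H$ with the careful construction of the tori and of $\delta'$ (ultimately Theorem~\ref{Thm-thr} and the stratification theory behind $\Spectrum_\iso$) to guarantee that, over the fibers with $\theta\in\Theta_{i,j}$, the cutoff $\chi''(G_\xi(R_i))$ "sees" only the single level-crossing eigenvalue $e_{i,j}$ of $H(\xi-\bok(R_i,\cdot))$; once that is in place the virial theorem applied to the single shifted constant branch $e_{i,j}+\omega(\bok(R_i,\cdot))$ does the rest. Should several isolated eigenvalues of $H(\xi-\bok(R_i,\cdot))$ — necessarily constant in $(\theta,w)$, so that each branch $\lambda_\alpha+\omega(\bok(R_i,\cdot))$ again has $\theta$-derivative $\bov_\xi\cdot\nabla\omega(\bok)$ — enter $\supp\chi''$, one would instead apply Theorem~\ref{VirialTheorem} to each of the finitely many branches to get the diagonal blocks and separately argue that the off-diagonal blocks $Q_\alpha\,\ri[G_\xi(R_i),\tA\p{1}]^\circ Q_\beta$, $\alpha\ne\beta$, vanish.
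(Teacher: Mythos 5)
Your proposal is correct and follows essentially the same route as the paper's proof: reduce to the fibers over $\Theta_{i,j}\times S^{\nu-2}$, use \eqref{deltaprime2} and rotation invariance to conclude that $\chi''(G_\xi(R_i,\theta,w))$ only picks up the single crossing eigenvalue, apply Theorem~\ref{VirialTheorem} to the branch $e_{i,j}+\omega(\bok(R_i,\cdot))$ to kill the $\nabla\Omega$ and $[H,A]$ parts of \eqref{CommWithGxi}, and invoke \eqref{PositiveRadialDerivative} for the lower bound. If anything, your writing $\chi''(G_\xi(R_i))=c(\theta,w)\,Q$ is marginally more careful than the paper, which collapses $\chi''(G_\xi(R_i,\theta,w))$ directly to the eigenprojection, implicitly using that $\chi''$ evaluates to $1$ at the shifted crossing energy.
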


\begin{remark} The operator $\one_{\Theta_{i,j}}$ should be read
as the operator
$\int^\oplus_{\tS^{\nu-1}}\one_{\Theta_{i,j}}(\theta)\one_\cF \D\theta
\D w$. 
A similar notation is used for $\rho_{i,j}$ in the proof below.
\remarkQED\end{remark}

\begin{proof} We begin by writing
\begin{equation*}
\one_{\Theta_{i,j}}  \chi''(G_\xi(R_i)) = \int_{\tS^{\nu-1}} \one_{\Theta_{i,j}}(\theta)\chi''(G_\xi(R_i,\theta,w))\,
\D\theta \D w.
\end{equation*}
When $\theta$ is confined to the neighborhood
$\Theta_{i,j}$ we have
\begin{align*}
\chi''\bigl(G_\xi(R_i,\theta,w)\bigr) &= \chi''\bigl(H(\xi-\bok(R_i,\theta,w)) + \omega(\bok(R_i,\theta,w))\bigr)\\
& = \one_{E-\omega(\bok(R_i,\theta_{i,j},e_1))}\bigl(H(\xi-\bok(R_i,\theta,w)\bigr).
\end{align*}
This is due to the choice of $\delone$, cf. \eqref{deltaprime2},
which ensures that we can at most land on one energy level, namely on the
crossing $\cross_{i,j}$ sitting at height $E$ in energy-momentum space. By the virial theorem, cf. Theorem~\ref{VirialTheorem},
this implies that
\begin{align*}
&\int^\oplus_{\tS^{\nu-1}} \one_{\Theta_{i,j}}(\theta)\chi''\bigl(G_\xi(R_i,\theta,w)\bigr)
\ri\bigl[G_\xi(R_i),\tA^{(1)}\bigr]^\circ(\theta,w)\chi''\bigl(G_\xi(R_i,\theta,w)\bigr)\, \D\theta \D w\\
& = \int^\oplus_{\tS^{\nu-1}}  \one_{\Theta_{i,j}}(\theta)\chi''\bigl(G_\xi(R_i,\theta,w)\bigr)\\
&\qquad \qquad \quad \times \ri\bigl[\bigl(E -\omega(\bok(R_i,\theta_{i,j},e_1)\bigr)\one_\cF +
\omega(\bok(R_i,\theta,w))\one_\cF,\tA^{(1)}\bigr]^\circ(\theta,w)\\
& \qquad \qquad \quad \times \chi''\bigl(G_\xi(R_i,\theta,w)\bigr)\,
\D\theta \D w\\
& = \int^\oplus_{\tS^{\nu-1}}\one_{\Theta_{i,j}}(\theta)
\bov_\xi(R_i,\theta,w) \cdot\nabla\omega(\bok(R_i,\theta,w))
\chi''\bigl(G_\xi(R_i,\theta,w)\bigr)^2\,\D\theta \D w.
\end{align*}
This proves the first part. The second
statement clearly follows from the first together with \eqref{PositiveRadialDerivative}. One can choose
\[
c'':= \inf_{|\theta_{i,j}-\theta|\leq \epsilon_\theta} \sigma_{i,j}
\bov_\xi(R_i,\theta,w)\cdot\nabla\omega(\bok(R_i,\theta,w))
> 0,
\]
which is independent of $w$.
\end{proof}

We now fix $\chi'\in C_0^\infty((E-3\delta'/4,E+3\delta'/4))$, with
$\chi'=1$ on $[E-\delta'/2,E+\delta'/2]$, and write
\begin{equation}\label{Bprime}
B'(r) = \chi'(G_\xi(r))
\ri\bigl[G_\xi(r),\tA^{(1)}\bigr]^\circ \chi'(G_\xi(r)) =\chi'(G_\xi(r)) B''(r)\chi'(G_\xi(r)).
\end{equation}
The operator $B''(r)$ was defined in \eqref{Bdprime}.

\begin{proposition}\label{Prop-PosInTorus} Assume Conditions~\ref{Cond:MC}
  and~\ref{Cond:MT}, with $n_0=1$. There exists $c'>0$ and $0<\epsilon_r\leq\epsone_r$,
independent of $v$'s satisfying \eqref{v-Constraint},
such that for all $i\in\{1,\dots,M\}$ and $j\in\{1,\dots,M_i\}$, we have
\[
\forall r\in [R_i-\epsilon_r,R_i+\epsilon_r]:\quad \one_{\Theta_{i,j}}B'(r)
\geq c' \one_{\Theta_{i,j}}\chi'(G_\xi(r))^2.
\]
\end{proposition}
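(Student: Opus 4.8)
The plan is to derive the claimed positivity first at $r=R_i$ from Lemma~\ref{Lemma-AngularSep}, and then to propagate it to the radial neighbourhood $[R_i-\epsilon_r,R_i+\epsilon_r]$ by a soft perturbation argument powered by the uniform local Lipschitz bounds of Lemma~\ref{Lemma-ContOfComm}. The only genuinely new content is operator-inequality bookkeeping; the analytic substance already resides in those two lemmas.

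First I would fix the relevant constants, all of which are finite and, crucially, uniform over the finitely many pairs $(i,j)$ and over every admissible $v$ (i.e.\ $v$ obeying \eqref{v-Constraint}): put $c'':=\min_{i,j}c_{i,j}>0$ from \eqref{PositiveRadialDerivative} and Lemma~\ref{Lemma-AngularSep}; $c_0:=\max\{\|\chi'\|_\infty,\|\chi''\|_\infty\}$; $\bar r:=1+\max_{1\le i\le M}R_i$ and let $L$ be the associated Lipschitz constant of Lemma~\ref{Lemma-ContOfComm}, so that $\|B''(r)-B''(r')\|\le L|r-r'|$ and $\|\chi''(G_\xi(r))-\chi''(G_\xi(r'))\|\le L|r-r'|$ for all $r,r'\in(0,\bar r]$. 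I would also record the elementary but repeatedly used facts that $\one_{\Theta_{i,j}}$, being a fibrewise projection in the $\theta$ variable, commutes with the fibred operators $G_\xi(r)$, hence with $\chi'(G_\xi(r))$, $\chi''(G_\xi(r))$, $B''(r)$ and $B'(r)$, and that $\chi'(G_\xi(r))\chi''(G_\xi(r))=\chi'(G_\xi(r))$ since $\chi''\equiv1$ on $\supp\chi'$.

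Next, for $r$ with $|r-R_i|\le\epsilon_r$ — where $\epsilon_r\le\epsone_r$ is to be fixed at the end, in particular ensuring $R_i-\epsilon_r>0$ via \eqref{TorusHasHole} and $r\le\bar r$ — I would combine the two Lipschitz estimates. From $\|B''(r)-B''(R_i)\|\le L\epsilon_r$ and commutativity one gets $\one_{\Theta_{i,j}}B''(r)\ge\one_{\Theta_{i,j}}B''(R_i)-L\epsilon_r\one_{\Theta_{i,j}}$, and from $\|\chi''(G_\xi(R_i))^2-\chi''(G_\xi(r))^2\|\le2c_0L\epsilon_r$ one gets $\one_{\Theta_{i,j}}\chi''(G_\xi(R_i))^2\ge\one_{\Theta_{i,j}}\chi''(G_\xi(r))^2-2c_0L\epsilon_r\one_{\Theta_{i,j}}$. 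Inserting Lemma~\ref{Lemma-AngularSep} and then the second inequality yields, with $C:=L(1+2c_0c'')$,
\[
\one_{\Theta_{i,j}}B''(r)\ \ge\ c''\,\one_{\Theta_{i,j}}\chi''(G_\xi(r))^2-C\epsilon_r\,\one_{\Theta_{i,j}}.
\]
Compressing this inequality by $\chi'(G_\xi(r))$ on both sides, using $\chi'(G_\xi(r))B''(r)\chi'(G_\xi(r))=B'(r)$ from \eqref{Bprime} together with $\chi'(G_\xi(r))\chi''(G_\xi(r))^2\chi'(G_\xi(r))=\chi'(G_\xi(r))^2$ and the commutation of $\one_{\Theta_{i,j}}$, I obtain
\[
\one_{\Theta_{i,j}}B'(r)\ \ge\ (c''-C\epsilon_r)\,\one_{\Theta_{i,j}}\chi'(G_\xi(r))^2,
\]
and the proposition follows on choosing $\epsilon_r:=\min\{\epsone_r,\,c''/(2C)\}$ and $c':=c''/2>0$, both manifestly independent of the admissible $v$.

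I do not expect a real obstacle here: the statement is a formal consequence of Lemmas~\ref{Lemma-AngularSep} and~\ref{Lemma-ContOfComm}. The single point demanding care is the uniformity bookkeeping — one must check that every constant entering ($c''$, $c_0$, $L$, hence $C$ and $\epsilon_r$) is independent both of the finitely many tori and of all $v$ satisfying \eqref{v-Constraint}, since it is exactly this $v$-uniformity that will later permit the Mourre argument to construct $v$ after the radial thickness $\epsilon_r$ has been frozen. This is guaranteed by the corresponding uniformity clauses in the two lemmas, but it should be tracked explicitly through each step of the perturbation estimate.
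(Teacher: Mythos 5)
Your proof is correct and follows essentially the same route as the paper's: anchor the positivity at $r=R_i$ via Lemma~\ref{Lemma-AngularSep}, transfer it to $r$ near $R_i$ using the uniform Lipschitz bounds of Lemma~\ref{Lemma-ContOfComm}, then compress by $\chi'(G_\xi(r))$ and $\one_{\Theta_{i,j}}$ and shrink $\epsilon_r$ so that $c'=c''/2$ works. The only difference is cosmetic constant bookkeeping (your $L(1+2c_0c'')$ versus the paper's $L(1+c'')$, reflecting an explicit $\|\chi''\|_\infty$ factor in the bound $\|\chi''(G_\xi(R_i))^2-\chi''(G_\xi(r))^2\|$), which has no effect on the conclusion.
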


\begin{proof} Apply Lemmata~\ref{Lemma-ContOfComm}  and~\ref{Lemma-AngularSep}.
This yields the bounds
\begin{align*}
\one_{\Theta_{i,j}} B''(r)
 & \geq \one_{\Theta_{i,j}}B''(R_i) - L|R_i-r|\\
&  \geq c'' \one_{\Theta_{i,j}} \chi''(G_\xi(R_i))^2 - L|R_i-r|\\
& \geq c'' \one_{\Theta_{i,j}} \chi''(G_\xi(r))^2 - L(1+c'')|R_i-r|.
\end{align*}
Here $L$ is the Lipschitz constant coming from Lemma~\ref{Lemma-ContOfComm} 
applied with $\bar{r} = \max_{1\leq i\leq M}R_i+\epsone_r$.
Choose $\epsilon_r = \min\{\epsone_r,c''(2L(1+c''))^{-1})\}$.
Multiplying both sides first by $\chi'(G_\xi(r))$ from the left and the right, and subsequently by
$\one_{\Theta_{i,j}}$,
yields the result with $c' = c''/2$. Recall that all the operators are fibered,
i.e. they are functions of $\theta$ and $w$.
\end{proof}

\subsection{The Conjugate Operator}

The task at hand in this subsection is the construction of the vector field $v_\xi\colon\RR^\nu\to\RR^\nu$, used
to define the conjugate operators
\begin{equation}\label{A-xi}
a_\xi = \frac12 \bigl(v_\xi\cdot  \ri\nabla_k + \ri\nabla_k \cdot
v_\xi\bigr),\quad \textup{and} \quad A_\xi = \D\Gamma(a_\xi).
\end{equation}
These are operators of the form considered in Section~\ref{Chap-Regularity},
cf. \eqref{a-GeneralForm} and~\eqref{A-GeneralForm}.
The vector field $v_\xi$ will depend both on the total momentum $\xi$
and on the energy localization we choose. 

Our first ingredient is a partition
of unity in momentum space subordinate to an appropriately chosen open
covering of $\cK_{\cJ'}$. The first sets in the covering were
constructed in the previous section, namely the disjoint open tori
$\TT_{i,j}(\epsilon_\theta,\epsilon_r)$. If the dimension $\nu$ is $1$, $\xi=0$, or 
$\omega$ is constant (polaron model), there are no tori and the
construction simplifies. In particular, the considerations of the
previous two subsections are superfluous.

Define a set of momenta
\[
\cK  := \cK(\epsilon_r),
\]
which is a compact subset of $\RR^\nu$. Recall from \eqref{Kepsilon} the definition of the set $\cK(\epsilon_r)$.

If there are no tori, i.e. if $\cK_{\{E\}}^\cross = \emptyset$, then $\cK = \cK_{\cJ'}$ (and $\epsilon_r$ is not defined). Note that $\cJ'$ is still given by Lemma~\ref{Lem-delta-Choice}.
The key property of the set $\cK(\epsilon_r)$ is that it is separated from $\partial B(\xi,R_i)$ and $\partial B(\xi,R_\ell')$
by a distance at least $\epsilon_r$, cf.~\eqref{DistToBoundary}. To ensure this property also holds if there are no tori, we define in that case $\epsilon_r =  \min_{1\leq \ell\leq M'} d(\cK_{\cJ'},\partial B(\xi,R_\ell'))>0$,
where positivity follows from Remark~\ref{Rem-delta-NoTori}.
Here $R_\ell'$ lists the inner and outer radii of $\Annul$, for $(\Annul,\Shell)\in\shells'$, where $\shells'$ is the collection of mass shells that a state localized in $\cJ'$ can relax to, cf.~\eqref{Sprime}.
If furthermore $\Exc\neq\emptyset$, we pick $\epsilon_r$ possibly smaller such that $\epsilon_r < d(0,\cK_{\cJ'})$,
which is possible since in this case $0\not\in \cK_{\cJ'}$. See also \eqref{epsone-Choice}.

We proceed to pick a $0<\delta\leq \delone$ with the property that the choice
$\cJ=[E-\delta,E+\delta]$ ensures that
$\Spectrum_{\cJ}(\cK)$ is a graph, i.e. the projection onto momentum
space $\RR^\nu$ is injective. To do this we define
the energy distance between mass shells $(\cA,\Shell)\in\shells'$, away from $\epsilon_r/2$ neighborhoods
of their boundaries $\partial \Annul$: 
By compactness of $\Spectrum_{\cJ'}\subset \Spectrum_\iso$, there exist
$P>0$ and $\sigma>0$ such that
\[
\Spectrum_{\cJ'}(\cK)\subset\Spectrum_{\cJ'}\subset
\bigset{(p,\lambda) }{|p|\leq P \ \textup{and} \ \lambda\leq
  \Spectrum_\ess(p)-\sigma}.
\]
Hence we define for $(\Annul,\Shell)\in\shells'$
\[
\begin{aligned}
& \delta_{(\Annul,\Shell)} = \inf \bigset{ d(\Shell(p),\sigma_\pp(H(p))\backslash\{\Shell(p)\}) }{
p\in \Annul_{\epsilon_r,P}, \Shell(p)\leq\Spectrum_\ess(p)-\sigma },\\
&\textup{where} \ \Annul_{\epsilon_r,P} := \bigset{p\in\Annul}{ d(p,\partial A)\geq \epsilon_r/2}\cap \overline{B(0,P)}.
\end{aligned}
\]
Again, by compactness, $\delta =
\min_{(\Annul,\Shell)\in\shells'}\delta_{(\Annul,\Shell)}>0$. 
Recall that $\shells'$ denotes the finite collection of mass shells
available for scattering, cf. \eqref{Sprime}.

We split $\cK$ into compact components pertaining to shells $(\Annul,\Shell)\in\shells'$
\[
\cK_{(\Annul,\Shell)} =  \cK\cap (\Annul+\xi)=\bigset{ k\in\cK }{
  \xi-k \in \Annul }.
\]
We pick open neighborhoods, using \eqref{DistToBoundary} to verify the inclusion,
\[
\cV_{(\Annul,\Shell)} = \bigset{ k\in\RR^\nu }{
  d(k,\cK_{(\Annul,\Shell)})<\epsilon_r/2 }\subset \Annul+\xi
\]
which inherit the property of $\cK$ that the distance from $\Shell(\xi-k)$ to the nearest eigenvalue in
$\sigma_\pp(H(\xi-k))\backslash\{\Shell(\xi-k)\}$ is at least $\delta$.
In addition we remark that the sets $\cV_{(\Annul,\Shell)}$ are pairwise disjoint but overlap with possibly existing tori
$\TT_{i,j}(\epsilon_\theta,\epsilon_r)$, provided $\partial B(0,R_i)$ is a boundary of $\Annul$.
If $\Exc\neq\emptyset$, then $0\not\in\cV_{(\Annul,\Shell)}$ for any $(\Annul,\Shell)\in\shells'$.

To make a partition of unity we choose first $\varphi_{i,j}\in
C_0^\infty(\TT_{i,j}(2\epsilon_\theta,2\epsilon_r))$
such that $\varphi_{i,j} =1 $ on
$\TT_{i,j}(\epsilon_\theta,\epsilon_r)$. It will be convenient to use
a product construction such that
\[
\varphi_{i,j}(\bok(r,\theta,w)) = \rho_{i,j}(\theta)\trho_i(r),
\]
where $\rho_{i,j}$ was introduced in the beginning of the previous subsection and $\trho_i\in C_0^\infty((-2\epsilon_r,2\epsilon_r))$,
with $\trho_i = 1$ on $[R_i-\epsilon_r,R_i+\epsilon_r]$ and satisfying that $0\leq \trho_i\leq 1$.
Note that $0\not\in\supp{\varphi_{i,j}}$, cf.~\eqref{TorusHasHole}.

Using the smooth Urysohn lemma on the pairs
$\cK_{(\Annul,\Shell)}\subset \cV_{(\Annul,\Shell)}$, with $(\Annul,\Shell)\in\shells'$,  yields smooth functions
$\tvarphi_{(\Annul,\Shell)}$
with compact support in $\cV_{(\Annul,\Shell)}$ and equal to $1$ on
$\cK_{(\Annul,\Shell)}$. By a standard construction we can replace these by
(possibly) smaller functions $\varphi_{(\Annul,\Shell)}$ with the same two properties
and the additional property that
\begin{equation}\label{PartitionProperty}
\forall k\in \cK_{\cJ'} : \quad \sum_{i,j}\varphi_{i,j}(k) +
\sum_{(\Annul,\Shell)\in\shells'} \varphi_{(\Annul,\Shell)}(k)=1.
\end{equation}
If $\Exc\neq\emptyset$, we observe that $0\not\in\supp{\varphi_{(\Annul,\Shell)}}$ as required in order to deal with what is in this case an infrared singular coupling.

We can now construct our vector field
\begin{equation}\label{VectorField}
v_\xi = \sum_{i,j} v_{i,j}^\xi + \sum_{(\Annul,\Shell)\in\shells'}
v^\xi_{(\Annul,\Shell)},
\end{equation}
where
\begin{equation}\label{VectorFieldComp}
v^\xi_{i,j}(k) = \sigma_{i,j}\varphi_{i,j}(k)\bov_\xi(k) \quad
\textup{and} \quad
v^\xi_{(\Annul,\Shell)}(k)= \varphi_{(\Annul,\Shell)}(k) \frac{\nabla\Shell\p{1}(\xi;k)}{|\nabla\Shell\p{1}(\xi;k)|}.
\end{equation}
Recall from \eqref{EffectiveShell} and \eqref{vectorfieldvxi} the construction of the dispersion relation
$S\p{1}(\xi;\cdot)$ and the vector field $\bov_\xi$. The terms $v^\xi_{i,j}$ can only appear if $\xi\neq 0$
and $\nu\geq 2$, and one should then read $\bov_\xi(k) =
\bov_\xi(r,\theta,w)$, where $k = \bok(r,\theta,w)$.

We have thus finished  the construction of the
conjugate operator $A_\xi$, cf. \eqref{A-xi}.
We remark that the construction of $v_\xi$ is consistent with the constraint \eqref{v-Constraint}, cf. \eqref{vectorfieldvxi}.
The signs $\sigma_{i,j}$ and the vector field $\bov_\xi$ were introduced in
\eqref{SignOf-kdotnablaomega} and \eqref{vectorfieldvxi} respectively.

 In the following we make use of the notation 
\begin{equation}
A\ext_\xi = \D\Gamma\ext(a_\xi) = A_\xi\otimes\,\one_\cF + \one_\cF\otimes\, A_\xi\qquad
\textup{on} \ \cF\ext
\end{equation}
and observe the direct sum decomposition 
\begin{equation}\label{A-xi-ell}
A\ext_\xi = \oplus_{\ell=0}^\infty A\p{\ell}_\xi,\qquad\textup{where}
\ A\p{\ell}_\xi = A_\xi\otimes\one_{\cF\p{\ell}} + \one_\cF\otimes
\,\D\Gamma(a_\xi)_{|\cF\p{\ell}} \ \textup{on} \  \cF\otimes\cF\p{\ell}.
\end{equation}
In particular, $A\p{0}_\xi = A_\xi$. See also Subsection~\ref{sec-ext}
for notation and constructions pertaining to extended objects.

The following proposition is a Mourre estimate for $H\p{1}(\xi)$, with
conjugate operator $A\p{1}_\xi$, stating that a composite system consisting of a dressed matter particle and a free boson
at total momentum $\xi$, localized in energy in the interval $\cJ$, has non-zero breakup velocity.
This is the source of positivity in the Mourre estimate for $H$ we
prove in the following subsection.

\begin{proposition}\label{Prop-MEOne} Assume Conditions~\ref{Cond:MC}
  and~\ref{Cond:MT}, with $n_0=1$. Let $\chi\in C_0^\infty(\cJ)$, with $\chi\geq 0$. Then there exists $c>0$ such that
\[
 \chi\bigl(H\p{1}(\xi)\bigr)\ri \bigl[H\p{1}(\xi),A\p{1}_\xi\bigr]^\circ
 \chi\bigl(H\p{1}(\xi)\bigr)
\geq c \chi\bigl(H\p{1}(\xi)\bigr)^2.
\]
\end{proposition}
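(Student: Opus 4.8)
The plan is to prove the Mourre estimate for $H\p{1}(\xi)$ by decomposing the energy-localized phase space according to the partition of unity constructed in the previous subsection, and establishing positivity piece by piece. First I would use the fibered structure: recalling that $H\p{1}(\xi) = \int^\oplus_{\RR^\nu} H\p{1}(\xi;k)\,\D k$ with $H\p{1}(\xi;k) = H(\xi-k) + \omega(k)\one_\cF$, and that $A\p{1}_\xi = A_\xi\otimes\one_{\cF\p{1}} + \one_\cF\otimes\,a_\xi$ where $a_\xi$ acts on the boson variable $k$. By Proposition~\ref{Prop:ExtendedC1} (with $\ell=1$) we have the explicit commutator formula
\[
\ri\bigl[H\p{1}(\xi),A\p{1}_\xi\bigr]^\circ = \int^\oplus_{\RR^\nu}\Bigl\{\ri[H(\xi-k),A_\xi]^\circ + v_\xi(k)\cdot\bigl(\nabla\omega(k)\one_\cF - \nabla\Omega(\xi-k-\D\Gamma(k))\bigr)\Bigr\}\,\D k.
\]
The strategy is then to insert $\chi(H\p{1}(\xi))$ on both sides and split $\chi$ via the partition $\sum_{i,j}\varphi_{i,j} + \sum_{(\Annul,\Shell)\in\shells'}\varphi_{(\Annul,\Shell)} = 1$ on $\cK_{\cJ'}$; since $\chi$ is supported in $\cJ$, only $k\in\cK_\cJ\subset\cK_{\cJ'}$ contribute (states with energy in $\cJ$ that emit a boson of momentum $k$ must have $(\xi-k,\,\lambda-\omega(k))\in\sigma(H(\xi-k))$, forcing $k\in\cK_\cJ$ when the remainder lands in an isolated shell, plus the HVZ bound handles the rest).

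Next I would treat the two types of region separately. For the torus pieces $\varphi_{i,j}$ (present only when $\nu\ge 2$, $\xi\neq 0$): on $\TT_{i,j}(\epsilon_\theta,\epsilon_r)$, Lemma~\ref{Lem-delta-Choice} guarantees that if the remainder $(\xi-k,\lambda-\omega(k))$ lands on a crossing it lands only on $\cross_{i,j}$, at one fixed energy. On these tori, in the $(r,\theta,w)$ coordinates, the relevant operator $G_\xi(r)$ and the angular conjugate operator $\ta$ were analyzed, and Proposition~\ref{Prop-PosInTorus} gives $\one_{\Theta_{i,j}} B'(r) \ge c'\one_{\Theta_{i,j}}\chi'(G_\xi(r))^2$ for $r$ near $R_i$. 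I would transfer this fiberwise positivity back to $H\p{1}(\xi)$ using that on $\supp\varphi_{i,j}$ the relevant fiber of $H\p{1}(\xi)$ is unitarily equivalent (via the change of coordinates) to $G_\xi(r)$, with the term $\varphi_{i,j}(k)\sigma_{i,j}\bov_\xi(k)\cdot\nabla\omega(k)$ being exactly the positive contribution; the $\nabla\Omega$-term and the $[H(\xi-k),A_\xi]^\circ$-term combine with the $\ta$-commutator structure as already packaged in Proposition~\ref{Prop-PosInTorus}. The key subtlety is that away from the central $S^{\nu-2}$-sphere the fiber need not sit on a crossing, but then $H(\xi-k)$ acts on an energy interval away from $\sigma_\pp$-degeneracies — handled by a direct-integral version of the virial theorem Theorem~\ref{VirialTheorem} applied with the effective shell function, as in the shell case below.

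For the shell pieces $\varphi_{(\Annul,\Shell)}$: here $v^\xi_{(\Annul,\Shell)}(k) = \varphi_{(\Annul,\Shell)}(k)\,\nabla\Shell\p{1}(\xi;k)/|\nabla\Shell\p{1}(\xi;k)|$, which is well-defined and bounded below in norm on $\supp\varphi_{(\Annul,\Shell)}$ because $(\xi,E)\notin\thr\p{1}_\shells$ so $\nabla_k\Shell\p{1}(\xi;k)\neq 0$ there. On $\cV_{(\Annul,\Shell)}$ the eigenvalue $\Shell(\xi-k)$ of $H(\xi-k)$ is isolated from the rest of $\sigma_\pp(H(\xi-k))$ by at least $\delta$, so energy localization with $\chi\in C_0^\infty(\cJ)$, $\cJ = [E-\delta,E+\delta]$, picks out precisely the spectral projection onto $\Shell\p{1}(\xi;k)$. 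Then the virial theorem Theorem~\ref{VirialTheorem}, applied to the direct integral $\int^\oplus H\p{1}(\xi;k)\,\D k$ over (a neighborhood of) $\supp\varphi_{(\Annul,\Shell)}$ with conjugate operator $A\p{1}_\xi$ and with $E(k) = \Shell\p{1}(\xi;k)$, yields
\[
\one_{\{\Shell\p{1}(\xi;k)\}}(H\p{1}(\xi;k))\,\ri[H\p{1}(\xi),A\p{1}_\xi]^\circ(k)\,\one_{\{\Shell\p{1}(\xi;k)\}}(H\p{1}(\xi;k)) = v_\xi(k)\cdot\nabla_k\Shell\p{1}(\xi;k)\,\one_{\{\Shell\p{1}(\xi;k)\}}(\cdots),
\]
and on $\supp\varphi_{(\Annul,\Shell)}$ this equals $\varphi_{(\Annul,\Shell)}(k)|\nabla_k\Shell\p{1}(\xi;k)| \ge c_{(\Annul,\Shell)}\varphi_{(\Annul,\Shell)}(k) > 0$. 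Summing all pieces and using that $\chi^2 = \sum(\text{localizations})\chi^2$ modulo commutator errors that are controlled because $H\p{1}(\xi)$ is $C^1(A\p{1}_\xi)$, one obtains $\chi(H\p{1}(\xi))\ri[H\p{1}(\xi),A\p{1}_\xi]^\circ\chi(H\p{1}(\xi)) \ge c\,\chi(H\p{1}(\xi))^2$ with $c = \min$ over the finitely many regions of the individual constants.

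The main obstacle I anticipate is the bookkeeping at the \emph{overlap} of the tori $\TT_{i,j}$ and the shell neighborhoods $\cV_{(\Annul,\Shell)}$ sharing the boundary sphere $\partial B(0,R_i)$: there the partition functions $\varphi_{i,j}$ and $\varphi_{(\Annul,\Shell)}$ both fire, and one must check that the two positivity mechanisms (angular-derivative positivity from Proposition~\ref{Prop-PosInTorus} versus gradient-of-effective-shell positivity from the virial theorem) have \emph{compatible signs} so they do not cancel. This is where the careful choice of $\sigma_{i,j}$ and the direction of $\bov_\xi$ versus $\nabla\Shell\p{1}$ matters, and where the radial-approach property \eqref{DistToBoundary} — that $\cK(\epsilon_r)$ meets the spheres only through the radial faces of the tori — is used to ensure that on the overlap the boson momentum is genuinely radial-outgoing relative to the crossing sphere, matching the sign of the shell gradient. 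A secondary technical point is making the virial theorem applicable: one must verify the fibering hypothesis $\ri[H\p{1}(\xi),A\p{1}_\xi]^\circ = \int^\oplus \ri[H\p{1}(\xi),A\p{1}_\xi]^\circ(k)\,\D k$, which is exactly the content of Proposition~\ref{Prop:ExtendedC1}, and the measurability of $k\mapsto\one_{\{\Shell\p{1}(\xi;k)\}}(H\p{1}(\xi;k))$, which follows from real analyticity of the shells (Condition~\ref{Cond:MT}~\ref{Item:RealAnal}) and the discussion in Appendix~\ref{App-Fiber}.
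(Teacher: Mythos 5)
Your overall decomposition (torus pieces via Proposition~\ref{Prop-PosInTorus}, shell pieces via Theorem~\ref{VirialTheorem}, then sum) is exactly the paper's strategy, and your observation that $\chi(H\p{1}(\xi;k))\neq 0$ forces $k\in\cK_{\cJ'}$ by HVZ is the key entry point. But there are two places where the actual mechanism differs from what you describe, and recognizing them simplifies the argument rather than complicating it.

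First, the ``main obstacle'' you anticipate — sign cancellation on the overlap $\supp\varphi_{i,j}\cap\supp\varphi_{(\Annul,\Shell)}$ — does not arise, and you should not be reconciling two competing positivity mechanisms there. The paper splits the $k$-integral not with the smooth partition functions $\varphi$ but with the \emph{sharp indicators} $\one_{\TT_{i,j}(\epsilon_\theta,\epsilon_r)}(k)$ and $\one_{\cK_{(\Annul,\Shell)}}(k)$, which are disjointly supported by the construction of $\cK(\epsilon_r)$ in \eqref{Kepsilon}; being multiplication operators in $k$ alone they commute with every fibered quantity, so no localization errors appear and no appeal to $C^1(A\p{1}_\xi)$ is needed for this step. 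On a given $\cK_{(\Annul,\Shell)}$, the torus components $v^\xi_{i,j}$ do contribute (since $\supp\varphi_{i,j}$ overshoots the small torus), but Theorem~\ref{VirialTheorem} gives their contribution as $v_{i,j}^\xi(k)\cdot\nabla_k\Shell\p{1}(\xi;k)$, and rotation invariance makes $\nabla\Shell(\xi-k)$ parallel to $\xi-k$, hence orthogonal to the tangential vector $\bov_\xi$; so this equals $\sigma_{i,j}\varphi_{i,j}(k)\,\bov_\xi(k)\cdot\nabla\omega(k)\geq 0$, with the sign fixed in advance by the choice \eqref{SignOf-kdotnablaomega} of $\sigma_{i,j}$. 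There is nothing to check: the shell gradient simply drops out of the cross term.

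Second, your plan for the torus regions ``away from the central $S^{\nu-2}$-sphere'' — applying a fiberwise virial theorem with ``the effective shell function'' — would fail. For $r$ near but not equal to $R_i$, the energy window $\cJ$ (of fixed width $2\delta$) generically captures \emph{several} eigenvalue branches of $H(\xi-\bok(r,\theta,w))$ emerging from the crossing, so there is no single $C^1$ function $E(k)$ to feed Theorem~\ref{VirialTheorem}, and the individual branch gradients can degenerate as the crossing is approached. The paper circumvents exactly this by establishing positivity only at $r=R_i$ (Lemma~\ref{Lemma-AngularSep}, where the crossing energy is constant in $\theta$ and the virial theorem gives $\bov_\xi\cdot\nabla\omega$ on the nose) and then propagating to $|r-R_i|\leq\epsilon_r$ by the Lipschitz bound of Lemma~\ref{Lemma-ContOfComm} — this propagation-from-the-crossing step is the part of Proposition~\ref{Prop-PosInTorus} you should invoke as a black box rather than try to replace with the virial theorem.

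With those two corrections the proof goes through as the paper presents it: the change of variables $\cU$ identifies the torus fibers with $G_\xi(r)$, Proposition~\ref{Prop-PosInTorus} gives \eqref{EstInTorus}, the virial theorem on $\cK_{(\Annul,\Shell)}$ gives \eqref{EstOnShell}, and summing the disjoint indicators recovers $\chi(H\p{1}(\xi))^2$.
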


\begin{remark} It is in fact only the $\one_\cF\otimes\, a_\xi$ part of
  $A_\xi\p{1}$ which is important for positivity. In fact, the
  proposition remains true if $A_\xi\p{1}$ is replaced by 
$\tA\otimes\one_\hph + \one_\cF\otimes\, a_\xi$, where $\tA=
\D\Gamma(\ta)$, with $\ta =\ri \{\tv\cdot \nabla_k + \nabla_k\cdot \tv\}/2$ and $\tv\in C_0^\infty(\RR^\nu)$ 
satisfying the constraint \eqref{v-Constraint}. In particular the choice $\tA = 0$ works.  
\remarkQED\end{remark}

\begin{proof}
Let
\[
B(k) = \chi\bigl(H\p{1}(\xi;k)\bigr)\ri \bigl[H\p{1}(\xi),A\p{1}_\xi\bigr]^\circ(k)  \chi\bigl(H\p{1}(\xi;k)\bigr)
\]
and write
\begin{align*}
B & := \chi\bigl(H\p{1}(\xi)\bigr)\ri \bigl[H\p{1}(\xi),A\p{1}_\xi\bigr]^\circ \chi\bigl(H\p{1}(\xi)\bigr)\\
& = \int^\oplus_{\RR^\nu} B(k) \, \D k\\
  & = \sum_{i,j}\int^\oplus_{\RR^\nu} \one_{\TT_{i,j}(\epsilon_\theta,\epsilon_r)}(k) B(k)\, \D k+ \sum_{(\Annul,\Shell)\in \shells'}\int^\oplus_{\RR^\nu} \one_{\cK_{(\Annul,\Shell)}}(k) B(k) \, \D k,
\end{align*}
where the summation over $i$ and $j$ is understood to be over $i=1,\dotsc,M$ and $j=1,\dotsc,M_i$.

We split the operator $A\p{1}_\xi$ into the sum
\[
A\p{1}_\xi = A_\xi\otimes \one_\hph + \sum_{i,j} \one_\cF\otimes \,a_{i,j} + \sum_{(\Annul,\Shell)\in\shells'}\one_\cF\otimes\, a_{(\Annul,\Shell)},
\]
corresponding to the construction of $v_\xi$,
cf.~\eqref{VectorField} and \eqref{A-xi-ell}. This induces a decomposition of $B(k)$
using Lemma~\ref{Lemma:SumsOfAs} and Proposition~\ref{Prop:ExtendedC1}:
\[
B(k) = B_0(k) + \sum_{i,j} B_{i,j}(k) + \sum_{(\Annul,\Shell)\in\shells'} B_{(\Annul,\Shell)}(k),
\]
where
\[
\begin{aligned}
&B_0(k) = \chi\bigl(H^{(1)}(\xi;k)\bigr)
\ri \bigl[H\p{1}(\xi),A_\xi\otimes\one_\hph\bigr]^\circ(k)  \chi\bigl(H\p{1}(\xi;k)\bigr), \\
&B_{i,j}(k) = \chi\bigl(H^{(1)}(\xi;k)\bigr)
\ri \bigl[H\p{1}(\xi),\one_\cF\otimes\, a_{i,j}\bigr]^\circ(k)  \chi\bigl(H\p{1}(\xi;k)\bigr), \\
&B_{(\Annul,\Shell)}(k) =\chi\bigl(H\p{1}(\xi;k)\bigr)
\ri \bigl[H\p{1}(\xi),\one_\cF\otimes \,a_{(\Annul,\Shell)}\bigr]^\circ(k)  \chi\bigl(H\p{1}(\xi;k)\bigr).
\end{aligned}
\]
Observe that that $k\to B_{i,j}(k)$ has support in the torus $\TT_{i,j}(2\epsilon_\theta,2\epsilon_r)$
and $k\to B_{(\Annul,\Shell)}(k)$ has support in $\cV_{(\Annul,\Shell)}$.
Using these support properties we compute
\[
\int^\oplus_{\RR^\nu} \one_{\TT_{i,j}(\epsilon_\theta,\epsilon_r)}(k) B(k) \, \D k
= \int^\oplus_{\RR^\nu} \one_{\TT_{i,j}(\epsilon_\theta,\epsilon_r)}(k)\bigl( B_0(k) +B_{i,j}(k)\bigr) \, \D k.
\]
Write $\tB_{i,j}(k) = B_0(k)  + B_{i,j}(k)$ and recall the notation
$\tS^{\nu-1}=[0,\pi]\times S^{\nu-2}$.  Observe that 
\[
\int^{\oplus}_{\tS^{\nu-1}}\tB_{i,j}(\bok(r,\theta,w))\D\theta\D w = \chi(G_\xi(r))B'(r)\chi(G_\xi(r)),
\]
where $B'(r)$ was introduced in \eqref{Bprime} and $G_\xi(r)$ in \eqref{Gxi-r} and~\eqref{Gxi-rtw}. 
Here we used that $\chi\chi' = \chi$. Denote by $\cU\colon
L^2(\RR^\nu;\cF)\to L^2([0,\infty)\times \tS^{\nu-1};\cF)$ the unitary
operator defined by
\[
(\cU\psi)(r,\theta,w) = r^{\frac{\nu-1}2}\sqrt{\sin(\theta)}\,\psi(\bok(r,\theta,w)).
\] 
We have obtained the identity
\begin{align*}
&\int^\oplus_{\RR^\nu} \one_{\TT_{i,j}(\epsilon_\theta,\epsilon_r)}(k) B(k)\, \D k\\
 & \quad = \cU^*\left(\int^\oplus_{[0,\infty)} \one_{(R_i-\epsilon_r,R_i+\epsilon_r)}(r)
\left\{\int^\oplus_{\tS^{\nu-1}} \one_{\Theta_{i,j}}(\theta)\tB_{i,j}(\bok(r,\theta,w))\, \D \theta \D w \right\} \D r\right)\cU\\
& \quad = \cU^*\left( \int^\oplus_{[0,\infty)} \one_{(R_i-\epsilon_r,R_i+\epsilon_r)}(r)
\one_{\Theta_{i,j}}\chi(G_\xi(r))B'(r)\chi(G_\xi(r)) \,\D r\right)\cU.
\end{align*}
From Proposition~\ref{Prop-PosInTorus} we thus find a $c'>0$,
independent of $i$ and $j$,  such  that
\begin{equation}\label{EstInTorus}
\int^\oplus_{\RR^\nu} \one_{\TT_{i,j}(\epsilon_\theta,\epsilon_r)}(k) B(k)\, \D k 
\geq c' \int^\oplus_{\RR^\nu} \one_{\TT_{i,j}(\epsilon_\theta,\epsilon_r)}(k)\chi\bigl(H\p{1}(\xi;k)\bigr)^2  \D k.
\end{equation}

To deal with the remaining contributions we compute, using the support properties of $B_{(\Annul,\Shell)}$,
\[
\one_{\cK_{(\Annul,\Shell)}}(k) B(k) = \one_{\cK_{(\Annul,\Shell)}}(k)
\Bigl\{ B_0(k) + B_{(\Annul,\Shell)}(k) + \sum_{i,j} B_{i,j}(k)\Bigr\}.
\]
Since $\one_{\cK_{(\Annul,\Shell)}}(k) \chi(H\p{1}(\xi;k)) = \one_{\cK_{(\Annul,\Shell)}}(k) \one_{\{\Shell\p{1}(\xi;k)\}}(H\p{1}(\xi;k))$
we can apply the virial theorem, cf. Theorem~\ref{VirialTheorem}, to compute for a.e. $k = \bok(r,\theta,w)\in \cK_{(\Annul,\Shell)}$
\[
\begin{aligned}
& B_0(k) = 0,\\
& B_{(\Annul,\Shell)}(k)= \vphi_{(\Annul,\Shell)}(k) |\nabla_k\Shell\p{1}(\xi;k)|\geq  c'_{(\Annul,\Shell)}\vphi_{(\Annul,\Shell)}(k),\\
& B_{i,j}(k) = \sigma_{i,j}\vphi_{i,j}(k) \bov_\xi(r,\theta,w)\cdot \nabla\omega(\bok(r,\theta,w))\geq c_{i,j}\vphi_{i,j}(k).
\end{aligned}
\]
Here the constants $c_{i,j}$ are defined in \eqref{PositiveRadialDerivative},
\[
c'_{(\Annul,\Shell)} := \inf_{k\in  \cK_{(\Annul,\Shell)}}   |\nabla_k\Shell\p{1}(\xi;k)| >0
\]
and positivity follows from $\cK_{(\Annul,\Shell)}$ being compact and $\cJ$ being chosen to not contain threshold energies.
Summing up, cf. \eqref{PartitionProperty} and recalling that for distinct shells $\one_{\cK_{(\Annul,\Shell)}}\vphi_{(\Annul',\Shell')} = 0$, we get
\begin{equation}\label{EstOnShell}
\int^\oplus_{\RR^\nu} \one_{\cK_{(\Annul,\Shell)}}(k) B(k) \,\D k 
\geq c_{(\Annul,\Shell)} \int^\oplus_{\RR^\nu}\one_{\cK_{(\Annul,\Shell)}}(k) \chi\bigl(H\p{1}(\xi;k)\bigr)^2\D k,
\end{equation}
with $c_{(\Annul,\Shell)} = \min\{c'_{(\Annul,\Shell)},\min_{i,j} c_{i,j}\}$.

Combining \eqref{EstInTorus} and \eqref{EstOnShell} we finally get
\begin{align*}
B & \geq c\int^\oplus_{\RR^\nu} \Bigl\{\sum_{i,j}\one_{\TT_{i,j}(\epsilon_\theta,\epsilon_r)}(k) +
 \sum_{(\Annul,\Shell)\in\shells'} \one_{\cK_{(\Annul,\Shell)}}(k)\Bigr\}\chi\bigl(H\p{1}(\xi;k)\bigr)^2\,\D k\\
 & = c\chi\bigl(H\p{1}(\xi)\bigr)^2,
\end{align*}
with $c= \min\{c',\min_{(\Annul,\Shell)\in\shells'} c_{(\Annul,\Shell)}\}$.
\end{proof}

\subsection{The Mourre Estimate}

We will make use of a geometric partition of unity,
introduced in \cite{DGe1}, and since used frequently to perform localization arguments in non-relativistic QFT
\cite{Am,DGe2,FGSch2,GGM2,MAHP}.

The input is a pair of smooth functions $j_0,j_\infty\colon\RR^\nu\to [0,1]$,
with the properties $j_0^2 + j_\infty^2 = 1$ and $j_0 = 1$ on $\set{x\in\RR^\nu}{\abs{x}\leq 1}$ and
$j_\infty=1$ on $\set{x\in\RR^\nu}{\abs{x}\geq 2}$. We scale these functions and define for $R>0$
localizations $j_\#^R(x) = j_\#(x/R)$ inside and outside of balls with a radius scaling like $R$.
Reading now  $x= \ri\nabla_k$ these operators become bounded self-adjoint operators on $\hph$
and we form the vector operator $j^R = (j_0^R,j_\infty^R)\colon \hph\to \hph\oplus\hph$.
It satisfies $(j^R)^*j^R = \one_\hph$. The operator $\Gamma(j^R)$ now maps
$\cF = \Gamma(\hph)\to\Gamma(\hph\oplus\hph)$, and composing with the
canonical identification operator 
$U\colon \Gamma(\hph\oplus\hph) \to \Gamma(\hph)\otimes\Gamma(\hph) = \cF\ext$ we get
\[
\cGam(j^R) := U\Gamma(j^R) \colon \cF\to\cF\ext.
\]
The operator is a \myquote{partition of unity} in that it is an isometry, i.e. $\cGam(j^R)^*\cGam(j^R) = \one_\cF$.

\begin{lemma}\label{cor:locerror} Assume Conditions~\ref{Cond:MC}
  and~\ref{Cond:MT}, with $n_0=1$.
Let $f\in C_0^\infty(\RR)$. Then
  \begin{enumerate}[label=\textup{(\roman*)},ref=(\roman*)]
  \item \label{item:loc1} $\check\Gamma(j^R)f(H(\xi))=f(H\ext(\xi))\check\Gamma(j^R)+o_R(1)$
  \item \label{item:loc2} $\check\Gamma(j^R) f(H(\xi)) \ri [H(\xi),A_\xi]^\circ f(H(\xi)) 
    =f(H\ext(\xi)) \ri [H\ext(\xi),A\ext_\xi]^\circ f(H\ext(\xi))\check\Gamma(j^R)+o_R(1)$
  \end{enumerate}
\end{lemma}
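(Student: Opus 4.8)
The plan is to prove the two assertions of Lemma~\ref{cor:locerror} by exploiting the now-standard machinery of geometric partitions of unity in Fock space, as developed in \cite{DGe1} and used in \cite{MAHP}. The key structural input is that $\cGam(j^R)$ intertwines the free dynamics up to errors that vanish as $R\to\infty$, combined with the fact that the interaction $V$ is localized (it involves the coupling function $\coup$ and the creation/annihilation operators at the fixed point $y=0$), and the observation -- already encoded in Proposition~\ref{Prop:ExtendedC1} -- that $\ri[H\ext(\xi),A\ext_\xi]^\circ$ has exactly the form of a second-quantized object extending $\ri[H(\xi),A_\xi]^\circ$.

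First I would recall, or prove directly from the functorial identities, the basic intertwining relations for $\cGam(j^R)$: for a one-particle operator $b$ on $\hph$ one has $\cGam(j^R)\D\Gamma(b) - \D\Gamma\ext(b)\cGam(j^R) = \cGam(\mathrm{ad}\text{-type error})$, and the point is that with $j^R = (j_0^R,j_\infty^R)$ the commutators $[j_\#^R, b]$ for $b\in\{\omega, \ri\nabla_k, a_\xi\}$ are $O(1/R)$ (for $\omega$ using \ref{Item:RealAnal} and the boundedness of $\partial^\alpha\omega$, for $a_\xi$ using that $v_\xi\in C_0^\infty$). This gives $\cGam(j^R)\D\Gamma(\omega) = \D\Gamma\ext(\omega)\cGam(j^R) + o_R(1)$ as forms on a suitable core, and similarly for $\D\Gamma(a_\xi)$. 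For the field operator $\phi(\coup)$ one uses that $\cGam(j^R)a^\#(\coup)\cGam(j^R)^* = a^\#(j_0^R\coup)\otimes\one + \one\otimes a^\#(j_\infty^R\coup)$, and since $j_\infty^R\coup\to 0$ in $L^2$ (as $\coup\in L^2$), the ``infinity'' piece is $o_R(1)$ while $j_0^R\coup\to\coup$; thus $\cGam(j^R)\phi(\coup) = (\phi(\coup)\otimes\one)\cGam(j^R) + o_R(1)$. Likewise, on $\cK$ the operators $\Omega(D_y)$ and the $\D\Gamma(k)$-dependence only see the bosons through $\D\Gamma(k)$, for which the same intertwining argument applies. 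Assembling these pieces, and using that $H\ext(\xi)$ acts on the first Fock factor exactly as $H(\xi)$ while the second factor carries only $\D\Gamma(\omega)$, yields $\cGam(j^R)H(\xi) = H\ext(\xi)\cGam(j^R) + o_R(1)$ in the sense of sesquilinear forms on the cores, and hence -- after a routine resolvent argument, writing $(H\ext(\xi)-z)^{-1}\cGam(j^R) - \cGam(j^R)(H(\xi)-z)^{-1} = (H\ext(\xi)-z)^{-1}[\cGam(j^R)H(\xi)-H\ext(\xi)\cGam(j^R)](H(\xi)-z)^{-1}$ -- the analogous statement with resolvents. Part \ref{item:loc1} then follows by approximating $f\in C_0^\infty(\RR)$ with a Helffer--Sj\"ostrand (almost analytic extension) representation $f(H) = \tfrac{1}{\pi}\int \dbar\tf(z)(H-z)^{-1}\,dx\,dy$ and integrating the resolvent estimate.

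For part \ref{item:loc2}, the strategy is to insert part \ref{item:loc1} and then handle the commutator term. Write $\ri[H(\xi),A_\xi]^\circ$ explicitly via Proposition~\ref{Prop:C2}~\ref{Item:HisC1} as $\D\Gamma(v_\xi\cdot\nabla\omega) - \D\Gamma(v_\xi)\cdot\nabla\Omega(\xi-\D\Gamma(k)) - \phi(\ri a_\xi\coup)$, and $\ri[H\ext(\xi),A\ext_\xi]^\circ$ via Proposition~\ref{Prop:ExtendedC1} as $\D\Gamma\ext(v_\xi\cdot\nabla\omega) - \D\Gamma\ext(v_\xi)\cdot\nabla\Omega(\xi-\D\Gamma\ext(k)) - \phi(\ri a_\xi\coup)\otimes\one_\cF$. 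Each of these three pieces is of a type already treated: $\D\Gamma(v_\xi\cdot\nabla\omega)$ is the second quantization of a bounded one-particle operator, so $\cGam(j^R)$ intertwines it with $\D\Gamma\ext(\cdot)$ up to $O(1/R)$; the term $\D\Gamma(v_\xi)\cdot\nabla\Omega(\xi-\D\Gamma(k))$ requires a bit more care because $\nabla\Omega(\xi-\D\Gamma(k))$ is $H_0(\xi)^{1/2}$-bounded and not bounded, but sandwiched between the energy cutoffs $f(H(\xi))$ -- which by \ref{item:loc1} can be moved across to become $f(H\ext(\xi))$ -- it is controlled, and the intertwining error is again $o_R(1)$ using the commutator bound \eqref{RACommBound} together with $[\D\Gamma(k), j_\#^R]=0$ (since $j_\#^R$ is a function of $\ri\nabla_k$ only... actually $j_\#^R(\ri\nabla_k)$ does not commute with $\D\Gamma(k)$, so one uses instead the $O(1/R)$ bound on $[j_\#^R, k]$ paired with the $H_0$-relative bound); and the field term $\phi(\ri a_\xi\coup)$ is handled exactly as $\phi(\coup)$ above, using $a_\xi\coup\in L^2$ (from the $C^1$-condition, cf.~\ref{Item:CkCoup},~\ref{Item:ExcCase}). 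Putting it together: $\cGam(j^R)f(H(\xi))\ri[H(\xi),A_\xi]^\circ f(H(\xi)) = f(H\ext(\xi))\cGam(j^R)\ri[H(\xi),A_\xi]^\circ f(H(\xi)) + o_R(1) = f(H\ext(\xi))\ri[H\ext(\xi),A\ext_\xi]^\circ \cGam(j^R) f(H(\xi)) + o_R(1) = f(H\ext(\xi))\ri[H\ext(\xi),A\ext_\xi]^\circ f(H\ext(\xi))\cGam(j^R) + o_R(1)$, where in the last step \ref{item:loc1} is applied once more, now absorbing the bounded operator $f(H\ext(\xi))\ri[H\ext(\xi),A\ext_\xi]^\circ$ which maps $\cH\ext$ boundedly into itself (again because of the energy cutoffs).

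The main obstacle I anticipate is the careful bookkeeping of the unbounded factor $\nabla\Omega(\xi-\D\Gamma(k))$ (equivalently the growth $s_\Omega\le 2$) appearing in the commutator: one must verify that all the $o_R(1)$ errors really are operator-norm small after insertion of the spectral cutoffs $f(H(\xi))$, $f(H\ext(\xi))$, and this requires combining the relative bounds \eqref{RACommBound}--\eqref{R0ACommBound} with the scale-of-spaces estimates and the fact that $f(H(\xi))$ maps $\cH$ into every $\cH_s$. A secondary technical point is to make precise the sense in which $\cGam(j^R)$ preserves the relevant cores and domains so that the formal form computations above are legitimate; this is standard (cf.~\cite{DGe1,MAHP}) and I would simply cite it. Everything else is a routine almost-analytic-extension argument once the uniform-in-$R$ relative bounds are in place.
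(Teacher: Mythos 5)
Your high-level plan matches the paper's: establish intertwining relations for $\cGam(j^R)$ with the building blocks of $H(\xi)$ and of $\ri[H(\xi),A_\xi]^\circ$, get a norm estimate on $[\cGam(j^R),R(z)]$, run it through an almost-analytic extension to get \ref{item:loc1}, and then split the commutator into its three constituent pieces (the $\D\Gamma(v_\xi\cdot\nabla\omega)$ piece, the $\D\Gamma(v_\xi)\cdot\nabla\Omega(\xi-\D\Gamma(k))$ piece, and the $\phi(\ri a_\xi\coup)$ piece) and intertwine each separately to get \ref{item:loc2}. The handling of the field operator and of the genuinely second-quantized pieces is fine.

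The gap is in the treatment of $\Omega(\xi-\D\Gamma(k))$ (and its gradient), which is precisely the new content of this proof relative to \cite{MAHP}. You write that one ``uses instead the $O(1/R)$ bound on $[j_\#^R,k]$ paired with the $H_0$-relative bound,'' but $\Omega(\xi-\D\Gamma(k))$ is a nonlinear function of $\D\Gamma(k)$, not a second quantization $\D\Gamma(b)$ of some one-body $b$, so the one-body commutator bound does not propagate through in the naive way. To estimate $[\cGam\ext(j^R),\Omega(\xi-\D\Gamma\ext(k))]$ when $s_\Omega$ is not an integer one needs a Taylor-like commutator expansion for a function of a vector of commuting self-adjoint operators: the paper invokes \cite[Theorem~3]{Ra}, expanding $[\cGam\ext(N\ext+1)^{-2},\Omega(\xi-\D\Gamma\ext(k))]$ into a gradient term $\sum_p\partial_p\Omega(\xi-\D\Gamma\ext(k))\adjungeret_{\xi_p-\D\Gamma\ext(k_p)}(\cGam\ext(N\ext+1)^{-2})$ plus a remainder $R_1$ controlled by second iterated commutators, then shows via the explicit formula \eqref{FormOfMultipleComm} that $\adjungeret_{\D\Gamma\ext(k)}^\alpha(\cGam\ext)(N\ext+1)^{-|\alpha|}=O(R^{-|\alpha|})$. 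Without this expansion formula (or an equivalent mechanism such as restricting to polynomial $\Omega$, i.e.\ $s_\Omega\in\{0,1,2\}$, which was the scope of \cite{MAHP}), your argument does not close for general $s_\Omega\in[0,2]$. You flag this as ``the main obstacle I anticipate,'' which is the right instinct, but anticipating the obstacle is not the same as resolving it; the commutator-expansion step is exactly the ingredient your proposal is missing.
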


\begin{remark} We note that \ref{item:loc1} was already proved in
\cite{MAHP} in the case $s_\Omega\in\{0,1,2\}$. As the assumption of
$s_\Omega$ being integer is only used in the proof of this result in
\cite{MAHP}, this new proof now secures the validity of the results in
\cite{MAHP} for non-integer values of $s_\Omega$. 
\remarkQED\end{remark}

\begin{proof}
 In the following we fix a $\lambda<\Sigma_0$, cf. \eqref{Sigma0}.
  We will start by proving the following statements: For
  $p\in\{1,\dots,\nu\}$ and $w\in C_0^\infty(\RR^\nu)$ we claim that
  \begin{enumerate}[label=\textup{(\alph*)},ref=(\alph*)]
  \item \label{item:locerror1}
%    \begin{align*}
      $\cGam(j^R) f(H(\xi))\colon \cF\to\cD\ext_{1/2}\textup{ and
      }f(H\ext(\xi))\cGam(j^R)\colon{\cD_{1/2}}^*\to
      \cF\ext$ % \end{align*}
    for any $R>1$ and,
    \begin{align*}
      (H\ext(\xi)-\lambda)^{-\frac12}\bigl(\cGam (j^R) H(\xi) - H\ext(\xi) \cGam(j^R)\bigr)f(H(\xi))&=o_R(1),\\
      f(H\ext(\xi))\bigl(\cGam(j^R)H(\xi)- H\ext(\xi) \cGam(j^R)\bigr)(H(\xi)-\lambda)^{-\frac{1}{2}}&=o_R(1).
    \end{align*}
  \item
    \label{item:locerror2}$\big(\check\Gamma(j^R)\partial_p\Omega(\xi-\D\Gamma(k))- 
    \partial_p\Omega(\xi-\D\Gamma\ext(k))\check\Gamma(j^R)\big)f(H(\xi))=o_R(1)$.
  \item \label{item:locerror3}$f(H\ext(\xi))\big(
    \check\Gamma(j^R)\D\Gamma(w) 
    - \D\Gamma\ext(w) \check\Gamma(j^R)\big)(H_0(\xi)-\lambda)^{-\frac12}=o_R(1)$.
  \item \label{item:locerror4}$f(H\ext(\xi))\big(\check\Gamma(j^R)\phi(\ri a_\xi
    \coup)-\phi(\ri a_\xi
    \coup)\otimes\bbbone_\cF\check\Gamma(j^R)\big)f(H(\xi))=o_R(1)$.
  \end{enumerate}

 In the rest of the proof we abbreviate $H_0 =H_0(\xi)$, $H=H(\xi)$, $A=A_\xi$, $H\ext
 = H\ext(\xi)$, $A\ext = A\ext_\xi$, $\phi\ext(w) = \phi(w)\otimes\one_\cF$, and
 $\cGam = \cGam(j^R)$.
 For notational convenience, we write $M\olig N$ if
  $M=N+o_R(1)$.

  \ref{item:locerror1} We only prove half of the statement as the
  other half follows by a symmetric argument.
   Note that
  $(H\ext-\lambda)^{-1/2}(\cGam (H -\Omega(\xi-\D\Gamma(k))- (H\ext -\Omega(\xi-\D\Gamma\ext(k)) \cGam ) f(H)=o_R(1)$
  by (the proof of) \cite[Lemma~3.2]{MAHP}. Hence, to prove the statement,
  we need only show that
  \begin{equation}\label{loc-needtoshow}
  (H\ext-\lambda)^{-\frac12}\bigl(\cGam \Omega(\xi-\D\Gamma(k)) - \Omega(\xi-\D\Gamma\ext(k)\bigr) \cGam)f(H)=o_R(1).
  \end{equation}
  Recall that this was in fact already established in \cite{MAHP} for the
  particular cases $s_\Omega=0,1,2$. Hence we can assume $s_\Omega<2$. 

  In order to use a commutator expansion formula,
we find it useful to cast the statement differently by extending $\cGam$ and its adjoint to operators on $\cF\ext$.
Let, for the purpose of this proof only, $P\colon \cF\ext\to\cF$ be the
projection
\[
\cF\ext=\cF\oplus\Bigl(\bigoplus_{\ell=1}^\infty \cF\otimes \cF^{(\ell)}\Bigr)\ni(u,v)\mapsto
u\in\cF
\]
 and $I = P^*\colon\cF\to\cF\ext$ the injection
 \[
 \cF\ni u\mapsto I u = (u,0)\in\cF\oplus\Bigl(\bigoplus_{\ell=1}^\infty \cF\otimes\cF^{(\ell)}\Bigr).
 \]
Define $\cGam\ext\colon\cF\ext\to\cF\ext$ by
$\cGam\ext=\cGam P$. Note that $PI$ is the
identity on $\cF$ and that 
\[
\cGam\ext I=\cGam, \quad
H\ext I=IH,\quad  A\ext I=IA \quad\textup{and} \quad \phi\ext(\coup) I=I\phi(\coup).
\]

 We write, using \cite[Lemma~3.6]{MAHP},
 \begin{equation}\label{CommExpand1} 
 \begin{aligned}
    &(H\ext-\lambda)^{-\frac12}\bigl[\cGam\ext,\Omega(\xi-\D\Gamma\ext(k))\bigr]f(H\ext)\\
    &\quad =\bigl\{(H\ext-\lambda)^{-\frac12}\bigl[\cGam\ext(N\ext+1)^{-2},\Omega(\xi-\D\Gamma\ext(k))\bigr]\bigr\}(N\ext+1)^2f(H\ext).
  \end{aligned}
\end{equation}
  The estimate \eqref{loc-needtoshow} follows if the term in the
  brackets above is $o_R(1)$.
  The commutator $[\cGam\ext(N\ext+1)^{-2},\Omega(\xi-d\Gamma\ext(k))]$
  satisfies the assumptions of \cite[Theorem~3]{Ra} (if $s_\Omega<2$) with
  \[
 B=\cGam\ext(N\ext+1)^{-2}, \quad A=\xi-\D\Gamma\ext(k), \quad f_\lambda=\Omega, \quad  s=s_\Omega,
  \quad n_0=2 \quad  \textup{and} \quad n=1.
 \]
Hence, by \cite[Theorem~3]{Ra} we obtain the commutator expansion
  \begin{align}\label{CommExpand2}
\nonumber    \bigl[\cGam\ext(N\ext+1)^{-2},\Omega(\xi-\D\Gamma\ext(k))\bigr]
    & =\sum_{p=1}^\nu \partial_{p}\Omega(\xi-\D\Gamma\ext(k))
      \adjungeret_{\xi_p-\D\Gamma\ext(k_p)}\!\bigl(\cGam\ext(N\ext+1)^{-2}\bigr)\\
    &\quad+R_1\bigl(\xi-\D\Gamma\ext(k),\cGam\ext(N\ext+1)^{-2}\bigr).
  \end{align}
The remainder $R_1(A,B)$ satisfies for some $C>0$ the bound 
\begin{equation}\label{R1Bound}
\|R_1(A,B)\|\leq C \sum_{|\alpha|=2} \|\adjungeret_{A}^{\alpha}(B)\|.
\end{equation}
 Let $\psi\in \cF\p{m}$ and $\varphi\in \cF\otimes\cF\p{\ell}$, with
 $m\geq 0$ and $\ell\geq 1$. Note that $P (\psi\otimes\vacuum,\varphi)
 = \psi$.
 One can readily verify that
  \begin{equation}\label{FormOfMultipleComm}
  \adjungeret_{\D\Gamma\ext(k)}^\alpha(\cGam\ext)
    \begin{pmatrix} \psi\otimes\vacuum \\ \varphi\end{pmatrix} 
  =U\smashoperator[l]{\sum_{\sum\alpha\p{\ell}=\alpha}}\frac{\alpha!}{\prod_{\ell=1}^m\alpha\p{\ell}!}
\bigotimes_{\ell=1}^m
\begin{pmatrix} \adjungeret_{k}^{\alpha\p{\ell}}(j_0^R)\\
 \adjungeret_{k}^{\alpha\p{\ell}}(j_\infty^R)\end{pmatrix}\psi,
  \end{equation}
and
\begin{equation*}
\adjungeret_{k}^{\alpha\p{\ell}}\!\bigl(j_\#^R\bigr)=
i^{\abs{\alpha\p{\ell}}}R^{-\abs{\alpha\p{\ell}}}\bigl(\partial^{\alpha\p{\ell}}
j_\#\bigr)\!
\bigl(\frac{x}{R}\bigr) =  O\bigl(R^{-\abs{\alpha\p{\ell}}}\bigr),
\end{equation*}
where the sums are over all ordered sets of multi-indices
$\{\alpha\p{\ell}\}_{\ell=1}^m$ such that
$\sum_{\ell=1}^m\alpha\p{\ell}=\alpha$. The identity
\eqref{FormOfMultipleComm}
extends by linearity to an identity between bounded operators from $\cF\p{m}\otimes\cF$
to $\one[N\ext = m]\cF\ext$, where $N\ext = N\otimes\one_\cF +
\one_\cF\otimes N$ is the extended number operator.
Since
\begin{equation*}
  \sum_{\sum\alpha\p{\ell}=\alpha}\frac{\alpha!}{\prod_\ell \alpha\p{\ell}!}=m^{\abs{\alpha}},
\end{equation*}
it follows that
$\adjungeret_{\D\Gamma\ext(k)}^\alpha(\cGam\ext)(N\ext+1)^{-\abs\alpha}=O(R^{-\abs\alpha})$,
with respect to the norm on $\cB(\cF\ext)$,
and hence that
\begin{align*}
  \sum_{p=1}^\nu\bigl\|\adjungeret_{\xi_p-\D\Gamma\ext(k_p)}\bigl(\cGam\ext(N\ext\!+\!1)^{-2}\bigr)\bigr\|
+\bigl\|R_1\bigl(\xi-\D\Gamma\ext(k),\cGam\ext(N\ext\!+\!1)^{-2}\bigr)\bigr\|=O(R\inv).
\end{align*}
As $s_\Omega\le2$,
$(H-\lambda)^{-1/2}\partial_p\Omega(\xi-\D\Gamma\ext(k))$ is
bounded. These two observations together with \eqref{CommExpand1},
\eqref{CommExpand2} and \eqref{R1Bound}
imply the  claim  \ref{item:locerror1}. Note that one can include the
case $s_\Omega=2$ and make the argument self-contained by adding one
extra factor of $(N+1)^{-1}$ and taking $n_0=3$ and $n=2$ in the
expansion formula from \cite{Ra}. This would however
require the (very reasonable) assumption that
Condition~\ref{Cond:MC}~\ref{Item:BasicDerOfOmega} 
holds for $|\alpha|\leq 3$.

By an analogous argument we get \ref{item:locerror2}. The proof of
\ref{item:locerror3} and \ref{item:locerror4} can be found in the
proof of \cite[Lemma~3.2]{MAHP}.

We can now prove \ref{item:loc1}. Let $\chi\in C_0^\infty(\RR)$ be such
that $f=f\chi$. We pull the energy localization
through $\cGam$ in two steps, using both bounds in
\ref{item:locerror1} along the way,
\begin{align*}
f(H\ext)\cGam & = f(H\ext)\chi(H\ext)\cGam \\
 & = f(H\ext)\cGam\chi(H)  + \frac1{\pi}\int_{\CC}
 \bar{\partial}\tilde{\chi}(z)(H\ext-z)^{-1} f(H\ext)\bigl(\cGam H - H\ext \cGam\bigr) (H-z)^{-1} \D z\\
 & \olig f(H\ext)\cGam \chi(H)\\
& = \cGam f(H) +   \frac1{\pi}\int_{\CC}
 \bar{\partial}\tilde{f}(z)(H\ext-z)^{-1}\bigl(\cGam H - H\ext \cGam\bigr)\chi(H) (H-z)^{-1} \D z\\
 & \olig \cGam f(H).
\end{align*}
This computation establishes \ref{item:loc1}.

Finally we argue for the validity of \ref{item:loc2}. Let $\chi\in
C_0^\infty(\RR)$ be as in the proof of \ref{item:loc1}. By \ref{item:loc1} and \ref{item:locerror1} we see
that
  \begin{align}\label{PullingThroughComm}
   \nonumber & f(H\ext)[H\ext,A\ext]^\circ f(H\ext)\cGam\\
   \nonumber & \quad =f(H\ext)[H\ext,A\ext]^\circ f(H\ext)\cGam\chi(H)+f(H\ext)[H\ext,A\ext]^\circ f(H\ext)o_R(1)\\
   \nonumber & \quad \olig f(H)[H,A]^\circ \cGam f(H)+\frac1{\pi}\int_{\CC} \bar\partial\tilde
    f(z) f(H)[H\ext,A\ext]^\circ(H\ext-z)^{-1}\bigl(\cGam H - H\ext \Gamma\ext\bigr) \chi(H)(H-z)^{-1} \D z\\
    &\quad \olig f(H\ext)[H\ext,A\ext]^\circ \cGam f(H),
  \end{align}
  Here we used that $f(H\ext)[H\ext,A\ext]^\circ(H\ext-\lambda)^{-1/2}$ is bounded.
  The right-hand side of \eqref{PullingThroughComm} splits into three terms
  \begin{subequations}\label{eq:lem5.22}
    \begin{align}
    f(H\ext)[H\ext,A\ext]^\circ \cGam f(H) 
     & = f(H\ext)\D\Gamma\ext(v_\xi\cdot\nabla\omega)\cGam f(H)\label{eq:lem5.1}\\
      &\quad-f(H\ext)\D\Gamma\ext(v_\xi)\cdot\nabla\Omega(\xi-\D\Gamma\ext(k))\cGam
      f(H)\label{eq:lem5.2}\\
      &\quad-f(H\ext)\phi\ext(\ri a_\xi \coup)\cGam f(H).\label{eq:lem5.3}
    \end{align}
  \end{subequations}
  Now by \ref{item:locerror3} we find
$\eqref{eq:lem5.1}\olig f(H\ext)\cGam \D\Gamma(v_\xi\cdot\nabla\omega)f(H)$,
  by \ref{item:locerror2}, \ref{item:locerror3} and
  \ref{item:locerror1}
  \begin{align*}
    \eqref{eq:lem5.2}&\olig
    -f(H\ext)\sum_{p=1}^\nu \D\Gamma\ext(v_{\xi;p})\cGam \partial_p\Omega(\xi-\D\Gamma(k))f(H)\\
    &=-f(H\ext)\cGam
    \D\Gamma(v_\xi)\cdot\nabla\Omega(\xi-\D\Gamma(k))f(H)\\
&\quad-f(H)\sum_{p=1}^\nu \bigl(\D\Gamma\ext(v_{\xi;p})\cGam -\cGam
\D\Gamma(v_{\xi;p})\bigr)(H_0-\lambda)^{-\frac12}\partial_p\Omega(\xi-\D\Gamma(k))(H_0-\lambda)^{\frac12}f(H)\\
 &\olig -f(H\ext)\cGam
    \D\Gamma(v_\xi)\cdot\nabla\Omega(\xi-\D\Gamma(k))f(H),
  \end{align*}
  and by \ref{item:locerror4} we conclude that
  $\eqref{eq:lem5.3}\olig-f(H\ext)\cGam\phi(\ri a_\xi \coup)f(H)$.
  Putting this together -- and again using \ref{item:locerror1} and
  \ref{item:loc1} -- we see that
  \begin{align*}
  \eqref{eq:lem5.22} & \olig f(H\ext)\cGam {[H,A]}^\circ f(H)\\
    &=\chi(H\ext)\cGam f(H)[H,A]^\circ f(H)\\
    &\quad+\frac1{\pi}\int_\CC \bar\partial\tilde
    f(z)(H\ext-z)^{-1}\chi(H\ext)\bigl(\cGam H - H\ext \cGam\bigr) (H-z)^{-1} [H,A]^\circ f(H)\D z\\
    &\olig \cGam f(H)[H,A]^\circ f(H)+o_R(1)f(H)[H,A]^\circ f(H),
  \end{align*}
  as desired.
\end{proof}

We will make use of another partition of unity, this time in momentum space.
It has proved useful for the type of models studied here \cite{GGM2,MRMP}.
We take, for $r>0$,  sharp localizations $\one_0^r$ and $\one_\infty^r$  
onto sets  $\Lambda_r = \set{k\in\RR^\nu}{|k|\leq r}$ and $\Lambda_r^\mathrm{c}$ respectively.
 As multiplication operators they are projections
and this allows us to view the vector operators $\one^r = (\one_0^r,\one_\infty^r)$
as operators from $\hph$ to $\gothh_<^r\oplus \gothh_>^r$, 
with $\gothh_<^r = L^2(\Lambda_r)$ and $\gothh_>^r = L^2(\Lambda_r^\mathrm{c})$.
We can now lift this vector operator to a unitary operator
\[
\cGam(\one^r) = U \Gamma(\one^r)\colon \cF\to \cF_<^r\otimes \cF_>^r,
\]
where $\cF_<^r = \Gamma(\gothh_<^r)$ and $\cF_>^r = \Gamma(\gothh_>^r)$. Here
$U$ is again the canonical identification operator and we have abused notation by
using the same notation $\cGam$, although these operators map into a
smaller space than $\cF\ext$ and they are unitary, not merely isometric.

Since the Hamiltonian does not involve $k$-derivatives, we will not 
pick up localization errors, when applying a partition of unity in
momentum space. We introduce notation for the fiber Hamiltonians $H_r(\xi)$
restricted to $\cF_<^r$, the new \myquote{extended} Hamiltonian
$H\ext_r(\xi)$ and its building blocks $H\p{\ell}_r(\xi)$ and their
fiber operators $H\p{\ell}_r(\xi;\uk)$:
\begin{equation}\label{MomExtHam}
\begin{aligned}
& H_r(\xi) = \D\Gamma\bigl(\omega_{|\gothh_<^r}\bigr) +
\Omega\bigl(\xi-\D\Gamma(k_{|\gothh_<^r})\bigr) + \phi\bigl(\one_0^r\coup\bigr)\\
& H_r\p{\ell}(\xi;\uk) = H\bigl(\xi-\textstyle\sum_{j=1}^\ell k_j\bigr) +
\Bigl(\textstyle\sum_{j=1}^\ell \omega(k_j)\Bigr)\one_{\cF_{<}^r}\\
& H_r\p{\ell}(\xi) = \int^\oplus_{(\Lambda_r^c)^\ell}
H\p{\ell}_r(\xi;\uk)\, \D k\\
&  H\ext_r(\xi) = H_r(\xi)\oplus\Bigl(\bigoplus_{\ell=1}^\infty H\p{\ell}_r(\xi)\Bigr). 
\end{aligned}
\end{equation}
 The direct sum above is with respect to the splitting
  \begin{equation}\label{MomSpaceDecomp}
    \cF_<^r \otimes\cF_>^r
    =\cF_<^r \oplus\Bigl(\bigoplus_{\ell=1}^\infty\cF_<^r \otimes\Gamma\p{\ell}\bigl(\gothh_>^r\bigr)\Bigr),
  \end{equation}
where we identify $\cF_{<}^r\otimes \Gamma\p{\ell}(\gothh_>^r)$ with $L^{2}\sym((\Lambda_r^c)^\ell;\cF_{<}^r)$.

\begin{lemma}\label{lem:locerror1}
  Assume Conditions~\ref{Cond:MC} and~\ref{Cond:MT}, with $n_0=1$. Then
  \begin{equation*}
    \cGam(\one^r)f(H(\xi))=f(H_r\ext(\xi))\cGam(\one^r)+o_r(1).
  \end{equation*}
\end{lemma}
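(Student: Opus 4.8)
The plan is to exploit that $\cGam(\one^r)$ is \emph{unitary} and that, since both $\omega$ and the momentum $k$ act as multiplication operators, the sharp cutoffs $\one_0^r,\one_\infty^r$ commute with them exactly; consequently conjugation by $\cGam(\one^r)$ carries $H(\xi)$ into $H\ext_r(\xi)$ plus an explicit remainder that is small for large $r$. Concretely, I would first record the identities, valid on the core $\cC$,
\begin{align*}
\cGam(\one^r)\,\D\Gamma(\omega) &= \D\Gamma\ext\bigl(\omega_{|\gothh_<^r}\oplus\omega_{|\gothh_>^r}\bigr)\,\cGam(\one^r),\\
\cGam(\one^r)\,\Omega(\xi-\D\Gamma(k)) &= \Omega\bigl(\xi-\D\Gamma\ext(k_{|\gothh_<^r}\oplus k_{|\gothh_>^r})\bigr)\,\cGam(\one^r),\\
\cGam(\one^r)\,\phi(\coup) &= \bigl(\phi(\one_0^r\coup)\otimes\one_{\cF_>^r}+\one_{\cF_<^r}\otimes\phi(\one_\infty^r\coup)\bigr)\,\cGam(\one^r),
\end{align*}
the first two holding \emph{without} any error term because $[\one^r,\omega]=[\one^r,k]=0$, the third following from the intertwining relations $\Gamma(\one^r)a^*(\coup)=a^*(\one^r\coup)\Gamma(\one^r)$ and $\Gamma(\one^r)a(\coup)=a(\one^r\coup)\Gamma(\one^r)$ (valid since $\one^r$ is unitary onto $\gothh_<^r\oplus\gothh_>^r$). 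Under the identification \eqref{MomSpaceDecomp}, the operator $\D\Gamma(k_{|\gothh_>^r})$ restricted to the $\ell$-th summand $\cF_<^r\otimes\Gamma\p{\ell}(\gothh_>^r)\simeq L^2\sym((\Lambda_r^c)^\ell;\cF_<^r)$ is multiplication by $k_1+\cdots+k_\ell$; comparing the right-hand sides above with the definition \eqref{MomExtHam} sector by sector in $\uk$ then yields
\begin{equation*}
\cGam(\one^r)\,H(\xi)\,\cGam(\one^r)^* = H\ext_r(\xi)+R_r,\qquad R_r:=\one_{\cF_<^r}\otimes\phi(\one_\infty^r\coup),
\end{equation*}
as an identity of self-adjoint operators (the domains agree since $\cGam(\one^r)$ is unitary and $R_r$ is a symmetric, $H\ext_r(\xi)$-infinitesimal perturbation).

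Next I would show that $R_r$ is $H\ext_r(\xi)$-bounded with relative bound $o_r(1)$. By the mass gap in Condition~\ref{Cond:MC}, $\D\Gamma\ext(\omega_{|\cdot})\ge m\,N\ext$ with $m>0$ on each summand of \eqref{MomSpaceDecomp}, while the inner coupling $\phi(\one_0^r\coup)$ is infinitesimally bounded relative to the free part of $H\ext_r(\xi)$ with constants uniform in $r$ (since $\|\one_0^r\coup\|\le\|\coup\|$); hence, for $\lambda$ sufficiently negative uniformly in $r$, $N\ext\le C(H\ext_r(\xi)-\lambda)$ uniformly in $r$. On the other hand $\|R_r(N\ext+1)^{-1/2}\|\le C\|\one_\infty^r\coup\|=o_r(1)$ because $\coup\in L^2(\RR^\nu)$. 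Combining the two estimates gives $\|R_r(H\ext_r(\xi)-\lambda)^{-1/2}\|=o_r(1)$, and in particular $\|R_r(H\ext_r(\xi)-z)^{-1}\|\le o_r(1)\,\wgt{z}\,|\im z|^{-2}$ for $\im z\ne 0$.

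Finally, I would invoke the Helffer--Sj\"ostrand formula $g(T)=\tfrac1\pi\int_\CC\bar\partial\tilde g(z)\,(T-z)^{-1}\,\D z$ with an almost analytic extension $\tilde g$ of $g\in C_0^\infty(\RR)$. Since $\cGam(\one^r)$ is unitary,
\begin{equation*}
\cGam(\one^r)\,f(H(\xi)) = f\bigl(\cGam(\one^r)H(\xi)\cGam(\one^r)^*\bigr)\,\cGam(\one^r)=f\bigl(H\ext_r(\xi)+R_r\bigr)\,\cGam(\one^r),
\end{equation*}
and the resolvent identity together with the previous paragraph gives
\begin{equation*}
f\bigl(H\ext_r(\xi)+R_r\bigr)-f\bigl(H\ext_r(\xi)\bigr)=-\tfrac1\pi\int_\CC\bar\partial\tilde f(z)\,(H\ext_r(\xi)+R_r-z)^{-1}R_r(H\ext_r(\xi)-z)^{-1}\,\D z,
\end{equation*}
whose norm is $O\bigl(\int_\CC|\bar\partial\tilde f(z)|\,\wgt{z}\,|\im z|^{-2}\,\D z\bigr)\cdot o_r(1)=o_r(1)$, as $\bar\partial\tilde f$ vanishes to high order on $\RR$. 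Hence $\cGam(\one^r)f(H(\xi))=f(H\ext_r(\xi))\cGam(\one^r)+o_r(1)$, which is the claim. The main obstacle is purely organizational, namely the bookkeeping of the first step --- checking that conjugation by $\cGam(\one^r)$ reproduces the direct-sum structure of \eqref{MomExtHam} fibrewise in $\uk$ with $R_r$ as the only leftover; there are genuinely no localization errors precisely because $H(\xi)$ contains no $k$-derivatives, which is what makes this momentum-space partition strictly easier than the position-space partition of Lemma~\ref{cor:locerror}.
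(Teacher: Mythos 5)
Your proof is correct and follows essentially the same route as the paper's: both rest on the exact conjugation identity $\cGam(\one^r)H(\xi)\cGam(\one^r)^*=H_r\ext(\xi)+R_r$ with $R_r=\one_{\cF_<^r}\otimes\phi(\one_\infty^r\coup)$ (the paper phrases it as $H(\xi)=\cGam(\one^r)^*H_r\ext(\xi)\cGam(\one^r)+\phi(\one_\infty^r\coup)$), the observation that $\|\one_\infty^r\coup\|_{L^2}=o_r(1)$ makes $R_r$ relatively $o_r(1)$-small, and the calculus of almost analytic extensions. Your version simply spells out the bookkeeping and the Helffer--Sj\"ostrand step a bit more explicitly; the only tiny imprecision is the claimed bound $o_r(1)\wgt{z}|\im z|^{-2}$ on $\|R_r(H_r\ext(\xi)-z)^{-1}\|$, where the argument you give actually yields $o_r(1)\wgt{z}^{1/2}|\im z|^{-1}$ --- harmless, since either bound integrates against $\bar\partial\tilde f$ to $o_r(1)$.
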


\begin{proof}
  Note that
  $H(\xi)=\cGam(\one^r)^*H_r\ext(\xi)\cGam(\one^r)+\phi(\one_\infty^r \coup)$.
  Composing with the unitary operator $\cGam(\one^r)$ from the left on both sides yields
  \begin{equation*}
    \cGam(\one^r)H(\xi)=H_r\ext(\xi)\cGam(\one^r)+\cGam(\one^r)\phi(\one^r_\infty \coup).
  \end{equation*}
  Subtracting $z\cGam(\one^r)$ on both sides and multiplying with
  $(H_r\ext(\xi)-z)\inv$ and $(H(\xi)-z)\inv$ from the left  and the
  right respectively, we get
  \begin{align*}
   (H_r\ext(\xi)-z)^{-1}\cGam(\one^r) & =\cGam(\one^r)(H(\xi)-z)^{-1} 
    +(H_r\ext(\xi)-z)^{-1}\cGam(\one^r)\phi(\one_\infty^r \coup)(H(\xi)-z)^{-1},
  \end{align*}
  where the term involving $\phi(\one_\infty^r\coup)$ is of order
  $\wgt{z}^{1/2}\abs{\im{z}}^{-2}o_r(1)$. The result is now obtained using the calculus
  of almost analytic extensions.
\end{proof}

Let $\kappa>0$ and $\lambda\in\RR$. Denote by $E_{0,1}\colon\RR\to \RR$ the indicator
function for the set $[-1,1]$. Abbreviate $E_{\lambda,\kappa}(t) =
E_{0,1}((t-\lambda)/\kappa)$, the indicator function for the set $[\lambda-\kappa,\lambda+\kappa]$.

\begin{theorem}[Mourre Estimate]\label{thm:Mourre} Assume
  Conditions~\ref{Cond:MC} and~\ref{Cond:MT}, with $n_0=1$.  Let
  $(\xi,\lambda)\in \cE\p{1}\backslash (\thr\p{1}\cup\Exc)$. Then there exist
  $\kappa>0$, $c>0$ and a compact self-adjoint operator $K$ such that
  \begin{align}\label{eq:Mourrethm}
    E_{\lambda,\kappa}(H(\xi))\ri [H(\xi),A_\xi]^\circ
    E_{\lambda,\kappa}(H(\xi))&\ge c E_{\lambda,\kappa}(H(\xi))+K.
  \end{align}
\end{theorem}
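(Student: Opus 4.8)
The strategy is the standard Mourre-from-extended-operators argument, now that the hard analytic ingredients (Proposition~\ref{Prop-MEOne} and Lemma~\ref{cor:locerror}) are in place. Fix $(\xi,\lambda)\in\cE\p{1}\setminus(\thr\p{1}\cup\Exc)$ and let $\cJ'$, $\cJ=[E-\delta,E+\delta]$, $v_\xi$, $a_\xi$ and $A_\xi$ be the objects constructed in Subsections~\ref{Sec-Thr-Structure}--\ref{The Conjugate Operator} associated to this point (with $E=\lambda$). Choose $\kappa>0$ small enough that $[\lambda-2\kappa,\lambda+2\kappa]\subset(E-\delta/2,E+\delta/2)$ and pick $f\in C_0^\infty(\cJ)$ with $f\geq 0$ and $f=1$ on $[\lambda-\kappa,\lambda+\kappa]$, so that $f(H(\xi))E_{\lambda,\kappa}(H(\xi))=E_{\lambda,\kappa}(H(\xi))$.

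\textbf{Step 1: pass to the extended operator in geometric (position-space) splitting.} Using the isometry $\cGam(j^R)\colon\cF\to\cF\ext$ and Lemma~\ref{cor:locerror}, write
\begin{align*}
f(H(\xi))\ri[H(\xi),A_\xi]^\circ f(H(\xi))
&=\cGam(j^R)^* f(H\ext(\xi))\ri[H\ext(\xi),A\ext_\xi]^\circ f(H\ext(\xi))\cGam(j^R)+o_R(1),
\end{align*}
since $\cGam(j^R)^*\cGam(j^R)=\one_\cF$. Decompose $H\ext(\xi)=\bigoplus_{\ell\geq 0}H\p{\ell}(\xi)$ and $A\ext_\xi=\bigoplus_{\ell\geq 0}A\p{\ell}_\xi$, cf.~\eqref{A-xi-ell} and Proposition~\ref{Prop:ExtendedC1}. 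On the $\ell=0$ summand one has $H\p{0}(\xi)=H(\xi)$; here the contribution $f(H(\xi))\ri[H(\xi),A_\xi]^\circ f(H(\xi))$ is controlled by the induction hypothesis of the abstract Mourre-estimate machinery — more precisely, by a compactness argument one uses that $j_0^R\to\one$ strongly so that the $\ell=0$ piece of $\cGam(j^R)^*(\cdots)\cGam(j^R)$ localizes the original operator near the bottom $\Spectrum\p{1}_0(\xi)$ of the essential spectrum, and below that the spectrum of $H(\xi)$ is discrete with finite multiplicities, contributing a compact error (absorbed into $K$), together with a term $\geq -\varepsilon$ from energy tails. On the summands $\ell\geq 2$: since $f$ is supported in $\cJ\subset\cE\p{1}(\xi)$, which lies \emph{below} the two-boson threshold $\Spectrum\p{2}_0(\xi)$, and $\inf\sigma(H\p{\ell}(\xi))=\Spectrum\p{\ell}_0(\xi)\geq\Spectrum\p{2}_0(\xi)$ for $\ell\geq 2$ by \eqref{GroundStateOfHn} and \eqref{MonotoneThresholds}, we get $f(H\p{\ell}(\xi))=0$ for all $\ell\geq 2$. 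Hence only the $\ell=1$ channel survives with positivity.

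\textbf{Step 2: positivity in the one-boson channel.} For $\ell=1$, apply Proposition~\ref{Prop-MEOne}: with $\chi=\sqrt{f}\in C_0^\infty(\cJ)$ (or any $\chi\in C_0^\infty(\cJ)$, $\chi\geq 0$, with $\chi f=f$), there is $c_1>0$ with
\[
f(H\p{1}(\xi))\ri[H\p{1}(\xi),A\p{1}_\xi]^\circ f(H\p{1}(\xi))\geq c_1\, f(H\p{1}(\xi))^2.
\]
Still one must handle the $\ell=1$ part of the $k$-tail: decompose further in momentum space using $\cGam(\one^r)$ and Lemma~\ref{lem:locerror1} to separate bosons with small momentum (which, together with the choice of $\cJ$ avoiding $\sigma_\pp(H(\xi))+\omega(0)$ and avoiding $\Exc$, carry strictly positive energy $\omega(0)>0$ and thus cannot be emitted at energies in $\cJ$ — formally $f(H\p{1}_r(\xi))$ restricted to the $\ell=1$, $|k|<r$ block vanishes for $r$ small) from bosons with momentum in the compact set $\cK_{\cJ'}$, where Proposition~\ref{Prop-MEOne} gives the bound. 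Collect the errors: the $o_R(1)$ from Lemma~\ref{cor:locerror}, the $o_r(1)$ from Lemma~\ref{lem:locerror1}, the energy tails ($\leq\varepsilon$), and the compact operator from the $\ell=0$ discrete-spectrum piece. Choosing $R$ large, $r$ small and $\varepsilon$ small, and then sandwiching everything between $E_{\lambda,\kappa}(H(\xi))$ (which is dominated by $f(H(\xi))$ in the sense that $E_{\lambda,\kappa}f=E_{\lambda,\kappa}$), we arrive at
\[
E_{\lambda,\kappa}(H(\xi))\ri[H(\xi),A_\xi]^\circ E_{\lambda,\kappa}(H(\xi))\geq c\, E_{\lambda,\kappa}(H(\xi))+K
\]
with $c=c_1/2>0$ say, and $K$ compact and self-adjoint.

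\textbf{Main obstacle.} The delicate point is \textbf{Step 1's treatment of the $\ell=0$ summand}, i.e.\ controlling $\cGam(j^R)^*$ applied to the surviving $\ell=0$ block of $f(H\ext(\xi))\ri[H\ext(\xi),A\ext_\xi]^\circ f(H\ext(\xi))$. One needs that this contribution, restricted to the energy window $\cJ$, only "sees" eigenvalues of $H(\xi)$ below $\Spectrum\p{1}_0(\xi)$ — there are locally finitely many, all of finite multiplicity by the HVZ theorem \eqref{HVZ} — so that the associated spectral projections are finite rank and the term is compact up to an arbitrarily small error; combined with the fact that $[H(\xi),A_\xi]^\circ$ is relatively bounded (Proposition~\ref{Prop:C2}~\ref{Item:HisC1}), this yields the compact operator $K$. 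Making the "only sees eigenvalues below $\Spectrum\p{1}_0(\xi)$" claim precise requires a careful use of $j_0^R\to\one_\cF$ strongly together with the structure of $\cGam(j^R)$, exactly as in the corresponding step of \cite{DGe1,MAHP}; everything else is a routine assembly of the pieces already established.
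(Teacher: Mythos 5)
Your overall scaffolding is right: pass to the extended operator via Lemma~\ref{cor:locerror}, decompose into $\ell$-sectors via Proposition~\ref{Prop:ExtendedC1}, kill $\ell\ge 2$ by $f(H\p{\ell}(\xi))=0$ since $\Spectrum_0\p{\ell}(\xi)\ge\Spectrum_0\p{2}(\xi)$, and extract positivity from the $\ell=1$ sector via Proposition~\ref{Prop-MEOne}. That part matches the paper. But your treatment of the $\ell=0$ contribution — which you correctly flag as the main obstacle — is conceptually wrong, and would not go through as written.

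You claim the $\ell=0$ piece ``localizes the original operator near the bottom $\Spectrum_0\p{1}(\xi)$ of the essential spectrum, and below that the spectrum of $H(\xi)$ is discrete with finite multiplicities, contributing a compact error.'' This is not the reason. The energy support of $f$ is $\cJ\subset\cE\p{1}(\xi)$, which lies \emph{above} $\Spectrum_0\p{1}(\xi)$, inside the continuous spectrum; $\Gamma(j_0^R)$ localizes bosons in position space, not energy, and does not in any way confine you to the discrete regime. The compactness of $\Gamma(j_0^R)^*f(H(\xi))\ri[H(\xi),A_\xi]^\circ f(H(\xi))\Gamma(j_0^R)$ comes instead from writing it as $BK$ with $B$ bounded and $K=f(H(\xi))\Gamma(j_0^R)$, and then proving $K$ compact by phase-space localization of \emph{massive} bosons: one inserts the momentum partition $\cGam(\one^r)$ via Lemma~\ref{lem:locerror1}, and the point is that for $r$ \emph{large} (not small, as you write) the $\ell=1$ momentum-localized block $f(H_r\p{1}(\xi;k))$ vanishes for $|k|\ge r$ because $\Spectrum_0\p{1}(\xi;k)\to\Spectrum_0\p{2}(\xi)$ as $|k|\to\infty$ (the closing-gap estimate \eqref{ClosingGap}), and $\ell\ge 2$ vanishes as before. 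What survives is $\Gamma(\one_0^r)f(H_r(\xi))\Gamma(\one_0^r)\Gamma(j_0^R)$, compact because each boson is simultaneously localized in momentum ($|k|\le r$) and position ($|x|\lesssim R$) with a bounded boson number (from $f$ and $m>0$). Your version of this — inserting $\cGam(\one^r)$ in the $\ell=1$ channel to rule out small-momentum bosons, and claiming vanishing for $r$ small — misplaces the partition of unity, gets the limit in $r$ backwards, and invokes $\omega(0)>0$ in a way that proves nothing: there is no obstruction to emitting a slow boson at energies in $\cJ$. The role of $\Exc$ is only to handle an infrared-singular coupling, not to forbid soft emission. You would need to redo the compactness step along the lines above for the proof to close.
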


\begin{proof} For a pair $(\xi,\lambda)\in \cE\p{1}\backslash (\thr\p{1}\cup \Exc)$
we have in \eqref{VectorField} constructed a vector field giving rise to a conjugate operator $A_\xi$, cf.~\eqref{A-GeneralForm},
and hence extended conjugate operators $A\p{\ell}_\xi$ and $A\ext_\xi$, cf. Subsection~\ref{Sec-Extended}.
We recall from Propositions~\ref{Prop:C2} and~\ref{Prop:ExtendedC1}
that $H(\xi)$ is of class $C^1(A_\xi)$, $H\p{\ell}(\xi)$ is of class $C^1(A\p{\ell}_\xi)$,
and $H\ext(\xi)$ is of class $C^1(A\ext_\xi)$.

  Let $f\in  C_0^\infty\bigl(\bigl(\Sigma_0\p{1}(\xi),\Sigma_0\p{2}(\xi)\bigr)\bigr)$.
  Calculate using Lemma~\ref{cor:locerror}
  \begin{equation}\label{eq:Mourre1}
  \begin{aligned}
    &f(H(\xi))\,\ri [H(\xi),A_\xi]^\circ\,f(H(\xi))\\
    &\quad=\check\Gamma(j^R)^*\,\check\Gamma(j^R)\,
    f(H(\xi))\,\ri [H(\xi),A_\xi]^\circ\,f(H(\xi))\\
    &\quad=\check\Gamma(j^R)^*\,f(H\ext(\xi))\,
    \ri [H\ext(\xi),A\ext_\xi]^\circ\,f(H\ext(\xi))\,\check\Gamma(j^R)
    +o_R(1)
  \end{aligned}
  \end{equation}
  From Proposition~\ref{Prop:ExtendedC1} we know that
  \begin{equation}
    f(H\ext(\xi))\,\ri [H\ext(\xi),A\ext_\xi]^\circ\,f(H\ext(\xi))
    =\bigoplus_{\ell=0}^\infty f\bigl(H\p{\ell}(\xi)\bigr)\,
    \ri \bigl[H\p{\ell}(\xi),A\p{\ell}_\xi\bigr]^\circ\,f\bigl(H\p{\ell}(\xi)\bigr),\label{eq:splitofcomm}
  \end{equation}
  where $H\p{0}(\xi):=H(\xi)$ and $A_\xi\p{0}:=A_\xi$.
    Recalling \eqref{n-boson-threshold}, \eqref{MonotoneThresholds} and \eqref{GroundStateOfHn} we find that for $\ell\geq 2$ we have $H\p{\ell}(\xi)\ge
  \Spectrum_0\p{\ell}(\xi)\bbbone_{\cH\p{\ell}}\ge
  \Spectrum_0\p{2}(\xi)\bbbone_{\cH\p{\ell}}$. It follows that
  \begin{equation}\label{NoContrFrom-lgeq2}
   \forall \ell\geq 2:\quad f\bigl(H\p{\ell}(\xi)\bigr)=0.
  \end{equation}
  This takes care of the contributions to \eqref{eq:splitofcomm} with $\ell\geq 2$, where we can simply write
  \[
  f\bigl(H\p{\ell}(\xi)\bigr) \ri \bigl[H\p{\ell}(\xi),A\p{\ell}_\xi\bigr]^\circ
  f\bigl(H\p{\ell}(\xi)\bigr) =  f\bigl(H\p{\ell}(\xi)\bigr)^2,
 \] 
   both sides being elaborate zeroes.

     If we insert
  \eqref{eq:splitofcomm} into \eqref{eq:Mourre1} and look at the
  $\ell=0$ contribution, we get
  \begin{equation}\label{eq:Mourrecontribution0}
     \begin{aligned}
    & \Gamma(j_0^R)^*\,f(H(\xi))\,\ri [H(\xi),A_\xi]^\circ\,f(H(\xi))\,\Gamma(j_0^R)\\
      &\quad =\Gamma(j_0^R)^*\,f(H(\xi))\,\ri [H(\xi),A_\xi]^\circ\,h(H(\xi))\,f(H(\xi))\,\Gamma(j_0^R)=BK,
  \end{aligned}
  \end{equation}
  where
  \begin{equation*}
    B=\Gamma(j_0^R)^*\,f(H(\xi))\,\ri [H(\xi),A_\xi]^\circ\,h(H(\xi))\quad\text{and}\quad
    K=f(H(\xi))\,\Gamma(j_0^R).
  \end{equation*}
  Here $h\in C_0^\infty(\RR)$ equals $1$ on the support of $f$.  Note
  that $B$ is bounded, so to see that $BK$ is compact, it is enough to
  prove that $K$ is compact. Now by Lemma~\ref{lem:locerror1}
  \begin{equation}\label{LocalizedK}
    K=\cGam(\one^r)^*f(H_r\ext(\xi))\cGam(\one^r)\Gamma(j_0^R)+o_r(1).
  \end{equation}
    Like before, we split with respect to the direct sum decomposition
    \eqref{MomSpaceDecomp}, cf. also \eqref{MomExtHam}, and find
  \begin{equation}\label{eq:momentumcutoff}
    \cGam(\one^r)^*f(H_r\ext(\xi))\cGam(\one^r)\Gamma(j_0^R)
     =\cGam(\one^r)^*\Bigl\{f(H_r(\xi))\oplus\Bigl(\bigoplus_{\ell=1}^\infty f\bigl(H_r\p{\ell}(\xi)\bigr)\Bigr)\Bigr\}\cGam(\one^r)\Gamma(j_0^R).
  \end{equation}
  
  Observe now that by a variational argument we have
  $\inf\sigma(H_r(\xi))\geq \inf\sigma(H_0(\xi)+ \phi(\one_0^r\coup))$,
 and furthermore by monotonicity of $\Spectrum_0(\xi)$ as a function
 of the coupling $\coup$, cf. \cite[Corollary~2.5~(i)]{MRMP}, we get
  $\inf\sigma(H_0(\xi)+ \phi(\one_0^r\coup))\geq \Spectrum_0(\xi)$. 
  Hence 
\[
H\p{\ell}_r(\xi;\uk) \geq
\Spectrum\p{\ell}_0(\xi;\uk)\one_{\cF_{<}^r} \quad \textup{and} \quad
H\p{\ell}_r(\xi)\geq  \Spectrum\p{\ell}_0(\xi)\one_{\cF_<^r}.
\]
This ensures that $f(H_r\p{\ell}(\xi)) = 0$ for $\ell\geq 2$.
  
  Choose and $\epsilon>0$ with  $\epsilon < d(\supp(f),\Spectrum_0(\xi))$. 
 As for the term with $\ell=1$ we get similarly that
  $H_r\p{1}(\xi;k)\geq \Spectrum_0\p{1}(\xi;k)\one_{\cF_<^r}$ 
  and by \eqref{ClosingGap} (if $\omega$ is bounded) $r$ can thus be chosen so
  large that $\Spectrum_0\p{1}(\xi;k)  =
  \Spectrum_0(\xi-k)+\omega(k)\geq \Spectrum_0\p{2}(\xi)-\epsilon$, 
  for any $\abs{k}\geq r$. Hence, for $r$ large enough and $\abs{k}\geq r$ we have  $f(H_r\p{1}(\xi;k))=0$. 
  
  The only non-zero contribution to \eqref{eq:momentumcutoff} for large $r$ is thus the remaining $\ell=0$ term
  $\Gamma(\one_0^r)f(H_r(\xi))\Gamma(\one_0^r)\Gamma(j_0^R)$, which clearly is
  compact. Hence we see by letting $r\to\infty$ in \eqref{LocalizedK} that $K$ is compact.

  By \eqref{NoContrFrom-lgeq2} we only get one non-compact contribution
when inserting \eqref{eq:splitofcomm} into \eqref{eq:Mourre1}, namely one coming from the term $\ell=1$, which is
  \begin{align*}
    f\bigl(H\p{1}(\xi)\bigr)\,\ri \bigl[H\p{1}(\xi),A\p{1}_\xi\bigr]^\circ\,f\bigl(H\p{1}(\xi)\bigr).
  \end{align*}
  We can now apply Proposition~\ref{Prop-MEOne}. Let $\cJ = (\lambda-\delta,\lambda+\delta)$
  from Proposition~\ref{Prop-MEOne}, which we apply with an $f$ chosen to be equal to one
  on $[\lambda-\delta/2,\lambda+\delta/2]$. We thus get
  \[
  f(H(\xi))\,\ri [H(\xi),A_\xi]^\circ\,f(H(\xi)) \geq  c f(H(\xi))^2  - c\,\Gamma(j_0^R) f(H(\xi))^2\Gamma(j_0^R) + BK,
  \]
  which implies the theorem, with $\kappa =\delta/2$, 
 since $\Gamma(j_0^R) f(H(\xi))^2 \Gamma(j_0^R)$ was demonstrated to be compact above.
  \end{proof}

\begin{proof}[Proof of Theorem~\ref{Thm-pp}]
Items~\ref{Item-pp-3} and~\ref{Item-pp-4} are standard consequences of $H(\xi)$ being of class $C^1(A)$,
cf. Proposition~\ref{Prop:C2}~\ref{Item:HisC1}, the virial theorem
\cite{GG}, and the Mourre
estimate Theorem~\ref{thm:Mourre}. See e.g. \cite[Chapter~4.3]{DGBook}
and \cite[Section~VI]{HS}. The last statement \ref{Item-pp-5} follows
from Theorem~\ref{Thm-thr} once we have
observed that the Mourre estimate is continuous in $\xi$ and $E$:
Let $(\xi_0,E_0) \in \cE\p{1} \backslash (\Spectrum_\pp\cup\thr\p{1}\cup\Exc)$ be
given. Then the Mourre estimate, cf.~Theorem~\ref{thm:Mourre},
\[
f_{E,\kappa}(H(\xi))\ri [H(\xi),A_{\xi_0}]f_{E,\kappa}(H(\xi)) 
\geq c f_{E,\kappa}(H(\xi))^2
\]
holds true at $(\xi,E) = (\xi_0,E_0)$ for some $\kappa,c>0$. Here $f\in C_0^\infty(\RR)$, with $\supp
f\subset [-1,1]$, $0\leq f\leq 1$ and $f(t)=1$ for $|t|<1/2$. 
Finally, $f_{\lambda,\kappa}(t) = f((t-\lambda)/\kappa)$.
Recall that, being away from the point spectrum, one can squeeze away the compact
error in the Mourre estimate, by passing to a smaller $\kappa$.
We leave it to the reader to argue that both sides of the estimate
above are jointly continuous in $\xi$ and $E$, hence an estimate
of the same form, possibly with a smaller $\kappa$, will hold in a
small neighborhood of $(\xi_0,E_0)$. Hence, by the standard virial theorem \cite{GG}, there can be no point
spectrum in a small neighborhood of $(\xi_0,E_0)$. Taken together with relative
closedness of $(\thr\p{1}\cup\Exc)\cap\cE\p{1}$ in $\cE\p{1}$ we are done.   
\end{proof}

\begin{proof}[Proof of Theorem~\ref{Thm-sc}]
Under the assumptions of the theorem we have $H(\xi)$ of class
$C^2(A)$, cf. Proposition~\ref{Prop:C2}~\ref{Item:HisC2}. Hence we can conclude from Theorem~\ref{thm:Mourre} the limiting
absorption principle
\begin{equation}\label{LAP}
\sup_{\stackrel{z\in\CC,\, \im z\neq 0}{\re z\in \cJ}}
\left\|\wgt{A}^{-s}(H(\xi)-z)^{-1}\wgt{A}^{-s} \right\| <\infty,
\end{equation}
where $s>1/2$ and $\cJ\subset
\cE\p{1}(\xi)\backslash(\thr\p{1}(\xi)\cup \Exc(\xi)\cup\sigma_\pp(H(\xi)))$
is a compact interval. For a proof of this estimate, we refer the reader
to \cite{ABG,Ge2}. It is a well-known consequence of \eqref{LAP}, together with
Theorem~\ref{Thm-pp}, that the singular continuous part of
$\sigma(H(\xi))\cap \cE\p{1}(\xi)$ is empty. See e.g. \cite[Theorem~XIII.20]{RS4}.
\end{proof}

\appendix

 \section{Fibered Operators}\label{App-Fiber}
 
  Let $\cF$ be a separable Hilbert space and $(X,S)$ a measurable
  space, i.e. $S$ is a $\sigma$-algebra of subsets of $X$.

 Let $H(x)$, for $x\in X$, be a family of self-adjoint operators on $\cF$ with domain $\cD(x)$.
The family is said to be \emph{weakly resolvent measurable} if the map $x\to
(H(x) +\ri)^{-1}$ 
is weakly - hence strongly - measurable. Henceforth we simply write
measurable. 
This implies the same property for $x\to (H(x)-z)^{-1}$ for any $z\in\CC$ with $\im z\neq 0$.
We remark that if $X=\RR$ and $S$ is the Lebesgue measurable subsets
of $\RR$, then being weakly resolvent measurable
is equivalent to being measurable in the sense introduced by Nussbaum
in \cite{Nu}, a property called \myquote{N-measurable} in
\cite{GGST}. For equivalence of weak resolvent measurability and
N-measurability for self-adjoint families of operators see
\cite[Theorem~4.11]{GGST}.

By Stone-Weierstrass we can conclude that for any $f\in C_0(\RR)$,
the map $x\to f(H(x))$ is measurable. Choosing a sequence  
$f_n\in C_0(\RR)$ with $f_n(t)\to 0$ for $t\neq \lambda$ and $f_n(\lambda) = 1$ yields measurability
of eigenprojections $x\to \one_{\{\lambda\}}(H(x))$. Stone's formula now gives measurability
of $x\to \one_I(H(x))$ for any interval $I$. Since the collection of Borel sets $E$
for which $x\to \one_E(H(x))$ is measurable form a $\sigma$-algebra, we can conclude that
the property must hold true for all Borel sets. 

Equip the Cartesian product $X\times\RR$ with the product $\sigma$-algebra $S\times\mathrm{Borel}(\RR)$.
Let $F\subset \set{\psi\in\cF}{\|\psi\|=1}$ be a countable dense subset of the unit ball.
For $\psi\in F$ put
\[
f_\psi(x,\lambda) = \|(H(x)-\lambda)(H(x)+\ri)^{-1}\psi\|/\|(H(x)+\ri)^{-1}\psi\|. 
\]
Then $(x,\lambda)\to f_\psi(x,\lambda)$ is measurable.
Put $\Sigma_n = \cup_{\psi\in F}f_\psi^{-1}((-\infty,1/n))$.
Since the joint spectrum $\Sigma = \set{(x,\lambda)}{\lambda\in\sigma(H(x))}\subset X\times\RR$
can be written as $\cap_{n=1}^\infty \Sigma_n$ we conclude that  $\Sigma$ is measurable.

Similarly, for $\psi\in F$ and $n\in\NN$,  we can define $S\times\mathrm{Borel}(\RR)$ measurable
functions
\[
f_\psi\p{n}(x,\lambda) = \|(n(H(x)-\lambda) +\ri)^{-1}\psi\|. 
\]
By the spectral theorem together with Lebesgue's dominated convergence
theorem, we find that $f_\psi\p{n}(x,\lambda) \to
\|\one_{\{\lambda\}}(H(x))\psi\|$, which is thus a measurable function
of $x$ and $\lambda$. Taking supremum over $\psi\in F$, we conclude that
$(x,\lambda)\to\|\one_{\{\lambda\}}(H(x))\|$ is measurable and hence the joint point spectrum
$\Spectrum_\pp = \set{(x,\lambda)}{\lambda\in\sigma_\pp(H(x))}$ is an $S\times\mathrm{Borel}(\RR)$ 
measurable set.

Let now $\mu$ be a positive measure defined on the $\sigma$-algebra $S$.  
 Denote by $\cH$ the Hilbert space $L^2(X;\cF)\simeq \cF\otimes
 L^2(X)$, consisting of all measurable functions $X\ni x\to\psi(x)$ with $\int_X \|\psi(x)\|_\cF^2\D\mu(x)<\infty$.
The construction
\[
R(z) = \int^\oplus_X (H(x)-z)^{-1}\, \D \mu(x)
\]
yields a family of bounded operators on $\cH$ satisfying the first resolvent formula.
The operator $R(z)$ satisfies $R(z)^* = R(\bar{z})$ and has kernel
$\{0\}$. Hence, it is the resolvent family
of a self-adjoint operator $H$ densely defined on $\cH$. Its domain is
$\cD(H) = R(z)\cH$, which is independent of $z\in\set{z\in \CC}{\im z\neq 0}$.

Define a different domain by
\[
\tcD(H)  =  \Bigset{\psi\in\cH}{ \psi(x)\in\cD(x)\, \textup{a.e. and} \
\int_{X} \|H(x)\psi(x)\|_\cF^2\,\D\mu(x)<\infty}.
\]
We remark that for $\psi\in\cH$ and $\im z\neq 0$ we have
$(R(z)\psi)(x) = (H(x)-z)^{-1}\psi(x)$ a.e. and hence 
$\cD(H)\subset \tcD(H)$. Furthermore we can on $\tcD(H)$ define a symmetric
operator $\tH$ by $(\tH\psi)(x) = H(x)\psi(x)$. It is easy to see that
$H\subset \tH$ and since $H$ was self-adjoint we must have $H=\tH$ and
in particular $\cD(H) = \tcD(H)$. We remark that even with only weak
measurability of $x\to H(x)$ 
one can always construct $\tH$ as a closed operator, but without assumptions beyond
weak measurability one may not arrive at a densely defined operator,
cf. \cite[Remark~4.7]{GGST}.

 The spectral resolution $\one_E(H)$, with $E\subset \RR$ Borel, can be explicitly computed to be
\[
 \one_E(H) = \int^\oplus_X \one_E(H(x))\,\D\mu(x).
\]

If $\cD_0\subset \cD(x)$ for (almost) every $x$ is a common core for $H(x)$, then one can construct 
an essentially self-adjoint operator on the dense set of $\psi\in\cH$, with a.e. $\psi(x)\in\cD_0$.
The closure coincides with $H$ and is the situation we find ourselves
in with the fibered Nelson Hamiltonian,
where $X= \RR^\nu$ equipped with the Borel $\sigma$-algebra and
Lebesgue measure.

\providecommand{\bysame}{\leavevmode\hbox to3em{\hrulefill}\thinspace}
\providecommand{\MR}{\relax\ifhmode\unskip\space\fi MR }
% \MRhref is called by the amsart/book/proc definition of \MR.
\providecommand{\MRhref}[2]{%
 \href{http://www.ams.org/mathscinet-getitem?mr=#1}{#2}
}
\providecommand{\href}[2]{#2}

\end{document}